\documentclass[12pt,english]{article}
\usepackage[T1]{fontenc}
\usepackage[utf8]{inputenc}
\usepackage[active]{srcltx}
\usepackage{color}
\definecolor{lyxnotefontcolor}{rgb}{0.0195312, 0.203125, 1}
\usepackage{babel}
\usepackage{url}
\usepackage{amsmath}
\usepackage{amsthm}
\usepackage{amssymb}
\usepackage{geometry}
\usepackage{setspace}
\usepackage[authoryear]{natbib}
\usepackage[]
 {hyperref}

\makeatletter
\usepackage{threeparttable}
\usepackage{booktabs}
\usepackage{setspace}
\usepackage{url}
\makeatother

\theoremstyle{plain}
\newtheorem{assumption}{\protect\assumptionname}
\newtheorem{thm}{\protect\theoremname}[section]
\newtheorem{lem}{\protect\lemmaname}[section]
\theoremstyle{remark}
\newtheorem{rem}{\protect\remarkname}[section]
\theoremstyle{plain}
\newtheorem{prop}{\protect\propositionname}[section]
\providecommand{\assumptionname}{Assumption}
\providecommand{\lemmaname}{Lemma}
\providecommand{\propositionname}{Proposition}
\providecommand{\remarkname}{Remark}
\providecommand{\theoremname}{Theorem}

\begin{document}
\title{Estimation and Inference for Peer Effects under Conditional Random Assignment\thanks{Companion software is available as an R package at \protect\url{https://github.com/zengying17/peergmm-r} and as a Stata command at \protect\url{https://github.com/zengying17/peergmm-stata}.}}
\author{Ying Zeng\thanks{School of Economics, Xiamen University, China. Email: \protect\href{mailto:zengying17@gmail.com}{zengying17@gmail.com}.}}

\maketitle
\global\long\def\diag{\operatorname{diag}}%
\global\long\def\E{\operatorname{E}}%
\global\long\def\tr{\operatorname{tr}}%
\global\long\def\Var{\operatorname{Var}}%
\global\long\def\plim{\operatorname{plim}}%
\global\long\def\Diag{\operatorname{Diag}}%
\global\long\def\rank{\operatorname{rank}}%
\global\long\def\Cov{\operatorname{Cov}}%
\global\long\def\argmin{\operatorname{argmin}}%

\begin{abstract}
Empirical studies of peer effects often exploit conditional random assignment to peer groups within urns. We develop a GMM framework for estimation and inference in this setting. The framework separately identifies endogenous and contextual peer effects and nests tests of random peer-group assignment as a special case. It permits unknown heteroskedasticity and corrects finite-urn bias in variance estimation. Its asymptotic theory allows the number of peer groups to grow through more urns, more groups within urns, or both. We establish the asymptotic validity of the procedures and evaluate their finite-sample performance through Monte Carlo simulations. We apply the method to study peer effects on personality among university students. For traits with positive reduced-form peer effects, the estimates indicate that positive contextual effects are partly offset by negative endogenous effects.

Keywords: Peer effects; Conditional random assignment; Randomization tests; Reflection problem; Spatial econometrics; Generalized method of moments.
\end{abstract}
\pagebreak{}

\section{Introduction}

To address the endogeneity of peer-group formation, empirical studies of peer effects often exploit conditional random assignment, whereby individuals are randomly assigned to peer groups within an urn (or selection pool). Examples include college roommates assigned within blocks defined by gender and roommate-matching characteristics \citep{sacerdote_peer_2001}, golfers assigned to competing groups within tournament-category cells \citep{guryan_peer_2009}, students assigned to classrooms within school-grade cells \citep{burke_classroom_2013,graham_identifying_2008}, and students assigned to study groups or sections within classes \citep{booij_ability_2017a,feld_understanding_2016,golsteyn_impact_2021}.\footnote{Recent peer-effects studies exploiting conditional random-assignment designs include \citet{barwick_digital_2026,bietenbeck_motivated_2025,chen_how_2024,hampole_peer_2026,huang_poverty_2025,radbruch_interview_2025,rivera_peers_2025,shan_peers_2025}.}

Despite the widespread use of these designs, two econometric gaps remain. First, existing tests of random peer-group assignment typically examine whether predetermined characteristics exhibit dependence among peers. Such tests can reveal departures from random assignment, but say little about the magnitude of peer dependence or the associated uncertainty. Second, methods for addressing the reflection problem under conditional random assignment remain limited. Beyond the i.i.d. error model of \citet{caeyers_exclusion_2024}, there are few methods for separately identifying the endogenous effects of peers' contemporaneous outcomes and the exogenous (contextual) effects of peers' characteristics. We develop a GMM framework that addresses both gaps.

Our first contribution is to treat the peer-dependence parameter $\lambda_{0}$ as an estimand rather than restricting attention to the null $H_{0}:\lambda_{0}=0$. Under random assignment, predetermined characteristics are uncorrelated among peers conditional on urn fixed effects, so random assignment corresponds to $\lambda_{0}=0$. Existing procedures differ in their implementation, assumptions on the error structure, requirements on urn-size variation, and asymptotic framework \citep{sacerdote_peer_2001,guryan_peer_2009,wang_peer_2010,stevenson_tests_2015,jochmans_testing_2023,caeyers_exclusion_2024}.\footnote{Section \ref{sec:Review of Tests} of the Online Appendix provides a detailed theoretical comparison. It also establishes asymptotic results for several existing procedures and shows that, under the null, the numerator of the homoskedastic test in \citet{jochmans_testing_2023} coincides up to normalization with the quadratic moment underlying our scalar GMM estimator.} Our GMM estimator measures the sign and magnitude of peer dependence and quantifies its uncertainty. It permits unknown heteroskedasticity and does not require variation in either peer-group size or urn size. Our asymptotic theory requires the total number of peer groups to diverge, with growth arising through more urns, more peer groups within urns, or both. This framework includes complete random assignment within a single urn.

Our second contribution is to develop estimation and inference for endogenous and exogenous (contextual) peer effects in the linear-in-means model of under conditional random assignment. Separating these effects is difficult because an individual's outcome and peers' contemporaneous outcomes are jointly determined, giving rise to the reflection problem \citep{manski_identification_1993}. Methods for separately estimating these effects in this setting are limited. \citet{caeyers_exclusion_2024} provide an approach under i.i.d. errors, while empirical studies in this setting often estimate reduced-form effects of peer characteristics instead \citep{burke_classroom_2013,feld_understanding_2016,guryan_peer_2009,shan_peers_2025,rivera_peers_2025}. Such reduced-form effects generally combine endogenous and contextual peer effects and therefore provide limited guidance for policy analysis \citep{fruehwirth_can_2013} and for assessing social multiplier effects \citep{glaeser_social_2003}. In our GMM framework, a quadratic moment implied by independence of the idiosyncratic errors identifies the endogenous peer effect, while linear moments implied by covariate exogeneity identify the contextual peer effects conditional on the endogenous effect. Related approaches to the reflection problem include the conditional QMLE of \citet{lee_identification_2007} with group fixed effects, and the GMM estimator of \citet{kuersteiner_efficient_2023} with random group effects. Relative to these approaches, our method does not require variation in group size and permits unknown individual-level heteroskedasticity.

Methodologically, our GMM framework builds on the linear and quadratic moment conditions introduced by \citet{kelejian_generalized_1998,kelejian_generalized_1999}. More directly, it builds on the GMM estimators for spatial autoregressive (SAR) models under unknown heteroskedasticity developed by \citet{kelejian_specification_2010} and \citet{lin_gmm_2010}.\footnote{\citet{kelejian_specification_2010} also allow for SAR disturbances.} Both papers maintain fixed-dimensional covariate frameworks and therefore do not directly accommodate a growing number of urn fixed effects. We instead eliminate the urn fixed effects by within-urn demeaning and adapt their GMM methods to the peer-group and urn structure. Exploiting this structure, we derive explicit identification conditions, including a covariate-free specification in which the endogenous peer-effect parameter is identified without external instruments. This specification is particularly useful for testing random assignment because the corresponding testing equation often contains no covariates. We also construct a heteroskedasticity-consistent variance estimator that corrects the finite-urn bias induced by within-urn demeaning. Our analysis thus extends spatial-econometric GMM methods to peer-group interactions with a growing number of urn fixed effects.

We complement these theoretical results with Monte Carlo simulations and an empirical application. The simulations examine the finite-sample performance under two growth patterns, an increasing number of urns and an increasing number of peer groups within urns. Bias and size distortions are small in the baseline designs at moderate sample sizes and decline as the total number of peer groups increases. We then apply the framework to the data of \citet{shan_peers_2025}, who study peer effects on personality in university study groups. Five of the six tests fail to reject random peer-group assignment within urns. For the traits with positive reduced-form peer effects, the structural estimates point to positive contextual effects that are partly offset by negative endogenous effects.

The remainder of the paper is organized as follows. Section \ref{sec:Motivating-Example} presents the model, estimator, and asymptotic theory. Section \ref{sec:Monte-Carlo-Simulations} reports Monte Carlo evidence. Section \ref{sec:Empirical-Application} presents an empirical application, and Section \ref{sec:Conclusion} concludes. The Appendix contains the main proofs, while the Online Appendix contains calculations, theoretical results for existing tests, additional simulations, and an additional empirical application.

\section{Model and Estimation}\label{sec:Motivating-Example}

\subsection{Model without Covariates}\label{subsec:Model-without-Covariates}

We first consider a model without covariates. This specification is particularly useful for testing conditional random assignment when the conditioning set consists only of urn fixed effects. In this case, the covariance structure of the innovations identifies $\lambda_{0}$ without instruments constructed from covariates. Identification of endogenous peer effects through the variance-covariance structure has also been studied by \citet{lee_identification_2007}, \citet{graham_identifying_2008}, \citet{kuersteiner_efficient_2023}, and \citet{caeyers_exclusion_2024}, although these approaches do not allow for unknown heteroskedasticity.

Suppose the sample includes $R$ urns, indexed by $r=1,\ldots,R$. Urn $r$ contains $G_{r}$ peer groups, indexed by $g=1,\ldots,G_{r}$. Peer group $g$ in urn $r$ has $m_{gr}$ members, and urn $r$ has $n_{r}=\sum^{G_{r}}_{g=1}m_{gr}$ members. The total sample size is $n=\sum^{R}_{r=1}n_{r}$, and the total number of peer groups is $G=\sum^{R}_{r=1}G_{r}$. Suppose that peers within the same group interact equally and that there is no interaction across peer groups. Then the peer-effects model is
\begin{equation}
Y_{igr}=\alpha_{r}+\lambda_{0}\bar{Y}_{(-i)gr}+\epsilon_{igr},\label{eq:scalar}
\end{equation}
Here $Y_{igr}$ is either an outcome or a characteristic of individual $i$ in peer group $g$ of urn $r$, and $\bar{Y}_{(-i)gr}=\sum_{i'\neq i}Y_{i'gr}/(m_{gr}-1)$ is the leave-one-out mean of $Y$ in the group, i.e., the average of $Y$ among $i$'s peers. The terms $\alpha_{r}$ and $\epsilon_{igr}$ denote the urn fixed effect and the idiosyncratic term. Throughout, the realized urn and peer-group structure and all matrices constructed from it are treated as nonstochastic arrays that may vary with $n$, with dependence on $n$ suppressed in the notation. When $Y$ is a predetermined characteristic, random assignment to peer groups within urns implies that $Y_{igr}$ is uncorrelated with $\bar{Y}_{(-i)gr}$ conditional on the urn fixed effects, and hence $\lambda_{0}=0$. Thus, testing $H_{0}:\lambda_{0}=0$ provides a test of conditional random assignment. When $Y$ is an outcome, the linear-in-means specification can arise as the Nash equilibrium of an interaction game with quadratic utility (see, e.g., \citealp{calvo-armengol_peer_2009,blume_identification_2010,pereda-fernandez_social_2017a}).

The matrix form of model (\ref{eq:scalar}) for group $g$ in urn $r$ is
\begin{eqnarray}
Y_{gr} & = & \alpha_{r}\mathbf{1}_{gr}+\lambda_{0}W_{gr}Y_{gr}+\epsilon_{gr},\label{eq:SAR_group}
\end{eqnarray}
where $Y_{gr}=(Y_{1gr},\ldots,Y_{m_{gr}gr})^{\prime}$, $\mathbf{1}_{gr}$ is an $m_{gr}\times1$ vector of ones, $W_{gr}=(J_{gr}-I_{gr})/(m_{gr}-1)$, $I_{gr}$ is the identity matrix of dimension $m_{gr}$, and $J_{gr}=\mathbf{1}_{gr}\mathbf{1}^{\prime}_{gr}$ is the $m_{gr}\times m_{gr}$ matrix of ones. In spatial econometrics, $W_{gr}$ is a spatial weight matrix and (\ref{eq:SAR_group}) is a Cliff--Ord type spatial model \citep{cliff_spatial_1973,cliff_spatial_1981a}. Stacking model (\ref{eq:SAR_group}) over peer groups gives the urn-level model
\begin{equation}
Y_{r}=\alpha_{r}\mathbf{1}_{r}+\lambda_{0}W_{r}Y_{r}+\epsilon_{r},\label{eq:SAR_urn}
\end{equation}
where $Y_{r}=(Y^{\prime}_{1r},\ldots,Y^{\prime}_{G_{r}r})^{\prime}$, $\epsilon_{r}=(\epsilon^{\prime}_{1r},\ldots,\epsilon^{\prime}_{G_{r}r})^{\prime}$, $W_{r}=\diag^{G_{r}}_{g=1}\{W_{gr}\}$, and $\mathbf{1}_{r}$ is the $n_{r}\times1$ vector of ones.

We next discuss estimation and inference for $\lambda_{0}$ in (\ref{eq:SAR_urn}), which naturally nests the test of $H_{0}:\lambda_{0}=0$. OLS is biased because of the reflection problem arising from the reciprocal relationship between $Y_{igr}$ and $\bar{Y}_{(-i)gr}$. We instead exploit the covariance structure of the innovations to identify and estimate $\lambda_{0}$. Let $\Lambda$ be the parameter space for $\lambda$.
\begin{assumption}
\label{assu:lambda}Suppose $\lambda_{0}$ is in the interior of $\Lambda$, and $\Lambda$ is a compact subset of $(-1,1)$.
\end{assumption}
Urns with only one group and groups with only one member do not contribute to estimation. We therefore restrict attention to the following case.
\begin{assumption}
\label{assu:size}For all $g,r$, $G_{r}\geqslant2$ and $m_{gr}\geqslant2$. The average peer-group size satisfies $n/G\leqslant C_{m}<\infty$ for some positive constant $C_{m}$ that does not depend on $n$.
\end{assumption}
Assumption \ref{assu:size} implies $2\leqslant n/G\leqslant C_{m}$, so the total sample size $n$ tends to infinity if and only if the total number of peer groups $G$ tends to infinity. We impose no further restrictions on the asymptotic behavior of $G_{r}$ or $n_{r}$. Each may remain bounded or tend to infinity. In particular, $G_{r}$ and $n_{r}$ may diverge for some urns while remaining bounded for others, so the framework accommodates settings with both large and small urns. For example, \citet{sacerdote_peer_2001} has twenty-five nonempty urns, but 99\% of the sample is in the sixteen largest urns. The number of urns $R$ may also remain bounded or tend to infinity. This flexibility contrasts with \citet{kelejian_specification_2010} and \citet{lin_gmm_2010}, whose frameworks keep the $R$ bounded, and with \citet{jochmans_testing_2023}, whose asymptotic theory requires $R$ to tend to infinity. The asymptotic results we establish in the Online Appendix for the procedures of \citet{guryan_peer_2009} and \citet{caeyers_exclusion_2024} likewise require $R$ to tend to infinity.

We assume that the innovations are independent while allowing for unknown heteroskedasticity.
\begin{assumption}
\label{assu:epsilon}Suppose the innovations $\epsilon_{igr}$ are totally independent, with $\E(\epsilon_{igr})=0$ and $\E(\epsilon^{2}_{igr})=\sigma^{2}_{igr}$, where $0<c_{\sigma}\leqslant\sigma^{2}_{igr}\leqslant C_{\sigma}<\infty$ for some constants $c_{\sigma}$ and $C_{\sigma}$. In addition, $\sup_{n\geqslant1}\sup_{i,g,r}\E|\epsilon_{igr}|^{4+c_{\epsilon}}<\infty$ for some $c_{\epsilon}>0$.
\end{assumption}
The independence assumption is common in the spatial and peer-effects literature and is maintained, for example, by \citet{kelejian_specification_2010}, \citet{lin_gmm_2010} and \citet{jochmans_testing_2023}. It is important for our covariance-based identification strategy. If dependence among the innovations is unrestricted, covariance generated by peer dependence cannot in general be distinguished from covariance already present in the innovations, so the covariance structure alone cannot identify $\lambda_{0}$. 

A possible violation of the independence assumption is a common group-level shock. This concern is less relevant when testing random assignment using variables determined before group assignment, which cannot be affected by post-assignment group-level shocks. When $Y$ is an outcome, however, such shocks may be present. If valid instruments for $\bar{Y}_{(-i)gr}$ are available, however, $\lambda_{0}$ may still be identified from linear IV moments under appropriate exogeneity conditions, as discussed in the next section. When suitable instruments are unavailable, possible alternatives include extending the group-random-effects model of \citet{kuersteiner_efficient_2023} to accommodate urn fixed effects or using the group-fixed-effects approach of \citet{lee_identification_2007}, both of which rely on additional identifying restrictions.

Rather than estimating the urn fixed effects, we remove them by within-urn demeaning. Define $I^{\ast}_{r}=I_{r}-J_{r}/n_{r}$, where $J_{r}=\mathbf{1}_{r}\mathbf{1}^{\prime}_{r}$ is the $n_{r}\times n_{r}$ matrix of ones. Then $I^{\ast}_{r}\mathbf{1}_{r}=0$, and premultiplying a vector by $I^{\ast}_{r}$ subtracts its urn mean. For example, $I^{\ast}_{r}Y_{r}=Y_{r}-\bar{Y}_{r}\mathbf{1}_{r}$, where $\bar{Y}_{r}$ is the urn mean of $Y_{r}$. Premultiplying both sides of (\ref{eq:SAR_urn}) by $I^{\ast}_{r}$ gives the within-urn equation
\begin{equation}
I^{\ast}_{r}Y_{r}=\lambda_{0}I^{\ast}_{r}W_{r}Y_{r}+I^{\ast}_{r}\epsilon_{r}.\label{eq:SAR_urn_star}
\end{equation}

Let $\Omega_{gr}=\E\left(\epsilon_{gr}\epsilon^{\prime}_{gr}\right)$ and $\Omega_{r}=\E\left(\epsilon_{r}\epsilon^{\prime}_{r}\right)$ denote the variance-covariance matrices of group $g$ and urn $r$, respectively. Under Assumption \ref{assu:epsilon}, $\Omega_{r}=\diag^{G_{r}}_{g=1}\{\Omega_{gr}\}$ is diagonal. For any $n_{r}\times n_{r}$ matrix $A_{r}$ such that $A^{\ast}_{r}=I^{\ast}_{r}A_{r}I^{\ast}_{r}$ has zero diagonals, $\E\left[\left(I^{\ast}_{r}\epsilon_{r}\right)^{\prime}A_{r}\left(I^{\ast}_{r}\epsilon_{r}\right)\right]=\tr\left[\left(I^{\ast}_{r}A_{r}I^{\ast}_{r}\right)\Omega_{r}\right]=0$. This construction is closely related to the quadratic moment conditions in \citet{kelejian_specification_2010} and \citet{lin_gmm_2010}. A key distinction is that, after eliminating the urn fixed effects by within-urn demeaning, the zero-diagonal restriction is imposed on the transformed matrix $I^{\ast}_{r}A_{r}I^{\ast}_{r}$ rather than directly on $A_{r}$. Many matrices satisfy this condition. Following \citet{jochmans_testing_2023}, this paper uses 
\[
A_{r}=W_{r}+\frac{1}{(n_{r}-1)}I_{r}.
\]
As shown in Section \ref{subsec:diag_Atilde} of the Online Appendix, $\diag(I^{\ast}_{r}A_{r}I^{\ast}_{r})=0$ for this choice of $A_{r}$. This special form of $A_{r}$ also allows us to construct a heteroskedasticity-consistent covariance estimator that corrects for the finite-urn bias induced by within-urn demeaning.

For any $\lambda\in\Lambda$, define $\epsilon^{\ast}_{r}(\lambda)=I^{\ast}_{r}(I_{r}-\lambda W_{r})Y_{r}$. At the true value, $\epsilon^{\ast}_{r}(\lambda_{0})=I^{\ast}_{r}\epsilon_{r}$, so the preceding zero-diagonal condition yields the quadratic moment condition $\E h^{q}_{r}(\lambda_{0})=0$, where $h^{q}_{r}(\lambda)=\epsilon^{\ast}_{r}(\lambda)^{\prime}A_{r}\epsilon^{\ast}_{r}(\lambda)$. Let $h^{q}_{n}(\lambda)=\frac{1}{n}\sum^{R}_{r=1}h^{q}_{r}(\lambda)=\frac{1}{n}\sum^{R}_{r=1}\epsilon^{\ast}_{r}(\lambda)^{\prime}A_{r}\epsilon^{\ast}_{r}(\lambda)$, and let $Q^{q}_{n}(\lambda)=h^{q}_{n}(\lambda)^{2}$ be the criterion function. The GMM estimator is $\hat{\lambda}=\argmin{}_{\lambda\in\Lambda}Q^{q}_{n}(\lambda)$.\footnote{\begin{singlespace}
Following \citet{kelejian_specification_2010} and \citet{lin_gmm_2010}, one could increase efficiency by adding multiple quadratic moments, for example, $h^{q}_{r}(\lambda)=(\epsilon^{\ast}_{r}(\lambda)^{\prime}A_{1r}\epsilon^{\ast}_{r}(\lambda),\ldots,\epsilon^{\ast}_{r}(\lambda)^{\prime}A_{Lr}\epsilon^{\ast}_{r}(\lambda))^{\prime}$ with $\diag(I^{\ast}_{r}A_{lr}I^{\ast}_{r})=0$ for $l=1,\ldots,L$. Such an extension would require verifying the identifying contribution of each moment and deriving a more complex variance-covariance estimator.
\end{singlespace}
}
\begin{thm}[Consistency and Asymptotic Normality]
\label{thm:Consistency_lambda}Suppose Assumptions \ref{assu:lambda} to \ref{assu:epsilon} hold. Suppose further that $\E h^{q}_{n}(\lambda)\rightarrow\bar{h}^{q}(\lambda)$ for every $\lambda\in\Lambda$. Then

(i) $\hat{\lambda}\xrightarrow{p}\lambda_{0}$ as $n\rightarrow\infty$.

(ii) Let $A^{\ast}_{r}=I^{\ast}_{r}A_{r}I^{\ast}_{r}$, 
\begin{align*}
V_{\lambda,n} & =\frac{2}{n}\sum^{R}_{r=1}\tr\left(A^{\ast}_{r}\Omega_{r}A^{\ast}_{r}\Omega_{r}\right),\\
D_{\lambda,n} & =-\frac{2}{n}\sum^{R}_{r=1}\tr\left(A^{\ast}_{r}W_{r}(I_{r}-\lambda_{0}W_{r})^{-1}\Omega_{r}\right),
\end{align*}
and suppose $\lim_{n\rightarrow\infty}V_{\lambda,n}=\bar{V}_{\lambda}$, $\lim_{n\rightarrow\infty}D_{\lambda,n}=\bar{D}_{\lambda}$. Then $\bar{V}_{\lambda}>0$, $\bar{D}_{\lambda}<0$, and $\sqrt{n}\left(\hat{\lambda}-\lambda_{0}\right)\xrightarrow{d}N(0,\Sigma_{\lambda})$ as $n\rightarrow\infty$, where $\Sigma_{\lambda}=\bar{D}^{-2}_{\lambda}\bar{V}_{\lambda}$.
\end{thm}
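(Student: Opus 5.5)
Since the moment is quadratic in $\lambda$, I will work with its explicit polynomial form rather than generic GMM machinery. Write $S_r(\lambda)=I_r-\lambda W_r$ and $G_r=W_rS_r^{-1}(\lambda_0)$. Then $S_r(\lambda)S_r^{-1}(\lambda_0)=I_r+(\lambda_0-\lambda)G_r$, so
\[
\epsilon^{\ast}_{r}(\lambda)=I^{\ast}_{r}\bigl(I_r+(\lambda_0-\lambda)G_r\bigr)\epsilon_r .
\]
Because $I^{\ast}_r$ is symmetric and idempotent, this gives
\[
h^q_n(\lambda)=a_n+2(\lambda_0-\lambda)b_n+(\lambda_0-\lambda)^2c_n ,
\]
with
\[
a_n=\tfrac1n\sum_r\epsilon_r'A^{\ast}_r\epsilon_r,\qquad
b_n=\tfrac1n\sum_r\epsilon_r'A^{\ast s}_rG_r\epsilon_r,\qquad
c_n=\tfrac1n\sum_r\epsilon_r'G_r'A^{\ast}_rG_r\epsilon_r ,
\]
where $A^{\ast s}_r=(A^{\ast}_r+A^{\ast\prime}_r)/2$. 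Here $A^{\ast}_r$ is symmetric, since $W_r$ is.

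All the matrices involved are uniformly bounded in row and column sum norms. $W_r$ is block diagonal with row sums equal to one. $S_r^{-1}(\lambda_0)$ is block diagonal with explicit eigenvalues, and these are bounded because $\Lambda\subset(-1,1)$ is compact. $I^{\ast}_r=I_r-J_r/n_r$ has row sums bounded by $2$. So a variance bound for quadratic forms in independent errors applies, using Assumption~\ref{assu:epsilon}. The variance of each of $a_n,b_n,c_n$ is $O(1/n)$, because the trace of $B\Omega B'\Omega$ is $O(n)$ for such bounded matrices. Hence each coefficient equals its mean plus $o_p(1)$.

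For consistency, since $h^q_n$ is a quadratic polynomial in $\lambda$ with $o_p(1)$ coefficient errors on a compact $\Lambda$, we get $\sup_\Lambda|h^q_n-\E h^q_n|\to_p0$, and hence $\sup_\Lambda|Q^q_n-(\E h^q_n)^2|\to_p0$. Also $\E a_n=0$ by the zero-diagonal property. Identification requires $\bar h^q(\lambda)\ne0$ for $\lambda\neq\lambda_0$. Write
\[
\E h^q_n(\lambda)=(\lambda_0-\lambda)\bigl[2\E b_n+(\lambda_0-\lambda)\E c_n\bigr].
\]
Here
\[
\E b_n=\tfrac1n\sum_r\tr(A^{\ast}_rG_r\Omega_r)=-D_{\lambda,n}/2,
\qquad
\E c_n=\tfrac1n\sum_r\tr(G_r'A^{\ast}_rG_r\Omega_r).
\]
This is the main obstacle. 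I would exploit the group structure. $W_{gr}$ has eigenvalue $1$ on $\mathbf 1_{gr}$ and eigenvalue $-1/(m_{gr}-1)$ on its orthogonal complement. So $G_r$ and $A^{\ast}_r$ can be written explicitly in terms of the projections $J_{gr}/m_{gr}$, $I_{gr}-J_{gr}/m_{gr}$ and $J_r/n_r$.

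Computing traces against the diagonal $\Omega_r$, I expect
\[
\tr(A^{\ast}_rG_r\Omega_r)=\sum_{g,i}\sigma^2_{igr}\,\kappa_{gr}
\]
with $\kappa_{gr}>0$ bounded below uniformly, e.g.\ of order $\frac{1}{(m_{gr}-1)(1+\lambda_0/(m_{gr}-1))}$ minus an $O(1/n_r)$ correction that remains positive because $G_r\ge2$. This gives $\bar D_\lambda<0$ via $c_\sigma>0$ and $n/G\le C_m$. I then need to show that the bracket $2\E b_n+(\lambda_0-\lambda)\E c_n$ does not vanish on $\Lambda$. It is linear in $\lambda$ and positive at $\lambda=\lambda_0$, so I would check its sign at the endpoints $\pm1$ by the same explicit eigen-decomposition. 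Equivalently, I would show that $\E h^q_n(\lambda)$ is monotone in $\lambda$ on $(-1,1)$, plausibly because $\epsilon^{\ast}(\lambda)'A\epsilon^{\ast}(\lambda)$ reduces to between-group versus within-group variance terms. Given a limit $\bar h^q$ with a unique zero at $\lambda_0$, the standard argmin consistency argument yields (i).

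For (ii), the first-order condition is $h^q_n(\hat\lambda)\,\partial h^q_n(\hat\lambda)=0$. With probability approaching one, $\partial h^q_n(\hat\lambda)=-2b_n-2(\lambda_0-\hat\lambda)c_n\to_p D_{\lambda,n}$ limit $\bar D_\lambda\ne0$, so $h^q_n(\hat\lambda)=0$. An exact expansion of the quadratic gives
\[
\sqrt n(\hat\lambda-\lambda_0)=-\bar D_\lambda^{-1}\sqrt n\,a_n+o_p(1).
\]
Now $\sqrt n\,a_n=n^{-1/2}\sum_r\epsilon_r'A^{\ast}_r\epsilon_r$ is a quadratic form with zero-diagonal symmetric matrix. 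It is a sum of martingale differences, with variance
\[
\tfrac2n\sum_r\tr(A^{\ast}_r\Omega_rA^{\ast}_r\Omega_r)=V_{\lambda,n}.
\]
The Kelejian--Prucha (2001) CLT for linear-quadratic forms applies, using bounded norms and $4+c_\epsilon$ moments, and gives $N(0,\bar V_\lambda)$. Finally, $\bar V_\lambda>0$ follows from
\[
\tr(A^{\ast}\Omega A^{\ast}\Omega)\ge c_\sigma^2\,\tr(A^{\ast 2}),
\]
which holds because $A^{\ast}$ is symmetric and $\Omega$ is diagonal positive, so $\tr(A^{\ast}\Omega A^{\ast}\Omega)=\sum_{i,j}a^{\ast2}_{ij}\sigma^2_i\sigma^2_j$. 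In addition, $\tr(A^{\ast 2}_r)\ge c\,G_r$ by explicit eigenvalues: $A^{\ast}$ has eigenvalue about $-1/(m-1)+1/(n_r-1)$ on the within-group space. Combining the pieces gives $\Sigma_\lambda=\bar D_\lambda^{-2}\bar V_\lambda$.
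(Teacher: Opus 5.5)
Your overall route is the paper's: the same representation $\epsilon^{\ast}_r(\lambda)=I^{\ast}_r[I_r-(\lambda-\lambda_0)B_r]\epsilon_r$ (your $G_r$ is the paper's $B_r$; note that $G_r$ already denotes the number of groups in urn $r$), the same quadratic-in-$\lambda$ structure of $h^q_n$ with coefficients controlled by the LLN for UBRC quadratic forms, the bound $\tr(A^{\ast}\Omega A^{\ast}\Omega)\geqslant c_\sigma^2\tr(A^{\ast2})$, and a Kelejian--Prucha CLT. Expanding the quadratic exactly instead of using the mean value theorem is a harmless simplification.

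The gap is in the two facts that carry the theorem: that $\bar h^q$ vanishes only at $\lambda_0$, and that $\bar D_\lambda<0$. You only announce these (``I expect'', ``I would check'', ``plausibly''). It is also not enough that your bracket $2\E b_n+(\lambda_0-\lambda)\E c_n$ is nonzero for each $n$. You need a positive lower bound uniform in $n$ and in $\lambda\in\Lambda$; otherwise the limit could acquire extra zeros, or $\bar D_\lambda$ could be $0$. Your one concrete claim does not deliver this under Assumption~\ref{assu:size}. The per-observation coefficient is in fact $\kappa_{gr}=\frac{n_r-m_{gr}}{(n_r-1)(1-\lambda_0)(m_{gr}-1+\lambda_0)}$. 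This is $O(1/m_{gr})$, so it is not bounded below uniformly, because only the average group size is bounded. Its factor $(n_r-m_{gr})/(n_r-1)$ is also far from one in small urns. The bound must be obtained per group and then summed within urns.

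The missing step is the paper's explicit computation. Write $W_r$, $A_r$, $I_r-\lambda W_r$ and $(I_r-\lambda_0W_r)^{-1}$ as $\diag_g\{p_{gr}I^{\ast}_{gr}+q_rJ^{\ast}_{gr}\}$ and apply Lemma~\ref{lem:calculation}(ii). The diagonal entries of $\tilde A_r(\lambda)$ for members of group $g$ are then $-(\lambda-\lambda_0)\frac{n_r-m_{gr}}{m_{gr}(n_r-1)}\varphi_{gr}(\lambda)$, where, for $\lambda\in\Lambda$,
\[
\varphi_{gr}(\lambda)=\Bigl(\frac{1}{1-\lambda_{0}}+\frac{1}{m_{gr}-1+\lambda_{0}}\Bigr)\Bigl(\frac{1-\lambda}{1-\lambda_{0}}+\frac{m_{gr}-1+\lambda}{m_{gr}-1+\lambda_{0}}\Bigr)\geqslant\frac{1-\lambda}{(1-\lambda_{0})^{2}}\geqslant c_{\varphi}>0 .
\]
So your bracket equals $\frac1n\sum_r\sum_g\frac{n_r-m_{gr}}{m_{gr}(n_r-1)}\varphi_{gr}(\lambda)\tr(\Omega_{gr})$. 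It is linear in $\lambda$ and positive at $\pm1$, as you anticipated. Now use $\tr(\Omega_{gr})\geqslant m_{gr}c_\sigma$, $\sum_g(n_r-m_{gr})/(n_r-1)=n_r(G_r-1)/(n_r-1)\geqslant G_r/2$, and $G/n\geqslant 1/C_m$. These give $|\E h^q_n(\lambda)|\geqslant|\lambda-\lambda_0|\,c_\sigma c_\varphi/(2C_m)$ for every $n$, hence identification. At $\lambda=\lambda_0$ they give $-D_{\lambda,n}\geqslant c_\sigma c_\varphi/(2C_m)$, hence $\bar D_\lambda<0$.

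The same caveat applies to your argument for $\bar V_\lambda>0$. The within-group eigenvalue $-\frac{1}{m_{gr}-1}+\frac{1}{n_r-1}$ degenerates for large groups. The clean bound uses the eigenvalue $n_r/(n_r-1)$ of $A^{\ast}_r$ on the $(G_r-1)$-dimensional space of group-constant vectors orthogonal to $\mathbf{1}_r$. This gives $\tr(A^{\ast2}_r)\geqslant G_r-1\geqslant G_r/2$, consistent with (\ref{eq:lAAl}).
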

The theorem follows by specializing the corresponding proofs of Theorems \ref{thm:consistency_theta} and \ref{thm:asym_theta} for the general model below to the model without covariates. The theorem provides the basis for testing random assignment through $H_{0}:\lambda_{0}=0$ when $Y$ is predetermined. Section \ref{subsec:VC-estimation} defines estimators $\hat{V}_{\lambda}$ and $\hat{D}_{\lambda}$ of $\bar{V}_{\lambda}$ and $\bar{D}_{\lambda}$, respectively, which are consistent under the conditions in Theorem \ref{thm:convergence_Sigma}. With $\hat{\Sigma}_{\lambda}=\hat{D}^{-2}_{\lambda}\hat{V}_{\lambda}$, $z_{\lambda}=\hat{\lambda}/\sqrt{\hat{\Sigma}_{\lambda}/n}\xrightarrow{d}N(0,1)$ under $H_{0}:\lambda_{0}=0$, yielding a feasible Wald test.

\subsection{General Model}

We now extend the model to include covariates, with the preceding model without covariates as a special case. In scalar form, the model is
\begin{equation}
Y_{igr}=\alpha_{r}+\lambda_{0}\bar{Y}_{(-i)gr}+X^{(1)\prime}_{igr}\beta_{1,0}+\bar{X}^{(1)\prime}_{(-i)gr}\beta_{2,0}+X^{(2)\prime}_{igr}\beta_{3,0}+\epsilon_{igr},\label{eq:general}
\end{equation}
where $X^{(1)}_{igr}$ is a $k_{1}\times1$ vector of exogenous nonstochastic covariates, which may include past outcomes, $\bar{X}^{(1)}_{(-i)gr}=\sum^{m_{gr}}_{i'\neq i}X^{(1)}_{i'gr}/(m_{gr}-1)$ is the corresponding peer average. The $k_{2}\times1$ vector $X^{(2)}_{igr}$ contains exogenous nonstochastic covariates whose peer averages are excluded and may include group-level characteristics. All other variables are defined as in (\ref{eq:scalar}). In the terminology of \citet{manski_identification_1993}, $\lambda_{0}$ represents endogenous peer effects, whereas $\beta_{2,0}$ represents exogenous, or contextual, peer effects.

Let $X_{igr}=(X^{(1)\prime}_{igr},\bar{X}^{(1)\prime}_{(-i)gr},X^{(2)\prime}_{igr})^{\prime}$ denote the $k_{x}\times1$ vector of exogenous covariates and let $\beta_{0}=(\beta^{\prime}_{1,0},\beta^{\prime}_{2,0},\beta^{\prime}_{3,0})^{\prime}$ collect the corresponding coefficients. The model can then be written compactly as
\[
Y_{igr}=\alpha_{r}+\lambda_{0}\bar{Y}_{(-i)gr}+X^{\prime}_{igr}\beta_{0}+\epsilon_{igr}.
\]
Let $X_{r}$ be the $n_{r}\times k_{x}$ matrix of all exogenous covariates. Using the notation in (\ref{eq:SAR_urn}), the matrix form of the general model for urn $r$ is
\begin{equation}
Y_{r}=\alpha_{r}\mathbf{1}_{r}+\lambda_{0}W_{r}Y_{r}+X_{r}\beta_{0}+\epsilon_{r}.\label{eq:before_demean}
\end{equation}
Premultiplying both sides of the equation by $I^{\ast}_{r}$ eliminates the urn fixed effect and gives
\begin{equation}
I^{\ast}_{r}Y_{r}=\lambda_{0}I^{\ast}_{r}W_{r}Y_{r}+I^{\ast}_{r}X_{r}\beta_{0}+I^{\ast}_{r}\epsilon_{r}.\label{eq:general_urn_demeaned}
\end{equation}
Let $\theta=(\lambda,\beta^{\prime})^{\prime}$ be the $\left(k_{x}+1\right)\times1$ parameter vector, and define
\begin{equation}
\epsilon^{\ast}_{r}(\theta)=I^{\ast}_{r}(I_{r}-\lambda W_{r})Y_{r}-I^{\ast}_{r}X_{r}\beta.\label{eq:epsstar_general}
\end{equation}
At the true parameter value, $\epsilon^{\ast}_{r}(\theta_{0})=I^{\ast}_{r}\epsilon_{r}$. The quadratic moment condition from Section \ref{subsec:Model-without-Covariates} therefore remains valid, $\E\left[\epsilon^{\ast}_{r}(\theta_{0})^{\prime}A_{r}\epsilon^{\ast}_{r}(\theta_{0})\right]=0$. In addition, exogeneity of the covariates implies the linear moment condition $\E\left[H^{\prime}_{r}\epsilon^{\ast}_{r}(\theta_{0})\right]=0$, where $H_{r}$ is a nonstochastic $n_{r}\times k_{h}$ IV matrix that contains $I^{\ast}_{r}X_{r}$ and may include additional columns constructed from $I^{\ast}_{r}W_{r}X_{r},I^{\ast}_{r}W^{2}_{r}X_{r},\ldots$, with redundant columns omitted after stacking.\footnote{When $X_{r}=[X^{(1)}_{r},W_{r}X^{(1)}_{r}]$, the distinct candidate blocks in $[I^{\ast}_{r}X_{r},I^{\ast}_{r}W_{r}X_{r}]$ are $I^{\ast}_{r}X^{(1)}_{r}$, $I^{\ast}_{r}W_{r}X^{(1)}_{r}$, and $I^{\ast}_{r}W^{2}_{r}X^{(1)}_{r}$.} We assume that $k_{h}$ and $k_{x}$ are fixed and allow either $k_{x}=k_{h}=0$ or $1\leqslant k_{x}\leqslant k_{h}<\infty$.

The moment function for urn $r$ is $h_{r}(\theta)=\left(\epsilon^{\ast}_{r}(\theta)^{\prime}A_{r}\epsilon^{\ast}_{r}(\theta),\begin{array}{c}
\epsilon^{\ast}_{r}(\theta)^{\prime}H_{r}\end{array}\right)^{\prime}$. The sum of $h_{r}(\theta)$ across urns, normalized by the total sample size, is
\begin{equation}
h_{n}(\theta)=\frac{1}{n}\sum^{R}_{r=1}h_{r}(\theta)=\frac{1}{n}\left(\begin{array}{c}
\sum^{R}_{r=1}\epsilon^{\ast}_{r}(\theta)^{\prime}A_{r}\epsilon^{\ast}_{r}(\theta)\\
\sum^{R}_{r=1}H^{\prime}_{r}\epsilon^{\ast}_{r}(\theta)
\end{array}\right).\label{eq:h_n_theta}
\end{equation}
When the model contains no covariates, $k_{x}=k_{h}=0$, $X_{r}$, $H_{r}$, and $\beta$ are omitted and $\theta=\lambda$. Hence $\epsilon^{\ast}_{r}(\theta)=\epsilon^{\ast}_{r}(\lambda)=I^{\ast}_{r}(I_{r}-\lambda W_{r})Y_{r}$, and the sample moment reduces to the scalar quadratic moment $h^{q}_{n}(\lambda)=\frac{1}{n}\sum^{R}_{r=1}\epsilon^{\ast}_{r}(\lambda)^{\prime}A_{r}\epsilon^{\ast}_{r}(\lambda)$, as in Section \ref{subsec:Model-without-Covariates}. The GMM estimator is defined as $\hat{\theta}=\argmin_{\theta\in\Theta}Q_{n}(\theta)$, where $Q_{n}(\theta)=h_{n}(\theta)^{\prime}\Xi_{n}h_{n}(\theta)$ and $\Xi_{n}$ is a symmetric positive-definite $(k_{h}+1)\times(k_{h}+1)$ weighting matrix. Conditions on $\Xi_{n}$ and specific choices of the weighting matrix are discussed below. When $k_{x}=k_{h}=0$, we set $\Xi_{n}=1$, so the criterion function reduces to $Q^{q}_{n}(\lambda)=h^{q}_{n}(\lambda)^{2}$, as in Section \ref{subsec:Model-without-Covariates}.

\subsection{Asymptotic Properties of the GMM Estimator}

We next establish conditions for consistency and asymptotic normality of $\hat{\theta}$. Let $\Theta$ denote the parameter space of $\theta=(\lambda,\beta^{\prime})^{\prime}$. Assumption \ref{assu:Theta} extends Assumption \ref{assu:lambda} to the model with covariates.
\begin{assumption}
\label{assu:Theta}Let $\Theta=\Lambda\times\mathcal{B}$, where $\Lambda$ is a compact subset of $(-1,1)$ and $\mathcal{B}$ is a compact subset of $\mathbb{R}^{k_{x}}$. The true value $\theta_{0}=(\lambda_{0},\beta_{0}')'$ lies in the interior of $\Theta$. When $k_{x}=0$, $\mathcal{B}=\mathbb{R}^{0}$ and $\Theta=\Lambda$.
\end{assumption}
Define the whole-sample matrices $Y=(Y^{\prime}_{1},\ldots,Y^{\prime}_{R})^{\prime}$, $X=(X^{\prime}_{1},\ldots,X^{\prime}_{R})^{\prime}$, $\epsilon=(\epsilon^{\prime}_{1},\ldots,\epsilon^{\prime}_{R})^{\prime}$, $I^{\ast}=\diag^{R}_{r=1}\{I^{\ast}_{r}\}$, $W=\diag^{R}_{r=1}\{W_{r}\}$, and $H=(H^{\prime}_{1},\ldots,H^{\prime}_{R})^{\prime}$. Stacking (\ref{eq:general_urn_demeaned}) across urns gives
\begin{equation}
I^{\ast}Y=\lambda_{0}I^{\ast}WY+I^{\ast}X\beta_{0}+I^{\ast}\epsilon.\label{eq:general_sample}
\end{equation}
Define $\tilde{X}=\left[W(I-\lambda_{0}W)^{-1}X\beta_{0},X\right]$. It follows that $I^{\ast}\tilde{X}=\E\left(I^{\ast}WY,I^{\ast}X\right)$, which is the expected matrix of right-hand-side variables in (\ref{eq:general_sample}).\footnote{Note that $I-\lambda_{0}W$ is invertible by (\ref{eq:I_lW_inv}), and $I^{\ast}$, $W$ and $(I-\lambda_{0}W)^{-1}$ commute by Remark \ref{rem:calculation}, (\ref{eq:general_sample}) implies $I^{\ast}Y=I^{\ast}(I-\lambda_{0}W)^{-1}\left(X\beta_{0}+\epsilon\right)$, and hence $I^{\ast}WY=I^{\ast}W(I-\lambda_{0}W)^{-1}\left(X\beta_{0}+\epsilon\right)$. Therefore, $\E\left(I^{\ast}WY\right)=I^{\ast}W(I-\lambda_{0}W)^{-1}X\beta_{0}$.} Identification depends on the rank of $I^{\ast}\tilde{X}$.
\begin{assumption}
\label{assu:identification}(i) If $k_{x}>0$, suppose $H$ contains $I^{\ast}X$ and possibly additional instruments. The elements of $I^{\ast}X$ and $H$ are uniformly bounded in absolute value, and the smallest eigenvalues of $X^{\prime}I^{\ast}X/n$ and $H^{\prime}I^{\ast}H/n$ are uniformly bounded below by a positive constant.

(ii) If $k_{x}>0$, $Q_{H\tilde{X}}=\lim_{n\rightarrow\infty}H^{\prime}I^{\ast}\tilde{X}/n$ has the same rank as $Q_{\tilde{X}\tilde{X}}=\lim_{n\rightarrow\infty}\left(\tilde{X}^{\prime}I^{\ast}\tilde{X}\right)/n$.

When $k_{x}=k_{h}=0$, parts (i) and (ii) are not imposed.
\end{assumption}
When $k_{x}>0$, part (i) requires $I^{\ast}X$ in (\ref{eq:general_sample}) to serve as its own instrument. Part (ii) requires $H$ to preserve the rank of $I^{\ast}\tilde{X}$. Since $\tilde{X}$ contains $X$ and part (i) implies $\rank(\lim_{n\rightarrow\infty}X^{\prime}I^{\ast}X/n)=k_{x}$, it follows that $k_{x}\leqslant\rank(Q_{\tilde{X}\tilde{X}})\leqslant k_{x}+1$.

If $\rank(Q_{\tilde{X}\tilde{X}})=k_{x}$, then $I^{\ast}W(I-\lambda_{0}W)^{-1}X\beta_{0}$ is asymptotically linearly dependent on $I^{\ast}X$. One such case is $\beta_{0}=0$, for which $\E(I^{\ast}WY)=0$. Another arises when every peer group has the same size $m$ and $X=[X^{(1)},WX^{(1)}]$, so $W(I-\lambda_{0}W)^{-1}X\beta_{0}$ lies in the span of $X$ and $\rank(Q_{\tilde{X}\tilde{X}})=k_{x}$. Part (ii) gives $\rank(Q_{H\tilde{X}})=k_{x}$ hence $H=I^{\ast}X$ suffices. The quadratic moment identifies $\lambda_{0}$, while the linear moments identify $\beta_{0}$ given $\lambda_{0}$. Variation in peer-group size is therefore unnecessary for identification.

If instead $\rank(Q_{\tilde{X}\tilde{X}})=k_{x}+1$, part (ii) requires $\rank(Q_{H\tilde{X}})=k_{x}+1$, so $H$ must contain at least one additional instrument for $I^{\ast}WY$ beyond $I^{\ast}X$. In this case, $\E\left(I^{\ast}WY\right)=I^{\ast}W(I-\lambda_{0}W)^{-1}X\beta_{0}$ is asymptotically linearly independent of $I^{\ast}X$. Since $I^{\ast}W(I-\lambda_{0}W)^{-1}X\beta_{0}=I^{\ast}WX\beta_{0}+\lambda_{0}I^{\ast}W^{2}X\beta_{0}+\cdots$, columns of $I^{\ast}WX$, $I^{\ast}W^{2}X$, and higher-order spatial lags provide natural candidates for the additional instruments needed to satisfy the rank condition. In this full-rank case, the proof of Theorem \ref{thm:consistency_theta} shows that the linear moment alone identifies $\theta$. Thus, with valid instruments for $I^{\ast}_{r}W_{r}Y_{r}$, the quadratic moment may be dropped and the independence assumption on the idiosyncratic terms in Assumption \ref{assu:epsilon} can be relaxed. We do not pursue this extension here and maintain Assumption \ref{assu:epsilon} throughout the paper.
\begin{assumption}
\label{assu:lim}Suppose $\Xi_{n}\xrightarrow{p}\bar{\Xi}$, where $\bar{\Xi}$ is finite and positive definite, and $\lim_{n\rightarrow\infty}\E h_{n}(\theta)=\bar{h}(\theta)$ for every $\theta\in\Theta$.
\end{assumption}
When $k_{x}\geqslant1$, Assumptions \ref{assu:identification} and \ref{assu:lim} ensure identification through the linear and quadratic moments. When $k_{x}=k_{h}=0$, identification follows from the quadratic moment and Assumption \ref{assu:lim}. As shown in the proof of Theorem \ref{thm:consistency_theta}, $\E h_{n}(\theta)$ is well defined and finite for every $n$ and $\theta\in\Theta$. Thus, with respect to the expected moment, Assumption \ref{assu:lim} only requires the existence of its pointwise limit $\bar{h}(\theta)$. The proof further establish that this convergence is uniform over $\Theta$, $\sup_{\theta\in\Theta}\left\Vert \E h_{n}(\theta)-\bar{h}(\theta)\right\Vert \rightarrow0$.
\begin{thm}[Consistency]
\label{thm:consistency_theta}Suppose Assumptions \ref{assu:size}--\ref{assu:lim} hold. Then $\hat{\theta}\xrightarrow{p}\theta_{0}$ as $n\rightarrow\infty$.
\end{thm}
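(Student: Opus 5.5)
The plan is the standard extremum-estimator argument: show uniform convergence of $Q_n(\theta)$ to $\bar{Q}(\theta)=\bar{h}(\theta)'\bar{\Xi}\bar{h}(\theta)$ over the compact $\Theta$, and show that $\bar{Q}$ has a unique minimum at $\theta_{0}$. Consistency then follows from the usual theorem.

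The first step is to write $h_n(\theta)$ explicitly as a polynomial in $(\lambda,\beta)$ with random coefficients. Using $\epsilon^{\ast}_{r}(\theta)=I^{\ast}_{r}\epsilon_{r}+(\lambda_{0}-\lambda)I^{\ast}_{r}W_{r}Y_{r}+I^{\ast}_{r}X_{r}(\beta_{0}-\beta)$ and $I^{\ast}WY=I^{\ast}W(I-\lambda_{0}W)^{-1}(X\beta_{0}+\epsilon)$, the quadratic moment becomes a quadratic polynomial in $(\lambda_{0}-\lambda,\beta_{0}-\beta)$. Its coefficients are sample averages of three kinds:
\begin{itemize}
\item quadratic forms $\epsilon'B\epsilon/n$;
\item linear forms $c'\epsilon/n$;
\item nonstochastic terms.
\end{itemize}
The matrices $B$ are built from $I^{\ast}$, $A$, $W$, $(I-\lambda_{0}W)^{-1}$. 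The linear moment is linear in $\theta$ with coefficients $H'I^{\ast}\epsilon/n$, $H'I^{\ast}WY/n$ and $H'I^{\ast}X/n$.

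The key technical input is that all these matrices are uniformly bounded in row and column sums. This is where the group/urn structure enters. $W$ has row sums one. $(I-\lambda W)^{-1}$ is block-diagonal with an explicit form (eigenvalues $1+\lambda/(m-1)$ and $1-\lambda$), bounded uniformly for $\lambda\in\Lambda\subset(-1,1)$. The projection $J_{r}/n_{r}$ has row sums one, so $I^{\ast}_{r}$ has row sums at most $2$. Finally, $A_{r}=W_{r}+I_{r}/(n_{r}-1)$ is also bounded.

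Given this, Chebyshev's inequality with Assumption \ref{assu:epsilon} yields the two convergences needed:
\begin{itemize}
\item $\Var(\epsilon'B\epsilon/n)=O(1/n)$, since $\tr(BB')\le n\cdot\text{const}$ and fourth moments are bounded, and likewise $\Var(c'\epsilon/n)=O(1/n)$;
\item each random coefficient is therefore $o_p(1)$ around its mean.
\end{itemize}
Because $\Theta$ is compact and $h_n$ is a fixed-degree polynomial in $\theta$, pointwise convergence of finitely many coefficients gives $\sup_\theta\|h_n(\theta)-\E h_n(\theta)\|=o_p(1)$. For the deterministic part, $\E h_n(\theta)$ is a polynomial with bounded coefficients, so it is equicontinuous on $\Theta$. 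Combined with the pointwise limit in Assumption \ref{assu:lim}, this gives $\sup_\theta\|\E h_n(\theta)-\bar h(\theta)\|\to0$. Together with $\Xi_n\to_p\bar\Xi$, this delivers uniform convergence of $Q_n$.

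\textbf{Identification is the main obstacle.} Since $\bar\Xi>0$, it suffices to show that $\bar h(\theta)=0$ implies $\theta=\theta_{0}$. The limit of the linear moment is $Q_{H\tilde X}(\lambda_{0}-\lambda,(\beta_{0}-\beta)')'$. I would split into two cases according to Assumption \ref{assu:identification}.
\begin{itemize}
\item \emph{Full-rank case} ($\rank=k_x+1$). Here $Q_{H\tilde X}$ has full column rank, so the linear moment alone forces $\theta=\theta_0$.
\item \emph{Deficient-rank case} (and the case $k_x=0$). Here $\E(I^{\ast}WY)$ is asymptotically in the span of $I^{\ast}X$, say $I^\ast\tilde X_1\approx I^\ast X\gamma$. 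The linear moment then pins down $\beta=\beta(\lambda)=\beta_{0}+(\lambda_{0}-\lambda)\gamma$. Substituting, the mean part of $\epsilon^{\ast}(\theta)$ asymptotically vanishes, and the quadratic moment limit becomes
\[
\lim\frac{1}{n}\Big[2(\lambda_0-\lambda)\tr\big(A^{\ast}W(I-\lambda_0W)^{-1}\Omega\big)+(\lambda_0-\lambda)^2\tr\big((I-\lambda_0W)^{-1\prime}W'A^{\ast}W(I-\lambda_0W)^{-1}\Omega\big)\Big].
\]
\end{itemize}
Showing that this vanishes only at $\lambda=\lambda_0$ is the hard part. I would compute the traces explicitly using the block structure, using that $W$, $I^\ast$ and $(I-\lambda_0 W)^{-1}$ commute. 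Group by group, $W_{gr}$ has eigenvalue $1$ on $\mathbf 1$ and $-1/(m-1)$ on its complement. Using $\diag(A^\ast)=0$, I expect the expression to reduce to $(\lambda_0-\lambda)$ times a factor of the form
\[
\frac1n\sum \sigma^2_{igr}\,c_{igr}(\lambda,\lambda_0),
\]
where $c_{igr}>0$ is uniformly bounded away from zero for $\lambda,\lambda_0\in\Lambda$. The bounds $\sigma^2\ge c_\sigma$, $m_{gr}\ge2$, $G_r\ge2$ and $n/G\le C_m$ should give the positive lower bound (this also yields $\bar D_\lambda<0$). This sign and positivity computation, which must hold under heteroskedasticity and finite-urn demeaning, is where I expect most of the work. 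If a clean closed form fails, I would instead try to bound the factor below via $\tr(A^\ast W\Omega)$ restricted to within-group off-diagonal terms, which are all of the same sign.
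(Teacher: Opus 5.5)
Your route is the paper's. It uses the same representation of $h_n(\theta)$ as a degree-two polynomial in $\theta-\theta_0$ whose coefficients are quadratic and linear forms in $\epsilon$ with UBRC/UB weights, and the same Chebyshev argument for the stochastic part. Your equicontinuity argument for $\E h_n$ is a valid substitute for the paper's coefficient-convergence argument. The identification also splits into the same two cases, and the rank-deficient case is reduced to the covariate-free trace via $\beta-\beta_0=-(\lambda-\lambda_0)Q_{XX}^{-1}Q_{X\phi}$ and $n^{-1}\Vert I^{\ast}\tilde{X}(\theta-\theta_0)\Vert^2\to0$. The computation you defer is exactly what the paper carries out. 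Write every matrix in $p\,I^{\ast}_{gr}+q\,J^{\ast}_{gr}$ block form and factor a difference of squares. The diagonal entries of $\tilde{A}_r(\lambda)=[I_r-(\lambda-\lambda_0)B_r]'A^{\ast}_r[I_r-(\lambda-\lambda_0)B_r]$ for members of group $g$ are then
\[
-(\lambda-\lambda_0)\frac{n_r-m_{gr}}{m_{gr}(n_r-1)}\varphi_{gr}(\lambda),\qquad\varphi_{gr}(\lambda)=\left(\frac{1}{1-\lambda_0}+\frac{1}{m_{gr}-1+\lambda_0}\right)\left(\frac{1-\lambda}{1-\lambda_0}+\frac{m_{gr}-1+\lambda}{m_{gr}-1+\lambda_0}\right).
\]
Thus the would-be second root in $\lambda$ is a zero of $\varphi_{gr}$, and that zero lies outside $(-1,1)$.

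Two parts of your sketch need correcting.

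First, the per-individual coefficient $c_{igr}=\frac{n_r-m_{gr}}{m_{gr}(n_r-1)}\varphi_{gr}(\lambda)$ is \emph{not} uniformly bounded away from zero. Assumption \ref{assu:size} bounds only the average group size. In an urn with two groups of sizes $m$ and $2$, members of the large group get $\frac{2}{m(m+1)}\varphi_{gr}(\lambda)$, which vanishes as $m\to\infty$. The lower bound must instead be obtained at the urn level:
\begin{itemize}
\item $\varphi_{gr}\geqslant c_\varphi$ on $\Lambda$;
\item $\tr(\Omega_{gr})\geqslant c_\sigma m_{gr}$, which cancels the factor $1/m_{gr}$;
\item $\sum_g (n_r-m_{gr})/(n_r-1)=n_r(G_r-1)/(n_r-1)\geqslant G_r/2$.
\end{itemize}
Together these give the bound $c_\sigma c_\varphi G/(2n)\geqslant c_\sigma c_\varphi/(2C_m)$. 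This is where $G_r\geqslant2$ and $n/G\leqslant C_m$ actually enter.

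Second, your fallback of bounding the factor through same-signed within-group off-diagonal terms only signs the linear coefficient $a_n=\frac{1}{n}\tr(A^{\ast}W(I-\lambda_0W)^{-1}\Omega)$. That yields $\bar D_\lambda<0$ and local identification. It cannot rule out the second zero $\lambda=\lambda_0+2a_n/b_n$ of $2(\lambda_0-\lambda)a_n+(\lambda_0-\lambda)^2b_n$ lying in $\Lambda$. Global identification therefore needs the exact factorization above, or equivalent control of $b_n$.
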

Appendix \ref{subsec:Proof_Consistency_Theta} contains the proof.

For the asymptotic distribution of $\hat{\theta}$, let $H^{\ast}=I^{\ast}H$, $A^{\ast}=\diag^{R}_{r=1}\{A^{\ast}_{r}\}$, and $\Omega=\diag^{R}_{r=1}\{\Omega_{r}\}$, and define
\begin{align}
V_{n} & =\frac{1}{n}\left[\begin{array}{cc}
2\tr\left(A^{\ast}\Omega A^{\ast}\Omega\right) & 0\\
0 & H^{\ast\prime}\Omega H^{\ast}
\end{array}\right],\label{eq:Vn}\\
D_{n} & =-\frac{1}{n}\left(\begin{array}{c}
\begin{array}{cc}
2\tr\left(A^{\ast}W(I-\lambda_{0}W)^{-1}\Omega\right) & 0\\
H^{\prime}I^{\ast}W(I-\lambda_{0}W)^{-1}X\beta_{0} & H^{\prime}I^{\ast}X
\end{array}\end{array}\right).\label{eq:Dn}
\end{align}
When $k_{x}=0$, the zero-dimensional blocks in (\ref{eq:Vn}) and (\ref{eq:Dn}) are omitted, so $V_{n}=2\tr\left(A^{\ast}\Omega A^{\ast}\Omega\right)/n$ and $D_{n}=-2\tr\left(A^{\ast}W(I-\lambda_{0}W)^{-1}\Omega\right)/n$. As shown in the proof of Theorem \ref{thm:asym_theta}, $V_{n}=\Var\left(\sqrt{n}h_{n}(\theta_{0})\right)$ and $D_{n}=\E\left(\nabla_{\theta}h_{n}(\theta)\left|_{\theta=\theta_{0}}\right.\right)$.
\begin{thm}[Asymptotic Normality]
\label{thm:asym_theta}Suppose Assumptions \ref{assu:size}--\ref{assu:lim} hold and that $\lim_{n\rightarrow\infty}V_{n}=\bar{V}$, $\lim_{n\rightarrow\infty}D_{n}=\bar{D}$. Then $\bar{V}$ is positive definite, $\bar{D}$ has full column rank, and $\sqrt{n}(\hat{\theta}-\theta_{0})\xrightarrow{d}N(0,\Sigma_{\theta})$, where $\Sigma_{\theta}=(\bar{D}^{\prime}\bar{\Xi}\bar{D})^{-1}\bar{D}^{\prime}\bar{\Xi}\bar{V}\bar{\Xi}\bar{D}(\bar{D}^{\prime}\bar{\Xi}\bar{D})^{-1}$.
\end{thm}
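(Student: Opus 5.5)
The proof follows the standard route for extremum estimators. We take the first-order condition of $Q_n$ at $\hat\theta$, expand $h_n$ around $\theta_0$, establish a CLT for $\sqrt{n}h_n(\theta_0)$ from a linear--quadratic form central limit theorem, and then check that $\bar V$ and $\bar D$ are nondegenerate.

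\textbf{Step 1: First-order expansion.} By Theorem \ref{thm:consistency_theta}, $\hat\theta\xrightarrow{p}\theta_0$, and $\theta_0$ is interior by Assumption \ref{assu:Theta}. Hence, with probability approaching one,
\[
0=\nabla_\theta h_n(\hat\theta)'\Xi_n h_n(\hat\theta).
\]
Since $\epsilon^\ast_r(\theta)$ is linear in $(\lambda,\beta)$, $h_n$ is a polynomial of degree two in $\theta$. The mean value expansion
\[
h_n(\hat\theta)=h_n(\theta_0)+\nabla_\theta h_n(\bar\theta)(\hat\theta-\theta_0)
\]
is therefore exact up to an explicit quadratic term. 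We then need $\nabla_\theta h_n(\theta)\to D_n$ uniformly on shrinking neighborhoods. The entries of $\nabla_\theta h_n$ are of three kinds: $-\frac{2}{n}Y'W'I^\ast A^\ast\epsilon^\ast(\theta)$ and its $\beta$-analogues; $-\frac{1}{n}H'I^\ast WY$; and $-\frac1n H'I^\ast X$.

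Writing $I^\ast Y=I^\ast(I-\lambda_0W)^{-1}(X\beta_0+\epsilon)$ as in the footnote, each entry is a linear-quadratic form in $\epsilon$. Their weighting matrices are uniformly bounded in row and column sum norms. This uses $\|W\|_\infty=1$, $|\lambda|<1$ for the Neumann series of $(I-\lambda W)^{-1}$, boundedness of $A^\ast_r$ via $\|I^\ast_r\|$, and Assumption \ref{assu:identification}(i). Chebyshev's inequality with the variance formulas for quadratic forms then gives convergence of each entry to its mean at rate $O_p(n^{-1/2})$. A direct computation shows these means equal $D_n$; the quadratic-moment row uses $\E\epsilon'B\epsilon=\tr(B\Omega)$, and the linear rows use $\E(I^\ast WY)=I^\ast W(I-\lambda_0W)^{-1}X\beta_0$. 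This yields
\[
\sqrt n(\hat\theta-\theta_0)=-(\bar D'\bar\Xi\bar D)^{-1}\bar D'\bar\Xi\sqrt n h_n(\theta_0)+o_p(1),
\]
provided $\bar D$ has full column rank.

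\textbf{Step 2: CLT for $\sqrt n h_n(\theta_0)$.} At $\theta_0$,
\[
h_n(\theta_0)=\frac1n\bigl(\epsilon'A^\ast\epsilon,\;H^{\ast\prime}\epsilon\bigr)'.
\]
The quadratic form has mean zero because $\diag(A^\ast)=0$. Its covariance with the linear terms vanishes because $\E\epsilon_i^2\epsilon_j=0$ for $i\neq j$ under independence and zero diagonal. Using the symmetry of $A^\ast$ and the diagonality of $\Omega$, the variance is $2\tr(A^\ast\Omega A^\ast\Omega)$. Hence $\Var(\sqrt n h_n(\theta_0))=V_n$. For any fixed vector $c$, $c'\sqrt n h_n(\theta_0)$ is a linear--quadratic form in independent heteroskedastic errors with bounded-norm weights and $4+c_\epsilon$ moments. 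The CLT of Kelejian and Prucha (2001), as used by \citet{lin_gmm_2010}, then applies once $c'\bar V c>0$. The Cramér--Wold device finishes the step. Combining Steps 1 and 2 gives the sandwich form of $\Sigma_\theta$.

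\textbf{Step 3: Nondegeneracy of $\bar V$ and $\bar D$ (main obstacle).} For the linear block of $\bar V$, Assumption \ref{assu:epsilon} and Assumption \ref{assu:identification}(i) give $H^{\ast\prime}\Omega H^\ast/n\ge c_\sigma H'I^\ast H/n$, which is bounded below. For the quadratic block,
\[
\tr(A^\ast\Omega A^\ast\Omega)\ge c_\sigma^2\tr(A^{\ast2})=c_\sigma^2\|A^\ast\|_F^2 .
\]
We then compute $\|A^\ast_r\|_F^2$ explicitly from the group structure. I expect it to be at least of order $G_r$, using $G_r\ge2$, $m_{gr}\ge2$ and Assumption \ref{assu:size}, so the sum over urns is of order $G\asymp n$.

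The harder part is $\bar D$, specifically that $\bar D_\lambda<0$. Since $W$ and $(I-\lambda_0W)^{-1}$ are block diagonal with group blocks, $\tr(A^\ast W(I-\lambda_0W)^{-1}\Omega)$ reduces to a sum of diagonal entries of $A^\ast W(I-\lambda_0W)^{-1}$ weighted by $\sigma^2_{igr}$. I would compute these diagonal entries in closed form, using the eigen-decomposition of $W_{gr}$ (eigenvalues $1$ and $-1/(m_{gr}-1)$) and the explicit form of $I^\ast_r$. The aim is to show each entry is bounded below by a positive constant for $|\lambda_0|<1$, perhaps after a correction of order $1/n_r$ handled by $G_r\ge2$. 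The conclusion would then follow from $\sigma^2_{igr}\ge c_\sigma$.

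With $\bar D_\lambda<0$ established, full column rank of $\bar D$ follows from its block-triangular structure: its lower-right block $-H'I^\ast X/n$ has rank $k_x$ by Assumption \ref{assu:identification}. The theorem's scalar case, Theorem \ref{thm:Consistency_lambda}(ii), is then the $k_x=0$ specialization.
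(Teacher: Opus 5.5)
Your overall architecture matches the paper's: the first-order condition with a mean-value expansion, uniform convergence of $\nabla_\theta h_n$ to $D_n$ via the polynomial structure in $\theta$, the Kelejian--Prucha CLT for $\sqrt n h_n(\theta_0)$, the bound $c_\sigma^2\tr(A^{\ast2})$ for the quadratic block of $\bar V$, and the block-triangular argument for the rank of $\bar D$. Your expectation for the $\bar V$ block is also correct: $\tr(A^{\ast2}_r)=\frac{1}{n_r-1}\sum_g\frac{(n_r-m_{gr})m_{gr}}{m_{gr}-1}\geqslant G_r/2$.

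The gap is in the step you flag as the main obstacle, showing that the $(1,1)$ entry of $\bar D$ is negative. If you carry out your computation, the diagonal entry of $A^{\ast}W(I-\lambda_0W)^{-1}$ for a member of group $g$ in urn $r$ is
\[
\frac{n_r-m_{gr}}{m_{gr}(n_r-1)}\left(\frac{1}{1-\lambda_0}+\frac{1}{m_{gr}-1+\lambda_0}\right).
\]
This entry is positive, but it is of order $1/m_{gr}$. It is also small when group $g$ contains most of urn $r$, because the factor $(n_r-m_{gr})/(n_r-1)$ is not $1-O(1/n_r)$ unless $m_{gr}$ is bounded. Assumption~\ref{assu:size} bounds only the average group size $n/G$, not $\sup_{g,r}m_{gr}$. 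So these entries need not be bounded away from zero, and the per-entry lower bound you aim for fails in general. It holds only under an additional uniform bound on group sizes.

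The fix is to aggregate, exactly as you do for $\tr(A^{\ast2}_r)$:
\begin{itemize}
\item Within a group, weight by variances and sum. The bound $\tr(\Omega_{gr})\geqslant c_\sigma m_{gr}$ cancels the factor $1/m_{gr}$, and $1/(1-\lambda_0)>1/2$ then gives a group contribution of at least $\frac{c_\sigma}{2}\frac{n_r-m_{gr}}{n_r-1}$.
\item Within an urn, sum over $g$ using $\sum_g(n_r-m_{gr})=n_r(G_r-1)$ and $G_r\geqslant2$. This gives at least $\frac{c_\sigma}{4}G_r$ per urn.
\item Across urns, sum and use $G\geqslant n/C_m$. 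This yields $\frac{2}{n}\tr\bigl(A^{\ast}W(I-\lambda_0W)^{-1}\Omega\bigr)\geqslant c_\sigma/(2C_m)$, which is the negative of the $(1,1)$ entry of $D_n$.
\end{itemize}
This is the paper's argument. The paper obtains the same group-level expression by differentiating, at $\lambda=\lambda_0$, the identity $\frac1n\tr(\tilde A(\lambda)\Omega)=-(\lambda-\lambda_0)[\cdots]$ from the consistency proof.
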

The proof is provided in Appendix \ref{subsec:Proof_Normality_Theta}. When $k_{x}=0$, we set $\Xi_{n}=1$. When $k_{x}>0$, the optimal weighting matrix is $\bar{\Xi}=\bar{V}^{-1}$, yielding $\Sigma_{\theta}=(\bar{D}^{\prime}\bar{V}^{-1}\bar{D})^{-1}$. Following \citet{kuersteiner_dynamic_2020}, for the first step of the GMM estimator, we replace $\Omega$ in $V_{n}$ with $I$, define $V^{(1)}_{n}=\frac{1}{n}\diag\left[2\tr(A^{\ast2}),H^{\ast\prime}H^{\ast}\right]$, and use $\Xi^{(1)}_{n}=(V^{(1)}_{n})^{-1}$ as the weighting matrix.\footnote{Provided that $V^{(1)}_{n}\rightarrow\bar{V}^{(1)}$, the proof of Theorem \ref{thm:asym_theta} shows that $\bar{V}^{(1)}$ is positive definite. Hence $\Xi^{(1)}_{n}\xrightarrow{p}\bar{\Xi}^{(1)}=(\bar{V}^{(1)})^{-1}$, where $\bar{\Xi}^{(1)}$ is finite and positive definite, thereby satisfying the weighting-matrix condition in Assumption \ref{assu:lim}.} In the second step, $\Xi^{(2)}_{n}=\hat{V}^{-1}$, where $\hat{V}$ is a consistent estimator of $\bar{V}$ based on the first-step estimates, as discussed below.\footnote{In finite samples, $\hat{V}$ evaluated at the first-step estimate may not be positive definite. In that case, we retain the one-step GMM estimate as the final estimate. Because $\hat{V}\xrightarrow{p}\bar{V}$ and $\bar{V}$ is positive definite, this event has probability approaching zero and does not affect the asymptotic results.}

\subsection{Estimation of the Variance-Covariance Matrix}\label{subsec:VC-estimation}

Feasible estimation of the variance-covariance matrix $\Sigma_{\theta}$ in Theorem \ref{thm:asym_theta} requires estimators of $\bar{V}$ and $\bar{D}$, the limits of $V_{n}$ and $D_{n}$ in (\ref{eq:Vn}) and (\ref{eq:Dn}). The main difficulty is that within-urn demeaning makes the transformed innovations dependent, even though the original innovations are independent under Assumption \ref{assu:epsilon}. We correct for this dependence and obtain a consistent estimator of $\Sigma_{\theta}$ under the same asymptotic sequences considered above, where the total number of peer groups $G$ grows through more urns, more groups within urns, or both. Our approach differs from the urn-clustered estimator of \citet{jochmans_testing_2023}, whose validity requires $R\rightarrow\infty$ and bounded urn sizes. It also differs from the variance estimators of \citet{kelejian_specification_2010} and \citet{lin_gmm_2010}. When applied to our model with urn fixed effects, their fixed-dimensional regressor frameworks require $R$ to remain fixed and urn sizes to grow, so the dependence induced by within-urn demeaning vanishes asymptotically.

Let $\varsigma_{r}=(\sigma^{2}_{1r},\ldots,\sigma^{2}_{n_{r}r})^{\prime}$ so that $\Omega_{r}=\diag(\varsigma_{r})$, and let $\epsilon^{\ast}_{r}=I^{\ast}_{r}\epsilon_{r}=\epsilon_{r}-\bar{\epsilon}_{r}\mathbf{1}_{r}$. When $n_{r}$ is finite, $\E\left(\epsilon^{\ast}_{r}\odot\epsilon^{\ast}_{r}\right)\neq\varsigma_{r}$, where $\odot$ denotes element-wise multiplication. We therefore first define a bias-corrected estimator of $\varsigma_{r}$ by $\tilde{\varsigma}_{r}=P_{r}\left(\epsilon^{\ast}_{r}\odot\epsilon^{\ast}_{r}\right)$, where
\begin{equation}
P_{r}=\frac{n_{r}}{n_{r}-2}\left[I_{r}-\frac{1}{n_{r}(n_{r}-1)}J_{r}\right].\label{eq:P_r}
\end{equation}
The corresponding estimator of $\Omega_{r}$ is $\tilde{\Omega}_{r}=\diag(\tilde{\varsigma}_{r})$. Under Assumption \ref{assu:epsilon}, $\E(\tilde{\varsigma}_{r})=\varsigma_{r}$ and $\E(\tilde{\Omega}_{r})=\Omega_{r}$. To see this, note that the $j$th element of $\epsilon^{\ast}_{r}$ is $\epsilon^{\ast}_{jr}=\epsilon_{jr}-\bar{\epsilon}_{r}$, where $j=1,\ldots,n_{r}$ indexes individuals within urn $r$. By Assumption \ref{assu:epsilon}, 
\begin{align*}
\E(\epsilon^{\ast2}_{jr}) & =\E(\epsilon^{2}_{jr}-2\epsilon_{jr}\bar{\epsilon}_{r}+\bar{\epsilon}^{2}_{r})=(1-\frac{2}{n_{r}})\sigma^{2}_{jr}+\frac{1}{n^{2}_{r}}\sum^{n_{r}}_{j'=1}\sigma^{2}_{j'r},\\
\E(\epsilon^{\ast}_{r}\odot\epsilon^{\ast}_{r}) & =(1-\frac{2}{n_{r}})\varsigma_{r}+\frac{1}{n^{2}_{r}}\mathbf{1}_{r}\mathbf{1}^{\prime}_{r}\varsigma_{r}=\left(\frac{n_{r}-2}{n_{r}}I_{r}+\frac{1}{n^{2}_{r}}J_{r}\right)\varsigma_{r}.
\end{align*}
For $P_{r}$ in (\ref{eq:P_r}), it is readily verified that $P_{r}\left(\frac{n_{r}-2}{n_{r}}I_{r}+\frac{1}{n^{2}_{r}}J_{r}\right)=I_{r}$. Hence $\E(\tilde{\varsigma}_{r})=\E\left[P_{r}\left(\epsilon^{\ast}_{r}\odot\epsilon^{\ast}_{r}\right)\right]=\varsigma_{r}$ and $\E(\tilde{\Omega}_{r})=\Omega_{r}$. Thus, $P_{r}$ corrects the finite-urn bias induced by replacing $\epsilon_{jr}$ with its within-urn demeaned counterpart $\epsilon^{\ast}_{jr}$. This bias vanishes elementwise as $n_{r}\rightarrow\infty$, since $\E(\epsilon^{\ast}_{r}\odot\epsilon^{\ast}_{r})-\varsigma_{r}\rightarrow0$. Consequently, under the large-urn asymptotics of \citet{kelejian_specification_2010} and \citet{lin_gmm_2010}, $\epsilon^{\ast}_{r}\odot\epsilon^{\ast}_{r}$ can be used directly in place of $\varsigma_{r}$ in the variance estimators. By contrast, \citet{jochmans_testing_2023} avoids estimating $\varsigma_{r}$ by using urn-clustered standard errors, whose validity relies on $R\rightarrow\infty$.

Correcting the bias in $\epsilon^{\ast}_{r}\odot\epsilon^{\ast}_{r}$ is not sufficient for the $(1,1)$ element of $V_{n}$. In particular, $\E\left[\tr(A^{\ast}_{r}\tilde{\Omega}_{r}A^{\ast}_{r}\tilde{\Omega}_{r})\right]\neq\tr(A^{\ast}_{r}\Omega_{r}A^{\ast}_{r}\Omega_{r})$ because the elements of $\tilde{\varsigma}_{r}$ are correlated.\footnote{When $R$ is fixed and $n_{r}$ grows, the correlations among $\epsilon^{\ast2}_{jr}$ also vanish, so the variance-covariance estimators of \citet{kelejian_specification_2010} and \citet{lin_gmm_2010} remain valid.} We therefore apply an additional correction to the quadratic component of $V_{n}$. First, observe that for any $n_{r}\times n_{r}$ matrices $M_{1r}$ and $M_{2r}$,
\begin{equation}
\tr(M_{1r}\Omega_{r}M_{2r}\Omega_{r})=\sum^{n_{r}}_{j=1}\sum^{n_{r}}_{j'=1}M_{1r,jj'}\sigma^{2}_{j'r}M_{2r,j'j}\sigma^{2}_{jr}=\varsigma^{\prime}_{r}(M_{1r}\odot M^{\prime}_{2r})\varsigma_{r}.\label{eq:gamma_quadratic}
\end{equation}
Since $A^{\ast}_{r}=A^{\ast\prime}_{r}$, 
\begin{equation}
\tr(A^{\ast}\Omega A^{\ast}\Omega)=\sum^{R}_{r=1}\tr(A^{\ast}_{r}\Omega_{r}A^{\ast}_{r}\Omega_{r})=\sum^{R}_{r=1}\varsigma^{\prime}_{r}\left(A^{\ast}_{r}\odot A^{\ast}_{r}\right)\varsigma_{r}.\label{eq:trAOAO}
\end{equation}
To estimate $\lim_{n\rightarrow\infty}\tr(A^{\ast}\Omega A^{\ast}\Omega)/n$, we use $\sum^{R}_{r=1}\tilde{\varsigma}^{\prime}_{r}A^{\dagger}_{r}\tilde{\varsigma}_{r}/n$, where
\begin{align}
A^{\dagger}_{r} & =\frac{(n_{r}-2)^{2}}{\left[4+(n_{r}-2)^{2}\right]}\left[A^{\ast}_{r}\odot A^{\ast}_{r}+\frac{4}{(n_{r}-2)^{2}(n_{r}-1)+4}T_{r}\right]+t_{r}(J_{r}-I_{r}),\label{eq:A_dagger}\\
T_{r} & =\left(\mathcal{D}_{r}-\frac{\tr(\mathcal{D}_{r})}{4\left(n_{r}-1\right)}I_{r}\right)(J_{r}-I_{r})+(J_{r}-I_{r})\left(\mathcal{D}_{r}-\frac{\tr(\mathcal{D}_{r})}{4\left(n_{r}-1\right)}I_{r}\right),\nonumber \\
t_{r} & =\frac{4(n_{r}-2)(3n_{r}-4)\tr(\mathcal{D}_{r})}{n_{r}(n_{r}-1)(n_{r}-3)\left[4+(n_{r}-2)^{2}\right]\left[(n_{r}-2)^{2}(n_{r}-1)+4\right]},\nonumber 
\end{align}
and $\mathcal{D}_{r}=\diag^{G_{r}}_{g=1}\{\frac{(n_{r}-m_{gr})}{(m_{gr}-1)(n_{r}-1)}I_{gr}\}$. As shown in the proof of Theorem \ref{thm:convergence_Sigma}, this correction satisfies $\E\left(\tilde{\varsigma}^{\prime}_{r}A^{\dagger}_{r}\tilde{\varsigma}_{r}\right)=\varsigma^{\prime}_{r}\left(A^{\ast}_{r}\odot A^{\ast}_{r}\right)\varsigma_{r}$ given the special form of $A_{r}$ we choose.

To obtain feasible estimators, replace $\epsilon^{\ast}_{r}$ with the estimation residual $\hat{\epsilon}^{\ast}_{r}=\epsilon^{\ast}_{r}(\hat{\theta})$, where $\epsilon^{\ast}_{r}(\theta)$ is defined in (\ref{eq:epsstar_general}). Define $\hat{\varsigma}_{r}=P_{r}\left(\hat{\epsilon}^{\ast}_{r}\odot\hat{\epsilon}^{\ast}_{r}\right)$, with $P_{r}$ given in (\ref{eq:P_r}), and let $\hat{\Omega}_{r}=\diag(\hat{\varsigma}_{r})$ and $\hat{\Omega}=\diag^{R}_{r=1}\{\hat{\Omega}_{r}\}$. Finally, let $A^{\dagger}=\diag^{R}_{r=1}\{A^{\dagger}_{r}\}$, where $A^{\dagger}_{r}$ is defined in (\ref{eq:A_dagger}). Our estimators of $\bar{V}$ and $\bar{D}$ are
\begin{align}
\hat{V} & =\frac{1}{n}\left(\begin{array}{cc}
2\hat{\varsigma}^{\prime}A^{\dagger}\hat{\varsigma} & 0\\
0 & H^{\ast\prime}\hat{\Omega}H^{\ast}
\end{array}\right),\label{eq:Vhat}\\
\hat{D} & =-\frac{1}{n}\left(\begin{array}{c}
\begin{array}{cc}
2\tr\left(A^{\ast}W(I-\hat{\lambda}W)^{-1}\hat{\Omega}\right) & 0\\
H^{\prime}I^{\ast}W(I-\hat{\lambda}W)^{-1}X\hat{\beta} & H^{\prime}I^{\ast}X
\end{array}\end{array}\right).\label{eq:Dhat}
\end{align}
When $k_{x}=k_{h}=0$, $\hat{V}=\hat{V}_{\lambda}$ and $\hat{D}=\hat{D}_{\lambda}$ are their scalar $(1,1)$ entries. For the two-step GMM estimator described above, $\hat{V}$ evaluated at the first-step estimates provides the second-step weight $\Xi^{(2)}_{n}=\hat{V}^{-1}$. The variance-covariance matrix is then estimated using $\hat{V}$ and $\hat{D}$ evaluated at the final estimates. For a given weighting matrix $\Xi_{n}$, the resulting estimator of $\Sigma_{\theta}$ is $\hat{\Sigma}_{\theta}=(\hat{D}^{\prime}\Xi_{n}\hat{D})^{-1}\hat{D}^{\prime}\Xi_{n}\hat{V}\Xi_{n}\hat{D}(\hat{D}^{\prime}\Xi_{n}\hat{D})^{-1}$. Under optimal second-step weighting, $\hat{\Sigma}_{\theta}=(\hat{D}^{\prime}\hat{V}^{-1}\hat{D})^{-1}$.
\begin{thm}
\label{thm:convergence_Sigma}Suppose all conditions in Theorem \ref{thm:asym_theta} hold. In addition, suppose $\sup_{n\geqslant1}\sup_{i,g,r}\E\left|\epsilon_{igr}\right|^{8}<\infty$. Then $\hat{D}\xrightarrow{p}\bar{D}$, $\hat{V}\xrightarrow{p}\bar{V}$, and $\hat{\Sigma}_{\theta}\xrightarrow{p}\Sigma_{\theta}$ as $n\rightarrow\infty$.
\end{thm}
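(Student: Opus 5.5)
The plan is to treat $\hat{D}$ and $\hat{V}$ separately and then obtain $\hat{\Sigma}_{\theta}\xrightarrow{p}\Sigma_{\theta}$ by the continuous mapping theorem. This uses $\Xi_{n}\xrightarrow{p}\bar{\Xi}$ from Assumption \ref{assu:lim}, together with the facts from Theorem \ref{thm:asym_theta} that $\bar{D}$ has full column rank and $\bar{V}$ is positive definite, so that $\bar{D}^{\prime}\bar{\Xi}\bar{D}$ is invertible. In each case the sample object is decomposed into three pieces: (a) the infeasible version evaluated at the true $\epsilon^{\ast}=I^{\ast}\epsilon$ and $\theta_{0}$; (b) its expectation; and (c) the error from replacing $\theta_{0}$ by $\hat{\theta}$. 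Throughout I would use the structural facts already available. The matrices $W$, $I^{\ast}$, $A^{\ast}$ and $(I-\lambda W)^{-1}$ are block diagonal with uniformly bounded row and column sums uniformly over $\lambda\in\Lambda$; for $(I-\lambda W)^{-1}$ this follows from the explicit eigen-structure of $W_{gr}$, since its eigenvalues are $1$ and $-1/(m_{gr}-1)$. The matrices $P_{r}$ and $A^{\dagger}_{r}$ also have bounded row sums, because the $J_{r}$ terms carry $1/n_{r}$ factors. From Theorem \ref{thm:consistency_theta} and Theorem \ref{thm:asym_theta} I take $\hat{\theta}-\theta_{0}=O_{p}(n^{-1/2})$.

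\textbf{Step 1 (unbiasedness of the infeasible pieces).} The paper already shows $\E\tilde{\Omega}_{r}=\Omega_{r}$. Hence $\E[H^{\ast\prime}\tilde{\Omega}H^{\ast}]=H^{\ast\prime}\Omega H^{\ast}$ and $\E\,\tr(A^{\ast}W(I-\lambda_{0}W)^{-1}\tilde{\Omega})=\tr(A^{\ast}W(I-\lambda_{0}W)^{-1}\Omega)$, because these are linear in $\tilde{\varsigma}$. The quadratic piece $\E(\tilde{\varsigma}_{r}^{\prime}A^{\dagger}_{r}\tilde{\varsigma}_{r})=\varsigma_{r}^{\prime}(A^{\ast}_{r}\odot A^{\ast}_{r})\varsigma_{r}$ is the main obstacle.

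To handle it, I would write $\tilde{\varsigma}_{r}=P_{r}(\epsilon^{\ast}_{r}\odot\epsilon^{\ast}_{r})$ and expand $\E[(\epsilon^{\ast}_{jr})^{2}(\epsilon^{\ast}_{j'r})^{2}]$ in terms of $\sigma^{2}$'s and fourth moments $\mu_{4,j}$, using independence. This gives $\E(\tilde{\varsigma}_{r}\tilde{\varsigma}_{r}^{\prime})=\varsigma_{r}\varsigma_{r}^{\prime}+\mathcal{C}_{r}$, where $\mathcal{C}_{r}$ is an explicit covariance matrix. It is linear in the $\mu_{4,j}$ and in the products $\sigma^{2}_{j}\sigma^{2}_{j'}$, with coefficients that are rational in $n_{r}$.

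One then needs $\tr(A^{\dagger}_{r}\mathcal{C}_{r})+\varsigma_{r}^{\prime}(A^{\dagger}_{r}-A^{\ast}_{r}\odot A^{\ast}_{r})\varsigma_{r}=0$ identically in the unknown $\varsigma_{r}$ and $\mu_{4}$. I would compute $A^{\ast}_{r}\odot A^{\ast}_{r}$ explicitly from $A_{r}=W_{r}+I_{r}/(n_{r}-1)$. Its entries take only a few values, depending on whether $j,j'$ are in the same group and on $m_{gr}$, which is exactly where $\mathcal{D}_{r}$ enters. The check has two parts:
\begin{itemize}
\item the $\mu_{4}$ terms cancel because $A^{\dagger}_{r}$ has a specific diagonal;
\item the remaining $\sigma^{2}\sigma^{2}$ terms are matched by the scalars $(n_{r}-2)^{2}/[4+(n_{r}-2)^{2}]$, $T_{r}$ and $t_{r}$.
\end{itemize}
This is a lengthy but mechanical coefficient-matching calculation. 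I would organize it by the pair type: same individual, same group, or different groups.

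\textbf{Step 2 (laws of large numbers).} Each infeasible statistic is a sum over urns of functions of $\epsilon_{r}$. I would bound its variance by $O(1/n)$, so that dividing by $n$ gives convergence to the expectation, which in turn converges to $\bar{V}$ and $\bar{D}$ by assumption.

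For the linear-in-$\tilde{\varsigma}$ terms, write $a^{\prime}\tilde{\varsigma}=(P_{r}^{\prime}a)^{\prime}(\epsilon^{\ast}\odot\epsilon^{\ast})$. This is a quadratic form in $\epsilon$ whose matrix has bounded row and column sums, so its variance is $O(n)$ under fourth moments.

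The quadratic term $\tilde{\varsigma}^{\prime}A^{\dagger}\tilde{\varsigma}$ is a quartic form in $\epsilon$, so its variance involves eighth moments; this is why $\E|\epsilon|^{8}<\infty$ is assumed. I would show the variance is $O(n)$ by counting index coincidences. Urns are independent, and within an urn the bounded-row-sum property of $P_{r}$, $A^{\dagger}_{r}$ and $A^{\ast}_{r}\odot A^{\ast}_{r}$ limits the number of dependent index tuples to $O(n_{r})$ per urn, which sums to $O(n)$.

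\textbf{Step 3 (plug-in error).} Write $\hat{\epsilon}^{\ast}=\epsilon^{\ast}-I^{\ast}[(\hat{\lambda}-\lambda_{0})WY+X(\hat{\beta}-\beta_{0})]$. Then $\hat{\epsilon}^{\ast}\odot\hat{\epsilon}^{\ast}-\epsilon^{\ast}\odot\epsilon^{\ast}$ consists of cross terms with factors $O_{p}(n^{-1/2})$. These multiply sample averages that are $O_{p}(1)$, because $\|Y\|^{2}/n$, $\|\epsilon\|^{4}_{4}/n$ and similar quantities are $O_{p}(1)$ under the moment bounds.

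For $\hat{D}$, I would additionally use the mean value theorem in $\lambda$ and the derivative $W^{2}(I-\lambda W)^{-2}$, which is uniformly bounded in row sums over $\Lambda$, together with $\hat{\beta}\xrightarrow{p}\beta_{0}$. The quartic term $\hat{\varsigma}^{\prime}A^{\dagger}\hat{\varsigma}$ differs from $\tilde{\varsigma}^{\prime}A^{\dagger}\tilde{\varsigma}$ by terms bounded via Cauchy--Schwarz by $\|\hat{\varsigma}-\tilde{\varsigma}\|\,\|\tilde{\varsigma}\|/n$ and similar quantities, all of which are $o_{p}(1)$. Combining Steps 1--3 gives $\hat{D}\xrightarrow{p}\bar{D}$ and $\hat{V}\xrightarrow{p}\bar{V}$, and the continuous mapping argument in the first paragraph finishes the proof.
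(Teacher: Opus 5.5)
Your architecture is the paper's. You pass from $\hat\varsigma$ to the infeasible $\tilde\varsigma=P(\epsilon^{\ast}\odot\epsilon^{\ast})$ by a Cauchy--Schwarz plug-in bound (Lemma~\ref{lem:hat_tilde_ip}, which needs only $\hat\theta\xrightarrow{p}\theta_{0}$, not the $\sqrt{n}$ rate). You then pass from $\tilde\varsigma$ to $\varsigma$ by exact unbiasedness plus an $O(n_{r})$ per-urn variance bound, and conclude by continuous mapping.

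You differ in how the $\tilde\varsigma\to\varsigma$ step is proved. For linear functionals, your rewriting $a'\tilde\varsigma=\epsilon'I^{\ast}\diag(P'a)I^{\ast}\epsilon$ is more direct than the paper's argument and needs only fourth moments. For the quartic term, you would verify the given $A^{\dagger}_{r}$ by matching coefficients against $\Var(\tilde\varsigma_{r})$. Your claim that the $\mu_{4}$ terms drop out because $\diag(A^{\dagger}_{r})=0$ is correct: since $P_{r}=(I^{\ast}_{r}\odot I^{\ast}_{r})^{-1}$, the $\mu_{4}$-dependent part of $\Var(\tilde\varsigma_{r})$ is exactly diagonal. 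The paper instead writes $\frac{n_{r}-2}{n_{r}}(\tilde\varsigma_{r}-\varsigma_{r})=\eta^{(1)}_{r}-2\eta^{(2)}_{r}+\eta^{(3)}_{r}$, where $\eta^{(1)}_{r}$ has independent centered coordinates, $\E\|\eta^{(2)}_{r}\|^{4}=O(1)$, and $\E\|\eta^{(3)}_{r}\|^{4}=O(n_{r}^{-2})$. That one decomposition yields two things: the closed form $\E(\tilde\varsigma_{r}'M_{r}\tilde\varsigma_{r})=\varsigma_{r}'f(M_{r})\varsigma_{r}$ for every symmetric zero-diagonal $M_{r}$, and the variance bound. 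Because of the closed form, $A^{\dagger}_{r}=g(A^{\ast}_{r}\odot A^{\ast}_{r})$ is constructed by inverting $f$ rather than checked entrywise.

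Your stated reason for the variance bound is not right as written. Bounded row sums do not limit the number of dependent index tuples to $O(n_{r})$: through $\bar\epsilon_{r}$ and the $J_{r}$ part of $P_{r}$, every $\tilde\varsigma_{jr}$ depends on every $\epsilon_{kr}$ in the urn. The count works only as a weighted count, in which this urn-wide dependence enters through a few aggregates carrying $1/n_{r}$ weights; that is exactly what the $\eta$-decomposition isolates. The crude bound $\|A^{\dagger}_{r}\|^{2}\E\|\tilde\varsigma_{r}-\varsigma_{r}\|^{4}=O(n_{r}^{2})$ fails precisely in the fixed-$R$, growing-$G_{r}$ regime, so this step must be carried out with that structure in mind.

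One minor slip: $\|Y\|^{2}/n$ need not be $O_{p}(1)$, because the urn effects $\alpha_{r}$ are unrestricted. Your expansion of $\hat\epsilon^{\ast}$ only involves $I^{\ast}WY=W(I-\lambda_{0}W)^{-1}(I^{\ast}X\beta_{0}+I^{\ast}\epsilon)$, which is fine.
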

Appendix \ref{subsec:Proof_Consistency_Sigma} proves the theorem.

\section{Monte Carlo Simulations}\label{sec:Monte-Carlo-Simulations}

We conduct Monte Carlo simulations to examine the finite-sample properties of our estimators under two growth sequences covered by the asymptotic theory. In the first design, the number of urns is fixed at $R=2$, while the number of peer groups in each urn increases over $G_{r}\in\{10,20,50,100,200,500\}$. In the second design, each urn contains $G_{r}=2$ peer groups, while the number of urns increases over $R\in\{10,20,50,100,200,500\}$. In both designs, half of the peer groups in each urn have size $m_{gr}=2$ and half have size $m_{gr}=4$. Each urn therefore contains $n_{r}=3G_{r}$ observations, and the total sample size is $n=3G$. The growing-$R$ design holds urn size fixed at $n_{r}=6$, whereas the growing-$G_{r}$ design lets urn size diverge.

In the baseline design, data are generated according to
\[
Y_{igr}=\alpha_{r}+\lambda_{0}\bar{Y}_{(-i)gr}+X^{(1)\prime}_{igr}\beta_{1,0}+\bar{X}^{(1)}_{(-i)gr}\beta_{2,0}+\epsilon_{igr},
\]
where $\alpha_{r}$ and $X^{(1)}_{igr}$ are drawn independently from $N(0,1)$. The innovations are drawn independently from $N(0,m_{gr}+1)$, and are thus heteroskedastic in peer-group size. We set $\beta_{1,0}=\beta_{2,0}=1$ and consider $\lambda_{0}\in\{-0.4,0,0.4\}$. In each replication, the urn-level outcome vector is generated as $Y_{r}=(I_{r}-\lambda_{0}W_{r})^{-1}(\alpha_{r}\mathbf{1}_{r}+X^{(1)}_{r}\beta_{1,0}+W_{r}X^{(1)}_{r}\beta_{2,0}+\epsilon_{r})$. We use the two-step GMM procedure introduced after Theorem \ref{thm:asym_theta}. The IV matrix is $H_{r}=[I^{\ast}_{r}W^{2}_{r}X^{(1)}_{r},I^{\ast}_{r}W^{3}_{r}X^{(1)}_{r}]$. Each design uses 5,000 replications.

% Generated by simulations/export_mc_tables.R.
% Do not edit this file by hand.

\begin{table}[!htbp]
\centering
\begin{threeparttable}
\caption{Monte Carlo results with $R=2$ and growing $G_r$}
\label{tab:mc-g}
\setlength{\tabcolsep}{3pt}
\begin{tabular}{llccc@{\hspace{0.5em}}c@{\hspace{0.5em}}ccc@{\hspace{0.5em}}c@{\hspace{0.5em}}ccc}
\hline\hline
 & & \multicolumn{3}{c}{$\lambda_0=-0.4$} & & \multicolumn{3}{c}{$\lambda_0=0$} & & \multicolumn{3}{c}{$\lambda_0=0.4$} \\
\cline{3-5}\cline{7-9}\cline{11-13}
$G$ & Statistic & $\lambda$ & $\beta_1$ & $\beta_2$ &  & $\lambda$ & $\beta_1$ & $\beta_2$ &  & $\lambda$ & $\beta_1$ & $\beta_2$ \\
\hline
20 & Bias & -0.030 & 0.004 & 0.017 &  & -0.024 & 0.010 & 0.013 &  & -0.019 & 0.014 & 0.027 \\
 & MC SD & 0.139 & 0.287 & 0.380 &  & 0.135 & 0.298 & 0.400 &  & 0.103 & 0.304 & 0.419 \\
 & Avg. SE & 0.121 & 0.259 & 0.339 &  & 0.123 & 0.268 & 0.357 &  & 0.093 & 0.277 & 0.375 \\
 & Rej. freq. & 0.111 & 0.086 & 0.093 &  & 0.089 & 0.086 & 0.088 &  & 0.084 & 0.082 & 0.084 \\
\noalign{\smallskip}
40 & Bias & -0.011 & 0.007 & 0.008 &  & -0.012 & 0.008 & 0.010 &  & -0.009 & 0.011 & 0.006 \\
 & MC SD & 0.093 & 0.198 & 0.259 &  & 0.092 & 0.205 & 0.271 &  & 0.069 & 0.213 & 0.288 \\
 & Avg. SE & 0.087 & 0.187 & 0.245 &  & 0.087 & 0.192 & 0.254 &  & 0.065 & 0.198 & 0.265 \\
 & Rej. freq. & 0.073 & 0.068 & 0.075 &  & 0.070 & 0.067 & 0.068 &  & 0.066 & 0.067 & 0.073 \\
\noalign{\smallskip}
100 & Bias & -0.005 & 0.003 & 0.002 &  & -0.004 & 0.001 & 0.003 &  & -0.003 & 0.006 & 0.003 \\
 & MC SD & 0.056 & 0.123 & 0.157 &  & 0.058 & 0.125 & 0.163 &  & 0.042 & 0.131 & 0.172 \\
 & Avg. SE & 0.056 & 0.119 & 0.156 &  & 0.055 & 0.122 & 0.162 &  & 0.041 & 0.126 & 0.169 \\
 & Rej. freq. & 0.054 & 0.057 & 0.055 &  & 0.061 & 0.053 & 0.048 &  & 0.057 & 0.064 & 0.056 \\
\noalign{\smallskip}
200 & Bias & -0.002 & 0.001 & 0.001 &  & -0.003 & 0.001 & 0.001 &  & -0.002 & 0.001 & 0.003 \\
 & MC SD & 0.040 & 0.085 & 0.109 &  & 0.039 & 0.089 & 0.117 &  & 0.029 & 0.089 & 0.122 \\
 & Avg. SE & 0.039 & 0.085 & 0.110 &  & 0.039 & 0.087 & 0.115 &  & 0.029 & 0.090 & 0.119 \\
 & Rej. freq. & 0.054 & 0.053 & 0.052 &  & 0.053 & 0.058 & 0.056 &  & 0.046 & 0.049 & 0.057 \\
\noalign{\smallskip}
400 & Bias & -0.001 & 0.000 & -0.000 &  & -0.001 & 0.003 & 0.001 &  & -0.001 & 0.000 & 0.001 \\
 & MC SD & 0.028 & 0.060 & 0.078 &  & 0.028 & 0.062 & 0.082 &  & 0.021 & 0.063 & 0.087 \\
 & Avg. SE & 0.028 & 0.060 & 0.078 &  & 0.028 & 0.062 & 0.081 &  & 0.021 & 0.063 & 0.085 \\
 & Rej. freq. & 0.050 & 0.049 & 0.052 &  & 0.052 & 0.054 & 0.050 &  & 0.053 & 0.049 & 0.058 \\
\noalign{\smallskip}
1,000 & Bias & -0.000 & 0.000 & -0.000 &  & -0.000 & 0.000 & -0.000 &  & -0.000 & 0.000 & -0.000 \\
 & MC SD & 0.018 & 0.038 & 0.050 &  & 0.018 & 0.039 & 0.052 &  & 0.013 & 0.040 & 0.054 \\
 & Avg. SE & 0.018 & 0.038 & 0.050 &  & 0.018 & 0.039 & 0.051 &  & 0.013 & 0.040 & 0.053 \\
 & Rej. freq. & 0.053 & 0.051 & 0.053 &  & 0.053 & 0.053 & 0.055 &  & 0.051 & 0.053 & 0.053 \\
\noalign{\hrule height 1pt}
\end{tabular}
\begin{tablenotes}[flushleft]
\footnotesize
\item[] \textit{Notes:} The first column reports the total number of peer groups, $G=\sum_r G_r$. For each value of $G$, the four rows report bias, Monte Carlo standard deviation, average estimated standard error, and the rejection frequency of the nominal 5 percent two-sided Wald test of the corresponding true parameter value. Within each urn, peer groups are split evenly between sizes 2 and 4. The design draws $\alpha_r$ and $X^{(1)}_{igr}$ independently from $N(0,1)$, sets $\beta_{1,0}=\beta_{2,0}=1$, and draws independent $\epsilon_{igr}\sim N(0,m_{gr}+1)$. Each design cell has 5,000 replications.
\end{tablenotes}
\end{threeparttable}
\end{table}
% Generated by simulations/export_mc_tables.R.
% Do not edit this file by hand.

\begin{table}[!htbp]
\centering
\begin{threeparttable}
\caption{Monte Carlo results with $G_r=2$ and growing $R$}
\label{tab:mc-r}
\setlength{\tabcolsep}{3pt}
\begin{tabular}{llccc@{\hspace{0.5em}}c@{\hspace{0.5em}}ccc@{\hspace{0.5em}}c@{\hspace{0.5em}}ccc}
\hline\hline
 & & \multicolumn{3}{c}{$\lambda_0=-0.4$} & & \multicolumn{3}{c}{$\lambda_0=0$} & & \multicolumn{3}{c}{$\lambda_0=0.4$} \\
\cline{3-5}\cline{7-9}\cline{11-13}
$G$ & Statistic & $\lambda$ & $\beta_1$ & $\beta_2$ &  & $\lambda$ & $\beta_1$ & $\beta_2$ &  & $\lambda$ & $\beta_1$ & $\beta_2$ \\
\hline
20 & Bias & -0.045 & 0.007 & 0.001 &  & -0.048 & 0.019 & 0.025 &  & -0.033 & 0.018 & 0.032 \\
 & MC SD & 0.161 & 0.340 & 0.463 &  & 0.168 & 0.343 & 0.484 &  & 0.131 & 0.359 & 0.509 \\
 & Avg. SE & 0.129 & 0.282 & 0.376 &  & 0.136 & 0.292 & 0.393 &  & 0.107 & 0.304 & 0.419 \\
 & Rej. freq. & 0.147 & 0.115 & 0.129 &  & 0.138 & 0.105 & 0.134 &  & 0.121 & 0.102 & 0.122 \\
\noalign{\smallskip}
40 & Bias & -0.022 & 0.005 & 0.003 &  & -0.020 & 0.013 & 0.010 &  & -0.016 & 0.013 & 0.010 \\
 & MC SD & 0.105 & 0.226 & 0.304 &  & 0.111 & 0.233 & 0.323 &  & 0.085 & 0.241 & 0.335 \\
 & Avg. SE & 0.096 & 0.207 & 0.278 &  & 0.101 & 0.213 & 0.293 &  & 0.078 & 0.222 & 0.307 \\
 & Rej. freq. & 0.094 & 0.085 & 0.088 &  & 0.094 & 0.082 & 0.091 &  & 0.079 & 0.079 & 0.085 \\
\noalign{\smallskip}
100 & Bias & -0.008 & 0.004 & 0.003 &  & -0.008 & 0.005 & 0.001 &  & -0.007 & 0.005 & 0.005 \\
 & MC SD & 0.065 & 0.139 & 0.190 &  & 0.068 & 0.141 & 0.202 &  & 0.053 & 0.145 & 0.206 \\
 & Avg. SE & 0.062 & 0.134 & 0.183 &  & 0.066 & 0.137 & 0.190 &  & 0.051 & 0.143 & 0.199 \\
 & Rej. freq. & 0.069 & 0.061 & 0.068 &  & 0.063 & 0.062 & 0.068 &  & 0.065 & 0.059 & 0.059 \\
\noalign{\smallskip}
200 & Bias & -0.004 & 0.002 & 0.001 &  & -0.004 & 0.002 & -0.000 &  & -0.003 & 0.003 & 0.004 \\
 & MC SD & 0.045 & 0.099 & 0.134 &  & 0.048 & 0.099 & 0.140 &  & 0.037 & 0.103 & 0.145 \\
 & Avg. SE & 0.045 & 0.095 & 0.131 &  & 0.047 & 0.098 & 0.136 &  & 0.036 & 0.102 & 0.142 \\
 & Rej. freq. & 0.060 & 0.062 & 0.061 &  & 0.057 & 0.056 & 0.061 &  & 0.059 & 0.054 & 0.056 \\
\noalign{\smallskip}
400 & Bias & -0.003 & 0.002 & 0.003 &  & -0.002 & 0.001 & -0.000 &  & -0.002 & 0.002 & 0.001 \\
 & MC SD & 0.033 & 0.068 & 0.095 &  & 0.033 & 0.071 & 0.100 &  & 0.026 & 0.071 & 0.100 \\
 & Avg. SE & 0.032 & 0.068 & 0.093 &  & 0.033 & 0.070 & 0.097 &  & 0.026 & 0.072 & 0.101 \\
 & Rej. freq. & 0.063 & 0.048 & 0.058 &  & 0.054 & 0.055 & 0.054 &  & 0.054 & 0.050 & 0.046 \\
\noalign{\smallskip}
1,000 & Bias & -0.001 & -0.000 & 0.000 &  & -0.001 & 0.000 & -0.002 &  & -0.001 & 0.000 & 0.000 \\
 & MC SD & 0.020 & 0.044 & 0.060 &  & 0.021 & 0.045 & 0.062 &  & 0.016 & 0.046 & 0.064 \\
 & Avg. SE & 0.020 & 0.043 & 0.059 &  & 0.021 & 0.044 & 0.061 &  & 0.016 & 0.046 & 0.064 \\
 & Rej. freq. & 0.051 & 0.052 & 0.055 &  & 0.057 & 0.056 & 0.054 &  & 0.049 & 0.052 & 0.049 \\
\noalign{\hrule height 1pt}
\end{tabular}
\begin{tablenotes}[flushleft]
\footnotesize
\item[] \textit{Notes:} The first column reports the total number of peer groups, $G=\sum_r G_r$. For each value of $G$, the four rows report bias, Monte Carlo standard deviation, average estimated standard error, and the rejection frequency of the nominal 5 percent two-sided Wald test of the corresponding true parameter value. Within each urn, peer groups are split evenly between sizes 2 and 4. The design draws $\alpha_r$ and $X^{(1)}_{igr}$ independently from $N(0,1)$, sets $\beta_{1,0}=\beta_{2,0}=1$, and draws independent $\epsilon_{igr}\sim N(0,m_{gr}+1)$. Each design cell has 5,000 replications.
\end{tablenotes}
\end{threeparttable}
\end{table}

Table \ref{tab:mc-g} reports results for the designs with $R=2$ and growing $G_{r}$, while Table \ref{tab:mc-r} reports results for the designs with $G_{r}=2$ and growing $R$. The first column of each table reports the total number of groups, $G=\sum_{r}G_{r}$. For each value of $G$, four successive rows report bias, Monte Carlo standard deviation, average estimated standard error, and the rejection frequency of the nominal 5 percent two-sided Wald test.

The estimators perform well under both growth sequences, with bias and Monte Carlo standard deviations generally declining as $G$ increases. At $G=100$, the absolute biases of the estimators of $\lambda$, $\beta_{1}$, and $\beta_{2}$ are below 0.01 in both designs. The bias of $\hat{\lambda}$ is somewhat smaller when $G_{r}$ grows than when $R$ grows at smaller values of $G$, while the differences for $\hat{\beta}_{1}$ and $\hat{\beta}_{2}$ are less systematic.

Inference is less accurate at smaller values of $G$. The average estimated standard errors understate the corresponding Monte Carlo standard deviations, particularly when $G_{r}=2$ and $R$ grows, leading to some over-rejection by the Wald tests. For the test of $\lambda=\lambda_{0}$, rejection frequencies across the two designs and the three values of $\lambda_{0}$ range from 8.4 to 14.7 percent at $G=20$. The estimated standard errors become close to the Monte Carlo standard deviations as $G$ increases, and the corresponding rejection frequencies range from 4.6 to 6.0 percent at $G=200$.

The Online Appendix reports additional simulation results for designs without covariates, with non-normal innovations, with constant peer-group sizes and homoskedastic innovations, and with larger peer groups. Across these designs, bias decreases and rejection frequencies approach the nominal level as $G$ increases. At smaller values of $G$, however, inference is more distorted under non-normal innovations, particularly when $G_{r}=2$ and $R$ increases.

\section{Empirical Application}\label{sec:Empirical-Application}

We apply our GMM estimator to the field experiment of \citet{shan_peers_2025}, which studies peer effects in personality development among university students.

The experiment covers six cohorts of students enrolled in an introductory economics course from the 2018/19 through 2023/24 academic years. Within each cohort, students were randomly assigned to four-person study groups within three broad study-program strata. For the 2020/21 cohort, assignment was additionally stratified by the last digit of each student's ID. We treat each cohort-by-stratum cell as an urn. The baseline data contain 1,776 students in 444 study groups and 45 urns. The 30 urns from 2020/21 contain between 2 and 34 students, whereas the 15 urns from the other five cohorts contain between 12 and 283 students. We exclude 23 groups whose members span more than one urn and eight urns with only one remaining group, since such urns do not contribute to estimation. The baseline estimation sample then contains 1,652 students in 413 groups and 35 urns. It combines small and large urns, as permitted by our asymptotic framework, while all peer groups have the same size, illustrating that identification does not require variation in group size. Before study-group assignment, the experiment measured six personality traits---competitiveness, openness, conscientiousness, extraversion, agreeableness, and neuroticism---together with gender, age, major, course-retaking status, and high-school characteristics. These baseline variables are observed for all 1,776 students. At the end of the semester (endline), 1,229 students, or 69 percent of the baseline sample, reported all six personality traits. \citet{shan_peers_2025} provide further details on the experimental design and data.

% Generated by application/peer_cra_SZ2025.do.
\begin{table}[!htbp]
\centering
\begin{threeparttable}
\setlength{\tabcolsep}{3pt}
\def\sym#1{\ifmmode^{#1}\else\(^{#1}\)\fi}
\caption{Randomization test}
\label{tab:sz-randomization}
\begin{tabular}{@{}l*{6}{c}@{}}
\hline\hline
            &\multicolumn{1}{c}{(1)}&\multicolumn{1}{c}{(2)}&\multicolumn{1}{c}{(3)}&\multicolumn{1}{c}{(4)}&\multicolumn{1}{c}{(5)}&\multicolumn{1}{c}{(6)}\\
            &\multicolumn{1}{c}{\shortstack{Competi-\\tiveness}}&\multicolumn{1}{c}{\shortstack{Open-\\ness}}&\multicolumn{1}{c}{\shortstack{Conscien-\\tiousness}}&\multicolumn{1}{c}{\shortstack{Extra-\\version}}&\multicolumn{1}{c}{\shortstack{Agreeable-\\ness}}&\multicolumn{1}{c}{\shortstack{Neuroti-\\cism}}\\
\hline
$\hat{\lambda}$&       0.009         &      -0.024         &      -0.063\sym{**} &      -0.006         &      -0.027         &       0.033         \\
            &     (0.032)         &     (0.032)         &     (0.031)         &     (0.032)         &     (0.030)         &     (0.031)         \\
\hline
Observations&       1,652         &       1,652         &       1,652         &       1,652         &       1,652         &       1,652         \\
Urns        &          35         &          35         &          35         &          35         &          35         &          35         \\
Peer groups &         413         &         413         &         413         &         413         &         413         &         413         \\
\noalign{\hrule height 1pt}
\end{tabular}
\begin{tablenotes}[flushleft]
\footnotesize
\item[] \textit{Notes:} Each column reports the scalar GMM estimate $\hat{\lambda}$ for the indicated baseline personality trait. The sample excludes peer groups spanning multiple urns and urns containing fewer than two remaining peer groups. Standard errors are in parentheses below the estimates. \sym{*} \(p<0.10\), \sym{**} \(p<0.05\), \sym{***} \(p<0.01\).
\end{tablenotes}
\end{threeparttable}
\end{table}

Table 2 of \citet{shan_peers_2025} reports the randomization test proposed by \citet{guryan_peer_2009} for the six baseline personality traits. We revisit the same randomization implication using the scalar GMM estimator developed in Section \ref{subsec:Model-without-Covariates}. Table \ref{tab:sz-randomization} reports the estimates. For five of the six traits, $\hat{\lambda}$ ranges from $-0.027$ to $0.033$ and is not statistically different from zero at the 5 percent level. For conscientiousness, $\hat{\lambda}=-0.063$, and $H_{0}:\lambda_{0}=0$ is rejected at the 5 percent level. Because the table tests six related outcomes, however, one rejection at this level can arise by chance and does not by itself constitute strong evidence against random assignment.\footnote{\citet{jochmans_testing_2023} provides a multivariate test of random assignment. Our GMM framework could likewise be extended to multiple variables by stacking the variable-specific quadratic moments and estimating their full covariance matrix. We leave the formal development of this extension to future research.}

Table 3 of \citet{shan_peers_2025} estimates the effect of baseline peer personality on endline personality using a reduced-form specification that omits contemporaneous peer personality. As discussed above, the resulting coefficient on baseline peer personality generally combines endogenous and contextual peer effects. Specifically, when all peer groups have the same size $m$, as in this application, the structural model in (\ref{eq:general}) implies that the reduced-form coefficient on $\bar{X}^{(1)}_{(-i)gr}$ is\footnote{To see this, the reduced form corresponding to (\ref{eq:general}), with $X^{(2)}$ omitted, is 
\begin{align*}
Y_{gr} & =(I_{gr}-\lambda_{0}W_{gr})^{-1}\left(\alpha_{r}\mathbf{1}_{gr}+X^{(1)}_{gr}\beta_{1,0}+W_{gr}X^{(1)}_{gr}\beta_{2,0}+\epsilon_{gr}\right)\\
 & =\frac{\alpha_{r}}{1-\lambda_{0}}\mathbf{1}_{gr}+X^{(1)}_{gr}\widetilde{\beta}_{1,0}+W_{gr}X^{(1)}_{gr}\widetilde{\beta}_{2,0}+(I_{gr}-\lambda_{0}W_{gr})^{-1}\epsilon_{gr},
\end{align*}
where $\widetilde{\beta}_{1,0}=\frac{\left[(m-1)-(m-2)\lambda_{0}\right]\beta_{1,0}+\lambda_{0}\beta_{2,0}}{(1-\lambda_{0})(m-1+\lambda_{0})}$, and $\widetilde{\beta}_{2,0}=\frac{(m-1)(\beta_{2,0}+\lambda_{0}\beta_{1,0})}{(1-\lambda_{0})(m-1+\lambda_{0})}$ is the coefficients on the peer mean $W_{gr}X^{(1)}_{gr}$.}
\[
\widetilde{\beta}_{2,0}=\frac{(m-1)(\beta_{2,0}+\lambda_{0}\beta_{1,0})}{(1-\lambda_{0})(m-1+\lambda_{0})}.
\]

To separate the two peer effects, we estimate the structural model for each endline personality trait. Each specification includes the following variables and their peer averages: (i) the six baseline personality traits; (ii) high-school characteristics, including math and language grades, study hours, and an indicator for German as the language of instruction; and (iii) gender, course-retaking status, age fixed effects, and major fixed effects. This rich set of predetermined own and peer controls absorbs systematic variation associated with observed characteristics, making the mutual independence of the remaining idiosyncratic innovations in Assumption \ref{assu:epsilon} more plausible. For comparability with \citet{shan_peers_2025}, we standardize the baseline and endline personality traits.\footnote{Standardizing either the outcome or a covariate also standardizes its peer average. In both cases, $\lambda_{0}$ is unchanged and only the relevant $\beta$ coefficients are rescaled.} Constructing peer-average outcomes requires complete endline data for all members of a group. We therefore retain only groups for which endline personality is observed for every member. After further removing urns that contain only one remaining group, the estimation sample comprises 344 students in 86 groups and 13 urns. % Generated by application/peer_cra_SZ2025.do.
\begin{table}[!htbp]
\centering
\begin{threeparttable}
\setlength{\tabcolsep}{2.5pt}
\def\sym#1{\ifmmode^{#1}\else\(^{#1}\)\fi}
\caption{Peer effects on endline personality}
\label{tab:sz-personality}
\begin{tabular}{@{}l*{6}{c}@{}}
\hline\hline
            &\multicolumn{1}{c}{(1)}&\multicolumn{1}{c}{(2)}&\multicolumn{1}{c}{(3)}&\multicolumn{1}{c}{(4)}&\multicolumn{1}{c}{(5)}&\multicolumn{1}{c}{(6)}\\
            &\multicolumn{1}{c}{\shortstack{Competi-\\tiveness}}&\multicolumn{1}{c}{\shortstack{Open-\\ness}}&\multicolumn{1}{c}{\shortstack{Conscien-\\tiousness}}&\multicolumn{1}{c}{\shortstack{Extra-\\version}}&\multicolumn{1}{c}{\shortstack{Agreeable-\\ness}}&\multicolumn{1}{c}{\shortstack{Neuroti-\\cism}}\\
\hline
$\hat{\lambda}$&      -0.160\sym{**} &      -0.201\sym{**} &      -0.137\sym{*}  &      -0.037         &      -0.076         &      -0.085         \\
            &     (0.072)         &     (0.084)         &     (0.072)         &     (0.075)         &     (0.078)         &     (0.072)         \\
Peer competitiveness&       \textbf{0.210\sym{**}} &       0.128         &       0.044         &      -0.076         &      -0.094         &      -0.054         \\
            &     (0.097)         &     (0.085)         &     (0.070)         &     (0.063)         &     (0.086)         &     (0.078)         \\
Peer openness&      -0.149         &       \textbf{0.236\sym{**}} &       0.104         &       0.073         &       0.167\sym{*}  &      -0.057         \\
            &     (0.092)         &     (0.104)         &     (0.083)         &     (0.083)         &     (0.089)         &     (0.091)         \\
Peer conscientiousness&      -0.112         &      -0.221\sym{**} &       \textbf{0.149}         &       0.158\sym{*}  &      -0.020         &      -0.093         \\
            &     (0.099)         &     (0.093)         &     (0.093)         &     (0.088)         &     (0.105)         &     (0.097)         \\
Peer extraversion&       0.163\sym{**} &      -0.005         &      -0.027         &       \textbf{0.018}         &      -0.040         &      -0.005         \\
            &     (0.083)         &     (0.085)         &     (0.090)         &     (0.098)         &     (0.101)         &     (0.083)         \\
Peer agreeableness&      -0.068         &       0.065         &       0.005         &       0.020         &       \textbf{0.047}         &      -0.037         \\
            &     (0.079)         &     (0.078)         &     (0.080)         &     (0.071)         &     (0.096)         &     (0.080)         \\
Peer neuroticism&       0.129         &       0.172\sym{**} &       0.187\sym{**} &       0.042         &       0.026         &      \textbf{-0.109}         \\
            &     (0.081)         &     (0.085)         &     (0.083)         &     (0.074)         &     (0.096)         &     (0.099)         \\
\hline
Implied reduced-form&       0.098         &       0.081         &       0.055         &      -0.009         &      -0.003         &-0.156\sym{*}         \\
coefficient on peer mean&     (0.076)         &     (0.071)         &     (0.071)         &     (0.078)         &     (0.077)         &     (0.086)         \\
Observations&         344         &         344         &         344         &         344         &         344         &         344         \\
Urns        &          13         &          13         &          13         &          13         &          13         &          13         \\
Peer groups &          86         &          86         &          86         &          86         &          86         &          86         \\
\noalign{\hrule height 1pt}
\end{tabular}
\begin{tablenotes}[flushleft]
\footnotesize
\item[] \textit{Notes:} Each column reports a separate specification for the indicated standardized endline personality trait. The $\hat{\lambda}$ row reports the estimated endogenous peer effect, while the peer-personality rows report estimated exogenous peer effects. Boldface marks the exogenous peer-effect estimate for the baseline trait corresponding to the column's endline outcome. The implied reduced-form rows report the coefficient on the peer mean of that baseline trait and its delta-method standard error. The sample excludes peer groups spanning multiple urns, retains only peer groups in which every member has all six endline personality traits observed, and excludes urns containing fewer than two remaining peer groups. Baseline and endline personality traits are standardized. All specifications include the six baseline personality traits, high-school characteristics, gender, course-retaking status, age fixed effects, and major fixed effects, together with their peer averages. Standard errors are in parentheses below the estimates. \sym{*} \(p<0.10\), \sym{**} \(p<0.05\), \sym{***} \(p<0.01\).
\end{tablenotes}
\end{threeparttable}
\end{table}

Table \ref{tab:sz-personality} reports the estimated endogenous peer effect and the six contextual effects of baseline peer personality. In each column, the outcome is a standardized endline personality trait. The bold entry is the coefficient on the peer average of the baseline trait corresponding to the column's endline outcome, and the bottom rows report the corresponding implied reduced-form coefficient $\widetilde{\beta}_{2,0}$ and its delta-method standard error.

\citet{shan_peers_2025} find positive and statistically significant reduced-form peer effects for competitiveness, openness, and conscientiousness, but not for the other three traits. Our structural estimates show how these reduced-form effects decompose into endogenous and contextual components. For competitiveness and openness, the endogenous-effect estimates are $-0.160$ and $-0.201$, while the corresponding same-trait contextual-effect estimates are $0.210$ and $0.236$. All four estimates are statistically significant at the 5 percent level. For conscientiousness, the endogenous-effect estimate is $-0.137$ and significant at the 10 percent level, whereas the contextual-effect estimate is $0.149$ and not statistically significant. The implied reduced-form coefficients for competitiveness, openness, and conscientiousness are $0.098$, $0.081$, and $0.055$, respectively. These point estimates are close to the corresponding fully controlled estimates of $0.077$, $0.067$, and $0.058$ reported by \citet{shan_peers_2025}, although they are less precisely estimated and not statistically significant in our smaller complete-group sample.

Our structural estimates reveal a pattern that is not apparent from the reduced-form evidence alone. For competitiveness, openness, and conscientiousness, positive same-trait contextual effects are partly offset by negative endogenous effects, yielding smaller positive implied reduced-form coefficients. Under an interaction-game interpretation of the linear-in-means model, this pattern is consistent with peer characteristics acting as complementary environmental inputs and contemporaneous peer outcomes acting as strategic substitutes within the group.

\section{Conclusion}\label{sec:Conclusion}

Existing work under conditional random assignment has largely focused on testing the assignment mechanism, while empirical applications often rely on reduced-form specifications that do not distinguish endogenous from contextual peer effects. We develop a GMM framework that separately identifies these two effects, with tests of random peer-group assignment arising as a special case of inference.

Our approach builds on spatial-econometric GMM methods. It combines a quadratic moment implied by the covariance structure of independent, heteroskedastic innovations with linear instrumental-variable moments. The asymptotic framework allows the total number of peer groups to diverge through an increasing number of urns, an increasing number of groups within urns, or both. We also construct a heteroskedasticity-consistent covariance estimator that corrects the finite-urn bias induced by within-urn demeaning. The Monte Carlo simulations show good finite-sample performance at moderate sample sizes, while an application to personality among university students illustrates how the proposed framework can uncover patterns that are obscured by reduced-form estimates.

The analysis focuses on linear-in-means interactions within peer groups. Extending the framework to more general forms of social-network interaction is a natural direction for future research.

\bibliographystyle{elsarticle-harv}
\bibliography{random_peer}

\newpage{}

\appendix

\part*{Appendix}

\section{Preliminaries}

The calculations and proofs use the following matrix identities, boundedness properties, and limit results.

\subsection{Special Forms of Matrices}

Many matrices used in the paper have the block-diagonal form $M=\diag^{R}_{r=1}\{M_{r}\}$, where 
\[
M_{r}=\diag^{G_{r}}_{g=1}\{p_{gr}I^{\ast}_{gr}+q_{gr}J^{\ast}_{gr}\},
\]
$p_{gr}$ and $q_{gr}$ are scalars, $J^{\ast}_{gr}=J_{gr}/m_{gr}=\mathbf{1}_{gr}\mathbf{1}^{\prime}_{gr}/m_{gr}$ is the projection matrix onto the space spanned by $\mathbf{1}_{gr}$, and $I^{\ast}_{gr}=I_{gr}-J^{\ast}_{gr}$ is the corresponding residual-maker matrix. In particular,
\begin{align}
W_{r} & =\diag^{G_{r}}_{g=1}\{-\frac{1}{m_{gr}-1}I^{\ast}_{gr}+J^{\ast}_{gr}\},\label{eq:W}\\
I_{r}-\lambda W_{r} & =\diag^{G_{r}}_{g=1}\{\frac{m_{gr}-1+\lambda}{m_{gr}-1}I^{\ast}_{gr}+(1-\lambda)J^{\ast}_{gr}\},\label{eq:I_lW}\\
A_{r} & =\diag^{G_{r}}_{g=1}\{-\frac{n_{r}-m_{gr}}{(n_{r}-1)(m_{gr}-1)}I^{\ast}_{gr}+\frac{n_{r}}{n_{r}-1}J^{\ast}_{gr}\}.\label{eq:Ar_pIqJ}
\end{align}

The matrices $I^{\ast}_{gr}$ and $J^{\ast}_{gr}$ are symmetric and idempotent and satisfy $I^{\ast}_{gr}+J^{\ast}_{gr}=I_{gr}$ and $I^{\ast}_{gr}J^{\ast}_{gr}=0$. Therefore, for matrices $M^{(l)}_{r}=\diag^{G_{r}}_{g=1}\{p^{(l)}_{gr}I^{\ast}_{gr}+q^{(l)}_{gr}J^{\ast}_{gr}\}$, $l=1,\ldots,L$, we have
\begin{equation}
\prod^{L}_{l=1}M^{(l)}_{r}=\diag^{G_{r}}_{g=1}\left\{ \left(\prod^{L}_{l=1}p^{(l)}_{gr}\right)I^{\ast}_{gr}+\left(\prod^{L}_{l=1}q^{(l)}_{gr}\right)J^{\ast}_{gr}\right\} .\label{eq:prod_s}
\end{equation}
Thus, matrices of this form commute. If $p_{gr}\neq0$ and $q_{gr}\neq0$, then $M^{-1}_{r}=\diag^{G_{r}}_{g=1}\{\frac{1}{p_{gr}}I^{\ast}_{gr}+\frac{1}{q_{gr}}J^{\ast}_{gr}\}$. Accordingly, for $\lambda\in(-1,1)$ and $m_{gr}\geqslant2$, the matrix $I_{r}-\lambda W_{r}$ in (\ref{eq:I_lW}) is invertible, with
\begin{equation}
(I_{r}-\lambda W_{r})^{-1}=\diag^{G_{r}}_{g=1}\{\frac{m_{gr}-1}{m_{gr}-1+\lambda}I^{\ast}_{gr}+\frac{1}{1-\lambda}J^{\ast}_{gr}\}.\label{eq:I_lW_inv}
\end{equation}
See Appendix B.1 of \citet{kuersteiner_efficient_2023} for a detailed discussion of matrices of this form.

Analogously to $J^{\ast}_{gr}$ and $I^{\ast}_{gr}$, define the urn-level matrices $J_{r}=\mathbf{1}_{r}\mathbf{1}^{\prime}_{r}$, $J^{\ast}_{r}=J_{r}/n_{r}$, and $I^{\ast}_{r}=I_{r}-J^{\ast}_{r}$. The next lemma records two identities used repeatedly below.
\begin{lem}
\label{lem:calculation}(i) Suppose $\check{M}_{r}=\diag^{G_{r}}_{g=1}\{p_{gr}I^{\ast}_{gr}+q_{gr}J^{\ast}_{gr}\}+\diag^{G_{r}}_{g=1}\{\varpi_{gr}I_{gr}\}I^{\ast}_{r}$ for scalars $p_{gr},q_{gr},\varpi_{gr}\in\mathbb{R}$. Then $\check{M}_{r}J_{r}=\diag^{G_{r}}_{g=1}\{q_{gr}I_{gr}\}J_{r}$.

(ii) Suppose $M_{r}=\diag^{G_{r}}_{g=1}\{p_{gr}I^{\ast}_{gr}+q_{gr}J^{\ast}_{gr}\}$, and $q_{gr}=q_{r}$ for all $g$. Then 
\begin{align*}
I^{\ast}_{r}M_{r}I^{\ast}_{r}=I^{\ast}_{r}M_{r} & =M_{r}I^{\ast}_{r}=\diag^{G_{r}}_{g=1}\left\{ \left(p_{gr}-q_{r}\right)I^{\ast}_{gr}\right\} +q_{r}I^{\ast}_{r}.
\end{align*}
\end{lem}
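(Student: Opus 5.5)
The plan is to work directly with the block structure, using the facts that $I^{\ast}_{gr}\mathbf{1}_{gr}=0$, $J^{\ast}_{gr}\mathbf{1}_{gr}=\mathbf{1}_{gr}$ and $I^{\ast}_{r}\mathbf{1}_{r}=0$.

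For part (i), I would compute $\check{M}_{r}J_{r}=\check{M}_{r}\mathbf{1}_{r}\mathbf{1}^{\prime}_{r}$, so it suffices to find $\check{M}_{r}\mathbf{1}_{r}$. The second summand contributes nothing, because $I^{\ast}_{r}\mathbf{1}_{r}=0$ kills $\diag^{G_{r}}_{g=1}\{\varpi_{gr}I_{gr}\}I^{\ast}_{r}\mathbf{1}_{r}$. For the first summand, $\mathbf{1}_{r}$ stacks the vectors $\mathbf{1}_{gr}$. Blockwise we then have $(p_{gr}I^{\ast}_{gr}+q_{gr}J^{\ast}_{gr})\mathbf{1}_{gr}=q_{gr}\mathbf{1}_{gr}$. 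Hence $\check{M}_{r}\mathbf{1}_{r}=\diag^{G_{r}}_{g=1}\{q_{gr}I_{gr}\}\mathbf{1}_{r}$, and multiplying on the right by $\mathbf{1}^{\prime}_{r}$ gives the claim.

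For part (ii), I would first show that $M_{r}$ commutes with $J_{r}$ when $q_{gr}=q_{r}$ for all $g$. By part (i) with $\varpi_{gr}=0$, $M_{r}J_{r}=q_{r}J_{r}$. Transposing, and using the symmetry of $M_{r}$ (each block is symmetric), gives $J_{r}M_{r}=q_{r}J_{r}$. Since $I^{\ast}_{r}=I_{r}-J_{r}/n_{r}$, it follows that $M_{r}I^{\ast}_{r}=M_{r}-q_{r}J_{r}/n_{r}=I^{\ast}_{r}M_{r}$. Then
\[
I^{\ast}_{r}M_{r}I^{\ast}_{r}=(M_{r}-q_{r}J^{\ast}_{r})(I_{r}-J^{\ast}_{r})=M_{r}-2q_{r}J^{\ast}_{r}+q_{r}J^{\ast}_{r}J^{\ast}_{r},
\]
and idempotency of $J^{\ast}_{r}$ reduces this to $M_{r}-q_{r}J^{\ast}_{r}$. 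So all three expressions equal $M_{r}-q_{r}J^{\ast}_{r}$.

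It remains to rewrite $M_{r}-q_{r}J^{\ast}_{r}$ in the stated form. Write $M_{r}=\diag_{g}\{(p_{gr}-q_{r})I^{\ast}_{gr}+q_{r}(I^{\ast}_{gr}+J^{\ast}_{gr})\}$. Since $I^{\ast}_{gr}+J^{\ast}_{gr}=I_{gr}$, the second part stacks to $q_{r}I_{r}$. Therefore
\[
M_{r}-q_{r}J^{\ast}_{r}=\diag_{g}\{(p_{gr}-q_{r})I^{\ast}_{gr}\}+q_{r}(I_{r}-J^{\ast}_{r}),
\]
and the last term is $q_{r}I^{\ast}_{r}$, as required.

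There is no real obstacle here. The only point needing care is that the common $q_{r}$ is what makes $M_{r}$ commute with $J_{r}$; with group-specific $q_{gr}$ it fails, which is why part (i) keeps the block-diagonal factor $\diag^{G_{r}}_{g=1}\{q_{gr}I_{gr}\}$.
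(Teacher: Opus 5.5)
Your proof is correct and follows essentially the same route as the paper: part (i) uses $J^{\ast}_{gr}\mathbf{1}_{gr}=\mathbf{1}_{gr}$, $I^{\ast}_{gr}\mathbf{1}_{gr}=0$ and $I^{\ast}_{r}\mathbf{1}_{r}=0$, and part (ii) applies (i) with $\varpi_{gr}=0$ and then uses symmetry and idempotency. The only difference is the order of steps: you transpose $M_{r}J_{r}=q_{r}J_{r}$ to get commutation with $I^{\ast}_{r}$ first and compute the explicit form afterwards, whereas the paper computes $M_{r}I^{\ast}_{r}$ explicitly first and then uses the symmetry of that expression to conclude $I^{\ast}_{r}M_{r}=M_{r}I^{\ast}_{r}$.
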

\begin{proof}
(i) For part (i), note that $J^{\ast}_{gr}\mathbf{1}_{gr}=\mathbf{1}_{gr}$ for every $g$ and that $\mathbf{1}_{r}=(\mathbf{1}^{\prime}_{1},\ldots,\mathbf{1}^{\prime}_{G_{r}})^{\prime}$. Therefore, $\diag^{G_{r}}_{g=1}\{J^{\ast}_{gr}\}\mathbf{1}_{r}=((J^{\ast}_{1}\mathbf{1}_{1})^{\prime},\ldots,(J^{\ast}_{G_{r}}\mathbf{1}_{G_{r}})^{\prime})^{\prime}=\mathbf{1}_{r}$, which implies $\diag^{G_{r}}_{g=1}\{J^{\ast}_{gr}\}J_{r}=\left[\diag^{G_{r}}_{g=1}\{J^{\ast}_{gr}\}\mathbf{1}_{r}\right]\mathbf{1}^{\prime}_{r}=J_{r}$. In contrast, $\diag^{G_{r}}_{g=1}\{I^{\ast}_{gr}\}J_{r}=\left[I_{r}-\diag^{G_{r}}_{g=1}\{J^{\ast}_{gr}\}\right]J_{r}=0$. Moreover, $I^{\ast}_{r}J_{r}=0$. Since $\check{M}_{r}=\diag^{G_{r}}_{g=1}\{p_{gr}I_{gr}\}\diag^{G_{r}}_{g=1}\{I^{\ast}_{gr}\}+\diag^{G_{r}}_{g=1}\{q_{gr}I_{gr}\}\diag^{G_{r}}_{g=1}\{J^{\ast}_{gr}\}+\diag^{G_{r}}_{g=1}\{\varpi_{gr}I_{gr}\}I^{\ast}_{r}$, the result follows.

(ii) For part (ii), applying part (i) with $\varpi_{gr}=0$ and $q_{gr}=q_{r}$ gives $M_{r}J^{\ast}_{r}=q_{r}J^{\ast}_{r}=q_{r}(I_{r}-I^{\ast}_{r})$. Hence,
\begin{align*}
M_{r}I^{\ast}_{r} & =M_{r}\left(I_{r}-J^{\ast}_{r}\right)=M_{r}-q_{r}I_{r}+q_{r}I^{\ast}_{r}\\
 & =\diag^{G_{r}}_{g=1}\{p_{gr}I^{\ast}_{gr}+q_{r}J^{\ast}_{gr}\}-\diag^{G_{r}}_{g=1}\{q_{r}I^{\ast}_{gr}+q_{r}J^{\ast}_{gr}\}+q_{r}I^{\ast}_{r}\\
 & =\diag^{G_{r}}_{g=1}\{(p_{gr}-q_{r})I^{\ast}_{gr}\}+q_{r}I^{\ast}_{r}.
\end{align*}
Because $M_{r}$ and $I^{\ast}_{r}$ are symmetric and the expression above is symmetric, we have $I^{\ast}_{r}M_{r}=(M_{r}I^{\ast}_{r})^{\prime}=M_{r}I^{\ast}_{r}$. Finally, since $I^{\ast}_{r}$ is idempotent, $I^{\ast}_{r}M_{r}I^{\ast}_{r}=M_{r}I^{\ast}_{r}I^{\ast}_{r}=M_{r}I^{\ast}_{r}$.
\end{proof}

\begin{rem}
\label{rem:calculation} Equations (\ref{eq:W}), (\ref{eq:I_lW}), (\ref{eq:Ar_pIqJ}), and (\ref{eq:I_lW_inv}) show that $W_{r}$, $I_{r}-\lambda W_{r}$, $(I_{r}-\lambda W_{r})^{-1}$, and $A_{r}$ all have $q_{gr}=q_{r}$. It follows from Lemma \ref{lem:calculation} and (\ref{eq:prod_s}) that $W_{r}$, $I_{r}-\lambda W_{r}$, $(I_{r}-\lambda W_{r})^{-1}$, $A_{r}$, and $I^{\ast}_{r}$ commute. For example, $I^{\ast}_{r}W_{r}=W_{r}I^{\ast}_{r}$ and $I^{\ast}_{r}W_{r}(I_{r}-\lambda_{0}W_{r})^{-1}=W_{r}(I_{r}-\lambda_{0}W_{r})^{-1}I^{\ast}_{r}$.
\end{rem}

\subsection{Uniform Boundedness}

Following \citet{kelejian_asymptotic_2001}, an $n\times n$ matrix $M$ is said to have row and column sums uniformly bounded in absolute value (\textbf{UBRC}) if there exists a constant $0<C_{M}<\infty$ that does not depend on $n$ such that
\begin{equation}
\begin{array}{c}
\max_{1\leqslant j\leqslant n}\sum^{n}_{i=1}|M_{ij}|\leq C_{M},\quad\max_{1\leqslant i\leqslant n}\sum^{n}_{j=1}|M{}_{ij}|\leq C_{M}\end{array}\quad\text{for all }n\in\mathbb{N}.\label{eq:UBRC}
\end{equation}
Let $F$ be an $n\times k_{f}$ matrix, where $k_{f}$ may be fixed or may depend on $n$. For example, $F$ is an $n\times n$ matrix when $k_{f}=n$ and reduces to a vector when $k_{f}=1$. We call $F$ uniformly bounded (\textbf{UB}) if there exists a constant $0<C_{F}<\infty$, which does not depend on $n$, such that $\max_{1\leqslant i\leqslant n,1\leqslant j\leqslant k_{f}}|F_{ij}|\leqslant C_{F}<\infty$.

\begin{rem}
\label{rem:rub1}(i) The definition extends naturally to a collection of $n_{r}\times n_{r}$ matrices $M_{r}$ that may depend on $n$. We call $\{M_{r}\}$ \textbf{UBRC} if
\[
\begin{array}{c}
\max_{r,1\leqslant j\leqslant n_{r}}\sum^{n_{r}}_{i=1}|M_{r,ij}|\leq C_{M},\quad\max_{r,1\leqslant i\leqslant n_{r}}\sum^{n_{r}}_{j=1}|M{}_{r,ij}|\leq C_{M}\end{array}\quad\text{for all }n\in\mathbb{N}.
\]
If $M=\diag^{R}_{r=1}\{M_{r}\}$, then $M$ is UBRC if and only if the collection of matrices $\{M_{r}\}$ is UBRC in this sense.

(ii) For $M=\diag^{R}_{r=1}\{M_{r}\}$, where $M_{r}=\diag^{G_{r}}_{g=1}\{p_{gr}I^{\ast}_{gr}+q_{gr}J^{\ast}_{gr}\}$, if $p_{gr}$ and $q_{gr}$ are uniformly bounded, then $M_{r}$ and hence $M$ are UBRC. Consequently, $I^{\ast}=\diag^{R}_{r=1}\left\{ I^{\ast}_{r}\right\} $, $A=\diag^{R}_{r=1}\{A_{r}\}$, and $W=\diag\{W_{r}\}$ are UBRC. Under Assumption \ref{assu:epsilon}, $\Omega$ is also UBRC. Moreover, when $\Lambda$ is a compact subset of $(-1,1)$, both $I-\lambda W$ in (\ref{eq:I_lW}) and $\left(I-\lambda W\right)^{-1}$ in (\ref{eq:I_lW_inv}) are UBRC uniformly over $\lambda\in\Lambda$. Under Assumption \ref{assu:identification}(i), $I^{\ast}X$ is UB. The fixed-dimensional parameter vectors $\beta_{0}$ and $\theta_{0}$ are also bounded.
\end{rem}
\begin{lem}
\label{lem:spectral} Suppose that the $n\times n$ matrix $M$ is UBRC as defined in (\ref{eq:UBRC}). Then, for any $n\times1$ vectors $\alpha_{1}$ and $\alpha_{2}$, $\left|\alpha^{\prime}_{1}M\alpha_{2}\right|\leqslant C_{M}\|\alpha_{1}\|\|\alpha_{2}\|$.
\end{lem}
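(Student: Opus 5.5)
The plan is to reduce the bilinear form to a weighted sum of products and apply a Cauchy--Schwarz inequality that spreads each weight $|M_{ij}|$ symmetrically over the two vectors. Write
\[
|\alpha_{1}^{\prime}M\alpha_{2}|=\Bigl|\sum_{i=1}^{n}\sum_{j=1}^{n}\alpha_{1i}M_{ij}\alpha_{2j}\Bigr|\leqslant\sum_{i,j}|M_{ij}|^{1/2}|\alpha_{1i}|\cdot|M_{ij}|^{1/2}|\alpha_{2j}|,
\]
so that each summand is the product of a factor depending on $\alpha_{1}$ and a factor depending on $\alpha_{2}$.

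Next apply Cauchy--Schwarz to this double sum over the index pairs $(i,j)$. This gives
\[
\Bigl(\sum_{i,j}|M_{ij}|\alpha_{1i}^{2}\Bigr)^{1/2}\Bigl(\sum_{i,j}|M_{ij}|\alpha_{2j}^{2}\Bigr)^{1/2}.
\]
In the first factor, sum over $j$ first. This produces the row sums $\sum_{j}|M_{ij}|\leqslant C_{M}$ by (\ref{eq:UBRC}), so the factor is at most $(C_{M}\|\alpha_{1}\|^{2})^{1/2}$. In the second factor, sum over $i$ first. This produces the column sums, which are also bounded by $C_{M}$, so the factor is at most $(C_{M}\|\alpha_{2}\|^{2})^{1/2}$. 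Multiplying the two bounds gives $C_{M}\|\alpha_{1}\|\|\alpha_{2}\|$, which is the claim.

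Equivalently, one can view this as a spectral-norm bound: $\|M\|_{2}\leqslant\sqrt{\|M\|_{1}\|M\|_{\infty}}\leqslant C_{M}$, after which $|\alpha_{1}^{\prime}M\alpha_{2}|\leqslant\|\alpha_{1}\|\|M\alpha_{2}\|\leqslant\|M\|_{2}\|\alpha_{1}\|\|\alpha_{2}\|$. The direct Cauchy--Schwarz route avoids citing this norm inequality, so I would present that one.

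There is no real obstacle. The only point needing care is splitting $|M_{ij}|$ as $|M_{ij}|^{1/2}|M_{ij}|^{1/2}$, so that the row-sum bound controls the $\alpha_{1}$ factor and the column-sum bound controls the $\alpha_{2}$ factor. Note also that the constant is the same $C_{M}$ from (\ref{eq:UBRC}) and does not depend on $n$, which is what the later proofs will use.
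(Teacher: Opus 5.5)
Your proof is correct. The paper uses the route you list as the alternative: it writes $|\alpha_{1}^{\prime}M\alpha_{2}|\leqslant\|M\|_{2}\|\alpha_{1}\|\|\alpha_{2}\|$ and cites the norm inequality $\|M\|_{2}\leqslant\sqrt{\|M\|_{1}\|M\|_{\infty}}$ from Horn and Johnson.

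Your weighted Cauchy--Schwarz step, which splits $|M_{ij}|=|M_{ij}|^{1/2}|M_{ij}|^{1/2}$, is the Schur test. It directly proves that norm inequality, so the two arguments are the same in substance. Yours is self-contained and shows that the row-sum bound controls the $\alpha_{1}$ factor and the column-sum bound controls the $\alpha_{2}$ factor. The paper's version is shorter because it relies on an external citation.
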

\begin{proof}
By the definition of the spectral norm, $\left|\alpha^{\prime}_{1}M\alpha_{2}\right|\leqslant\|M\|_{2}\|\alpha_{1}\|\|\alpha_{2}\|$, where $\|M\|_{2}=\sqrt{\lambda_{max}\left(M^{\prime}M\right)}$. A standard matrix-norm inequality gives $\|M\|_{2}\leqslant\sqrt{\|M\|_{1}\|M\|_{\infty}}$, where\footnote{See, e.g., Section 5.6 of \citet{horn_matrix_2012a} for a discussion of matrix norms.}
\[
\|M\|_{1}=\max_{1\leqslant j\leqslant n}\sum^{n}_{i=1}|M_{ij}|\leq C_{M},\;\|M\|_{\infty}=\max_{1\leqslant i\leqslant n}\sum^{n}_{j=1}|M{}_{ij}|\leq C_{M}.
\]
Therefore, $\|M\|_{2}\leqslant C_{M}$, and the result follows.
\end{proof}

The following lemma follows from an argument analogous to that in Footnote 20 of \citet{kelejian_generalized_1999}. The proof is omitted.
\begin{lem}
\label{lem:rub}(i) If $M_{1}$ and $M_{2}$ are both UBRC, then $M_{1}M_{2}$ is UBRC.

(ii) If $M$ is UBRC and $F$ is UB, then $MF$ is UB.
\end{lem}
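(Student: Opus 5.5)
The proof is a direct entrywise argument based on the definition of UBRC in (\ref{eq:UBRC}) and of UB.

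For part (i), let $C_{1}$ and $C_{2}$ be the UBRC constants of $M_{1}$ and $M_{2}$. Bound the $i$th absolute row sum of the product by the triangle inequality:
\[
\sum_{j}|(M_{1}M_{2})_{ij}|\leqslant\sum_{j}\sum_{k}|M_{1,ik}||M_{2,kj}|=\sum_{k}|M_{1,ik}|\Big(\sum_{j}|M_{2,kj}|\Big)\leqslant C_{2}\sum_{k}|M_{1,ik}|\leqslant C_{1}C_{2}.
\]
Interchanging the order of summation is legitimate because all sums are finite and all terms are nonnegative. The $j$th absolute column sum is handled symmetrically:
\[
\sum_{i}|(M_{1}M_{2})_{ij}|\leqslant\sum_{k}|M_{2,kj}|\sum_{i}|M_{1,ik}|\leqslant C_{1}C_{2}.
\]
Neither $C_{1}$ nor $C_{2}$ depends on $n$, so $C_{1}C_{2}$ is a valid UBRC constant for $M_{1}M_{2}$.

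For part (ii), let $C_{M}$ be the UBRC constant of $M$ and $C_{F}$ the UB constant of $F$. For any entry of $MF$,
\[
|(MF)_{ij}|\leqslant\sum_{k}|M_{ik}||F_{kj}|\leqslant C_{F}\sum_{k}|M_{ik}|\leqslant C_{F}C_{M}.
\]
This bound holds for every $i$ and every $j\leqslant k_{f}$, whether $k_{f}$ is fixed or grows with $n$. Only the row-sum half of the UBRC condition is needed here.

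Both parts extend to the block-diagonal collections of Remark \ref{rem:rub1}(i). A product of block-diagonal matrices is block-diagonal with products of the blocks, so the same bounds apply block by block, uniformly in $r$.

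There is no real obstacle. The only point needing care is that the constants are free of $n$ (and of $r$), and this is immediate from the bounds above.
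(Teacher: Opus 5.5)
Your proof is correct and is essentially the paper's approach. The paper omits its proof and points to the argument in Footnote 20 of \citet{kelejian_generalized_1999}, which is the same entrywise triangle-inequality bound with the order of summation interchanged. Row sums of both factors bound the row sums of the product, column sums bound its column sums, and the resulting constants $C_{1}C_{2}$ and $C_{M}C_{F}$ do not depend on $n$.
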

\begin{rem}
\label{rem:rub2} By Lemma \ref{lem:rub}(i), $W(I-\lambda_{0}W)^{-1}$ is UBRC because both $W$ and $(I-\lambda_{0}W)^{-1}$ are UBRC. By part (ii) of the lemma and Remark \ref{rem:calculation}, $I^{\ast}W(I-\lambda_{0}W)^{-1}X=W(I-\lambda_{0}W)^{-1}\left(I^{\ast}X\right)$ is UB. Hence, $I^{\ast}\tilde{X}=\left[I^{\ast}W(I-\lambda_{0}W)^{-1}X\beta_{0},I^{\ast}X\right]$ is UB. It follows that $A^{\ast}\tilde{X}=I^{\ast}AI^{\ast}\tilde{X}$ is UB, and the entries of the $(k_{x}+1)\times(k_{x}+1)$ matrix $\tilde{X}^{\prime}A^{\ast}\tilde{X}/n=\left(I^{\ast}\tilde{X}\right)^{\prime}A\left(I^{\ast}\tilde{X}\right)/n$ are uniformly bounded.
\end{rem}

\subsection{Limit Theorems for Linear Quadratic Forms in $\epsilon$}

The following lemma is adapted from Lemma A.1 of \citet{kelejian_specification_2010}.
\begin{lem}
\label{lem:varS}Suppose Assumption \ref{assu:epsilon} holds. Let $s_{l}=\epsilon^{\prime}M_{l}\epsilon+f^{\prime}_{l}\epsilon$, $l=1,2$, where $M_{l}$ are nonstochastic $n\times n$ matrices and $f_{l}$ are nonstochastic $n\times1$ vectors. Then $\E s_{l}=\tr(M_{l}\Omega)$, $l=1,2,$ and 
\begin{align*}
\Cov(s_{1},s_{2}) & =\tr\left(\Omega M_{1}\Omega(M_{2}+M^{\prime}_{2})\right)+\sum^{n}_{i=1}M_{1,ii}M_{2,ii}\left[\E(\epsilon^{4}_{i})-3\sigma^{4}_{i}\right]\\
 & +\sum^{n}_{i=1}\left(M_{1,ii}f_{2,i}+M_{2,ii}f_{1,i}\right)\E(\epsilon^{3}_{i})+f^{\prime}_{1}\Omega f_{2}.
\end{align*}
\end{lem}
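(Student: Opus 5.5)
The plan is to expand both $s_{1}$ and $s_{2}$ into sums over individual indices and use independence and zero mean of the innovations (Assumption \ref{assu:epsilon}), together with finiteness of fourth moments. For the mean, write $\epsilon^{\prime}M_{l}\epsilon=\sum_{i,j}M_{l,ij}\epsilon_{i}\epsilon_{j}$. Taking expectations, the cross terms with $i\neq j$ vanish by independence and $\E\epsilon_{i}=0$, and $\E f_{l}^{\prime}\epsilon=0$. This leaves $\E s_{l}=\sum_{i}M_{l,ii}\sigma_{i}^{2}=\tr(M_{l}\Omega)$, since $\Omega$ is diagonal.

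For the covariance, I will split
\[
\Cov(s_{1},s_{2})=\Cov(\epsilon^{\prime}M_{1}\epsilon,\epsilon^{\prime}M_{2}\epsilon)+\Cov(\epsilon^{\prime}M_{1}\epsilon,f_{2}^{\prime}\epsilon)+\Cov(f_{1}^{\prime}\epsilon,\epsilon^{\prime}M_{2}\epsilon)+f_{1}^{\prime}\Omega f_{2}.
\]
The last term is immediate. For the mixed linear-quadratic terms, $\E(\epsilon_{i}\epsilon_{j}\epsilon_{k})$ is nonzero only when $i=j=k$, in which case it equals $\E\epsilon_{i}^{3}$. This gives $\sum_{i}M_{1,ii}f_{2,i}\E\epsilon_{i}^{3}$ and the symmetric term $\sum_{i}M_{2,ii}f_{1,i}\E\epsilon_{i}^{3}$.

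The main step is the quadratic-quadratic term. I compute $\E(\epsilon_{i}\epsilon_{j}\epsilon_{k}\epsilon_{l})$, which is nonzero only in four index patterns:
\begin{itemize}
\item $i=j=k=l$, contributing $\E\epsilon_{i}^{4}$;
\item $i=j\neq k=l$, contributing $\sigma_{i}^{2}\sigma_{k}^{2}$;
\item $i=k\neq j=l$, contributing $\sigma_{i}^{2}\sigma_{j}^{2}$;
\item $i=l\neq j=k$, contributing $\sigma_{i}^{2}\sigma_{j}^{2}$.
\end{itemize}
Summing $M_{1,ij}M_{2,kl}$ over these patterns, I add and subtract the diagonal terms so that each pair sum runs over all indices. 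That yields
\[
\sum_{i,k}M_{1,ii}M_{2,kk}\sigma_{i}^{2}\sigma_{k}^{2}+\sum_{i,j}M_{1,ij}M_{2,ij}\sigma_{i}^{2}\sigma_{j}^{2}+\sum_{i,j}M_{1,ij}M_{2,ji}\sigma_{i}^{2}\sigma_{j}^{2}+\sum_{i}M_{1,ii}M_{2,ii}\bigl(\E\epsilon_{i}^{4}-3\sigma_{i}^{4}\bigr).
\]
The first sum equals $\tr(M_{1}\Omega)\tr(M_{2}\Omega)=\E\epsilon^{\prime}M_{1}\epsilon\cdot\E\epsilon^{\prime}M_{2}\epsilon$, so it cancels when passing from the second moment to the covariance. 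The second and third sums are $\tr(\Omega M_{1}\Omega M_{2}^{\prime})$ and $\tr(\Omega M_{1}\Omega M_{2})$, in the spirit of (\ref{eq:gamma_quadratic}).

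Collecting terms gives the stated formula. The only delicate point is this bookkeeping of the four-index patterns, in particular correcting the triple counting of the $i=j=k=l$ case, which is what produces the factor $3\sigma_{i}^{4}$. All moments involved are finite by Assumption \ref{assu:epsilon}.
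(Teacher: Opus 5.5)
Your proof is correct. The mean, the two mixed third-moment terms, and the fourth-moment bookkeeping all check out and reproduce the stated formula exactly. In that bookkeeping you extend the three pair sums to include the diagonal and subtract $3\sum_{i}M_{1,ii}M_{2,ii}\sigma_{i}^{4}$, and the $\tr(M_{1}\Omega)\tr(M_{2}\Omega)$ term cancels against the product of means. The paper gives no proof of its own and instead adapts Lemma A.1 of \citet{kelejian_specification_2010}; your direct index-pattern expansion is the standard argument behind that result.
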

\begin{rem}
\label{rem:vars}(i) If $\diag(M_{1})=0$, then $\E\left(\epsilon^{\prime}M_{1}\epsilon\right)=0$ and
\[
\Cov(\epsilon^{\prime}M_{1}\epsilon,\epsilon^{\prime}M_{2}\epsilon)=\E\left(\epsilon^{\prime}M_{1}\epsilon\epsilon^{\prime}M_{2}\epsilon\right)=\tr\left[\Omega M_{1}\Omega(M_{2}+M^{\prime}_{2})\right].
\]
(ii) If $\diag(M)=0$, then $\Cov(\epsilon^{\prime}M\epsilon,f^{\prime}\epsilon)=0$. For $F=[f_{1},\ldots,f_{k_{f}}]$, $\Cov(\epsilon^{\prime}M\epsilon,F^{\prime}\epsilon)=0$.

(iii) For a nonstochastic $n_{r}\times n_{r}$ matrix $M_{r}$, the lemma implies that $\Var(\epsilon^{\prime}_{r}M_{r}\epsilon_{r})=\tr\left(\Omega_{r}M_{r}\Omega_{r}(M_{r}+M^{\prime}_{r})\right)+\sum^{n_{r}}_{j=1}M^{2}_{r,jj}\left[\E(\epsilon^{4}_{jr})-3\sigma^{4}_{jr}\right]$. Under Assumption \ref{assu:epsilon}, this variance is $O(n_{r})$ uniformly in $r$ and $n$ if $M_{r}$ is UBRC.
\end{rem}
The following lemma is adapted from Lemma A.3 of \citet{lin_gmm_2010}.
\begin{lem}[Convergence of linear and quadratic forms of $\epsilon$]
\label{lem:convergence_ip}(i) Let $s=\epsilon^{\prime}M\epsilon$, where $M$ is a nonstochastic $n\times n$ matrix that is UBRC. Under Assumption \ref{assu:epsilon}, $\left|\frac{1}{n}s-\E\left(\frac{1}{n}s\right)\right|\xrightarrow{p}0$, and $\E(\frac{1}{n}s)$ is bounded uniformly in $n$.

(ii) Let $s=F^{\prime}\epsilon$, where $F$ is a nonstochastic $n\times k_{f}$ matrix that is UB and $k_{f}$ is fixed. Under Assumption \ref{assu:epsilon}, $\frac{1}{n}s\xrightarrow{p}0$.
\end{lem}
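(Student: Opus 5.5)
Both parts reduce to controlling mean and variance of $s/n$ and applying Chebyshev's inequality.

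\emph{Part (i).} By Lemma \ref{lem:varS}, $\E s=\tr(M\Omega)=\sum_{i}M_{ii}\sigma_{i}^{2}$. Since $M$ is UBRC, each diagonal entry satisfies $|M_{ii}|\leqslant C_{M}$. Assumption \ref{assu:epsilon} gives $\sigma_{i}^{2}\leqslant C_{\sigma}$. Hence $|\E(s/n)|\leqslant C_{M}C_{\sigma}$ uniformly in $n$, which is the boundedness claim.

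For the variance, apply Lemma \ref{lem:varS} with $M_{1}=M_{2}=M$ and $f_{l}=0$:
\[
\Var(s)=\tr\left(\Omega M\Omega(M+M^{\prime})\right)+\sum_{i}M_{ii}^{2}\left[\E(\epsilon_{i}^{4})-3\sigma_{i}^{4}\right].
\]
The second sum is $O(n)$. This follows from $|M_{ii}|\leqslant C_{M}$ and the uniform fourth-moment bound implied by the $4+c_{\epsilon}$ moment condition.

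For the trace term, Remark \ref{rem:rub2} tells us $\Omega$ is UBRC, being diagonal with bounded entries. Lemma \ref{lem:rub}(i) then makes the product $\Omega M\Omega(M+M^{\prime})$ UBRC. The trace of a UBRC matrix is at most $n$ times its row-sum bound, so this term is also $O(n)$.

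Therefore $\Var(s/n)=O(1/n)\to0$. Chebyshev's inequality gives $s/n-\E(s/n)\xrightarrow{p}0$.

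\emph{Part (ii).} We have $\E(F^{\prime}\epsilon)=0$ and $\Var(F^{\prime}\epsilon/n)=F^{\prime}\Omega F/n^{2}$. Each entry of $F^{\prime}\Omega F$ is bounded by $nC_{F}^{2}C_{\sigma}$ because $F$ is UB and $\Omega$ is diagonal with bounded entries. Since $k_{f}$ is fixed, every entry of $\Var(F^{\prime}\epsilon/n)$ is $O(1/n)$. Chebyshev's inequality applied componentwise then yields $F^{\prime}\epsilon/n\xrightarrow{p}0$.

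\emph{Main step.} The only step needing care is the $O(n)$ bound on the trace term in part (i). It rests on UBRC being closed under products and on the trace bound $|\tr(B)|\leqslant n\max_{i}\sum_{j}|B_{ij}|$. Everything else is routine bookkeeping under the moment bounds of Assumption \ref{assu:epsilon}.
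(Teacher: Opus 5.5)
Your proof is correct and follows the paper's argument essentially step for step. It uses Lemma \ref{lem:varS} for the mean and variance, closure of UBRC under products to make the trace term $O(n)$, and Chebyshev's inequality for both parts; the only slip is a citation, since the fact that $\Omega$ is UBRC is stated in Remark \ref{rem:rub1}(ii), not Remark \ref{rem:rub2}.
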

\begin{proof}
For part (i), Lemma \ref{lem:varS} gives $\E\left(\frac{1}{n}\epsilon^{\prime}M\epsilon\right)=\frac{1}{n}\tr(M\Omega)$ and
\[
\Var(\frac{1}{n}\epsilon^{\prime}M\epsilon)=\frac{1}{n^{2}}\tr\left(\Omega M\Omega(M+M^{\prime})\right)+\frac{1}{n^{2}}\sum^{n}_{i=1}M^{2}_{ii}\left[\E(\epsilon^{4}_{i})-3\sigma^{4}_{i}\right].
\]
Because $M$ and $\Omega$ are UBRC, $\Omega M\Omega(M+M^{\prime})$ is UBRC by Lemma \ref{lem:rub}, and hence its trace is $O(n)$. Moreover, $M^{2}_{ii}$ is uniformly bounded and, by Assumption \ref{assu:epsilon}, $\E(\epsilon^{4}_{i})-3\sigma^{4}_{i}$ is uniformly bounded. Therefore, there exists a constant $C_{s}$ that does not depend on $n$ such that $\Var(\frac{1}{n}\epsilon^{\prime}M\epsilon)\leqslant\frac{1}{n}C_{s}\rightarrow0$. Chebyshev's inequality then yields $\left|\frac{1}{n}\epsilon^{\prime}M\epsilon-\E\left(\frac{1}{n}\epsilon^{\prime}M\epsilon\right)\right|\xrightarrow{p}0$. Since $M\Omega$ is UBRC, its diagonal elements are uniformly bounded, which also implies that $\E\left(\frac{1}{n}\epsilon^{\prime}M\epsilon\right)=\frac{1}{n}\tr(M\Omega)$ is uniformly bounded in $n$.

(ii) For part (ii), Assumption \ref{assu:epsilon} implies $\E\left(F^{\prime}\epsilon\right)=0$. In addition, $\E\|\frac{1}{n}F^{\prime}\epsilon\|^{2}=\frac{1}{n^{2}}\tr(F^{\prime}\Omega F)=O(1/n)$ because $F$ is UB, $k_{f}$ is fixed, and $\Omega$ is diagonal with uniformly bounded elements. Therefore, $\frac{1}{n}F^{\prime}\epsilon\xrightarrow{p}0$.
\end{proof}

\section{Proofs}

\subsection{Proof of Theorem \ref{thm:consistency_theta}: Consistency}\label{subsec:Proof_Consistency_Theta}
\begin{proof}
We first establish $\sup_{\theta\in\Theta}|Q_{n}(\theta)-\bar{Q}(\theta)|\xrightarrow{p}0$, where $\bar{Q}(\theta)=\bar{h}(\theta)^{\prime}\bar{\Xi}\bar{h}(\theta)$, with $\bar{h}(\theta)=\lim_{n\rightarrow\infty}\E(h_{n}(\theta))$ and $\bar{\Xi}=\plim\Xi_{n}$ as defined in Assumption \ref{assu:lim}. The triangle and Cauchy--Schwarz inequalities give
\begin{align}
 & |Q_{n}(\theta)-\bar{Q}(\theta)|=|h_{n}(\theta)^{\prime}\Xi_{n}h_{n}(\theta)-\bar{h}(\theta)^{\prime}\bar{\Xi}\bar{h}(\theta)|\nonumber \\
= & \left|\bar{h}(\theta)^{\prime}\left(\Xi_{n}-\bar{\Xi}\right)\bar{h}(\theta)+\left(h_{n}(\theta)-\bar{h}(\theta)\right)^{\prime}\Xi_{n}\left(h_{n}(\theta)-\bar{h}(\theta)\right)+2\bar{h}(\theta)^{\prime}\Xi_{n}\left(h_{n}(\theta)-\bar{h}(\theta)\right)\right|\nonumber \\
\leqslant & \left\Vert \bar{h}(\theta)\right\Vert ^{2}\left\Vert \Xi_{n}-\bar{\Xi}\right\Vert +\left\Vert h_{n}(\theta)-\bar{h}(\theta)\right\Vert ^{2}\left\Vert \Xi_{n}\right\Vert +2\left\Vert \bar{h}(\theta)\right\Vert \left\Vert \Xi_{n}\right\Vert \left\Vert h_{n}(\theta)-\bar{h}(\theta)\right\Vert ,\label{eq:consisproof_convergence}
\end{align}
where $\|\cdot\|$ denotes the Euclidean norm for vectors and the spectral norm for matrices.

To establish $\sup_{\theta\in\Theta}\|h_{n}(\theta)-\bar{h}(\theta)\|\xrightarrow{p}0$ for $h_{n}(\theta)$ in (\ref{eq:h_n_theta}), we first derive a convenient representation of $\epsilon^{\ast}(\theta)$, $h_{n}(\theta)$ and $\E h_{n}(\theta)$. Recall from Remark \ref{rem:calculation} that $I^{\ast}_{r}$, $W_{r}$, $(I_{r}-\lambda_{0}W_{r})^{-1}$, and $A_{r}$ are symmetric and commute. Equation (\ref{eq:before_demean}) implies $Y_{r}=(I_{r}-\lambda_{0}W_{r})^{-1}(\alpha_{r}\mathbf{1}_{r}+X_{r}\beta_{0}+\epsilon_{r})$. Because $I^{\ast}_{r}\mathbf{1}_{r}=0$ and the relevant matrices commute, $I^{\ast}_{r}(I_{r}-\lambda W_{r})(I_{r}-\lambda_{0}W_{r})^{-1}\mathbf{1}_{r}=0$. Substituting the expression for $Y_{r}$ into (\ref{eq:epsstar_general}) therefore gives
\begin{align*}
\epsilon^{\ast}_{r}(\theta) & =I^{\ast}_{r}(I_{r}-\lambda W_{r})(I_{r}-\lambda_{0}W_{r})^{-1}(X_{r}\beta_{0}+\epsilon_{r})-I^{\ast}_{r}X_{r}\beta.
\end{align*}
Let $B_{r}=W_{r}(I-\lambda_{0}W_{r})^{-1}$. Since $(I_{r}-\lambda W_{r})(I_{r}-\lambda_{0}W_{r})^{-1}=\left[I_{r}-\lambda_{0}W_{r}-(\lambda-\lambda_{0})W_{r}\right](I_{r}-\lambda_{0}W_{r})^{-1}=I_{r}-(\lambda-\lambda_{0})B_{r}$, we have
\begin{align*}
\epsilon^{\ast}_{r}(\theta) & =I^{\ast}_{r}\left[I_{r}-(\lambda-\lambda_{0})B_{r}\right](X_{r}\beta_{0}+\epsilon_{r})-I^{\ast}_{r}X_{r}\beta\\
 & =I^{\ast}_{r}\left[I_{r}-(\lambda-\lambda_{0})B_{r}\right]\epsilon_{r}-(\lambda-\lambda_{0})I^{\ast}_{r}B_{r}X_{r}\beta_{0}-I^{\ast}_{r}X_{r}(\beta-\beta_{0})\\
 & =I^{\ast}_{r}\left[I_{r}-(\lambda-\lambda_{0})B_{r}\right]\epsilon_{r}-I^{\ast}_{r}[B_{r}X_{r}\beta_{0},X_{r}]\left(\begin{array}{c}
\lambda-\lambda_{0}\\
\beta-\beta_{0}
\end{array}\right).
\end{align*}
Recall that $\tilde{X}_{r}=[B_{r}X_{r}\beta_{0},X_{r}]$, $\tilde{X}=[\tilde{X}^{\prime}_{1},\ldots,\tilde{X}^{\prime}_{R}]^{\prime}$, and let $B=\diag^{R}_{r=1}\{B_{r}\}$. Then
\begin{equation}
\epsilon^{\ast}(\theta)=(\epsilon^{\ast}_{1}(\theta)^{\prime},\ldots,\epsilon^{\ast}_{R}(\theta)^{\prime})^{\prime}=I^{\ast}[I-(\lambda-\lambda_{0})B]\epsilon-I^{\ast}\tilde{X}(\theta-\theta_{0}).\label{eq:epstheta}
\end{equation}
Let $A^{\ast}=\diag^{R}_{r=1}\{A^{\ast}_{r}\}=I^{\ast}AI^{\ast}$. Using (\ref{eq:epstheta}), the components of $h_{n}(\theta)$ can be written as
\begin{align}
\frac{1}{n}\epsilon^{\ast}(\theta)^{\prime}A\epsilon^{\ast}(\theta) & =\frac{1}{n}\epsilon^{\prime}\tilde{A}(\lambda)\epsilon+(\theta-\theta_{0})^{\prime}\left(\frac{1}{n}\tilde{X}^{\prime}A^{\ast}\tilde{X}\right)(\theta-\theta_{0})\nonumber \\
 & -\frac{2}{n}\epsilon^{\prime}[I-(\lambda-\lambda_{0})B]^{\prime}A^{\ast}\tilde{X}(\theta-\theta_{0}),\label{eq:quadratic_theta}\\
\frac{1}{n}H^{\prime}\epsilon^{\ast}(\theta) & =\frac{1}{n}H^{\prime}I^{\ast}[I-(\lambda-\lambda_{0})B]\epsilon-\frac{1}{n}H^{\prime}I^{\ast}\tilde{X}(\theta-\theta_{0}),\label{eq:linear_theta}
\end{align}
where
\begin{equation}
\tilde{A}(\lambda)=[I-(\lambda-\lambda_{0})B]^{\prime}A^{\ast}[I-(\lambda-\lambda_{0})B]=A^{\ast}-2(\lambda-\lambda_{0})BA^{\ast}+(\lambda-\lambda_{0})^{2}B^{2}A^{\ast}.\label{eq:A_tilda}
\end{equation}
Taking expectations gives
\begin{align}
\E h_{n}(\theta) & =\left(\begin{array}{c}
\frac{1}{n}\tr\left(\tilde{A}(\lambda)\Omega\right)+(\theta-\theta_{0})^{\prime}\left(\frac{1}{n}\tilde{X}^{\prime}A^{\ast}\tilde{X}\right)(\theta-\theta_{0})\\
-\frac{1}{n}H^{\prime}I^{\ast}\tilde{X}(\theta-\theta_{0})
\end{array}\right).\label{eq:Ehbar}
\end{align}

We next show that $\sup_{\theta\in\Theta}\|h_{n}(\theta)-\E h_{n}(\theta)\|\xrightarrow{p}0$. For the first stochastic component in (\ref{eq:quadratic_theta}), we have
\begin{align*}
\frac{1}{n}\epsilon^{\prime}\tilde{A}(\lambda)\epsilon & =\left[\frac{1}{n}\epsilon^{\prime}A^{\ast}\epsilon\right]-2\left(\lambda-\lambda_{0}\right)\left[\frac{1}{n}\epsilon^{\prime}BA^{\ast}\epsilon\right]+(\lambda-\lambda_{0})^{2}\left[\frac{1}{n}\epsilon^{\prime}B^{2}A^{\ast}\epsilon\right].
\end{align*}
By Lemma \ref{lem:rub}, $A^{\ast}$, $BA^{\ast}$, and $B^{2}A^{\ast}$ are UBRC. Hence, Lemma \ref{lem:convergence_ip}(i) implies that each bracketed quadratic form converges in probability to its expectation, with the expectations uniformly bounded in $n$. Since $\lambda$ belongs to a compact set, the coefficients multiplying these quadratic forms are uniformly bounded, and hence the convergence is uniform in $\theta$. The second term in (\ref{eq:quadratic_theta}) is nonstochastic and therefore disappears upon centering. For the third term, Remark \ref{rem:rub2} and Lemma \ref{lem:rub} imply that $A^{\ast}\tilde{X}$ and $B^{\prime}A^{\ast}\tilde{X}$ are both UB. Lemma \ref{lem:convergence_ip}(ii), together with the compactness of $\Theta$, therefore implies that this term converges uniformly in probability to zero. Applying the same argument to (\ref{eq:linear_theta}), using that $H$ is UB and $k_{h}$ is fixed, gives uniform convergence of its stochastic component to zero when $k_{x}>0$. Consequently $\sup_{\theta\in\Theta}\left\Vert h_{n}(\theta)-\E h_{n}(\theta)\right\Vert \xrightarrow{p}0.$

It remains to establish uniform convergence of $\E h_{n}(\theta)$. By Assumption \ref{assu:lim}, $\E h_{n}(\theta)\rightarrow\bar{h}(\theta)$ pointwise on $\Theta$. Equation (\ref{eq:Ehbar}) shows that every component of $\E h_{n}(\theta)$ is a polynomial in $\theta$ of degree at most two, with coefficients depending on $n$ but not on $\theta$. Since $\theta_{0}$ is an interior point of $\Theta$ by Assumption \ref{assu:Theta}, pointwise convergence on $\Theta$ implies convergence of the coefficients of these finite-dimensional polynomials. Compactness of $\Theta$ therefore gives $\sup_{\theta\in\Theta}\left\Vert \E h_{n}(\theta)-\bar{h}(\theta)\right\Vert \rightarrow0.$ Combining this result with the preceding stochastic convergence yields $\sup_{\theta\in\Theta}\left\Vert h_{n}(\theta)-\bar{h}(\theta)\right\Vert \xrightarrow{p}0$. As a uniform limit of continuous functions, $\bar{h}(\theta)$ is continuous and bounded on $\Theta$. Moreover, Assumption \ref{assu:lim} implies $\Xi_{n}\xrightarrow{p}\bar{\Xi}$, so that $\left\Vert \Xi_{n}\right\Vert =O_{p}(1)$. Substituting these results into (\ref{eq:consisproof_convergence}) gives $\sup_{\theta\in\Theta}|Q_{n}(\theta)-\bar{Q}(\theta)|\xrightarrow{p}0$.

We next establish identification by showing that $\bar{h}(\theta)=0$ if and only if $\theta=\theta_{0}$. First consider the case $k_{x}=0$, so that $\theta=\lambda$ and $\E h_{n}(\theta)$ becomes $\E h_{n}(\lambda)=\frac{1}{n}\tr\left(\tilde{A}(\lambda)\Omega\right).$ For $A_{r}=\left[W_{r}+I_{r}/\left(n_{r}-1\right)\right]$, the diagonal elements of $\tilde{A}_{r}(\lambda)$ in (\ref{eq:A_tilda}) associated with peer group $g$ are $-(\lambda-\lambda_{0})\frac{n_{r}-m_{gr}}{m_{gr}(n_{r}-1)}\varphi_{gr}(\lambda)$, where\footnote{See Section \ref{subsec:diag_Atilde} of the Online Appendix for details of the calculations.} 
\[
\varphi_{gr}(\lambda)=\left(\frac{1}{1-\lambda_{0}}+\frac{1}{m_{gr}-1+\lambda_{0}}\right)\left(\frac{1-\lambda}{1-\lambda_{0}}+\frac{m_{gr}-1+\lambda}{m_{gr}-1+\lambda_{0}}\right).
\]
Therefore,
\begin{equation}
\frac{1}{n}\tr\left(\tilde{A}(\lambda)\Omega\right)=-(\lambda-\lambda_{0})\left[\frac{1}{n}\sum^{R}_{r=1}\sum^{G_{r}}_{g=1}\frac{n_{r}-m_{gr}}{m_{gr}(n_{r}-1)}\varphi_{gr}(\lambda)\tr(\Omega_{gr})\right].\label{eq:Ebarh_lambda}
\end{equation}
By Assumptions \ref{assu:epsilon} and \ref{assu:Theta}, there exist constants $c_{\varphi}>0$ and $c_{\sigma}>0$ such that $\varphi_{gr}(\lambda)\geqslant c_{\varphi}$ uniformly over $g,r,\lambda$, and $\tr(\Omega_{gr})\geqslant m_{gr}c_{\sigma}$. Hence
\begin{align*}
\frac{1}{n}\sum^{R}_{r=1}\sum^{G_{r}}_{g=1}\frac{n_{r}-m_{gr}}{m_{gr}(n_{r}-1)}\varphi_{gr}(\lambda)\tr(\Omega_{gr}) & \geqslant c_{\sigma}c_{\varphi}\frac{1}{n}\sum^{R}_{r=1}\sum^{G_{r}}_{g=1}\frac{n_{r}-m_{gr}}{n_{r}-1}=c_{\sigma}c_{\varphi}\frac{1}{n}\sum^{R}_{r=1}\frac{n_{r}(G_{r}-1)}{(n_{r}-1)}.
\end{align*}
Since $n_{r}/(n_{r}-1)\geqslant1$ and by Assumption \ref{assu:size}, $G_{r}\geqslant2$ and hence $G_{r}-1\geqslant G_{r}/2$, $\frac{1}{n}\sum^{R}_{r=1}\frac{n_{r}(G_{r}-1)}{(n_{r}-1)}\geqslant\frac{1}{n}\sum^{R}_{r=1}\frac{G_{r}}{2}=\frac{G}{2n}\geqslant\frac{1}{2C_{m}}>0$. It follows that $\frac{1}{n}\left|\tr\left(\tilde{A}(\lambda)\Omega\right)\right|\geqslant\left|\lambda-\lambda_{0}\right|c_{\sigma}c_{\varphi}/(2C_{m})$. Since the limit exists by Assumption \ref{assu:lim}, the preceding bound implies that $\bar{h}(\lambda)=\lim_{n\rightarrow\infty}\frac{1}{n}\tr\left(\tilde{A}(\lambda)\Omega\right)=0$ if and only if $\lambda=\lambda_{0}$.

Now consider the case $k_{x}>0$. Suppose that $\bar{h}(\theta)=\lim_{n\rightarrow\infty}\E h_{n}(\theta)=0$. The second component of (\ref{eq:Ehbar}) implies $\lim_{n\rightarrow\infty}\frac{1}{n}H^{\prime}I^{\ast}\tilde{X}(\theta-\theta_{0})=0$. Because $H$ contains $I^{\ast}X$, it follows that $\lim_{n\rightarrow\infty}\frac{1}{n}X^{\prime}I^{\ast}\tilde{X}(\theta-\theta_{0})=0$. Let $\tilde{X}=[\phi,X]$, where $\phi=W(I-\lambda_{0}W)^{-1}X\beta_{0}$. Then
\[
\lim_{n\rightarrow\infty}\frac{1}{n}X^{\prime}I^{\ast}\tilde{X}(\theta-\theta_{0})=(\lambda-\lambda_{0})Q_{X\phi}+Q_{XX}(\beta-\beta_{0})=0,
\]
where $Q_{X\phi}=\lim_{n\rightarrow\infty}\frac{1}{n}X^{\prime}I^{\ast}\phi$ and $Q_{XX}=\lim_{n\rightarrow\infty}\frac{1}{n}X^{\prime}I^{\ast}X$. The existence of these limits follows from Assumption \ref{assu:identification}(ii), and Assumption \ref{assu:identification}(i) implies that $Q_{XX}$ is positive definite. Therefore, 
\begin{equation}
\beta-\beta_{0}=-(\lambda-\lambda_{0})Q^{-1}_{XX}Q_{X\phi}.\label{eq:linear_relation}
\end{equation}
Thus $\beta_{0}$ is identified once $\lambda_{0}$ is identified. Let $Q_{\phi\phi}=\lim_{n\rightarrow\infty}\phi^{\prime}I^{\ast}\phi/n$. Then
\[
Q_{\tilde{X}\tilde{X}}=\lim_{n\rightarrow\infty}\frac{1}{n}\tilde{X}^{\prime}I^{\ast}\tilde{X}=\left(\begin{array}{cc}
Q_{\phi\phi} & Q^{\prime}_{X\phi}\\
Q_{X\phi} & Q_{XX}
\end{array}\right).
\]
For any $\theta$ satisfying (\ref{eq:linear_relation}),
\begin{align}
(\theta-\theta_{0})^{\prime}Q_{\tilde{X}\tilde{X}}(\theta-\theta_{0}) & =(\lambda-\lambda_{0})^{2}Q_{\phi\phi}+\left(\beta-\beta_{0}\right)^{\prime}Q_{XX}\left(\beta-\beta_{0}\right)+2(\lambda-\lambda_{0})Q^{\prime}_{X\phi}\left(\beta-\beta_{0}\right)\nonumber \\
 & =(\lambda-\lambda_{0})^{2}Q_{\phi\phi}+\left(\lambda-\lambda_{0}\right)^{2}Q^{\prime}_{X\phi}Q^{-1}_{XX}Q_{X\phi}-2(\lambda-\lambda_{0})^{2}Q^{\prime}_{X\phi}Q^{-1}_{XX}Q_{X\phi}\nonumber \\
 & =(\lambda-\lambda_{0})^{2}\left[Q_{\phi\phi}-Q^{\prime}_{X\phi}Q^{-1}_{XX}Q_{X\phi}\right].\label{eq:Qxx_id}
\end{align}
Because $Q_{\tilde{X}\tilde{X}}$ is positive semidefinite and $Q_{XX}$ is positive definite, $Q_{\phi\phi}-Q^{\prime}_{X\phi}Q^{-1}_{XX}Q_{X\phi}\geqslant0$. If $Q_{\phi\phi}-Q^{\prime}_{X\phi}Q^{-1}_{XX}Q_{X\phi}>0$, then $\rank(Q_{\tilde{X}\tilde{X}})=k_{x}+1$. Assumption \ref{assu:identification}(ii) therefore implies $\rank\left(Q_{H\tilde{X}}\right)=k_{x}+1$, so the linear moment in the second component of (\ref{eq:Ehbar}) alone implies $\theta=\theta_{0}$. If instead $Q_{\phi\phi}-Q^{\prime}_{X\phi}Q^{-1}_{XX}Q_{X\phi}=0$, then (\ref{eq:Qxx_id}) implies that $\lim_{n\rightarrow\infty}\frac{1}{n}\left\Vert I^{\ast}\tilde{X}(\theta-\theta_{0})\right\Vert ^{2}=(\theta-\theta_{0})^{\prime}Q_{\tilde{X}\tilde{X}}(\theta-\theta_{0})=0.$ Since $A$ is UBRC, Lemma \ref{lem:spectral} and $A^{\ast}=I^{\ast}AI^{\ast}$ give 
\[
\left|(\theta-\theta_{0})^{\prime}\left(\frac{1}{n}\tilde{X}^{\prime}A^{\ast}\tilde{X}\right)(\theta-\theta_{0})\right|\leqslant C_{A}\frac{1}{n}\left\Vert I^{\ast}\tilde{X}(\theta-\theta_{0})\right\Vert ^{2}
\]
for some constant $0<C_{A}<\infty$. Hence $\lim_{n\rightarrow\infty}(\theta-\theta_{0})^{\prime}\left(\frac{1}{n}\tilde{X}^{\prime}A^{\ast}\tilde{X}\right)(\theta-\theta_{0})=0$. Therefore, the first component of $\bar{h}(\theta)$ reduces to $\lim_{n\rightarrow\infty}\frac{1}{n}\tr\left(\tilde{A}(\lambda)\Omega\right)$, which, by the argument for the case $k_{x}=0$, equals zero if and only if $\lambda=\lambda_{0}$. Equation (\ref{eq:linear_relation}) then gives $\beta=\beta_{0}$.

Combining the two cases, $\bar{h}(\theta)=0$ if and only if $\theta=\theta_{0}$. Since $\bar{h}(\theta)$ is continuous and $\bar{\Xi}$ is positive definite, $\bar{Q}(\theta)$ is continuous and uniquely minimized at $\theta_{0}$. The uniform convergence of $Q_{n}$ to $\bar{Q}$ and compactness of $\Theta$ therefore imply, by Theorem 4.1.1 of \citet{amemiya_advanced_1985}, that $\hat{\theta}\xrightarrow{p}\theta_{0}$.
\end{proof}

\subsection{Proof of Theorem \ref{thm:asym_theta}: Asymptotic Normality}\label{subsec:Proof_Normality_Theta}
\begin{proof}
By Theorem \ref{thm:consistency_theta}, $\hat{\theta}\xrightarrow{p}\theta_{0}$. Since $\theta_{0}$ lies in the interior of $\Theta$, $\hat{\theta}$ is in the interior of $\Theta$ with probability approaching one (w.p.a.1). Hence, the first-order condition satisfies $\nabla_{\theta}h_{n}(\hat{\theta})^{\prime}\Xi_{n}h_{n}(\hat{\theta})=0$ w.p.a.1. By the mean value theorem, $h_{n}(\hat{\theta})=h_{n}(\theta_{0})+\nabla_{\theta}h_{n}(\tilde{\theta})(\hat{\theta}-\theta_{0})$, where, with a slight abuse of notation, each row of $\nabla_{\theta}h_{n}(\tilde{\theta})$ may be evaluated at a possibly different point between $\theta_{0}$ and $\hat{\theta}$. Therefore,
\[
\nabla_{\theta}h_{n}(\hat{\theta})^{\prime}\Xi_{n}\left[h_{n}(\theta_{0})+\nabla_{\theta}h_{n}(\tilde{\theta})(\hat{\theta}-\theta_{0})\right]=0,
\]
so that, whenever the matrix on the left below is nonsingular, 
\begin{align}
\sqrt{n}\left(\hat{\theta}-\theta_{0}\right) & =-\left[\nabla_{\theta}h_{n}(\hat{\theta})^{\prime}\Xi_{n}\nabla_{\theta}h_{n}(\tilde{\theta})\right]^{-1}\nabla_{\theta}h_{n}(\hat{\theta})^{\prime}\Xi_{n}\left(\sqrt{n}h_{n}(\theta_{0})\right).\label{eq:sqrtthetahat}
\end{align}
We establish below that the matrix being inverted is nonsingular w.p.a.1.

First, we show that $\nabla_{\theta}h_{n}(\hat{\theta})\xrightarrow{p}\bar{D}$ and $\nabla_{\theta}h_{n}(\tilde{\theta})\xrightarrow{p}\bar{D}$. We begin by proving that $\sup_{\theta\in\Theta}\left\Vert \nabla_{\theta}h_{n}(\theta)-\E\nabla_{\theta}h_{n}(\theta)\right\Vert \xrightarrow{p}0.$ The representations in (\ref{eq:quadratic_theta}) and (\ref{eq:linear_theta}) imply that every entry of $\nabla_{\theta}h_{n}(\theta)-\E\left(\nabla_{\theta}h_{n}(\theta)\right)$ is a finite sum of centered quadratic and linear forms in $\epsilon$, possibly multiplied by components of $\theta-\theta_{0}$. For example, using (\ref{eq:A_tilda}), 
\begin{align}
\frac{\partial}{\partial\lambda}\left(\frac{1}{n}\epsilon^{\prime}\tilde{A}(\lambda)\epsilon\right) & =\frac{1}{n}\epsilon^{\prime}\frac{\partial\tilde{A}(\lambda)}{\partial\lambda}\epsilon=\frac{1}{n}\epsilon^{\prime}\left[-2BA^{\ast}+2(\lambda-\lambda_{0})B^{2}A^{\ast}\right]\epsilon\nonumber \\
 & =-2\left(\frac{1}{n}\epsilon^{\prime}BA^{\ast}\epsilon\right)+2(\lambda-\lambda_{0})\left(\frac{1}{n}\epsilon^{\prime}B^{2}A^{\ast}\epsilon\right)\label{eq:derivative_lambda}
\end{align}
By Lemma \ref{lem:rub}, $BA^{\ast}$ and $B^{2}A^{\ast}$ are UBRC, so Lemma \ref{lem:convergence_ip}(i) implies that both quadratic forms in (\ref{eq:derivative_lambda}) converge in probability to their expectations. Differentiating the remaining stochastic term in (\ref{eq:quadratic_theta}) gives linear forms in $\epsilon$ whose coefficient vectors are columns of $A^{\ast}\tilde{X}$ and $B^{\prime}A^{\ast}\tilde{X}$. These coefficient matrices are UB by Remark \ref{rem:rub2} and Lemma \ref{lem:rub}. The nonstochastic term in (\ref{eq:quadratic_theta}) disappears after centering. Finally, differentiating the stochastic term in (\ref{eq:linear_theta}) gives the linear form $-H^{\prime}I^{\ast}B\epsilon/n$ for the derivative with respect to $\lambda$ and zero for the derivatives with respect to $\beta$. The coefficient matrix $B^{\prime}I^{\ast}H$ is UB by Lemma \ref{lem:rub}, because $B^{\prime}I^{\ast}=(I^{\ast}B)^{\prime}$ is UBRC and $H$ is UB. Lemma \ref{lem:convergence_ip}(ii) therefore applies to these linear forms. Since $\Theta$ is compact, the components of $\theta-\theta_{0}$ are uniformly bounded over $\Theta$. It follows that $\sup_{\theta\in\Theta}\left\Vert \nabla_{\theta}h_{n}(\theta)-\E\nabla_{\theta}h_{n}(\theta)\right\Vert \xrightarrow{p}0.$

We next verify that the matrix $D_{n}$ defined in (\ref{eq:Dn}) satisfies $D_{n}=\E\nabla_{\theta}h_{n}(\theta)\left|_{\theta=\theta_{0}}\right.=\nabla_{\theta}\E h_{n}(\theta)\left|_{\theta=\theta_{0}}\right.$. The second equality follows because each component of $h_{n}(\theta)$ is a finite sum of polynomial functions of $\theta$ multiplied by integrable random variables. Therefore, expectation can be taken term by term and commutes with differentiation with respect to $\theta$. For $\E h_{n}(\theta)$ in (\ref{eq:Ehbar}), (\ref{eq:A_tilda}) gives 
\begin{align}
\left.\frac{\partial\tr\left(\tilde{A}(\lambda)\Omega\right)}{\partial\lambda}\right|_{\lambda_{0}} & =\tr\left[\left.\left(-2BA^{\ast}+2(\lambda-\lambda_{0})B^{2}A^{\ast}\right)\right|_{\lambda_{0}}\Omega\right]\nonumber \\
 & =-2\tr\left(BA^{\ast}\Omega\right)=-2\tr\left(A^{\ast}B\Omega\right),\label{eq:derivative1}
\end{align}
where the last equality holds because $A^{\ast}B=BA^{\ast}$ by Remark \ref{rem:calculation}. Moreover,
\[
\left.\nabla_{\theta}\left[(\theta-\theta_{0})^{\prime}\left(\frac{1}{n}\tilde{X}^{\prime}A^{\ast}\tilde{X}\right)(\theta-\theta_{0})\right]\right|_{\theta_{0}}=\left.2\left(\frac{1}{n}\tilde{X}^{\prime}A^{\ast}\tilde{X}\right)(\theta-\theta_{0})\right|_{\theta_{0}}=0,
\]
and
\[
\nabla_{\theta}\left[H^{\prime}I^{\ast}\tilde{X}(\theta-\theta_{0})\right]=H^{\prime}I^{\ast}\tilde{X}=[H^{\prime}I^{\ast}BX\beta_{0},H^{\prime}I^{\ast}X].
\]
Thus,
\begin{equation}
\left(\nabla_{\theta}\E h_{n}(\theta)\right)\left|_{\theta=\theta_{0}}\right.=-\frac{1}{n}\left(\begin{array}{cc}
2\tr(A^{\ast}B\Omega) & 0\\
H^{\prime}I^{\ast}BX\beta_{0} & H^{\prime}I^{\ast}X
\end{array}\right)=D_{n}.\label{eq:D_n_proof}
\end{equation}

The proof of Theorem \ref{thm:consistency_theta} shows that $\E h_{n}(\theta)$ is a polynomial in $\theta$ of degree at most two whose coefficients converge. Therefore $\E\nabla_{\theta}h_{n}(\theta)$ is a polynomial in $\theta$ of degree at most one whose coefficients are uniformly bounded for all sufficiently large $n$. Hence, it is uniformly locally Lipschitz in a neighborhood of $\theta_{0}$. Since $\hat{\theta}\xrightarrow{p}\theta_{0}$ and every mean-value point $\tilde{\theta}$ lies between $\hat{\theta}$ and $\theta_{0}$, we have $\E\nabla_{\theta}h_{n}(\hat{\theta})-D_{n}=o_{p}(1)$, $\E\nabla_{\theta}h_{n}(\tilde{\theta})-D_{n}=o_{p}(1)$. Together with the uniform stochastic convergence established above and $D_{n}\rightarrow\bar{D}$, this yields $\nabla_{\theta}h_{n}(\hat{\theta})\xrightarrow{p}\bar{D}$ and $\nabla_{\theta}h_{n}(\tilde{\theta})\xrightarrow{p}\bar{D}$. Because $\Xi_{n}\xrightarrow{p}\bar{\Xi}$, the continuous mapping theorem implies $\nabla_{\theta}h_{n}(\hat{\theta})^{\prime}\Xi_{n}\nabla_{\theta}h_{n}(\tilde{\theta})\xrightarrow{p}\bar{D}^{\prime}\bar{\Xi}\bar{D}$.

We next show that $\bar{D}$ has full column rank. Recall that $D_{n}$ is given in (\ref{eq:D_n_proof}). Differentiating (\ref{eq:Ebarh_lambda}) at $\lambda=\lambda_{0}$ and using (\ref{eq:derivative1}) gives
\[
2\tr\left(A^{\ast}B\Omega\right)=-\frac{\partial\tr(\tilde{A}(\lambda)\Omega)}{\partial\lambda}|_{\lambda_{0}}=\sum^{R}_{r=1}\sum^{G_{r}}_{g=1}\frac{n_{r}-m_{gr}}{m_{gr}(n_{r}-1)}\varphi_{gr}(\lambda_{0})\tr(\Omega_{gr}).
\]
The lower bound established in the proof of Theorem \ref{thm:consistency_theta} therefore implies $\frac{2}{n}\tr(A^{\ast}B\Omega)\geqslant\frac{c_{\sigma}c_{\varphi}}{2C_{m}}>0.$ Since $D_{n}\rightarrow\bar{D}$, the first element of $\bar{D}$ is strictly negative, while all other elements in its first row are zero. If $k_{x}=0$, this immediately implies that $\bar{D}$ has full column rank. Now suppose $k_{x}>0$. Because $H$ contains $I^{\ast}X$, the matrix $H^{\prime}I^{\ast}X/n$ contains $X^{\prime}I^{\ast}X/n$ as a $k_{x}\times k_{x}$ block. Assumption \ref{assu:identification}(ii) ensures that $H^{\prime}I^{\ast}X/n$ has a limit, while Assumption \ref{assu:identification}(i) implies that the smallest eigenvalue of $X^{\prime}I^{\ast}X/n$ is uniformly bounded away from zero. Hence $\lim_{n\rightarrow\infty}n^{-1}H^{\prime}I^{\ast}X$ has full column rank $k_{x}$. Together with the nonzero first element of the first row of $\bar{D}$, this implies that $\bar{D}$ has full column rank $k_{x}+1$. Because $\bar{\Xi}$ is positive definite by Assumption \ref{assu:lim} and $\bar{D}$ has full column rank, $\bar{D}^{\prime}\bar{\Xi}\bar{D}$ is positive definite and hence nonsingular. Since $\nabla_{\theta}h_{n}(\hat{\theta})^{\prime}\Xi_{n}\nabla_{\theta}h_{n}(\tilde{\theta})\xrightarrow{p}\bar{D}^{\prime}\bar{\Xi}\bar{D},$ the matrix inverted in (\ref{eq:sqrtthetahat}) is nonsingular w.p.a.1.

It remains to establish that $\sqrt{n}h_{n}(\theta_{0})\xrightarrow{d}N(0,\bar{V})$. At $\theta_{0}$, 
\[
h_{n}(\theta_{0})=\frac{1}{n}\left(\begin{array}{c}
\epsilon^{\prime}I^{\ast}AI^{\ast}\epsilon\\
H^{\prime}I^{\ast}\epsilon
\end{array}\right)=\frac{1}{n}\left(\begin{array}{c}
\epsilon^{\prime}A^{\ast}\epsilon\\
H^{\ast\prime}\epsilon
\end{array}\right),
\]
where $A^{\ast}=I^{\ast}AI^{\ast}$ is symmetric with zero diagonal and $H^{\ast}=I^{\ast}H$. By Remark \ref{rem:vars}, 
\[
\Var\left(\sqrt{n}h_{n}(\theta_{0})\right)=\frac{1}{n}\left[\begin{array}{cc}
2\tr\left(A^{\ast}\Omega A^{\ast}\Omega\right) & 0\\
0 & H^{\ast\prime}\Omega H^{\ast}
\end{array}\right]=V_{n}.
\]
We first verify that the limiting variance matrix $\bar{V}$ is positive definite. Since $A^{\ast}_{r}$ is symmetric and $\sigma^{2}_{jr}\geqslant c_{\sigma}>0$ under Assumption \ref{assu:epsilon}, (\ref{eq:trAOAO}) gives
\begin{align*}
\tr(A^{\ast}_{r}\Omega_{r}A^{\ast}_{r}\Omega_{r}) & =\sum^{n_{r}}_{j=1}\sum^{n_{r}}_{j'=1}(A^{\ast}_{r,jj'})^{2}\sigma^{2}_{jr}\sigma^{2}_{j'r}\geqslant c^{2}_{\sigma}\sum^{n_{r}}_{j=1}\sum^{n_{r}}_{j'=1}(A^{\ast}_{r,jj'})^{2}=c^{2}_{\sigma}\tr(A^{\ast2}_{r}).
\end{align*}
Equation (\ref{eq:lAAl}) in the Online Appendix gives
\begin{align*}
\tr(A^{\ast2}_{r}) & =\frac{1}{(n_{r}-1)}\sum^{G_{r}}_{g=1}\frac{(n_{r}-m_{gr})m_{gr}}{m_{gr}-1}\geqslant\frac{\sum^{G_{r}}_{g=1}(n_{r}-m_{gr})}{(n_{r}-1)}=\frac{n_{r}(G_{r}-1)}{(n_{r}-1)}\geqslant G_{r}-1\geqslant\frac{G_{r}}{2}.
\end{align*}
Consequently $\frac{1}{n}\tr\left(A^{\ast}\Omega A^{\ast}\Omega\right)\geqslant c^{2}_{\sigma}G/(2n)\geqslant c^{2}_{\sigma}/(2C_{m})>0$ by Assumption \ref{assu:size}. Thus, the limiting variance of the quadratic moment is strictly positive. When $k_{x}>0$, Assumption \ref{assu:identification}(i) implies that, for some $c_{H}>0$, $\lambda_{\min}\left(\frac{1}{n}H^{\ast\prime}H^{\ast}\right)=\lambda_{\min}\left(\frac{1}{n}H^{\prime}I^{\ast}H\right)\geqslant c_{H}.$ Hence, for every nonzero $k_{h}\times1$ vector $\alpha$, $\alpha^{\prime}\left(\frac{1}{n}H^{\ast\prime}\Omega H^{\ast}\right)\alpha=\frac{1}{n}(H^{\ast}\alpha)^{\prime}\Omega(H^{\ast}\alpha)\geqslant c_{\sigma}\frac{1}{n}(H^{\ast}\alpha)^{\prime}(H^{\ast}\alpha)\geqslant c_{\sigma}c_{H}\alpha^{\prime}\alpha$. Together with the block-diagonal form of $V_{n}$, these bounds imply that $\bar{V}$ is positive definite. When $k_{x}=k_{h}=0$, only the strictly positive scalar quadratic-moment variance remains.

Finally, $A^{\ast}$ is UBRC with zero diagonal, $H^{\ast}=I^{\ast}H$ is UB by Lemma \ref{lem:rub}, and Assumption \ref{assu:epsilon} provides the required moment conditions. Therefore, the central limit theorem for vectors of linear and quadratic forms in Theorem A.1 of \citet{kelejian_specification_2010} gives $\sqrt{n}h_{n}(\theta_{0})\xrightarrow{d}N(0,\bar{V})$. Substituting these results into (\ref{eq:sqrtthetahat}) and applying Slutsky's theorem gives $\sqrt{n}(\hat{\theta}-\theta_{0})\xrightarrow{d}N(0,\Sigma_{\theta}),$ where $\Sigma_{\theta}=(\bar{D}^{\prime}\bar{\Xi}\bar{D})^{-1}\bar{D}^{\prime}\bar{\Xi}\bar{V}\bar{\Xi}\bar{D}(\bar{D}^{\prime}\bar{\Xi}\bar{D})^{-1}$.
\end{proof}

\subsection{Proof of Theorem \ref{thm:convergence_Sigma}}\label{subsec:Proof_Consistency_Sigma}

\subsubsection{Preliminary Lemmas}

Before proving the theorem in Section \ref{subsec:Proof-of-VC-estimation}, we establish two approximation results for the variance estimators and then combine them in Proposition \ref{prop:VC_convergence}. Let $\varsigma=(\varsigma^{\prime}_{1},\ldots,\varsigma^{\prime}_{R})^{\prime}$, with $\varsigma_{r}=(\sigma^{2}_{1r},\ldots,\sigma^{2}_{n_{r}r})^{\prime}$, and define $\tilde{\varsigma}=(\tilde{\varsigma}^{\prime}_{1},\ldots,\tilde{\varsigma}^{\prime}_{R})^{\prime}$ and $\hat{\varsigma}=(\hat{\varsigma}^{\prime}_{1},\ldots,\hat{\varsigma}^{\prime}_{R})^{\prime}$ analogously.
\begin{lem}
\label{lem:hat_tilde_ip}Suppose Assumptions \ref{assu:size} to \ref{assu:lim} hold.

(i) If $\alpha$ is a nonstochastic $n\times1$ vector that is UB, then $\alpha^{\prime}\left(\hat{\varsigma}-\tilde{\varsigma}\right)/n\xrightarrow{p}0$.

(ii) If $M$ is a symmetric, nonstochastic $n\times n$ matrix that is UBRC, then $\left(\hat{\varsigma}^{\prime}M\hat{\varsigma}-\tilde{\varsigma}^{\prime}M\tilde{\varsigma}\right)/n\xrightarrow{p}0$.

(iii) $\frac{1}{n}\left\Vert \hat{\varsigma}\right\Vert ^{2}=O_{p}(1)$.
\end{lem}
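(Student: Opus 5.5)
The plan is to write the residual difference explicitly using the representation (\ref{eq:epstheta}). Since $\hat{\epsilon}^{\ast}=\epsilon^{\ast}(\hat{\theta})$ and $\epsilon^{\ast}=I^{\ast}\epsilon$, we have
\[
\hat{\epsilon}^{\ast}-\epsilon^{\ast}=-(\hat{\lambda}-\lambda_{0})I^{\ast}B\epsilon-I^{\ast}\tilde{X}(\hat{\theta}-\theta_{0})\equiv\Delta .
\]
By Theorem \ref{thm:consistency_theta}, $\hat{\theta}-\theta_{0}=o_{p}(1)$. Moreover, $I^{\ast}B$ is UBRC and $I^{\ast}\tilde{X}$ is UB (Remarks \ref{rem:rub1}, \ref{rem:rub2}). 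Then $\hat{\epsilon}^{\ast}\odot\hat{\epsilon}^{\ast}-\epsilon^{\ast}\odot\epsilon^{\ast}=2\epsilon^{\ast}\odot\Delta+\Delta\odot\Delta$. Because $P=\diag_{r}\{P_{r}\}$ is UBRC (its off-diagonal part is $J_{r}/((n_{r}-2)(n_{r}-1))$, with row sums bounded), we get
\[
\hat{\varsigma}-\tilde{\varsigma}=P\left(2\epsilon^{\ast}\odot\Delta+\Delta\odot\Delta\right).
\]
The key quantitative step is the bound $\frac{1}{n}\|\Delta\|^{2}=o_{p}(1)$. It follows from $\|\Delta\|^{2}\leqslant2(\hat{\lambda}-\lambda_{0})^{2}\|I^{\ast}B\epsilon\|^{2}+2\|I^{\ast}\tilde{X}(\hat{\theta}-\theta_{0})\|^{2}$, together with $\frac{1}{n}\|I^{\ast}B\epsilon\|^{2}\leqslant C\frac{1}{n}\|\epsilon\|^{2}=O_{p}(1)$ (Lemma \ref{lem:spectral}) and $\frac{1}{n}\|I^{\ast}\tilde{X}(\hat{\theta}-\theta_{0})\|^{2}\leqslant C\|\hat{\theta}-\theta_{0}\|^{2}$. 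For the same reason, $\frac{1}{n}\|\epsilon^{\ast}\|^{2}=O_{p}(1)$.

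For part (i), write $\alpha^{\prime}(\hat{\varsigma}-\tilde{\varsigma})=(P^{\prime}\alpha)^{\prime}(2\epsilon^{\ast}\odot\Delta+\Delta\odot\Delta)$, where $a=P^{\prime}\alpha$ is UB by Lemma \ref{lem:rub}. Then $|a^{\prime}(\epsilon^{\ast}\odot\Delta)|\leqslant C\sum_{i}|\epsilon^{\ast}_{i}\Delta_{i}|\leqslant C\|\epsilon^{\ast}\|\|\Delta\|$ and $|a^{\prime}(\Delta\odot\Delta)|\leqslant C\|\Delta\|^{2}$. Dividing by $n$ and using the bounds above gives $o_{p}(1)$. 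This is just Cauchy--Schwarz, with no need for the finer structure of $\Delta$.

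Part (iii) requires fourth-power control, since $\|\hat{\varsigma}\|^{2}\leqslant C\|\hat{\epsilon}^{\ast}\odot\hat{\epsilon}^{\ast}\|^{2}=C\sum_{i}\hat{\epsilon}^{\ast4}_{i}$ (by Lemma \ref{lem:spectral} applied to $P^{\prime}P$). I will bound $\sum_{i}\hat{\epsilon}^{\ast4}_{i}\leqslant8\sum_{i}\epsilon^{\ast4}_{i}+8\sum_{i}\Delta_{i}^{4}$. First, $\epsilon^{\ast}_{i}=\epsilon_{i}-\bar{\epsilon}_{r}$, and $\E\bar{\epsilon}_{r}^{4}$ is uniformly bounded under Assumption \ref{assu:epsilon}, so $\frac{1}{n}\sum\epsilon^{\ast4}_{i}=O_{p}(1)$ by Markov's inequality. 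Second, $\Delta_{i}^{4}\leqslant8(\hat{\lambda}-\lambda_{0})^{4}((I^{\ast}B\epsilon)_{i})^{4}+8\|\hat{\theta}-\theta_{0}\|^{4}C$. Since $I^{\ast}B$ is UBRC, Hölder with weights $|M_{ij}|$ gives $((M\epsilon)_{i})^{4}\leqslant C^{3}\sum_{j}|M_{ij}|\epsilon_{j}^{4}$, hence $\sum_{i}((M\epsilon)_{i})^{4}\leqslant C^{4}\sum_{j}\epsilon_{j}^{4}=O_{p}(n)$. The same Hölder trick handles $\epsilon^{\ast}=I^{\ast}\epsilon$ directly, and it also yields $\frac{1}{n}\sum\Delta_{i}^{4}=o_{p}(1)$. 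I will record this as well, since it is needed in part (ii).

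Part (ii) is the main obstacle, because it is quadratic in $\hat{\varsigma}$. I will write
\[
\hat{\varsigma}^{\prime}M\hat{\varsigma}-\tilde{\varsigma}^{\prime}M\tilde{\varsigma}=(\hat{\varsigma}-\tilde{\varsigma})^{\prime}M(\hat{\varsigma}+\tilde{\varsigma})
\]
using the symmetry of $M$, and then apply Lemma \ref{lem:spectral} to get the bound $C\|\hat{\varsigma}-\tilde{\varsigma}\|(\|\hat{\varsigma}\|+\|\tilde{\varsigma}\|)$. Part (iii) and its analogue for $\tilde{\varsigma}$ give $\|\hat{\varsigma}\|+\|\tilde{\varsigma}\|=O_{p}(\sqrt{n})$. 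It then suffices to show $\frac{1}{n}\|\hat{\varsigma}-\tilde{\varsigma}\|^{2}=o_{p}(1)$, which reduces to $\frac{1}{n}\sum_{i}(2\epsilon^{\ast}_{i}\Delta_{i}+\Delta_{i}^{2})^{2}\leqslant\frac{C}{n}\left[\left(\sum\epsilon^{\ast4}_{i}\right)^{1/2}\left(\sum\Delta_{i}^{4}\right)^{1/2}+\sum\Delta_{i}^{4}\right]$. This is $O_{p}(1)^{1/2}o_{p}(1)^{1/2}+o_{p}(1)=o_{p}(1)$ by the fourth-moment bounds from part (iii). The delicate point is making sure only the $4+c_{\epsilon}$ moments of Assumption \ref{assu:epsilon} are used, via Markov's inequality on $\frac{1}{n}\sum\epsilon_{j}^{4}$, which has bounded mean.
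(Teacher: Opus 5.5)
Your proof is correct and follows essentially the same route as the paper. It uses the same decomposition $\hat{\varsigma}-\tilde{\varsigma}=P(2\epsilon^{\ast}\odot\Delta+\Delta\odot\Delta)$, the same row/column-sum H\"older bound giving $\frac{1}{n}\sum_i\epsilon^{\ast4}_i=O_p(1)$ and $\frac{1}{n}\sum_i\Delta_i^4=o_p(1)$, and Cauchy--Schwarz with Lemma \ref{lem:spectral} for (ii) and (iii). The only differences are cosmetic: in (i) you use the sup-norm bound on $P^{\prime}\alpha$, so only $\frac{1}{n}\Vert\Delta\Vert^{2}=o_p(1)$ (second moments) is needed, whereas the paper goes through $\frac{1}{n}\Vert\Delta_{\epsilon}\Vert^{2}=o_p(1)$; and in (ii) you factor as $(\hat{\varsigma}-\tilde{\varsigma})^{\prime}M(\hat{\varsigma}+\tilde{\varsigma})$ and invoke (iii), rather than expanding around $\tilde{\varsigma}$.
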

\begin{proof}
By Theorem \ref{thm:consistency_theta}, $\hat{\theta}\xrightarrow{p}\theta_{0}$. Recall that $\tilde{\varsigma}_{r}=P_{r}\left(\epsilon^{\ast}_{r}\odot\epsilon^{\ast}_{r}\right)$, where $P_{r}$ is defined in (\ref{eq:P_r}). Thus $\tilde{\varsigma}=P\left(\epsilon^{\ast}\odot\epsilon^{\ast}\right)$, and $\hat{\varsigma}-\tilde{\varsigma}=P\left(\hat{\epsilon}^{\ast}\odot\hat{\epsilon}^{\ast}-\epsilon^{\ast}\odot\epsilon^{\ast}\right)$, where $P=\diag^{R}_{r=1}\left\{ P_{r}\right\} $ is UBRC. From (\ref{eq:epstheta}),
\[
\hat{\epsilon}^{\ast}=\epsilon^{\ast}(\hat{\theta})=\epsilon^{\ast}-(\hat{\lambda}-\lambda_{0})I^{\ast}B\epsilon-I^{\ast}\tilde{X}(\hat{\theta}-\theta_{0}),
\]
where $B=W\left(I-\lambda_{0}W\right)^{-1}$ and $\epsilon^{\ast}=I^{\ast}\epsilon$. Define
\begin{align}
\ddot{\epsilon} & =\hat{\epsilon}^{\ast}-\epsilon^{\ast}=-(\hat{\lambda}-\lambda_{0})I^{\ast}B\epsilon-I^{\ast}\tilde{X}(\hat{\theta}-\theta_{0})\label{eq:ddoteps}\\
\Delta_{\epsilon} & =\hat{\epsilon}^{\ast}\odot\hat{\epsilon}^{\ast}-\epsilon^{\ast}\odot\epsilon^{\ast}=\left(\epsilon^{\ast}+\ddot{\epsilon}\right)\odot\left(\epsilon^{\ast}+\ddot{\epsilon}\right)-\epsilon^{\ast}\odot\epsilon^{\ast}=2\ddot{\epsilon}\odot\epsilon^{\ast}+\ddot{\epsilon}\odot\ddot{\epsilon}.\label{eq:Delta_eps}
\end{align}
Then $\hat{\varsigma}-\tilde{\varsigma}=P\Delta_{\epsilon}$. We first establish $\frac{1}{n}\|\Delta_{\epsilon}\|^{2}=o_{p}(1)$.

For any nonstochastic $n\times n$ matrix $M$ that is UBRC, $\sup_{i}\E|\left(M\epsilon\right)_{i}|^{4}<C<\infty$ for some constant $C>0$. Indeed, by the triangle inequality and Hölder's inequality, 
\[
\E\left|\left(M\epsilon\right)_{i}\right|^{4}\leqslant\E\left(\sum^{n}_{k=1}|M_{ik}||\epsilon_{k}|\right)^{4}\leqslant\left(\sum^{n}_{k=1}|M_{ik}|\right)^{3}\sum^{n}_{k=1}|M_{ik}|\E|\epsilon_{k}|^{4}\leqslant C_{\epsilon}\left(\sum^{n}_{k=1}\left|M_{ik}\right|\right)^{4},
\]
which is uniformly bounded because $M$ is UBRC. Since $I^{\ast}$ and $I^{\ast}B$ are UBRC, this implies that $\frac{1}{n}\sum^{n}_{i=1}(\epsilon^{\ast}_{i})^{4}=O_{p}(1)$ and $\frac{1}{n}\sum^{n}_{i=1}(I^{\ast}B\epsilon)^{4}_{i}=O_{p}(1)$. Moreover, $I^{\ast}\tilde{X}$ is UB, so there exists a constant $0<C_{X}<\infty$ such that $\left\Vert \left(I^{\ast}\tilde{X}\right)_{i,.}\right\Vert \leqslant C_{X}$ uniformly over $i$ and $n$. By (\ref{eq:ddoteps}),
\begin{align*}
\left|\ddot{\epsilon}_{i}\right| & =\left|(\hat{\lambda}-\lambda_{0})\left(I^{\ast}B\epsilon\right)_{i}+\left(I^{\ast}\tilde{X}\right)_{i,.}(\hat{\theta}-\theta_{0})\right|\\
 & \leqslant\left|\hat{\lambda}-\lambda_{0}\right|\left|\left(I^{\ast}B\epsilon\right)_{i}\right|+\left\Vert \left(I^{\ast}\tilde{X}\right)_{i,.}\right\Vert \left\Vert \hat{\theta}-\theta_{0}\right\Vert \\
 & \leqslant\left|\hat{\lambda}-\lambda_{0}\right|\left|\left(I^{\ast}B\epsilon\right)_{i}\right|+C_{X}\left\Vert \hat{\theta}-\theta_{0}\right\Vert .
\end{align*}
Using $(a+b)^{4}\leqslant8(a^{4}+b^{4})$ and $\hat{\theta}\xrightarrow{p}\theta_{0}$, we have
\begin{align*}
\frac{1}{n}\sum^{n}_{i=1}\ddot{\epsilon}^{4}_{i} & \leqslant8(\hat{\lambda}-\lambda_{0})^{4}\left[\frac{1}{n}\sum^{n}_{i=1}(I^{\ast}B\epsilon)^{4}_{i}\right]+8C^{4}_{X}\left\Vert \hat{\theta}-\theta_{0}\right\Vert ^{4}=o_{p}(1).
\end{align*}
By (\ref{eq:Delta_eps}) and $(a+b)^{2}\leqslant2(a^{2}+b^{2})$, $\frac{1}{n}\left\Vert \Delta_{\epsilon}\right\Vert ^{2}\leqslant\frac{8}{n}\left\Vert \ddot{\epsilon}\odot\epsilon^{\ast}\right\Vert ^{2}+\frac{2}{n}\left\Vert \ddot{\epsilon}\odot\ddot{\epsilon}\right\Vert ^{2}$. The Cauchy--Schwarz inequality gives
\begin{align*}
\frac{1}{n}\left\Vert \ddot{\epsilon}\odot\epsilon^{\ast}\right\Vert ^{2} & =\frac{1}{n}\sum^{n}_{i=1}\ddot{\epsilon}^{2}_{i}\epsilon^{\ast2}_{i}\leqslant\left(\frac{1}{n}\sum^{n}_{i=1}\ddot{\epsilon}^{4}_{i}\right)^{1/2}\left(\frac{1}{n}\sum^{n}_{i=1}\epsilon^{\ast4}_{i}\right)^{1/2}=o_{p}(1),
\end{align*}
and $\frac{1}{n}\left\Vert \ddot{\epsilon}\odot\ddot{\epsilon}\right\Vert ^{2}=\frac{1}{n}\sum^{n}_{i=1}\ddot{\epsilon}^{4}_{i}=o_{p}(1)$. Combining these bounds gives $n^{-1}\left\Vert \Delta_{\epsilon}\right\Vert ^{2}=o_{p}(1)$.

\emph{Proof of part (i).} Since $\hat{\varsigma}-\tilde{\varsigma}=P\Delta_{\epsilon}$, $\frac{1}{n}\alpha^{\prime}\left(\hat{\varsigma}-\tilde{\varsigma}\right)=\frac{1}{n}\alpha^{\prime}P\Delta_{\epsilon}=\frac{1}{n}\tilde{\alpha}^{\prime}\Delta_{\epsilon}$, where $\tilde{\alpha}=P^{\prime}\alpha$ is UB by Lemma \ref{lem:rub}. Therefore,
\[
\frac{1}{n}\left|\alpha^{\prime}\left(\hat{\varsigma}-\tilde{\varsigma}\right)\right|\leqslant\left(\frac{1}{n}\left\Vert \tilde{\alpha}\right\Vert ^{2}\right)^{1/2}\left(\frac{1}{n}\left\Vert \Delta_{\epsilon}\right\Vert ^{2}\right)^{1/2}=O(1)o_{p}(1)=o_{p}(1).
\]

\emph{Proof of part (ii).} Since $M$ is symmetric, $\hat{\varsigma}-\tilde{\varsigma}=P\Delta_{\epsilon}$, and $\tilde{\varsigma}=P\left(\epsilon^{\ast}\odot\epsilon^{\ast}\right)$, 
\begin{align*}
\left|\frac{1}{n}\left(\hat{\varsigma}^{\prime}M\hat{\varsigma}-\tilde{\varsigma}^{\prime}M\tilde{\varsigma}\right)\right| & =\left|\frac{1}{n}\left(\hat{\varsigma}-\tilde{\varsigma}\right)^{\prime}M(\hat{\varsigma}-\tilde{\varsigma})+\frac{2}{n}\tilde{\varsigma}^{\prime}M(\hat{\varsigma}-\tilde{\varsigma})\right|\\
 & \leqslant\left|\frac{1}{n}\left(P\Delta_{\epsilon}\right)^{\prime}M\left(P\Delta_{\epsilon}\right)\right|+\left|\frac{2}{n}\left(\epsilon^{\ast}\odot\epsilon^{\ast}\right)^{\prime}P^{\prime}MP\Delta_{\epsilon}\right|.
\end{align*}
Since $P$ and $M$ are UBRC, Lemma \ref{lem:rub} implies that $P^{\prime}MP$ is also UBRC. By Lemma \ref{lem:spectral}, there exists a constant $C_{M}$ such that 
\begin{align*}
\left|\frac{1}{n}\left(\hat{\varsigma}^{\prime}M\hat{\varsigma}-\tilde{\varsigma}^{\prime}M\tilde{\varsigma}\right)\right| & \leqslant C_{M}\left[\frac{1}{n}\left\Vert \Delta_{\epsilon}\right\Vert ^{2}+\frac{1}{n}\left\Vert \Delta_{\epsilon}\right\Vert \left\Vert \epsilon^{\ast}\odot\epsilon^{\ast}\right\Vert \right]\\
 & \leqslant C_{M}\left[\frac{1}{n}\left\Vert \Delta_{\epsilon}\right\Vert ^{2}+\left(\frac{1}{n}\left\Vert \Delta_{\epsilon}\right\Vert ^{2}\frac{1}{n}\left\Vert \epsilon^{\ast}\odot\epsilon^{\ast}\right\Vert ^{2}\right)^{1/2}\right].
\end{align*}
We have shown that $\frac{1}{n}\left\Vert \Delta_{\epsilon}\right\Vert ^{2}=o_{p}(1)$ and $\frac{1}{n}\left\Vert \epsilon^{\ast}\odot\epsilon^{\ast}\right\Vert ^{2}=\frac{1}{n}\sum_{i}\epsilon^{\ast4}_{i}=O_{p}(1)$. Hence $n^{-1}\left|\left(\hat{\varsigma}^{\prime}M\hat{\varsigma}-\tilde{\varsigma}^{\prime}M\tilde{\varsigma}\right)\right|=o_{p}(1)$.

\emph{Proof of part (iii).} Since $\hat{\varsigma}=P\left(\hat{\epsilon}^{\ast}\odot\hat{\epsilon}^{\ast}\right)$, we have
\[
\frac{1}{n}\left\Vert \hat{\varsigma}\right\Vert ^{2}=\frac{1}{n}\left\Vert P\left(\hat{\epsilon}^{\ast}\odot\hat{\epsilon}^{\ast}\right)\right\Vert ^{2}=\frac{1}{n}\left(\hat{\epsilon}^{\ast}\odot\hat{\epsilon}^{\ast}\right)^{\prime}P^{\prime}P\left(\hat{\epsilon}^{\ast}\odot\hat{\epsilon}^{\ast}\right).
\]
Because $P$ is UBRC, $P^{\prime}P$ is also UBRC. Therefore, by Lemma \ref{lem:spectral}, there exists a constant $C_{P}<\infty$ such that $\frac{1}{n}\left\Vert \hat{\varsigma}\right\Vert ^{2}\leqslant C_{P}\frac{1}{n}\left\Vert \hat{\epsilon}^{\ast}\odot\hat{\epsilon}^{\ast}\right\Vert ^{2}=C_{P}\frac{1}{n}\sum_{i}\hat{\epsilon}^{\ast4}_{i}$. From (\ref{eq:ddoteps}), $\hat{\epsilon}^{\ast}=\ddot{\epsilon}+\epsilon^{\ast}$, where $\frac{1}{n}\sum^{n}_{i=1}\ddot{\epsilon}^{4}_{i}=o_{p}(1)$ and $\frac{1}{n}\sum^{n}_{i=1}(\epsilon^{\ast}_{i})^{4}=O_{p}(1)$. Using $(a+b)^{4}\leqslant8(a^{4}+b^{4})$, $\frac{1}{n}\sum^{n}_{i=1}(\hat{\epsilon}^{\ast}_{i})^{4}\leqslant8\frac{1}{n}\sum^{n}_{i=1}\ddot{\epsilon}^{4}_{i}+8\frac{1}{n}\sum^{n}_{i=1}(\epsilon^{\ast}_{i})^{4}=O_{p}(1)$. Hence $\frac{1}{n}\left\Vert \hat{\varsigma}\right\Vert ^{2}=O_{p}(1)$.
\end{proof}

\begin{lem}
\label{lem:rxr}Suppose Assumptions \ref{assu:size} and \ref{assu:epsilon} hold. In addition, suppose $\sup_{n\geqslant1}\sup_{i,g,r}\E\left|\epsilon_{igr}\right|^{8}<\infty$.

(i) Suppose that $\alpha=(\alpha^{\prime}_{1},\ldots,\alpha^{\prime}_{R})^{\prime}$ is a nonstochastic $n\times1$ vector that is UB, where $\alpha_{r}$ is $n_{r}\times1$ for $r=1,\ldots,R$. Then $\alpha^{\prime}\left(\tilde{\varsigma}-\varsigma\right)/n\xrightarrow{p}0$.

(ii) Suppose that $M=\diag\{M_{1},\ldots,M_{R}\}$ is a nonstochastic $n\times n$ matrix, where each $M_{r}$ is UBRC, symmetric, and has zero diagonals. Then $\left[\tilde{\varsigma}^{\prime}M\tilde{\varsigma}-\varsigma^{\prime}f(M)\varsigma\right]/n\xrightarrow{p}0$, where $f(M)=\diag^{R}_{r=1}\left\{ f(M_{r})\right\} $, and
\begin{align}
(n_{r}-2)^{2}f(M_{r}) & =\left[4+(n_{r}-2)^{2}\right]M_{r}+2\frac{\mathbf{1}^{\prime}_{r}M_{r}\mathbf{1}_{r}}{\left(n_{r}-1\right)^{2}}(J_{r}-I_{r})\nonumber \\
 & -\frac{4}{\left(n_{r}-1\right)}\left[M_{r}J_{r}+\left(M_{r}J_{r}\right)^{\prime}-2\Diag(M_{r}J_{r})\right],\label{eq:f(x)}
\end{align}
and $\Diag(M_{r}J_{r})$ denotes the diagonal matrix with the same diagonals as $M_{r}J_{r}$.
\end{lem}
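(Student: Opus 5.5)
Both parts follow the same route: first show that the centred statistic has expectation equal to the stated target (exactly, or up to an $o(1)$ term), and then show its variance is $O(1/n)$ using independence across urns. Chebyshev's inequality finishes each part. The first step is exact algebra within an urn. The second uses the block structure: $\tilde{\varsigma}_{r}$ depends only on $\epsilon_{r}$, so contributions from different urns are independent and the variance is a sum of urn-level variances.

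\emph{Part (i).} The text preceding (\ref{eq:P_r}) gives $\E(\tilde{\varsigma}_{r})=\varsigma_{r}$, so $\E[\alpha^{\prime}(\tilde{\varsigma}-\varsigma)]=0$. Write $\alpha^{\prime}\tilde{\varsigma}=\sum_{r}(P_{r}^{\prime}\alpha_{r})^{\prime}(\epsilon^{\ast}_{r}\odot\epsilon^{\ast}_{r})=\sum_{r}\epsilon_{r}^{\prime}I^{\ast}_{r}\Diag(P_{r}^{\prime}\alpha_{r})I^{\ast}_{r}\epsilon_{r}$. This is a quadratic form $\epsilon^{\prime}M\epsilon$ with $M=I^{\ast}\Diag(P^{\prime}\alpha)I^{\ast}$, and $M$ is UBRC because $I^{\ast}$ and $P$ are UBRC and $\alpha$ is UB (Lemma \ref{lem:rub}). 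Lemma \ref{lem:convergence_ip}(i) then yields $\alpha^{\prime}(\tilde{\varsigma}-\varsigma)/n\xrightarrow{p}0$ directly.

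\emph{Part (ii), expectation step.} Within urn $r$, let $u_{j}=\epsilon^{\ast2}_{jr}$, so that $\tilde{\varsigma}_{r}=P_{r}u$. I need $\E(u u^{\prime})=\E(u)\E(u)^{\prime}+\Cov(u)$. Each $u_{j}=\epsilon_{r}^{\prime}e^{\ast}_{j}e^{\ast\prime}_{j}\epsilon_{r}$, where $e^{\ast}_{j}=I^{\ast}_{r}e_{j}$, so Lemma \ref{lem:varS} gives $\Cov(u_{j},u_{k})$ in closed form. It equals $2\,[(e^{\ast\prime}_{j}\Omega_{r}e^{\ast}_{k})]^{2}$ plus fourth-cumulant terms $\sum_{l}(e^{\ast}_{j,l})^{2}(e^{\ast}_{k,l})^{2}\kappa_{l}$, where $\kappa_{l}=\E\epsilon_{l}^{4}-3\sigma_{l}^{4}$. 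The entries of $e^{\ast}_{j}$ are $1-1/n_{r}$ and $-1/n_{r}$. Hence $e^{\ast\prime}_{j}\Omega_{r}e^{\ast}_{k}=-(\sigma_{j}^{2}+\sigma_{k}^{2})/n_{r}+\bar{s}/n_{r}^{2}$ for $j\neq k$, where $\bar{s}=\mathbf{1}^{\prime}\varsigma_{r}$.

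I then compute $\E(\tilde{\varsigma}_{r}^{\prime}M_{r}\tilde{\varsigma}_{r})=\varsigma_{r}^{\prime}M_{r}\varsigma_{r}+\tr(P_{r}^{\prime}M_{r}P_{r}\Cov(u))$. Because $M_{r}$ has zero diagonal, the $\kappa$ terms and the $j=k$ terms enter only through $O(1/n_{r})$-weighted sums that I must track. Expanding the squares $(\sigma_{j}^{2}+\sigma_{k}^{2}-\bar{s}/n_{r})^{2}/n_{r}^{2}$ produces quadratic forms in $\varsigma_{r}$ with matrices $M_{r}$, $M_{r}J_{r}+J_{r}M_{r}$, $\Diag(M_{r}J_{r})$ and $(\mathbf{1}^{\prime}M_{r}\mathbf{1})J_{r}$. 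Rearranging these into the form $\varsigma^{\prime}_{r}f(M_{r})\varsigma_{r}$ with $f$ as in (\ref{eq:f(x)}) is the main obstacle: it is heavy bookkeeping.

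I expect the $\kappa$ terms not to cancel exactly. I would show that they, together with any residual pieces not of the displayed form, sum over $r$ to $o(n)$, or contribute only $O(1)$ per urn weighted by $1/n_{r}$. If that fails, I would instead try to verify that they vanish using $\diag(M_{r})=0$ and the $(J_{r}-I_{r})$ structure.

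\emph{Part (ii), variance step.} The quantity $\tilde{\varsigma}_{r}^{\prime}M_{r}\tilde{\varsigma}_{r}$ is a quartic form in $\epsilon_{r}$ with coefficient array built from UBRC matrices. Its variance is $O(n_{r})$ uniformly under the eighth-moment condition; I would bound it by expanding and using the fact that nonzero covariances require index overlaps, as in Remark \ref{rem:vars}(iii). Independence across urns then gives total variance $O(n)$, so dividing by $n$ yields convergence in probability to the limit of the expectations, which completes (ii).
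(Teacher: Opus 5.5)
Your part (i) is correct and shorter than the paper's proof. You write $\alpha'\tilde\varsigma=\epsilon'I^\ast\Diag(P'\alpha)I^\ast\epsilon$ with a UBRC kernel, and then Lemma~\ref{lem:convergence_ip}(i) together with $\E\tilde\varsigma_r=\varsigma_r$ finishes it at once, using only the fourth moments in Assumption~\ref{assu:epsilon}. The paper instead decomposes $\frac{n_r-2}{n_r}(\tilde\varsigma_r-\varsigma_r)=\eta^{(1)}_r-2\eta^{(2)}_r+\eta^{(3)}_r$ and bounds each piece using eighth moments.

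Part (ii) is not proved, and your stated plan for it would fail. The whole content of (ii) is the exact identity $\E(\tilde\varsigma_r'M_r\tilde\varsigma_r)=\varsigma_r'f(M_r)\varsigma_r$. You set up $\varsigma_r'M_r\varsigma_r+\tr(P_r'M_rP_r\Cov(u))$ correctly, but then leave the evaluation undone. You also predict that fourth-cumulant terms survive and propose to show that leftovers are $o(n)$. That cannot work. Every piece of $\tr(P_r'M_rP_r\Cov(u))$, cumulant or not, is of order $1/n_r$ per urn. With bounded urn sizes, which Assumption~\ref{assu:size} allows and which makes $R$ of order $n$, these pieces sum to $O(n)$. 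This is exactly the finite-urn bias the lemma is built to capture, so nothing can be dropped, and a surviving $\kappa$-term would contradict the statement.

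The missing observation is that your own setup makes the cumulant terms vanish identically. Their coefficient is $v_l'P_r'M_rP_rv_l$ with $v_l=I^\ast_re_l\odot I^\ast_re_l=(\frac{n_r-2}{n_r}I_r+\frac{1}{n_r^2}J_r)e_l$. The defining property of $P_r$ gives $P_rv_l=e_l$, so these terms total $\sum_l\kappa_lM_{r,ll}=0$.

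The same device finishes the Gaussian part. It equals $2\sum_{l,l'}\sigma_l^2\sigma_{l'}^2z_{ll'}'M_rz_{ll'}$ with $z_{ll'}=P_r(I^\ast_re_l\odot I^\ast_re_{l'})$. Here $z_{ll}=e_l$, and for $l\neq l'$, $z_{ll'}=\frac{1}{(n_r-1)(n_r-2)}\mathbf 1_r-\frac{1}{n_r-2}(e_l+e_{l'})$. Expanding and adding $\varsigma_r'M_r\varsigma_r$ gives (\ref{eq:f(x)}) directly. The paper reaches the same formula differently: it computes the six expectations $\E(\eta^{(l)\prime}_rM_r\eta^{(l')}_r)$, and independence plus $\diag(M_r)=0$ kill every term involving $\eta^{(1)}_r$, which is where fourth moments would otherwise enter.

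Your variance step also rests on an insufficient reason. A quartic form with UBRC coefficients need not have variance $O(n_r)$; for example, $(\epsilon_r'\epsilon_r)^2$ has variance of order $n_r^3$. What makes the bound hold is the specific structure of the centred vector $\tilde\varsigma_r-\varsigma_r$. The paper makes this precise with the following ingredients:
\begin{itemize}
\item the $\eta$-decomposition, with $\E\|\eta^{(1)}_r\|^4=O(n_r^2)$, $\E\|\eta^{(2)}_r\|^4=O(1)$ and $\E\|\eta^{(3)}_r\|^4=O(n_r^{-2})$;
\item Remark~\ref{rem:vars}(i) for the $\eta^{(1)}$--$\eta^{(1)}$ term;
\item Lemma~\ref{lem:spectral} with Cauchy--Schwarz for the mixed terms;
\item the part (i) variance bound for the linear term $2\varsigma_r'M_r(\tilde\varsigma_r-\varsigma_r)$.
\end{itemize}
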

\begin{proof}
Recall that $\tilde{\varsigma}_{r}=P_{r}\left(\epsilon^{\ast}_{r}\odot\epsilon^{\ast}_{r}\right)=\frac{n_{r}}{n_{r}-2}\left[I_{r}-\frac{1}{n_{r}(n_{r}-1)}J_{r}\right]\left(\epsilon^{\ast}_{r}\odot\epsilon^{\ast}_{r}\right)$. We first derive a decomposition of $\tilde{\varsigma}_{r}-\varsigma_{r}$. Since $\epsilon^{\ast}_{r}=\epsilon_{r}-\bar{\epsilon}_{r}\mathbf{1}_{r}$, 
\begin{align*}
\epsilon^{\ast}_{r}\odot\epsilon^{\ast}_{r} & =\left(\epsilon_{r}-\bar{\epsilon}_{r}\mathbf{1}_{r}\right)\odot\left(\epsilon_{r}-\bar{\epsilon}_{r}\mathbf{1}_{r}\right)=\epsilon_{r}\odot\epsilon_{r}-2\bar{\epsilon}_{r}\epsilon_{r}+\bar{\epsilon}^{2}_{r}\mathbf{1}_{r}=\epsilon_{r}\odot\epsilon_{r}-\frac{2}{n_{r}}\epsilon_{r}\epsilon^{\prime}_{r}\mathbf{1}_{r}+\mathbf{1}_{r}\frac{\epsilon^{\prime}_{r}J_{r}\epsilon_{r}}{n^{2}_{r}},\\
J_{r}\left(\epsilon^{\ast}_{r}\odot\epsilon^{\ast}_{r}\right) & =\mathbf{1}_{r}\left[\mathbf{1}^{\prime}_{r}\left(\epsilon^{\ast}_{r}\odot\epsilon^{\ast}_{r}\right)\right]=\mathbf{1}_{r}\left(\epsilon^{\ast\prime}_{r}\epsilon^{\ast}_{r}\right)=\mathbf{1}_{r}\epsilon^{\prime}_{r}I^{\ast}_{r}\epsilon_{r}=\mathbf{1}_{r}\epsilon^{\prime}_{r}(I_{r}-\frac{J_{r}}{n_{r}})\epsilon_{r}.
\end{align*}
Consequently,
\begin{align}
\frac{n_{r}-2}{n_{r}}\left(\tilde{\varsigma}_{r}-\varsigma_{r}\right) & =\left(\epsilon^{\ast}_{r}\odot\epsilon^{\ast}_{r}\right)-\frac{J_{r}}{n_{r}(n_{r}-1)}\left(\epsilon^{\ast}_{r}\odot\epsilon^{\ast}_{r}\right)-\frac{n_{r}-2}{n_{r}}\varsigma_{r}\nonumber \\
 & =\left(\epsilon_{r}\odot\epsilon_{r}-\varsigma_{r}\right)-\frac{2}{n_{r}}\left(\epsilon_{r}\epsilon^{\prime}_{r}\mathbf{1}_{r}-\varsigma_{r}\right)+\frac{\mathbf{1}_{r}}{n_{r}(n_{r}-1)}\epsilon^{\prime}_{r}\left[\frac{n_{r}-1}{n_{r}}J_{r}-\left(I_{r}-\frac{J_{r}}{n_{r}}\right)\right]\epsilon_{r}\nonumber \\
 & =\eta^{(1)}_{r}-2\eta^{(2)}_{r}+\eta^{(3)}_{r},\label{eq:diff_gamma}
\end{align}
where $\eta^{(1)}_{r}=(\epsilon^{2}_{1r}-\sigma^{2}_{1r},\ldots,\epsilon^{2}_{n_{r}r}-\sigma^{2}_{n_{r}r})^{\prime}$, $\eta^{(2)}_{r}=n^{-1}_{r}\left(\epsilon_{r}\epsilon^{\prime}_{r}\mathbf{1}_{r}-\varsigma_{r}\right)$, and $\eta^{(3)}_{r}=\frac{1}{n_{r}(n_{r}-1)}\mathbf{1}_{r}\epsilon^{\prime}_{r}\left(J_{r}-I_{r}\right)\epsilon_{r}$. Under Assumption \ref{assu:epsilon}, $\E\left(\eta^{(1)}_{r}\right)=\E\left(\eta^{(2)}_{r}\right)=\E\left(\eta^{(3)}_{r}\right)=0$.

We next show that $\E\left(\left\Vert \eta^{(1)}_{r}\right\Vert ^{4}\right)=O(n^{2}_{r})$, $\E\left(\left\Vert \eta^{(2)}_{r}\right\Vert ^{4}\right)=O(1)$ and $\E\left(\left\Vert \eta^{(3)}_{r}\right\Vert ^{4}\right)=O(1/n^{2}_{r})$. First consider $\eta^{(1)}_{r}$. Since $\sup_{n\geqslant1}\sup_{i,g,r}\E|\epsilon_{igr}|^{8}<\infty$, the elements of $\eta^{(1)}_{r}$ are independent and have uniformly bounded fourth moments. Therefore
\begin{align*}
\E\left(\left\Vert \eta^{(1)}_{r}\right\Vert ^{4}\right) & =\E\left[\left(\sum_{j=1}(\epsilon^{2}_{jr}-\sigma^{2}_{jr})^{2}\right)^{2}\right]\\
 & =\sum_{j=1}\E\left[(\epsilon^{2}_{jr}-\sigma^{2}_{jr})^{4}\right]+\sum_{j=1}\sum_{j'\neq j}\E\left[(\epsilon^{2}_{jr}-\sigma^{2}_{jr})^{2}\right]\E\left[(\epsilon^{2}_{j'r}-\sigma^{2}_{j'r})^{2}\right].
\end{align*}
For some finite constant $C_{\eta}$, $\E\left[(\epsilon^{2}_{jr}-\sigma^{2}_{jr})^{4}\right]\leqslant8\left[\E\left(\epsilon^{8}_{jr}\right)+\sigma^{8}_{jr}\right]\leqslant C_{\eta}$ and $\E\left[(\epsilon^{2}_{jr}-\sigma^{2}_{jr})^{2}\right]\leqslant C_{\eta}$. Hence $\E\left(\left\Vert \eta^{(1)}_{r}\right\Vert ^{4}\right)\leqslant n_{r}C_{\eta}+n_{r}(n_{r}-1)C_{\eta}=O(n^{2}_{r})$. For $\eta^{(2)}_{r}$, its definition gives
\begin{align*}
\left\Vert \eta^{(2)}_{r}\right\Vert  & \leqslant\left\Vert \frac{1}{n_{r}}\epsilon_{r}\epsilon^{\prime}_{r}\mathbf{1}_{r}\right\Vert +\left\Vert \frac{\varsigma_{r}}{n_{r}}\right\Vert =\left|\frac{\epsilon^{\prime}_{r}\mathbf{1}_{r}}{n_{r}}\right|\left\Vert \epsilon_{r}\right\Vert +\left\Vert \frac{\varsigma_{r}}{n_{r}}\right\Vert .
\end{align*}
Thus, by Hölder's inequality, 
\[
\E\left(\left\Vert \eta^{(2)}_{r}\right\Vert ^{4}\right)\leqslant8\left[\E\left(\left|\frac{\epsilon^{\prime}_{r}\mathbf{1}_{r}}{n_{r}}\right|^{4}\left\Vert \epsilon_{r}\right\Vert ^{4}\right)+\left\Vert \frac{\varsigma_{r}}{n_{r}}\right\Vert ^{4}\right]\leqslant\frac{8}{n^{4}_{r}}\left(\E\left|\epsilon^{\prime}_{r}\mathbf{1}_{r}\right|^{8}\E\left\Vert \epsilon_{r}\right\Vert ^{8}\right)^{\frac{1}{2}}+\frac{8}{n^{4}_{r}}\left\Vert \varsigma_{r}\right\Vert ^{4}.
\]
Because the $\epsilon_{jr}$ are independent and mean zero, $\E\left|\epsilon^{\prime}_{r}\mathbf{1}_{r}\right|^{8}=\sum_{j_{1},\ldots,j_{8}}\E(\epsilon_{j_{1}r}\ldots\epsilon_{j_{8}r})=O(n^{4}_{r})$.\footnote{Note that $\E(\epsilon_{j_{1}r}\ldots\epsilon_{j_{8}r})=0$ whenever any index appears exactly once. Thus when $\E(\epsilon_{j_{1}r}\ldots\epsilon_{j_{8}r})\neq0$, every index must appear at least twice, so there can be at most four distinct indices.} Moreover, $\E\left\Vert \epsilon_{r}\right\Vert ^{8}=\E\left[(\sum^{n_{r}}_{j=1}\epsilon^{2}_{jr})^{4}\right]\leqslant n^{3}_{r}\sum_{j}\E\left|\epsilon_{jr}\right|^{8}=O(n^{4}_{r})$. Since the elements of $\varsigma_{r}$ are uniformly bounded, $\left\Vert \varsigma_{r}\right\Vert ^{4}=O(n^{2}_{r})$. Consequently, $\E\left(\left\Vert \eta^{(2)}_{r}\right\Vert ^{4}\right)=O(1)$. For $\eta^{(3)}_{r}$, 
\[
\E\left[\left(\epsilon^{\prime}_{r}(J_{r}-I_{r})\epsilon_{r}\right)^{4}\right]\leqslant8\left[\E\left((\epsilon^{\prime}_{r}J_{r}\epsilon_{r})^{4}\right)+\E\left((\epsilon^{\prime}_{r}\epsilon_{r})^{4}\right)\right]=8\left[\E\left|\epsilon^{\prime}_{r}\mathbf{1}_{r}\right|^{8}+\E\left\Vert \epsilon_{r}\right\Vert ^{8}\right]=O(n^{4}_{r}).
\]
Therefore, 
\[
\E\left(\left\Vert \eta^{(3)}_{r}\right\Vert ^{4}\right)=\frac{\left\Vert \mathbf{1}_{r}\right\Vert ^{4}}{n^{4}_{r}(n_{r}-1)^{4}}\E\left[\left(\epsilon^{\prime}_{r}(J_{r}-I_{r})\epsilon_{r}\right)^{4}\right]=\frac{n^{2}_{r}}{n^{4}_{r}(n_{r}-1)^{4}}O(n^{4}_{r})=O(\frac{1}{n^{2}_{r}}).
\]
\emph{Proof of part (i).} Observe that $\alpha^{\prime}\left(\tilde{\varsigma}-\varsigma\right)=\sum^{R}_{r=1}\alpha^{\prime}_{r}\left(\tilde{\varsigma}_{r}-\varsigma_{r}\right)$. Since $\E\tilde{\varsigma}_{r}=\varsigma_{r}$, $\E\alpha^{\prime}_{r}\left(\tilde{\varsigma}_{r}-\varsigma_{r}\right)=0$. Next, using (\ref{eq:diff_gamma}) and the Cauchy--Schwarz inequality,
\begin{align*}
\Var\left[\alpha^{\prime}_{r}\left(\tilde{\varsigma}_{r}-\varsigma_{r}\right)\right] & =\left(\frac{n_{r}}{n_{r}-2}\right)^{2}\left[\E\left(\alpha^{\prime}_{r}\eta^{(1)}_{r}-2\alpha^{\prime}_{r}\eta^{(2)}_{r}+\alpha^{\prime}_{r}\eta^{(3)}_{r}\right)^{2}\right]\\
 & \leqslant3\left(\frac{n_{r}}{n_{r}-2}\right)^{2}\left[\E\left((\alpha^{\prime}_{r}\eta^{(1)}_{r})^{2}\right)+4\E\left((\alpha^{\prime}_{r}\eta^{(2)}_{r})^{2}\right)+\E\left((\alpha^{\prime}_{r}\eta^{(3)}_{r})^{2}\right)\right].
\end{align*}
First, $\E\left((\alpha^{\prime}_{r}\eta^{(1)}_{r})^{2}\right)=\E\alpha^{\prime}_{r}\eta^{(1)}_{r}\eta^{(1)\prime}_{r}\alpha_{r}=\alpha^{\prime}_{r}\Sigma^{(1)}_{\eta,r}\alpha_{r}$, where $\Sigma^{(1)}_{\eta,r}=\E(\eta^{(1)}_{r}\eta^{(1)\prime}_{r})$ is diagonal and UB under Assumption \ref{assu:epsilon}. As a result, $\E\left((\alpha^{\prime}_{r}\eta^{(1)}_{r})^{2}\right)=\alpha^{\prime}_{r}\Sigma^{(1)}_{\eta,r}\alpha_{r}=O(n_{r})$. For $l=2,3$, $\E\left\Vert \eta^{(l)}_{r}\right\Vert ^{4}=O(1)$ so $\E\left\Vert \eta^{(l)}_{r}\right\Vert ^{2}=O(1)$. Since $\alpha_{r}$ is UB we have $\E\left((\alpha^{\prime}_{r}\eta^{(l)}_{r})^{2}\right)\leqslant\left\Vert \alpha_{r}\right\Vert ^{2}\E\left\Vert \eta^{(l)}_{r}\right\Vert ^{2}\leqslant O(n_{r})$. In all, $\Var\left[\alpha^{\prime}_{r}\left(\tilde{\varsigma}_{r}-\varsigma_{r}\right)\right]\leqslant n_{r}C_{V}$ for some constant $0<C_{V}<\infty$. With $\tilde{\varsigma}_{r}$ independent across $r$, 
\[
\Var\left[\frac{1}{n}\alpha^{\prime}\left(\tilde{\varsigma}-\varsigma\right)\right]=\frac{1}{n^{2}}\sum^{R}_{r=1}\Var\left(\alpha^{\prime}_{r}\tilde{\varsigma}_{r}\right)\leqslant\frac{1}{n^{2}}\sum n_{r}C_{V}=\frac{C_{V}}{n}\rightarrow0.
\]
By Chebyshev's inequality $\frac{1}{n}\alpha^{\prime}\left(\tilde{\varsigma}-\varsigma\right)\xrightarrow{p}0$.

\noindent\emph{Proof of part (ii).} We first show that $\E\left(\tilde{\varsigma}^{\prime}_{r}M_{r}\tilde{\varsigma}_{r}\right)=\varsigma^{\prime}_{r}f(M_{r})\varsigma_{r}$ and $\Var\left(\tilde{\varsigma}^{\prime}_{r}M_{r}\tilde{\varsigma}_{r}\right)=O(n_{r})$ uniformly. Observe that
\begin{align}
\tilde{\varsigma}^{\prime}_{r}M_{r}\tilde{\varsigma}_{r} & =\left[\left(\tilde{\varsigma}_{r}-\varsigma_{r}\right)+\varsigma_{r}\right]^{\prime}M_{r}\left[\left(\tilde{\varsigma}_{r}-\varsigma_{r}\right)+\varsigma_{r}\right]\nonumber \\
 & =\left(\tilde{\varsigma}_{r}-\varsigma_{r}\right)^{\prime}M_{r}\left(\tilde{\varsigma}_{r}-\varsigma_{r}\right)+2\varsigma^{\prime}_{r}M_{r}\left(\tilde{\varsigma}_{r}-\varsigma_{r}\right)+\varsigma^{\prime}_{r}M_{r}\varsigma_{r}.\label{eq:rXr}
\end{align}
For the first term in (\ref{eq:rXr}), using (\ref{eq:diff_gamma}), we have 
\[
\left(\frac{n_{r}-2}{n_{r}}\right)^{2}\left(\tilde{\varsigma}_{r}-\varsigma_{r}\right)^{\prime}M_{r}\left(\tilde{\varsigma}_{r}-\varsigma_{r}\right)=\left(\eta^{(1)}_{r}-2\eta^{(2)}_{r}+\eta^{(3)}_{r}\right)^{\prime}M_{r}\left(\eta^{(1)}_{r}-2\eta^{(2)}_{r}+\eta^{(3)}_{r}\right).
\]
Under Assumption \ref{assu:epsilon}, Lemma \ref{lem:varS}, and Remark \ref{rem:vars},\footnote{See Section \ref{subsec:Cal_EfM} of the Online Appendix for details of these calculations.} the expected values of the quadratic terms are $\E\left(\eta^{(1)\prime}_{r}M_{r}\eta^{(1)}_{r}\right)=0$, $\E\left(\eta^{(1)\prime}_{r}M_{r}\eta^{(2)}_{r}\right)=0$, $\E\left(\eta^{(1)\prime}_{r}M_{r}\eta^{(3)}_{r}\right)=0$, $\E\left(\eta^{(2)\prime}_{r}M_{r}\eta^{(2)}_{r}\right)=\varsigma^{\prime}_{r}M_{r}\varsigma_{r}/n^{2}_{r}$, 
\[
\E\left(\eta^{(2)\prime}_{r}M_{r}\eta^{(3)}_{r}\right)=\frac{1}{n^{2}_{r}\left(n_{r}-1\right)}\varsigma^{\prime}_{r}\left[J_{r}M_{r}+M_{r}J_{r}-\Diag(J_{r}M_{r})-\Diag(M_{r}J_{r})\right]\varsigma_{r},
\]
\[
\E\left(\eta^{(3)\prime}_{r}M_{r}\eta^{(3)}_{r}\right)=2\frac{\left(\mathbf{1}^{\prime}_{r}M_{r}\mathbf{1}_{r}\right)}{n^{2}_{r}\left(n_{r}-1\right)^{2}}\varsigma^{\prime}_{r}(J_{r}-I_{r})\varsigma_{r}.
\]
For the second term in (\ref{eq:rXr}), $\E\left[\varsigma^{\prime}_{r}M_{r}\left(\tilde{\varsigma}_{r}-\varsigma_{r}\right)\right]=0$. In all, 
\begin{align*}
 & \E\left(\tilde{\varsigma}^{\prime}_{r}M_{r}\tilde{\varsigma}_{r}\right)=\E\left[\left(\tilde{\varsigma}_{r}-\varsigma_{r}\right)^{\prime}M_{r}\left(\tilde{\varsigma}_{r}-\varsigma_{r}\right)\right]+\varsigma^{\prime}_{r}M_{r}\varsigma_{r}\\
= & \left(\frac{n_{r}}{n_{r}-2}\right)^{2}\left[\frac{4\varsigma^{\prime}_{r}M_{r}\varsigma_{r}}{n^{2}_{r}}-\frac{4\varsigma^{\prime}_{r}\left[J_{r}M_{r}+M_{r}J_{r}-2\Diag(J_{r}M_{r})\right]\varsigma_{r}}{n^{2}_{r}\left(n_{r}-1\right)}+\frac{2\left(\mathbf{1}^{\prime}_{r}M_{r}\mathbf{1}_{r}\right)\varsigma^{\prime}_{r}(J_{r}-I_{r})\varsigma_{r}}{n^{2}_{r}\left(n_{r}-1\right)^{2}}\right]+\varsigma^{\prime}_{r}M_{r}\varsigma_{r}\\
= & \frac{1}{(n_{r}-2)^{2}}\varsigma^{\prime}_{r}\left[\left[4+(n_{r}-2)^{2}\right]M_{r}-\frac{4\left[J_{r}M_{r}+M_{r}J_{r}-2\Diag(J_{r}M_{r})\right]}{\left(n_{r}-1\right)}+2\frac{\left(\mathbf{1}^{\prime}_{r}M_{r}\mathbf{1}_{r}\right)(J_{r}-I_{r})}{\left(n_{r}-1\right)^{2}}\right]\varsigma_{r}.
\end{align*}
As a result, $\E\left(\tilde{\varsigma}^{\prime}_{r}M_{r}\tilde{\varsigma}_{r}\right)=\varsigma^{\prime}_{r}f(M_{r})\varsigma_{r}$, where $f(M_{r})$ is defined in (\ref{eq:f(x)}). Here symmetry of $M_{r}$ implies $\Diag(J_{r}M_{r})=\Diag(M_{r}J_{r})$.

Next, since $M_{r}$ has zero diagonals, Remark \ref{rem:vars}(i) indicates that $\E\left[(\eta^{(1)\prime}_{r}M_{r}\eta^{(1)}_{r})^{2}\right]=2\tr(\Sigma^{(1)}_{\eta,r}M_{r}\Sigma^{(1)}_{\eta,r}M_{r})$ where $\Sigma^{(1)}_{\eta,r}=\E(\eta^{(1)}_{r}\eta^{(1)\prime}_{r})$ is diagonal with uniformly bounded elements. Since $M_{r}$ is UBRC, $\Sigma^{(1)}_{\eta,r}M_{r}\Sigma^{(1)}_{\eta,r}M_{r}$ is UBRC and thus $\E\left[(\eta^{(1)\prime}_{r}M_{r}\eta^{(1)}_{r})^{2}\right]$ is $O(n_{r})$. For $l,l'=1,2,3$ but either $l\neq1$ or $l'\neq1$, since $M_{r}$ is UBRC, $\left|\eta^{(l)\prime}_{r}M_{r}\eta^{(l')}_{r}\right|\leqslant C_{M}\left\Vert \eta^{(l)}_{r}\right\Vert \left\Vert \eta^{(l')}_{r}\right\Vert $ by Lemma \ref{lem:spectral}. Thus
\[
\E\left[(\eta^{(l)\prime}_{r}M_{r}\eta^{(l')}_{r})^{2}\right]\leqslant C^{2}_{M}\E\left[\left\Vert \eta^{(l)}_{r}\right\Vert ^{2}\left\Vert \eta^{(l')}_{r}\right\Vert ^{2}\right]\leqslant C^{2}_{M}\left[\E\left\Vert \eta^{(l)}_{r}\right\Vert ^{4}\E\left\Vert \eta^{(l')}_{r}\right\Vert ^{4}\right]^{\frac{1}{2}}\leqslant O(n_{r}),
\]
where the last inequality follows because $\E\left\Vert \eta^{(l)}_{r}\right\Vert ^{4}=O(1)$ for $l=2,3$, and $\E\left\Vert \eta^{(1)}_{r}\right\Vert ^{4}=O(n^{2}_{r})$, while at least one of $l,l'$ is not $1$. Thus $\left(\tilde{\varsigma}_{r}-\varsigma_{r}\right)^{\prime}M_{r}\left(\tilde{\varsigma}_{r}-\varsigma_{r}\right)$ in (\ref{eq:rXr}) is a sum of six terms whose second moments are uniformly $O(n_{r})$. For the second term in (\ref{eq:rXr}), $M_{r}\varsigma_{r}$ is UB because $M_{r}$ is UBRC and $\varsigma_{r}$ is UB. Applying the variance calculation in the proof of part (i) with $\alpha_{r}=M_{r}\varsigma_{r}$, $\Var\left[\varsigma^{\prime}_{r}M_{r}\left(\tilde{\varsigma}_{r}-\varsigma_{r}\right)\right]=\Var\left[(M_{r}\varsigma_{r})^{\prime}\left(\tilde{\varsigma}_{r}-\varsigma_{r}\right)\right]=O(n_{r})$. Since $\tilde{\varsigma}^{\prime}_{r}M_{r}\tilde{\varsigma}_{r}$ in (\ref{eq:rXr}) is the sum of a fixed number of terms whose variances are uniformly $O(n_{r})$, $\Var\left(\tilde{\varsigma}^{\prime}_{r}M_{r}\tilde{\varsigma}_{r}\right)=O(n_{r})$ uniformly.

Consequently, $\E\left(\tilde{\varsigma}^{\prime}M\tilde{\varsigma}\right)=\varsigma^{\prime}f(M)\varsigma$ and $\Var\left(\tilde{\varsigma}^{\prime}M\tilde{\varsigma}/n\right)=\sum^{R}_{r=1}\Var(\tilde{\varsigma}^{\prime}_{r}M_{r}\tilde{\varsigma}_{r})/n^{2}=O(n^{-1})$. By Chebyshev's inequality $\left[\tilde{\varsigma}^{\prime}M\tilde{\varsigma}-\varsigma^{\prime}f(M)\varsigma\right]/n\xrightarrow{p}0$.
\end{proof}

\begin{prop}
\label{prop:VC_convergence} Suppose Assumptions \ref{assu:size}-\ref{assu:lim} hold, and $\sup_{n\geqslant1}\sup_{i,g,r}\E\left|\epsilon_{igr}\right|^{8}<\infty$.

(i) Suppose that $\alpha=(\alpha^{\prime}_{1},\ldots,\alpha^{\prime}_{R})^{\prime}$ is a nonstochastic $n\times1$ vector that is UB, where $\alpha_{r}$ is $n_{r}\times1$ for $r=1,\ldots,R$. Then $\alpha^{\prime}\left(\hat{\varsigma}-\varsigma\right)/n\xrightarrow{p}0$.

(ii) Suppose $M=\diag\{M_{1},\ldots,M_{R}\}$ is a nonstochastic $n\times n$ matrix. Suppose each $M_{r}$ satisfies the conditions specified in Lemma \ref{lem:rxr} and has the form
\begin{equation}
M_{r}=\diag^{G_{r}}_{g=1}\{p_{gr}I^{\ast}_{gr}+q_{gr}J^{\ast}_{gr}\}+\varpi_{r}I^{\ast}_{r},\label{eq:pIqJcI}
\end{equation}
such that $\diag(M_{r})=0$, and $p_{gr}$, $q_{gr}$ and $\varpi_{r}$ are uniformly bounded. Let $g(M)=\diag^{R}_{r=1}g(M_{r})$, where
\begin{align}
g(M_{r}) & =\frac{(n_{r}-2)^{2}}{\left[4+(n_{r}-2)^{2}\right]}\left[M_{r}+\frac{4T_{r}}{(n_{r}-2)^{2}(n_{r}-1)+4}\right]+t_{r}(J_{r}-I_{r}),\label{eq:g_M}\\
T_{r} & =\left(\mathcal{D}_{r}-\frac{\tr(\mathcal{D}_{r})}{4\left(n_{r}-1\right)}I_{r}\right)(J_{r}-I_{r})+(J_{r}-I_{r})\left(\mathcal{D}_{r}-\frac{\tr(\mathcal{D}_{r})}{4\left(n_{r}-1\right)}I_{r}\right),\nonumber \\
t_{r} & =\frac{4(n_{r}-2)(3n_{r}-4)\tr(\mathcal{D}_{r})}{n_{r}(n_{r}-1)(n_{r}-3)\left[4+(n_{r}-2)^{2}\right]\left[(n_{r}-2)^{2}(n_{r}-1)+4\right]},\nonumber \\
\mathcal{D}_{r} & =\diag^{G_{r}}_{g=1}\{q_{gr}I_{gr}\}.\nonumber 
\end{align}
Then $\left[\hat{\varsigma}^{\prime}g(M)\hat{\varsigma}-\varsigma^{\prime}M\varsigma\right]/n\xrightarrow{p}0$.
\end{prop}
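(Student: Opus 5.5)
Both parts follow by splitting $\hat{\varsigma}-\varsigma=(\hat{\varsigma}-\tilde{\varsigma})+(\tilde{\varsigma}-\varsigma)$ and applying Lemma \ref{lem:hat_tilde_ip} to the first piece and Lemma \ref{lem:rxr} to the second.

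For part (i), write $\alpha^{\prime}(\hat{\varsigma}-\varsigma)/n=\alpha^{\prime}(\hat{\varsigma}-\tilde{\varsigma})/n+\alpha^{\prime}(\tilde{\varsigma}-\varsigma)/n$. The first term is $o_{p}(1)$ by Lemma \ref{lem:hat_tilde_ip}(i), and the second is $o_{p}(1)$ by Lemma \ref{lem:rxr}(i).

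For part (ii), set $g(M)=\diag^{R}_{r=1}\{g(M_{r})\}$ and decompose
\[
\frac{1}{n}\left[\hat{\varsigma}^{\prime}g(M)\hat{\varsigma}-\varsigma^{\prime}M\varsigma\right]=\frac{1}{n}\left[\hat{\varsigma}^{\prime}g(M)\hat{\varsigma}-\tilde{\varsigma}^{\prime}g(M)\tilde{\varsigma}\right]+\frac{1}{n}\left[\tilde{\varsigma}^{\prime}g(M)\tilde{\varsigma}-\varsigma^{\prime}f(g(M))\varsigma\right]+\frac{1}{n}\varsigma^{\prime}\left[f(g(M))-M\right]\varsigma .
\]
Three things must be checked about $g(M_{r})$.

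First, $g(M_{r})$ is symmetric and UBRC uniformly in $r$. Symmetry is immediate. For the UBRC property, note that $M_{r}$ is UBRC by Remark \ref{rem:rub1}(ii), since $I^{\ast}_{r}$ is UBRC and the coefficients are bounded. Next, $\mathcal{D}_{r}$ is bounded and $\tr(\mathcal{D}_{r})=O(n_{r})$. The term $T_{r}$ involves $J_{r}$, whose row sums are $n_{r}$, but it carries the factor $4/[(n_{r}-2)^{2}(n_{r}-1)+4]=O(n_{r}^{-3})$. The term $t_{r}(J_{r}-I_{r})$ has $t_{r}=O(n_{r}\cdot n_{r}/n_{r}^{6})$, which is far smaller than $1/n_{r}$. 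Since $n_{r}\geqslant4$ under Assumption \ref{assu:size}, these coefficients are bounded uniformly, although $n_{r}=3$ would need attention because of the factor $(n_{r}-3)$ in $t_{r}$. With this, the first bracket is $o_{p}(1)$ by Lemma \ref{lem:hat_tilde_ip}(ii).

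Second, $g(M_{r})$ has zero diagonal. Because $\diag(M_{r})=0$, it suffices that $T_{r}$ has zero diagonal, and it does: $D(J-I)$ and $(J-I)D$ have zero diagonal for any diagonal $D$. The term $J_{r}-I_{r}$ also has zero diagonal. Then Lemma \ref{lem:rxr}(ii) applies with $g(M)$ in place of $M$, so the second bracket is $o_{p}(1)$.

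Third, and this is the main obstacle, the exact algebraic identity $f(g(M_{r}))=M_{r}$ must hold for matrices of the form (\ref{eq:pIqJcI}); it makes the last bracket vanish identically. I would compute $f$ on each building block using Lemma \ref{lem:calculation}(i). That lemma gives $M_{r}J_{r}=\mathcal{D}_{r}J_{r}$, so $M_{r}J_{r}+(M_{r}J_{r})^{\prime}-2\Diag(M_{r}J_{r})=\mathcal{D}_{r}(J_{r}-I_{r})+(J_{r}-I_{r})\mathcal{D}_{r}$. It also gives $\mathbf{1}^{\prime}_{r}M_{r}\mathbf{1}_{r}=\tr(\mathcal{D}_{r}J_{r})=\tr(\mathcal{D}_{r})$.

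Next I would evaluate $f$ on $T_{r}$ and on $J_{r}-I_{r}$, using $\mathbf{1}^{\prime}(J-I)\mathbf{1}=n_{r}(n_{r}-1)$ and $(J_{r}-I_{r})J_{r}=(n_{r}-1)J_{r}$. Both images lie in the span of $\{T_{r},\ \mathcal{D}_{r}(J-I)+(J-I)\mathcal{D}_{r},\ J_{r}-I_{r}\}$. Because $f$ is linear, $f(g(M_{r}))$ then becomes a linear combination of $M_{r}$, $\mathcal{D}_{r}(J-I)+(J-I)\mathcal{D}_{r}$, and $J_{r}-I_{r}$. Matching coefficients should show that the prefactor $(n_{r}-2)^{2}/[4+(n_{r}-2)^{2}]$ cancels the $[4+(n_{r}-2)^{2}]$ in $f$, that the $4/[\cdot]$ multiple of $T_{r}$ cancels the $-\frac{4}{n_{r}-1}$ cross term, and that $t_{r}$ cancels the leftover $J_{r}-I_{r}$ coefficient.

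The Online Appendix-style computation is routine but delicate, and I expect the coefficient matching for the $J_{r}-I_{r}$ term, which is where $t_{r}$ is determined, to be the main risk. Once $f(g(M_{r}))=M_{r}$ holds, the third bracket is exactly zero and the result follows.
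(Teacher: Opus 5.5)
Your proposal is correct and follows the paper's proof essentially step for step. Part (i) chains Lemma \ref{lem:hat_tilde_ip}(i) with Lemma \ref{lem:rxr}(i). Part (ii) checks that $g(M_r)$ is symmetric, UBRC and zero-diagonal (using $n_r\geqslant4$), applies Lemmas \ref{lem:hat_tilde_ip}(ii) and \ref{lem:rxr}(ii), and then verifies $f(g(M_r))=M_r$ by linearity of $f$ on $M_r$, $T_r$ and $J_r-I_r$, together with $M_rJ_r=\mathcal{D}_rJ_r$.

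The coefficient matching you leave open goes through exactly as you anticipate. One has $(n_r-2)^2f(M_r)=[4+(n_r-2)^2]M_r-\frac{4}{n_r-1}T_r$, $f(T_r)$ lies in the span of $T_r$ and $J_r-I_r$, and $f(J_r-I_r)=\frac{n_r(n_r-3)}{(n_r-2)(n_r-1)}(J_r-I_r)$. The last identity vanishes at $n_r=3$, which explains the factor $(n_r-3)$ in $t_r$. Your order count for $t_r$ is slightly off (it is $O(n_r^{-5})$), but this is harmless.
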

\begin{proof}
(i) Part (i) follows from Lemma \ref{lem:hat_tilde_ip}(i) and Lemma \ref{lem:rxr}(i).

(ii) We first verify that $g(M)$ satisfies the conditions in Lemmas \ref{lem:hat_tilde_ip}(ii) and \ref{lem:rxr}(ii). The matrices $M_{r}$, $J_{r}-I_{r}$, and $T_{r}$ are symmetric and have zero diagonals. Since $q_{gr}$ is uniformly bounded, $\tr(\mathcal{D}_{r})=O(n_{r})$ uniformly. It follows that $\mathcal{D}_{r}-\frac{\tr(\mathcal{D}_{r})}{4\left(n_{r}-1\right)}I_{r}$ is UBRC, and hence $T_{r}/n_{r}$ is UBRC. Also, $t_{r}=O(n^{-5}_{r})$. Thus $g(M_{r})$ is UBRC, symmetric, and has zero diagonal. Lemmas \ref{lem:hat_tilde_ip}(ii) and \ref{lem:rxr}(ii) give $\left[\hat{\varsigma}^{\prime}g(M)\hat{\varsigma}-\tilde{\varsigma}^{\prime}g(M)\tilde{\varsigma}\right]/n\xrightarrow{p}0$, $\left[\tilde{\varsigma}^{\prime}g(M)\tilde{\varsigma}-\varsigma^{\prime}f(g(M))\varsigma\right]/n\xrightarrow{p}0$, and hence $\left[\hat{\varsigma}^{\prime}g(M)\hat{\varsigma}-\varsigma^{\prime}f(g(M))\varsigma\right]/n\xrightarrow{p}0$. For the special form of $M_{r}$ in (\ref{eq:pIqJcI}), direct calculation gives $f(g(M_{r}))=M_{r}$.\footnote{See Section \ref{subsec:Proof-of-Proposition_g(x)} of the Online Appendix for details.} Therefore $\left(\hat{\varsigma}^{\prime}g(M)\hat{\varsigma}-\varsigma^{\prime}M\varsigma\right)/n\xrightarrow{p}0$.
\end{proof}

\subsubsection{Proof of the theorem}\label{subsec:Proof-of-VC-estimation}
\begin{proof}
Under the maintained assumptions, $\hat{\theta}\xrightarrow{p}\theta_{0}$ by Theorem \ref{thm:consistency_theta}. It suffices to show that $\hat{V}-V_{n}\xrightarrow{p}0$ and $\hat{D}-D_{n}\xrightarrow{p}0$, where $V_{n}$, $D_{n}$, $\hat{V}$ and $\hat{D}$ are defined in (\ref{eq:Vn}), (\ref{eq:Dn}), (\ref{eq:Vhat}), and (\ref{eq:Dhat}).

Observe that $A^{\ast}_{r}\odot A^{\ast}_{r}$ has the form in (\ref{eq:pIqJcI}), with $p_{gr}$, $q_{gr}$ and $\varpi_{r}$ all uniformly bounded and $q_{gr}=\frac{n_{r}-m_{gr}}{\left(m_{gr}-1\right)\left(n_{r}-1\right)}$.\footnote{See Section \ref{subsec:AodotA} of the Online Appendix for detailed calculations.} For $A^{\dagger}_{r}=g(A^{\ast}_{r}\odot A^{\ast}_{r})$, where $g(.)$ is defined in (\ref{eq:g_M}) and $\mathcal{D}_{r}=\diag^{G_{r}}_{g=1}\{\frac{n_{r}-m_{gr}}{(m_{gr}-1)(n_{r}-1)}I_{gr}\}$, Proposition \ref{prop:VC_convergence}(ii) gives $\frac{1}{n}\sum^{R}_{r=1}\left[\hat{\varsigma}^{\prime}_{r}A^{\dagger}_{r}\hat{\varsigma}_{r}-\varsigma^{\prime}_{r}(A^{\ast}_{r}\odot A^{\ast}_{r})\varsigma_{r}\right]\xrightarrow{p}0$, where $\varsigma^{\prime}_{r}(A^{\ast}_{r}\odot A^{\ast}_{r})\varsigma_{r}=\tr(A^{\ast}_{r}\Omega_{r}A^{\ast}_{r}\Omega_{r})$ by (\ref{eq:trAOAO}). For $H^{\ast\prime}\Omega H^{\ast}/n$, let $h^{\ast}_{l}$ be the $l$th column of $H^{\ast}$, so that $H^{\ast}=(h^{\ast}_{1},\ldots,h^{\ast}_{k_{h}})$. The $(i,j)$ element of $H^{\ast\prime}\Omega H^{\ast}$ is $\left(H^{\ast\prime}\Omega H^{\ast}\right)_{ij}=h^{\ast\prime}_{i}\Omega h^{\ast}_{j}=\diag(h^{\ast}_{j}h^{\ast\prime}_{i})^{\prime}\varsigma$, where $\diag(h^{\ast}_{j}h^{\ast\prime}_{i})$ is UB because $H^{\ast}=I^{\ast}H$ is UB. Proposition \ref{prop:VC_convergence}(i) therefore yields $\diag(h^{\ast}_{j}h^{\ast\prime}_{i})^{\prime}\left(\hat{\varsigma}-\varsigma\right)/n\xrightarrow{p}0$ for every fixed $i,j$. Since $k_{h}$ is fixed, $\frac{1}{n}\left(H^{\ast\prime}\hat{\Omega}H^{\ast}-H^{\ast\prime}\Omega H^{\ast}\right)\xrightarrow{p}0$, and hence $\hat{V}-V_{n}\xrightarrow{p}0$.

Next, for the $(1,1)$ element of $\hat{D}$ in (\ref{eq:Dhat}), observe that $\tr\left(A^{\ast}W(I-\lambda W)^{-1}\Omega\right)=\varsigma^{\prime}d(\lambda)$ for any $\lambda\in(-1,1)$, where $d(\lambda)=\diag\left[A^{\ast}W(I-\lambda W)^{-1}\right]$. We have
\begin{align*}
\hat{\varsigma}^{\prime}d(\hat{\lambda})-\varsigma^{\prime}d(\lambda_{0})= & (\hat{\varsigma}-\varsigma)^{\prime}d(\lambda_{0})+\hat{\varsigma}^{\prime}\left[d(\hat{\lambda})-d(\lambda_{0})\right].
\end{align*}
By Remark \ref{rem:rub1}(ii), $A^{\ast}W(I-\lambda_{0}W)^{-1}$ is UBRC so $d(\lambda_{0})$ is UB. Proposition \ref{prop:VC_convergence}(i) then gives $(\hat{\varsigma}-\varsigma)^{\prime}d(\lambda_{0})/n\xrightarrow{p}0$.

From the closed-form expression for $(I-\lambda W)^{-1}$ in (\ref{eq:I_lW_inv}) and compactness of $\Lambda\subset(-1,1)$, $(I-\lambda W)^{-1}$ is UBRC uniformly over $\lambda\in\Lambda$ and $n$. Hence $A^{\ast}W(I-\lambda W)^{-1}$ is also UBRC uniformly over $\lambda\in\Lambda$ and $n$ by Lemma \ref{lem:rub}. In addition,
\begin{align}
 & (I-\hat{\lambda}W)^{-1}-(I-\lambda_{0}W)^{-1}\nonumber \\
= & (I-\hat{\lambda}W)^{-1}(I-\lambda_{0}W)(I-\lambda_{0}W)^{-1}-(I-\hat{\lambda}W)^{-1}(I-\hat{\lambda}W)(I-\lambda_{0}W)^{-1}\nonumber \\
= & (\hat{\lambda}-\lambda_{0})(I-\hat{\lambda}W)^{-1}W(I-\lambda_{0}W)^{-1}.\label{eq:diff-I_lW_inv}
\end{align}
Thus 
\[
\frac{1}{n}\hat{\varsigma}^{\prime}\left[d(\hat{\lambda})-d(\lambda_{0})\right]=(\hat{\lambda}-\lambda_{0})\frac{1}{n}\hat{\varsigma}^{\prime}\ddot{d}(\hat{\lambda}),
\]
where $\ddot{d}(\hat{\lambda})=\diag\left[A^{\ast}W(I-\hat{\lambda}W)^{-1}W(I-\lambda_{0}W)^{-1}\right]$. By the preceding uniform UBRC bound and Lemma \ref{lem:rub}, $A^{\ast}W(I-\hat{\lambda}W)^{-1}W(I-\lambda_{0}W)^{-1}$ is UBRC uniformly over $\lambda\in\Lambda$ and $n$. Thus $\ddot{d}(\hat{\lambda})$ is bounded uniformly over $\hat{\lambda}\in\Lambda$ and $n$. 
\[
\frac{1}{n}\left|\hat{\varsigma}^{\prime}\ddot{d}(\hat{\lambda})\right|\leqslant\frac{1}{n}\sum_{i=1}\left|\hat{\varsigma}_{i}\ddot{d}(\hat{\lambda})_{i}\right|\leqslant C_{d}\frac{1}{n}\sum^{n}_{i=1}\left|\hat{\varsigma}_{i}\right|.
\]
By Lemma \ref{lem:hat_tilde_ip}(iii), $\frac{1}{n}\left\Vert \hat{\varsigma}\right\Vert ^{2}=O_{p}(1)$. Hence, by the Cauchy--Schwarz inequality, $\frac{1}{n}\sum^{n}_{i=1}\left|\hat{\varsigma}_{i}\right|=O_{p}(1)$. Therefore $\frac{1}{n}\left|\hat{\varsigma}^{\prime}\ddot{d}(\hat{\lambda})\right|=O_{p}(1)$ and $\left[\hat{\varsigma}^{\prime}d(\hat{\lambda})-\varsigma^{\prime}d(\lambda_{0})\right]/n\xrightarrow{p}0$ since $\hat{\lambda}\xrightarrow{p}\lambda_{0}$.

Finally, 
\begin{align*}
 & \frac{1}{n}\left[H^{\prime}I^{\ast}W(I-\hat{\lambda}W)^{-1}X\hat{\beta}-H^{\prime}I^{\ast}W(I-\lambda_{0}W)^{-1}X\beta_{0}\right]\\
= & \frac{1}{n}H^{\prime}I^{\ast}W(I-\hat{\lambda}W)^{-1}X\left(\hat{\beta}-\beta_{0}\right)+\frac{1}{n}H^{\prime}I^{\ast}W\left[(I-\hat{\lambda}W)^{-1}-(I-\lambda_{0}W)^{-1}\right]X\beta_{0}.
\end{align*}
By (\ref{eq:I_lW_inv}), the row and column sums of $(I-\lambda W)^{-1}$ are bounded uniformly in $\lambda\in\Lambda$ and $n$. Hence by Lemma \ref{lem:rub} and Remark \ref{rem:rub2}, $\frac{1}{n}H^{\prime}I^{\ast}W(I-\lambda W)^{-1}X$ is bounded uniformly in $\lambda\in\Lambda$ and $n$. The first term above converges in probability to 0 as $\hat{\beta}\xrightarrow{p}\beta_{0}$. Given (\ref{eq:diff-I_lW_inv}), the second term is $(\hat{\lambda}-\lambda_{0})\left[\frac{1}{n}H^{\prime}I^{\ast}W(I-\hat{\lambda}W)^{-1}W(I-\lambda_{0}W)^{-1}X\beta_{0}\right]$, where the bracketed term is bounded uniformly in $\hat{\lambda}\in\Lambda$ and $n$. Hence, as $\hat{\lambda}\xrightarrow{p}\lambda_{0}$, the second term also converges to 0 in probability. Therefore $\hat{D}-D_{n}\xrightarrow{p}0$.

By the conditions of Theorem \ref{thm:asym_theta}, $V_{n}\rightarrow\bar{V}$ and $D_{n}\rightarrow\bar{D}$, while Assumption \ref{assu:lim} gives $\Xi_{n}\xrightarrow{p}\bar{\Xi}$. Thus $\hat{V}\xrightarrow{p}\bar{V}$ and $\hat{D}\xrightarrow{p}\bar{D}$. Since $\bar{D}$ has full column rank and $\bar{\Xi}$ is positive definite, $\bar{D}^{\prime}\bar{\Xi}\bar{D}$ is nonsingular. Therefore $\hat{D}^{\prime}\Xi_{n}\hat{D}$ is nonsingular w.p.a.1, and the continuous mapping theorem gives $\hat{\Sigma}_{\theta}\xrightarrow{p}\Sigma_{\theta}$.
\end{proof}

\clearpage{}

\setcounter{section}{0}\setcounter{page}{1}\setcounter{table}{0}\setcounter{equation}{0}\setcounter{lem}{0}\setcounter{assumption}{0}\setcounter{thm}{0}\setcounter{footnote}{0}\setcounter{rem}{0}\setcounter{prop}{0}

\renewcommand{\theequation}{O.\arabic{equation}}
\renewcommand{\thetable}{O.\arabic{table}}
\renewcommand{\thesection}{O.\arabic{section}}
\renewcommand{\theassumption}{O.\arabic{assumption}}
\renewcommand{\thelem}{O.\arabic{lem}}
\renewcommand{\thethm}{O.\arabic{thm}}
\renewcommand{\therem}{O.\arabic{rem}}

\part*{Online Appendix}

This Online Appendix provides calculation details, reviews existing tests of random assignment to peer groups, and reports additional simulation results and an additional empirical application.

\section{Calculations}

\subsection{Diagonal Elements of $\tilde{A}(\lambda)$ and $A^{\ast}$}\label{subsec:diag_Atilde}

Since $A^{\ast}=\tilde{A}(\lambda_{0})$, we first derive the diagonal elements of $\tilde{A}(\lambda)$. By (\ref{eq:I_lW}), (\ref{eq:Ar_pIqJ}), and (\ref{eq:I_lW_inv}), the matrices $A_{r}$, $I_{r}-\lambda W_{r}$, and $(I_{r}-\lambda_{0}W_{r})^{-1}$ can all be written as $\diag^{G_{r}}_{g=1}\{p_{gr}I^{\ast}_{gr}+q_{gr}J^{\ast}_{gr}\}$. Thus, by (\ref{eq:prod_s}), 
\[
(I_{r}-\lambda W_{r})^{2}(I_{r}-\lambda_{0}W_{r})^{-2}A_{r}=\frac{1}{n_{r}-1}\diag^{G_{r}}_{g=1}\{\left(\frac{m_{gr}-1+\lambda}{m_{gr}-1+\lambda_{0}}\right)^{2}(-\frac{n_{r}-m_{gr}}{m_{gr}-1})I^{\ast}_{gr}+\left(\frac{1-\lambda}{1-\lambda_{0}}\right)^{2}n_{r}J^{\ast}_{gr}\}.
\]
By Lemma \ref{lem:calculation},
\begin{align*}
 & (n_{r}-1)I^{\ast}_{r}(I_{r}-\lambda W_{r})^{2}(I_{r}-\lambda_{0}W_{r})^{-2}A_{r}\\
= & \diag^{G_{r}}_{g=1}\{\left[\left(\frac{m_{gr}-1+\lambda}{m_{gr}-1+\lambda_{0}}\right)^{2}(-\frac{n_{r}-m_{gr}}{m_{gr}-1})-\left(\frac{1-\lambda}{1-\lambda_{0}}\right)^{2}n_{r}\right]I^{\ast}_{gr}\}+\left(\frac{1-\lambda}{1-\lambda_{0}}\right)^{2}n_{r}I^{\ast}_{r}.
\end{align*}
Hence, for individuals in peer group $g$, the corresponding diagonal elements of $\left(n_{r}-1\right)\tilde{A}_{r}(\lambda)$ are
\begin{align}
 & -\left(\frac{m_{gr}-1+\lambda}{m_{gr}-1+\lambda_{0}}\right)^{2}\frac{n_{r}-m_{gr}}{m_{gr}-1}(1-\frac{1}{m_{gr}})+\left(\frac{1-\lambda}{1-\lambda_{0}}\right)^{2}n_{r}\left[(1-\frac{1}{n_{r}})-(1-\frac{1}{m_{gr}})\right]\nonumber \\
= & -\left(\frac{m_{gr}-1+\lambda}{m_{gr}-1+\lambda_{0}}\right)^{2}\frac{n_{r}-m_{gr}}{m_{gr}}+\left(\frac{1-\lambda}{1-\lambda_{0}}\right)^{2}\frac{n_{r}-m_{gr}}{m_{gr}}\nonumber \\
= & -\frac{n_{r}-m_{gr}}{m_{gr}}\left[\left(\frac{m_{gr}-1+\lambda}{m_{gr}-1+\lambda_{0}}\right)^{2}-\left(\frac{1-\lambda}{1-\lambda_{0}}\right)^{2}\right]\nonumber \\
= & -\frac{n_{r}-m_{gr}}{m_{gr}}\left(\frac{m_{gr}-1+\lambda}{m_{gr}-1+\lambda_{0}}+\frac{1-\lambda}{1-\lambda_{0}}\right)\left(\frac{m_{gr}-1+\lambda_{0}+(\lambda-\lambda_{0})}{m_{gr}-1+\lambda_{0}}-\frac{1-\lambda_{0}-(\lambda-\lambda_{0})}{1-\lambda_{0}}\right)\nonumber \\
= & -(\lambda-\lambda_{0})\frac{n_{r}-m_{gr}}{m_{gr}}\varphi_{gr}(\lambda),\label{eq:diag_Atilde}
\end{align}
where
\[
\varphi_{gr}(\lambda)=\left(\frac{1}{1-\lambda_{0}}+\frac{1}{m_{gr}-1+\lambda_{0}}\right)\left(\frac{m_{gr}-1+\lambda}{m_{gr}-1+\lambda_{0}}+\frac{1-\lambda}{1-\lambda_{0}}\right).
\]
Consequently, the contribution of peer group $g$ to $\tr\left(\tilde{A}_{r}(\lambda)\Omega_{r}\right)$ is $-(\lambda-\lambda_{0})\frac{n_{r}-m_{gr}}{m_{gr}(n_{r}-1)}\varphi_{gr}(\lambda)\tr(\Omega_{gr})$.

Since $A^{\ast}=\tilde{A}(\lambda_{0})$, setting $\lambda=\lambda_{0}$ in (\ref{eq:diag_Atilde}) shows that all diagonal elements of $A^{\ast}$ are zero.

\subsection{Matrix $A^{\ast}_{r}\odot A^{\ast}_{r}$}\label{subsec:AodotA}

We express $A^{\ast}_{r}\odot A^{\ast}_{r}$ in the form $\diag\{p_{gr}I^{\ast}_{gr}+q_{gr}J^{\ast}_{gr}\}+\varpi_{r}I^{\ast}_{r}$ by deriving $p_{gr}$, $q_{gr}$, and $\varpi_{r}$. Using the representation of $A_{r}$ in (\ref{eq:Ar_pIqJ}) and Lemma \ref{lem:calculation},
\begin{align*}
I^{\ast}_{r}A_{r} & =\diag^{G_{r}}_{g=1}\left[\frac{1}{n_{r}-1}\left((-\frac{n_{r}-m_{gr}}{m_{gr}-1})-n_{r}\right)\right]I^{\ast}_{gr}+\frac{n_{r}}{n_{r}-1}I^{\ast}_{r}=\diag^{G_{r}}_{g=1}\left\{ -\frac{m_{gr}}{m_{gr}-1}I^{\ast}_{gr}\right\} +\frac{n_{r}}{n_{r}-1}I^{\ast}_{r}.
\end{align*}
The diagonal elements of $A^{\ast}_{r}$ are zero, as shown above. Hence, the diagonal elements of $A^{\ast}_{r}\odot A^{\ast}_{r}$ are also zero, which implies
\begin{align}
p_{gr}(1-\frac{1}{m_{gr}})+q_{gr}\frac{1}{m_{gr}}+\varpi_{r}(1-\frac{1}{n_{r}}) & =0.\label{eq:calc1}
\end{align}
For two distinct individuals in peer group $g$, the corresponding element of $A^{\ast}_{r}\odot A^{\ast}_{r}$ is the square of the corresponding off-diagonal element of $A^{\ast}_{r}$. Therefore,
\begin{align}
p_{gr}(-\frac{1}{m_{gr}})+q_{gr}\frac{1}{m_{gr}}+\varpi_{r}(-\frac{1}{n_{r}}) & =\left[-\frac{m_{gr}}{m_{gr}-1}(-\frac{1}{m_{gr}})+\frac{n_{r}}{n_{r}-1}(-\frac{1}{n_{r}})\right]^{2}\nonumber \\
 & =\left[\frac{1}{m_{gr}-1}-\frac{1}{n_{r}-1}\right]^{2}=\left[\frac{n_{r}-m_{gr}}{\left(m_{gr}-1\right)\left(n_{r}-1\right)}\right]^{2}.\label{eq:calc2}
\end{align}
For two individuals in different peer groups, the corresponding element is
\begin{equation}
\varpi_{r}(-\frac{1}{n_{r}})=\left[\frac{n_{r}}{n_{r}-1}(-\frac{1}{n_{r}})\right]^{2}=\frac{1}{(n_{r}-1)^{2}}.\label{eq:calc3}
\end{equation}

Equation (\ref{eq:calc3}) gives $\varpi_{r}=-\frac{n_{r}}{(n_{r}-1)^{2}}$. Substituting this expression into (\ref{eq:calc1}) and (\ref{eq:calc2}) gives
\begin{align}
p_{gr}(1-\frac{1}{m_{gr}})+q_{gr}\frac{1}{m_{gr}}-\frac{1}{(n_{r}-1)} & =0,\label{eq:calc5}\\
p_{gr}(-\frac{1}{m_{gr}})+q_{gr}\frac{1}{m_{gr}}+\frac{1}{(n_{r}-1)^{2}} & =\left[\frac{n_{r}-m_{gr}}{\left(m_{gr}-1\right)\left(n_{r}-1\right)}\right]^{2}.\label{eq:calc6}
\end{align}
Equations (\ref{eq:calc5})--(\ref{eq:calc6}) imply $p_{gr}-\frac{n_{r}}{(n_{r}-1)^{2}}=-\left[\frac{n_{r}-m_{gr}}{\left(m_{gr}-1\right)\left(n_{r}-1\right)}\right]^{2}$ and hence $p_{gr}=\frac{1}{(n_{r}-1)^{2}}\left[n_{r}-\left(\frac{n_{r}-m_{gr}}{m_{gr}-1}\right)^{2}\right]$. Adding (\ref{eq:calc5}) to $(m_{gr}-1)\times$(\ref{eq:calc6}) yields
\begin{align*}
\frac{q_{gr}}{m_{gr}}-\frac{1}{(n_{r}-1)}+(m_{gr}-1)\frac{q_{gr}}{m_{gr}}+\frac{m_{gr}-1}{(n_{r}-1)^{2}} & =\left[\frac{n_{r}-m_{gr}}{\left(m_{gr}-1\right)\left(n_{r}-1\right)}\right]^{2}(m_{gr}-1),\\
q_{gr}-\frac{n_{r}-m_{gr}}{(n_{r}-1)^{2}} & =\frac{(n_{r}-m_{gr})^{2}}{\left(m_{gr}-1\right)\left(n_{r}-1\right)^{2}}.
\end{align*}
Therefore,
\begin{align*}
q_{gr} & =\frac{n_{r}-m_{gr}}{\left(m_{gr}-1\right)\left(n_{r}-1\right)^{2}}(n_{r}-m_{gr}+m_{gr}-1)=\frac{n_{r}-m_{gr}}{\left(m_{gr}-1\right)\left(n_{r}-1\right)}.
\end{align*}
Consequently, $A^{\ast}_{r}\odot A^{\ast}_{r}=\diag\{p_{gr}I^{\ast}_{gr}+q_{gr}J^{\ast}_{gr}\}+\varpi_{r}I^{\ast}_{r}$, with $p_{gr}=\frac{1}{(n_{r}-1)^{2}}\left[n_{r}-\left(\frac{n_{r}-m_{gr}}{m_{gr}-1}\right)^{2}\right]$, $q_{gr}=\frac{n_{r}-m_{gr}}{\left(m_{gr}-1\right)\left(n_{r}-1\right)}$, and $\varpi_{r}=-\frac{n_{r}}{(n_{r}-1)^{2}}$. Since $A^{\ast}_{r}$ is symmetric, Lemma \ref{lem:calculation}(i) gives
\begin{align}
\tr(A^{\ast2}_{r})=\mathbf{1}^{\prime}_{r}\left(A^{\ast}_{r}\odot A^{\ast}_{r}\right)\mathbf{1}_{r} & =\tr\left[\left(A^{\ast}_{r}\odot A^{\ast}_{r}\right)J_{r}\right]=\tr\left[\diag(q_{gr})J_{r}\right]=\frac{1}{(n_{r}-1)}\sum^{G_{r}}_{g=1}\frac{(n_{r}-m_{gr})m_{gr}}{m_{gr}-1}.\label{eq:lAAl}
\end{align}

\subsection{Calculations for Lemma \ref{lem:rxr}(ii)}\label{subsec:Cal_EfM}

We verify the expression in (\ref{eq:f(x)}) term by term.
\begin{enumerate}
\item Since $\Sigma^{(1)}_{\eta,r}=\E(\eta^{(1)}_{r}\eta^{(1)\prime}_{r})$ is diagonal and $M_{r}$ has zero diagonals, $\E(\eta^{(1)\prime}_{r}M_{r}\eta^{(1)}_{r})=\tr(\Sigma^{(1)}_{\eta r}M_{r})=0$.
\item We have $\E\left(\eta^{(1)\prime}_{r}M_{r}\eta^{(2)}_{r}\right)=\E\left[\eta^{(1)\prime}_{r}M_{r}\left(\epsilon_{r}\epsilon^{\prime}_{r}\mathbf{1}_{r}-\varsigma_{r}\right)\right]/n_{r}$. For the first component, $\E\left(\eta^{(1)\prime}_{r}M_{r}\epsilon_{r}\epsilon^{\prime}_{r}\mathbf{1}_{r}\right)=\sum^{n_{r}}_{i,j,k}\E\left(M_{r,ij}\eta^{(1)}_{i}\epsilon_{j}\epsilon_{k}\right)$. Because $(\eta^{(1)}_{i},\epsilon_{i})$ is independent of $\epsilon_{j}$ whenever $j\neq i$, and $\E(\eta^{(1)}_{i})=\E(\epsilon_{i})=0$, $\E(M_{r,ij}\eta^{(1)}_{i}\epsilon_{j}\epsilon_{k})$ can be nonzero only if $i=j=k$. When $i=j=k$, however, $M_{r,ii}=0$ because $\diag(M_{r})=0$. Hence $\E(\eta^{(1)\prime}_{r}M_{r}\epsilon_{r}\epsilon^{\prime}_{r}\mathbf{1}_{r})=0$. Moreover, $\E\eta^{(1)}_{r}=0$ implies $\E\left(\eta^{(1)\prime}_{r}M_{r}\varsigma_{r}\right)=0$. Therefore $\E\left(\eta^{(1)\prime}_{r}M_{r}\eta^{(2)}_{r}\right)=0$.
\item For $\eta^{(3)}_{r}$, 
\begin{align*}
\E\left(\eta^{(1)\prime}_{r}M_{r}\eta^{(3)}_{r}\right) & =\E\left[\eta^{(1)\prime}_{r}M_{r}\mathbf{1}_{r}\epsilon^{\prime}_{r}(J_{r}-I_{r})\epsilon_{r}\right]/[n_{r}(n_{r}-1)]\\
 & =\sum_{i,j,k,l}\E\left[M_{r,ij}(J_{r}-I_{r})_{kl}\eta^{(1)}_{i}\epsilon_{k}\epsilon_{l}\right]/[n_{r}(n_{r}-1)].
\end{align*}
Because $(\eta^{(1)}_{i},\epsilon_{i})$ is independent of $\epsilon_{k}$ whenever $k\neq i$, and $\E(\eta^{(1)}_{i})=\E(\epsilon_{i})=0$, $\E\left[(J_{r}-I_{r})_{kl}\eta^{(1)}_{i}\epsilon_{k}\epsilon_{l}\right]$ can be nonzero only if $i=k=l$. When $i=k=l$, however, $(J_{r}-I_{r})_{kl}=0$. Hence $\E\left(\eta^{(1)\prime}_{r}M_{r}\eta^{(3)}_{r}\right)=0$.
\item For $\eta^{(2)}_{r}$, using the symmetry of $M_{r}$, 
\begin{align*}
\eta^{(2)\prime}_{r}M_{r}\eta^{(2)}_{r} & =\left(\epsilon_{r}\epsilon^{\prime}_{r}\mathbf{1}_{r}-\varsigma_{r}\right)^{\prime}M_{r}\left(\epsilon_{r}\epsilon^{\prime}_{r}\mathbf{1}_{r}-\varsigma_{r}\right)/n^{2}_{r}\\
 & =\left(\mathbf{1}^{\prime}_{r}\epsilon_{r}\epsilon^{\prime}_{r}M_{r}\epsilon_{r}\epsilon^{\prime}_{r}\mathbf{1}_{r}-2\times\mathbf{1}^{\prime}_{r}\epsilon_{r}\epsilon^{\prime}_{r}M_{r}\varsigma_{r}+\varsigma^{\prime}_{r}M_{r}\varsigma_{r}\right)/n^{2}_{r}.
\end{align*}
Since $\mathbf{1}^{\prime}_{r}\epsilon_{r}\epsilon^{\prime}_{r}M_{r}\epsilon_{r}\epsilon^{\prime}_{r}\mathbf{1}_{r}=\epsilon^{\prime}_{r}\mathbf{1}_{r}\mathbf{1}^{\prime}_{r}\epsilon_{r}\epsilon^{\prime}_{r}M_{r}\epsilon_{r}=\left(\epsilon^{\prime}_{r}J_{r}\epsilon_{r}\right)\left(\epsilon^{\prime}_{r}M_{r}\epsilon_{r}\right)$, and $\diag(M_{r})=0$, Remark \ref{rem:vars}(i) gives $\E\left(\epsilon^{\prime}_{r}J_{r}\epsilon_{r}\right)\left(\epsilon^{\prime}_{r}M_{r}\epsilon_{r}\right)=2\tr(\Omega_{r}J_{r}\Omega_{r}M_{r})=2\mathbf{1}^{\prime}_{r}\Omega_{r}M_{r}\Omega_{r}\mathbf{1}_{r}=2\varsigma^{\prime}_{r}M_{r}\varsigma_{r}$. Moreover, $\E\left(\mathbf{1}^{\prime}_{r}\epsilon_{r}\epsilon^{\prime}_{r}M_{r}\varsigma_{r}\right)=\mathbf{1}^{\prime}_{r}\Omega_{r}M_{r}\varsigma_{r}=\varsigma^{\prime}_{r}M_{r}\varsigma_{r}$. Hence $\ensuremath{\E\left[\left(\epsilon_{r}\epsilon^{\prime}_{r}\mathbf{1}_{r}-\varsigma_{r}\right)^{\prime}M_{r}\left(\epsilon_{r}\epsilon^{\prime}_{r}\mathbf{1}_{r}-\varsigma_{r}\right)\right]/n^{2}_{r}=\left(2\varsigma^{\prime}_{r}M_{r}\varsigma_{r}-2\varsigma^{\prime}_{r}M_{r}\varsigma_{r}+\varsigma^{\prime}_{r}M_{r}\varsigma_{r}\right)/n^{2}_{r}=\varsigma^{\prime}_{r}M_{r}\varsigma_{r}/n^{2}_{r}}.$
\item We have
\begin{align*}
 & n^{2}_{r}(n_{r}-1)\eta^{(2)\prime}_{r}M_{r}\eta^{(3)}_{r}=\left(\epsilon_{r}\epsilon^{\prime}_{r}\mathbf{1}_{r}-\varsigma_{r}\right)^{\prime}M_{r}\mathbf{1}_{r}\epsilon^{\prime}_{r}(J_{r}-I_{r})\epsilon_{r}\\
= & \mathbf{1}^{\prime}_{r}\epsilon_{r}\epsilon^{\prime}_{r}M_{r}\mathbf{1}_{r}\epsilon^{\prime}_{r}(J_{r}-I_{r})\epsilon_{r}-\varsigma^{\prime}_{r}M_{r}\mathbf{1}_{r}\epsilon^{\prime}_{r}(J_{r}-I_{r})\epsilon_{r}\\
= & \left(\epsilon^{\prime}_{r}M_{r}\mathbf{1}_{r}\mathbf{1}^{\prime}_{r}\epsilon_{r}\right)\left[\epsilon^{\prime}_{r}(J_{r}-I_{r})\epsilon_{r}\right]-\left(\varsigma^{\prime}_{r}M_{r}\mathbf{1}_{r}\right)\left[\epsilon^{\prime}_{r}(J_{r}-I_{r})\epsilon_{r}\right]\\
= & \left(\epsilon^{\prime}_{r}M_{r}J_{r}\epsilon_{r}\right)\left[\epsilon^{\prime}_{r}(J_{r}-I_{r})\epsilon_{r}\right]-\left(\varsigma^{\prime}_{r}M_{r}\mathbf{1}_{r}\right)\left[\epsilon^{\prime}_{r}(J_{r}-I_{r})\epsilon_{r}\right].
\end{align*}
Since $\diag(J_{r}-I_{r})=0$, Remark \ref{rem:vars}(i) and (\ref{eq:gamma_quadratic}) imply
\begin{align*}
 & \E\left(\epsilon^{\prime}_{r}M_{r}J_{r}\epsilon_{r}\right)\left[\epsilon^{\prime}_{r}(J_{r}-I_{r})\epsilon_{r}\right]\\
= & 2\tr\left[J_{r}M_{r}\Omega_{r}(J_{r}-I_{r})\Omega_{r}\right]=2\varsigma^{\prime}_{r}\left[\left(J_{r}M_{r}\right)\odot(J_{r}-I_{r})\right]\varsigma_{r}\\
= & 2\varsigma^{\prime}_{r}\left[J_{r}M_{r}-\Diag(J_{r}M_{r})\right]\varsigma_{r}=\varsigma^{\prime}_{r}\left[J_{r}M_{r}+M_{r}J_{r}-\Diag(J_{r}M_{r})-\Diag(M_{r}J_{r})\right]\varsigma_{r},
\end{align*}
where the last equality uses the symmetry of $M_{r}$. For the second term, $\E\left[\epsilon^{\prime}_{r}(J_{r}-I_{r})\epsilon_{r}\right]=0$. Therefore,
\[
\E\eta^{(2)\prime}_{r}M_{r}\eta^{(3)}_{r}=\frac{1}{n^{2}_{r}\left(n_{r}-1\right)}\varsigma^{\prime}_{r}\left[J_{r}M_{r}+M_{r}J_{r}-\Diag(J_{r}M_{r})-\Diag(M_{r}J_{r})\right]\varsigma_{r}.
\]
\item Since $J_{r}-I_{r}$ has zero diagonals, Remark \ref{rem:vars}(i) and (\ref{eq:gamma_quadratic}) give
\begin{align*}
n^{2}_{r}\left(n_{r}-1\right)^{2}\E\left(\eta^{(3)\prime}_{r}M_{r}\eta^{(3)}_{r}\right) & =\E\left[\epsilon^{\prime}_{r}(J_{r}-I_{r})\epsilon_{r}\left(\mathbf{1}^{\prime}_{r}M_{r}\mathbf{1}_{r}\right)\epsilon^{\prime}_{r}(J_{r}-I_{r})\epsilon_{r}\right]\\
 & =2\left(\mathbf{1}^{\prime}_{r}M_{r}\mathbf{1}_{r}\right)\tr\left[\Omega_{r}(J_{r}-I_{r})\Omega_{r}(J_{r}-I_{r})\right]\\
 & =2\left(\mathbf{1}^{\prime}_{r}M_{r}\mathbf{1}_{r}\right)\varsigma^{\prime}_{r}\left[(J_{r}-I_{r})\odot(J_{r}-I_{r})\right]\varsigma_{r}.
\end{align*}
Because the elements of $J_{r}-I_{r}$ are either $0$ or $1$, $(J_{r}-I_{r})\odot(J_{r}-I_{r})=J_{r}-I_{r}$. Thus,
\[
\E\left(\eta^{(3)\prime}_{r}M_{r}\eta^{(3)}_{r}\right)=2\frac{\left(\mathbf{1}^{\prime}_{r}M_{r}\mathbf{1}_{r}\right)}{n^{2}_{r}\left(n_{r}-1\right)^{2}}\varsigma^{\prime}_{r}(J_{r}-I_{r})\varsigma_{r}.
\]
\end{enumerate}

\subsection{Calculation for Proposition \ref{prop:VC_convergence}}\label{subsec:Proof-of-Proposition_g(x)}

By (\ref{eq:pIqJcI}) and Lemma \ref{lem:calculation}(i), $M_{r}J_{r}=\mathcal{D}_{r}J_{r}$, where $\mathcal{D}_{r}=\diag^{G_{r}}_{g=1}\left\{ q_{gr}I_{gr}\right\} $. Therefore,
\begin{align*}
M_{r}J_{r}-\Diag(M_{r}J_{r}) & =\mathcal{D}_{r}(J_{r}-I_{r}),\\
J_{r}M_{r}-\Diag(J_{r}M_{r}) & =(J_{r}-I_{r})\mathcal{D}_{r},
\end{align*}
 and $\mathbf{1}^{\prime}_{r}M_{r}\mathbf{1}_{r}=\tr(M_{r}J_{r})=\tr(\mathcal{D}_{r})$. Substituting these identities into (\ref{eq:f(x)}) gives
\begin{align*}
 & (n_{r}-2)^{2}f(M_{r})\\
= & \left[4+(n_{r}-2)^{2}\right]M_{r}+\frac{2\tr(\mathcal{D}_{r})}{(n_{r}-1)^{2}}(J_{r}-I_{r})-\frac{4}{(n_{r}-1)}\left[\mathcal{D}_{r}(J_{r}-I_{r})+(J_{r}-I_{r})\mathcal{D}_{r}\right]\\
= & \left[4+(n_{r}-2)^{2}\right]M_{r}-\frac{4}{n_{r}-1}\left[\left(\mathcal{D}_{r}-\frac{\tr(\mathcal{D}_{r})}{4\left(n_{r}-1\right)}I_{r}\right)(J_{r}-I_{r})+(J_{r}-I_{r})\left(\mathcal{D}_{r}-\frac{\tr(\mathcal{D}_{r})}{4\left(n_{r}-1\right)}I_{r}\right)\right].
\end{align*}

Define
\begin{eqnarray}
a_{r} & = & \frac{(n_{r}-2)^{2}}{\left[4+(n_{r}-2)^{2}\right]},\label{eq:a_r}
\end{eqnarray}
and let $\mathcal{\ddot{D}}_{r}=\mathcal{D}_{r}-\frac{\tr(\mathcal{D}_{r})}{4\left(n_{r}-1\right)}I_{r}$ and 
\begin{equation}
T_{r}=\mathcal{\ddot{D}}_{r}(J_{r}-I_{r})+(J_{r}-I_{r})\mathcal{\ddot{D}}_{r}.\label{eq:T_r}
\end{equation}
Then
\begin{align}
a_{r}f(M_{r}) & =M_{r}-a_{r}\frac{4}{\left(n_{r}-1\right)(n_{r}-2)^{2}}T_{r}.\label{eq:fx_1}
\end{align}

Since $\mathcal{\ddot{D}}_{r}$ is diagonal, $J_{r}\mathcal{\ddot{D}}_{r}J_{r}=\mathbf{1}_{r}\left(\mathbf{1}^{\prime}_{r}\mathcal{\ddot{D}}_{r}\mathbf{1}_{r}\right)\mathbf{1}^{\prime}_{r}=\tr(\mathcal{\ddot{D}}_{r})J_{r}$. Consequently,
\begin{align*}
T_{r}J_{r} & =(n_{r}-1)\mathcal{\ddot{D}}_{r}J_{r}+\left[\tr(\mathcal{\ddot{D}}_{r})I_{r}-\mathcal{\ddot{D}}_{r}\right]J_{r}=\left[(n_{r}-2)\mathcal{\ddot{D}}_{r}+\tr(\mathcal{\ddot{D}}_{r})I_{r}\right]J_{r},\\
\mathbf{1}^{\prime}_{r}T_{r}\mathbf{1}_{r} & =\tr(T_{r}J_{r})=2\left(n_{r}-1\right)\tr(\mathcal{\ddot{D}}_{r}).
\end{align*}
Applying (\ref{eq:f(x)}) to $T_{r}$ gives
\begin{align*}
(n_{r}-2)^{2}f(T_{r}) & =\left[4+(n_{r}-2)^{2}\right]T_{r}+\frac{4\tr(\mathcal{\ddot{D}}_{r})}{(n_{r}-1)}(J_{r}-I_{r})-\frac{4}{n_{r}-1}\left\{ (n_{r}-2)T_{r}+2\tr(\mathcal{\ddot{D}}_{r})(J_{r}-I_{r})\right\} \\
 & =\left[(n_{r}-2)^{2}+\frac{4}{n_{r}-1}\right]T_{r}-\frac{4\tr(\mathcal{\ddot{D}}_{r})}{(n_{r}-1)}(J_{r}-I_{r}),
\end{align*}
and hence
\begin{align*}
a_{r}f(T_{r}) & =\frac{a_{r}}{(n_{r}-2)^{2}}\left[(n_{r}-2)^{2}+\frac{4}{n_{r}-1}\right]T_{r}-\frac{4a_{r}\tr(\mathcal{\ddot{D}}_{r})}{(n_{r}-1)(n_{r}-2)^{2}}(J_{r}-I_{r})
\end{align*}

Define
\begin{equation}
b_{r}=\frac{4}{(n_{r}-1)}\frac{1}{\left[(n_{r}-2)^{2}+\frac{4}{n_{r}-1}\right]}=\frac{4}{(n_{r}-2)^{2}(n_{r}-1)+4}.\label{eq:b_r}
\end{equation}
Then 
\begin{equation}
a_{r}b_{r}f(T_{r})=\frac{a_{r}}{(n_{r}-2)^{2}}\frac{4}{(n_{r}-1)}T_{r}-\frac{4a_{r}b_{r}\tr(\mathcal{\ddot{D}}_{r})}{(n_{r}-1)(n_{r}-2)^{2}}(J_{r}-I_{r}).\label{eq:fx_2}
\end{equation}

The preceding expression for $f(T_{r})$ holds for any diagonal $\mathcal{\ddot{D}}_{r}$. Setting $\mathcal{\ddot{D}}_{r}=\frac{1}{2}I_{r}$ gives $T_{r}=J_{r}-I_{r}$ and hence
\begin{align*}
(n_{r}-2)^{2}f(J_{r}-I_{r}) & =\left[(n_{r}-2)^{2}+\frac{4}{n_{r}-1}\right](J_{r}-I_{r})-\frac{2n_{r}}{n_{r}-1}(J_{r}-I_{r})\\
 & =\left[(n_{r}-2)^{2}-\frac{2(n_{r}-2)}{n_{r}-1}\right](J_{r}-I_{r}).\\
f(J_{r}-I_{r}) & =\frac{n_{r}(n_{r}-3)}{(n_{r}-2)(n_{r}-1)}(J_{r}-I_{r}).
\end{align*}

Moreover,
\[
\tr(\mathcal{\ddot{D}}_{r})=\tr(\mathcal{D}_{r})-\tr(\mathcal{D}_{r})\frac{n_{r}}{4(n_{r}-1)}=\frac{3n_{r}-4}{4(n_{r}-1)}\tr(\mathcal{D}_{r}).
\]
The coefficient $t_{r}$ in (\ref{eq:g_M}) is therefore
\begin{align}
t_{r} & =\frac{(n_{r}-2)(n_{r}-1)}{n_{r}(n_{r}-3)}\frac{4a_{r}b_{r}\tr(\mathcal{\ddot{D}}_{r})}{(n_{r}-1)(n_{r}-2)^{2}}\nonumber \\
 & =\frac{(n_{r}-2)(n_{r}-1)}{n_{r}(n_{r}-3)}\frac{4}{(n_{r}-1)(n_{r}-2)^{2}}\frac{(n_{r}-2)^{2}}{\left[4+(n_{r}-2)^{2}\right]}\frac{4}{(n_{r}-2)^{2}(n_{r}-1)+4}\frac{3n_{r}-4}{4(n_{r}-1)}\tr(\mathcal{D}_{r})\nonumber \\
 & =\frac{4(n_{r}-2)(3n_{r}-4)\tr(\mathcal{D}_{r})}{n_{r}(n_{r}-1)(n_{r}-3)\left[4+(n_{r}-2)^{2}\right]\left[(n_{r}-2)^{2}(n_{r}-1)+4\right]}.\label{eq:t_r}
\end{align}
Then 
\begin{equation}
t_{r}f(J_{r}-I_{r})=\frac{4b_{r}\tr(\mathcal{\ddot{D}}_{r})}{\left[4+(n_{r}-2)^{2}\right]\left(n_{r}-1\right)}(J_{r}-I_{r}).\label{eq:fx_3}
\end{equation}

Combining (\ref{eq:fx_1}), (\ref{eq:fx_2}), and (\ref{eq:fx_3}) gives $a_{r}f(M_{r})+a_{r}b_{r}f(T_{r})+t_{r}f(J_{r}-I_{r})=M_{r}$. Since $f$ is linear and the definition of $g(M_{r})$ in (\ref{eq:g_M}) is equivalently $g(M_{r})=a_{r}M_{r}+a_{r}b_{r}T_{r}+t_{r}(J_{r}-I_{r})$, it follows that $f(g(M_{r}))=M_{r}$.

\section{Review of Tests of Random Assignment to Peer Groups}\label{sec:Review of Tests}

This section reviews existing tests of random assignment to peer groups. We first summarize the main procedures and their key features, and then provide a detailed theoretical review of each test.

Testing random peer-group assignment within urns has received considerable attention \citep{sacerdote_peer_2001,guryan_peer_2009,wang_peer_2010,stevenson_tests_2015,jochmans_testing_2023,caeyers_exclusion_2024}. This literature tests the implication of random assignment through the null hypothesis $H_{0}:\lambda_{0}=0$ in the model
\begin{equation}
Y_{igr}=\alpha_{r}+\lambda_{0}\bar{Y}_{(-i)gr}+\epsilon_{igr},\label{eq:base}
\end{equation}
where $Y_{igr}$ is a predetermined characteristic of individual $i$ in peer group $g$ of urn $r$, $\alpha_{r}$ is an urn fixed effect, $\bar{Y}_{(-i)gr}$ is the average of $Y$ among $i$'s peers, and $\epsilon_{igr}$ is an idiosyncratic term. Under random assignment within urns, predetermined characteristics are uncorrelated among peers conditional on urn fixed effects, implying $\lambda_{0}=0$.

Several procedures have been proposed for testing this implication, each with different strengths and limitations. \citet{sacerdote_peer_2001} estimates (\ref{eq:base}) by OLS and tests $H_{0}:\lambda_{0}=0$ using the conventional $t$ statistic. The OLS estimator is biased, however, and the test can therefore have distorted size. This bias was first noted by \citet{guryan_peer_2009} and subsequently studied by \citet{wang_peer_2010}, \citet{stevenson_tests_2015}, \citet{jochmans_testing_2023}, and \citet{caeyers_exclusion_2024}. \citet{guryan_peer_2009} retain Sacerdote's OLS $t$ statistic but add the urn-level leave-out mean of $Y$ as a control. This GKN test is widely used in empirical work (see, e.g., \citealp{amodio_input_2018,feld_understanding_2016,golsteyn_impact_2021,hampole_peer_2026,radbruch_interview_2025}). Under the conditions stated below, it has asymptotically correct size. It nevertheless requires variation in urn size to avoid a perfect-fit degeneracy and can have low power when such variation is limited \citep{stevenson_tests_2015,jochmans_testing_2023}.

Other procedures take different approaches. The $F$ test of \citet{wang_peer_2010} uses the joint significance of peer-group dummies after controlling for urn dummies. The test has power only against $\lambda_{0}>0$ \citep{stevenson_tests_2015}. \citet{stevenson_tests_2015} repeatedly splits each peer group, estimates a split-sample regression, and averages the estimates. We establish asymptotically correct size and consistency against fixed alternatives for the corresponding single-split statistic. Averaging reduces split-specific variation but increases computational cost. \citet{caeyers_exclusion_2024} instead estimate $\lambda_{0}$ from the variance-covariance structure under independent and homoskedastic innovations. The test of \citet{jochmans_testing_2023}, by contrast, permits heteroskedastic innovations and has formally established asymptotically correct size and consistency when the number of urns diverges and urn sizes remain uniformly bounded. It has been adopted in recent empirical applications, including \citet{carneiro_effect_2025} and \citet{radbruch_interview_2025}.

\subsection{\citet{sacerdote_peer_2001}}

In studying peer effects among college roommates, \citet{sacerdote_peer_2001} uses data from Dartmouth College, where first-year students are randomly assigned to dormitories and roommates within blocks defined by gender and room-matching characteristics. Table II reports OLS regressions corresponding to (\ref{eq:scalar}), where $Y_{igr}$ is a predetermined student characteristic and $\alpha_{r}$ is a block fixed effect. Statistical insignificance of $\lambda$ is interpreted as evidence of random room assignment conditional on blocks. However, the OLS estimator is generally biased, as discussed by \citet{guryan_peer_2009}, \citet{stevenson_tests_2015}, \citet{jochmans_testing_2023}, and \citet{caeyers_exclusion_2024}. The result below characterizes its asymptotic bias and formalizes the point in \citet{jochmans_testing_2023} that the bias arises from the within-urn demeaning used to absorb urn fixed effects.

Let $\hat{\lambda}^{ols}$ be the OLS estimator of $\lambda$ in (\ref{eq:scalar}).
\begin{prop}
\label{prop:sacerdote} Suppose Assumptions \ref{assu:lambda} to \ref{assu:epsilon} hold. When $\lambda_{0}=0$, as $n\rightarrow\infty$,
\begin{equation}
\hat{\lambda}^{ols}-\frac{\sum^{R}_{r=1}\tr\left[I^{\ast}_{r}W_{r}\Omega_{r}\right]}{\sum^{R}_{r=1}\tr\left[W^{\prime}_{r}I^{\ast}_{r}W_{r}\Omega_{r}\right]}\xrightarrow{p}0.\label{eq:sacerdote_general}
\end{equation}
For $W$ defined in (\ref{eq:W}), (\ref{eq:sacerdote_general}) is equivalent to
\begin{equation}
\hat{\lambda}^{ols}+\frac{\sum^{R}_{r=1}\frac{1}{n_{r}}\tr(\Omega_{r})}{\sum^{R}_{r=1}\left[\sum^{G_{r}}_{g=1}\left(\frac{1}{m_{gr}-1}-\frac{1}{n_{r}}\right)\tr(\Omega_{gr})\right]}\xrightarrow{p}0.\label{eq:sacerdote_wbar}
\end{equation}
Under homoskedasticity, (\ref{eq:sacerdote_wbar}) becomes
\[
\hat{\lambda}^{ols}+\frac{1}{\frac{1}{R}\sum^{R}_{r=1}\sum^{G_{r}}_{g=1}\frac{m_{gr}}{m_{gr}-1}-1}\xrightarrow{p}0.
\]
If, further, $m_{gr}=m$ for all $g$ and $r$, so that the average urn size is $\bar{n}=n/R=mG/R$, then
\begin{equation}
\hat{\lambda}^{ols}+\frac{m-1}{\bar{n}-(m-1)}\xrightarrow{p}0.\label{eq:sacerdote_equalsize}
\end{equation}
\end{prop}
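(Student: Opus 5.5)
Under $H_0:\lambda_0=0$ the model (\ref{eq:scalar}) reduces to $Y_r=\alpha_r\mathbf{1}_r+\epsilon_r$. The plan is to write the OLS estimator in Frisch--Waugh form after within-urn demeaning and apply Lemma \ref{lem:convergence_ip}(i) to the numerator and the denominator separately. Concretely,
\[
\hat{\lambda}^{ols}=\frac{\sum_r Y_r'W_r'I^{\ast}_rY_r}{\sum_r Y_r'W_r'I^{\ast}_rW_rY_r}.
\]
Since $I^{\ast}_r\mathbf{1}_r=0$ and, by Lemma \ref{lem:calculation} and Remark \ref{rem:calculation}, $W_r$ commutes with $I^{\ast}_r$ and $W_r\mathbf{1}_r=\mathbf{1}_r$, the urn effects drop out. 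We obtain
\[
\hat{\lambda}^{ols}=\frac{\epsilon'W'I^{\ast}\epsilon/n}{\epsilon'W'I^{\ast}W\epsilon/n}.
\]
Both $W'I^{\ast}$ and $W'I^{\ast}W$ are UBRC by Remark \ref{rem:rub1}(ii) and Lemma \ref{lem:rub}. Lemma \ref{lem:convergence_ip}(i) then gives that each quadratic form minus its expectation is $o_p(1)$, with expectations $\frac1n\sum_r\tr(I^{\ast}_rW_r\Omega_r)$ and $\frac1n\sum_r\tr(W_r'I^{\ast}_rW_r\Omega_r)$, which are uniformly bounded.

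The main obstacle is that we have no assumption guaranteeing these expectations converge. I will therefore avoid limits and bound the denominator away from zero instead. Then $\text{ratio}(\text{sample})-\text{ratio}(\text{expectations})=o_p(1)$ follows by an elementary argument, since $a_n/b_n-\alpha_n/\beta_n\to_p 0$ whenever $a_n-\alpha_n\to_p0$, $b_n-\beta_n\to_p0$, $|\alpha_n|$ is bounded, and $\beta_n\ge c>0$. This proves (\ref{eq:sacerdote_general}).

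To compute the traces, I use (\ref{eq:W}) and Lemma \ref{lem:calculation}(ii) with $q_r=1$:
\[
I^{\ast}_rW_r=\diag_g\Bigl\{-\tfrac{m_{gr}}{m_{gr}-1}I^{\ast}_{gr}\Bigr\}+I^{\ast}_r.
\]
Since $\Omega_r$ is diagonal, only the diagonal entries matter. These are $-\frac{m_{gr}}{m_{gr}-1}\bigl(1-\frac1{m_{gr}}\bigr)+1-\frac1{n_r}=-\frac1{n_r}$, so
\[
\tr(I^{\ast}_rW_r\Omega_r)=-\tfrac1{n_r}\tr(\Omega_r).
\]
For the denominator, $W_r'I^{\ast}_rW_r=I^{\ast}_rW_r^2$ by symmetry, commutation and idempotence. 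By (\ref{eq:prod_s}), $W_r^2$ has $p_{gr}=(m_{gr}-1)^{-2}$ and $q=1$. Lemma \ref{lem:calculation}(ii) then gives diagonal entries
\[
\Bigl(\tfrac{1}{(m_{gr}-1)^2}-1\Bigr)\tfrac{m_{gr}-1}{m_{gr}}+1-\tfrac1{n_r}=\tfrac{1}{m_{gr}-1}-\tfrac1{n_r},
\]
which yields (\ref{eq:sacerdote_wbar}). These entries are at least $\frac1{m_{gr}-1}-\frac1{2m_{gr}}\ge\frac1{2(m_{gr}-1)}$ because $n_r\ge 2m_{gr}$ is false in general. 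So I instead use $n_r\ge m_{gr}+2$ (since $G_r\ge2$ and $m\ge2$), which gives at least $\frac{1}{m_{gr}-1}-\frac{1}{m_{gr}+2}=\frac{3}{(m_{gr}-1)(m_{gr}+2)}$. Summing with $\sigma^2\ge c_\sigma$ and using $n/G\le C_m$ (so that most groups have bounded size) should give a denominator $\ge cG/n\ge c/C_m$; the Jensen/counting step here is where care is needed.

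Under homoskedasticity $\sigma^2$ cancels. With $\tr(\Omega_{gr})\propto m_{gr}$, the numerator is $\sum_r 1=R$ and the denominator is $\sum_r\bigl[\sum_g\frac{m_{gr}}{m_{gr}-1}-1\bigr]$; dividing by $R$ gives the third display. With equal sizes, $\sum_r\sum_g\frac{m}{m-1}=\frac{G m}{m-1}$, and so
\[
\frac{1}{Gm/(R(m-1))-1}=\frac{m-1}{\bar n-(m-1)},
\]
which is (\ref{eq:sacerdote_equalsize}).
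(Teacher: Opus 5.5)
Your proposal follows the paper's proof almost step for step: Frisch--Waugh, removing $\alpha_r$ via $W_r\mathbf{1}_r=\mathbf{1}_r$ and $I^{\ast}_r\mathbf{1}_r=0$, Lemma \ref{lem:convergence_ip}(i) for the numerator and denominator, a ratio argument with the denominator bounded away from zero, and the same diagonal entries $-1/n_r$ and $\frac{1}{m_{gr}-1}-\frac{1}{n_r}$. The one step you leave unproved is the lower bound on the denominator, and the ratio argument depends on it. As written (``should give'', ``care is needed''), it is only asserted.

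The difficulty comes from bounding the diagonal entry one individual at a time, which forces a worst case over $n_r$. The paper instead sums the diagonal exactly within each urn, where the $-1/n_r$ terms add up to exactly $-1$:
\[
\tr(W_r'I^{\ast}_rW_r)=\sum_{g=1}^{G_r} m_{gr}\Bigl(\frac{1}{m_{gr}-1}-\frac{1}{n_r}\Bigr)=\sum_{g=1}^{G_r}\frac{m_{gr}}{m_{gr}-1}-1\geqslant G_r-1\geqslant \frac{G_r}{2}.
\]
The diagonal entries are nonnegative and $\sigma^2_{igr}\geqslant c_\sigma$. Hence $\frac{1}{n}\tr(W'I^{\ast}W\Omega)\geqslant c_\sigma(G-R)/n\geqslant c_\sigma/(2C_m)$ by Assumption \ref{assu:size}.

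Your own route also closes. Each group contributes at least $\frac{3m_{gr}}{(m_{gr}-1)(m_{gr}+2)}\geqslant\frac{3}{m_{gr}+2}$. The arithmetic--harmonic mean inequality gives $\sum_{r,g}\frac{1}{m_{gr}+2}\geqslant G^2/(n+2G)$. The normalized denominator is therefore at least $3c_\sigma(G/n)^2/(1+2G/n)\geqslant 3c_\sigma/(2C_m^2)$, using $1/C_m\leqslant G/n\leqslant 1/2$.

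With either fix, the rest of your argument goes through, including the homoskedastic and equal-size specializations.
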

The proof is provided at the end of this section.

Equation (\ref{eq:sacerdote_general}) shows that the asymptotic OLS bias under the null is induced by the within-urn demeaning used to absorb urn fixed effects, a point first noted by \citet{jochmans_testing_2023}. To see this, suppose instead that the true model in (\ref{eq:SAR_urn}) has no urn fixed effects, so that $\alpha_{r}=0$. Under $\lambda_{0}=0$, one regresses $Y$ directly on $WY$ without demeaning. The argument used below to establish (\ref{eq:sacerdote_general}), with $I^{\ast}_{r}$ omitted, gives
\[
\hat{\lambda}^{ols}-\frac{\sum^{R}_{r=1}\tr\left[W_{r}\Omega_{r}\right]}{\sum^{R}_{r=1}\tr\left[W^{\prime}_{r}W_{r}\Omega_{r}\right]}\xrightarrow{p}0.
\]
Since $\diag(W_{r})=0$ and $\Omega_{r}$ is diagonal, we have $\tr(W_{r}\Omega_{r})=0$, and hence $\hat{\lambda}^{ols}\xrightarrow{p}0$. Therefore, without urn fixed effects, the OLS estimator is consistent under the null.

Equation (\ref{eq:sacerdote_equalsize}) agrees with the result in \citet{jochmans_testing_2023}.\footnote{His $m$ denotes the number of peers, whereas our $m$ denotes group size and is therefore one larger.} When $R=1$ and $G\rightarrow\infty$, it corresponds to equation (4.2) of \citet{caeyers_exclusion_2024}.
\begin{proof}
By the Frisch--Waugh--Lovell theorem, the OLS estimator of $\lambda$ in (\ref{eq:SAR_urn}) is equivalent to the OLS estimator based on (\ref{eq:SAR_urn_star}). Thus, 
\[
\hat{\lambda}^{ols}=\frac{\sum_{r}Y^{\prime}_{r}I^{\ast}_{r}W_{r}Y_{r}}{\sum_{r}Y^{\prime}_{r}W^{\prime}_{r}I^{\ast}_{r}W_{r}Y_{r}}.
\]
When $\lambda_{0}=0$, $Y_{r}=\alpha_{r}\mathbf{1}_{r}+\epsilon_{r}$ and $W_{r}Y_{r}=\alpha_{r}\mathbf{1}_{r}+W_{r}\epsilon_{r}$, since $W_{r}\mathbf{1}_{r}=\mathbf{1}_{r}$. Thus
\[
\hat{\lambda}^{ols}=\frac{\sum_{r}\epsilon^{\prime}_{r}I^{\ast}_{r}W_{r}\epsilon_{r}}{\sum_{r}\epsilon^{\prime}_{r}W^{\prime}_{r}I^{\ast}_{r}W_{r}\epsilon_{r}}=\frac{\frac{1}{n}\epsilon^{\prime}I^{\ast}W\epsilon}{\frac{1}{n}\epsilon^{\prime}W^{\prime}I^{\ast}W\epsilon}.
\]
By Remark \ref{rem:rub1}(ii) and Lemma \ref{lem:rub}, $I^{\ast}W$ and $W^{\prime}I^{\ast}W$ are UBRC. Lemma \ref{lem:convergence_ip} therefore gives $\frac{1}{n}[\epsilon^{\prime}I^{\ast}W\epsilon-\tr(I^{\ast}W\Omega)]\xrightarrow{p}0$ and $\frac{1}{n}\left[\epsilon^{\prime}W^{\prime}I^{\ast}W\epsilon-\tr\left(W^{\prime}I^{\ast}W\Omega\right)\right]\xrightarrow{p}0$ as $n\rightarrow\infty$. As shown below, the diagonal elements of $W^{\prime}_{r}I^{\ast}_{r}W_{r}$ are $\frac{1}{m_{gr}-1}-\frac{1}{n_{r}}$. Hence $\tr(W^{\prime}_{r}I^{\ast}_{r}W_{r})=\sum^{G_{r}}_{g=1}\left(\frac{1}{m_{gr}-1}-\frac{1}{n_{r}}\right)m_{gr}\geqslant G_{r}-1$. Assumptions \ref{assu:size} and \ref{assu:epsilon} therefore imply
\begin{align*}
\frac{1}{n}\tr\left(W^{\prime}I^{\ast}W\Omega\right) & \geqslant\frac{c_{\sigma}}{n}\sum^{R}_{r=1}\tr(W^{\prime}_{r}I^{\ast}_{r}W_{r})\geqslant\frac{c_{\sigma}}{n}\sum^{R}_{r=1}(G_{r}-1)=\frac{c_{\sigma}(G-R)}{n}\geqslant\frac{c_{\sigma}}{2C_{m}}>0.
\end{align*}
Since $\tr(I^{\ast}W\Omega)/n$ is bounded, equation (\ref{eq:sacerdote_general}) follows.

Using (\ref{eq:W}) and Lemma \ref{lem:calculation}(ii), 
\[
I^{\ast}_{r}W_{r}=\diag\{\left[\left(-\frac{1}{m_{gr}-1}\right)-1\right]I^{\ast}_{gr}\}+I^{\ast}_{r},
\]
whose diagonal elements are
\[
-\frac{m_{gr}}{m_{gr}-1}(1-\frac{1}{m_{gr}})+1-\frac{1}{n_{r}}=-\frac{1}{n_{r}},
\]
and
\[
W^{\prime}_{r}I^{\ast}_{r}W_{r}=I^{\ast}_{r}W^{2}_{r}=\diag\{\left[\left(-\frac{1}{m_{gr}-1}\right)^{2}-1\right]I^{\ast}_{gr}\}+I^{\ast}_{r}
\]
has diagonal elements
\begin{align*}
\left[\left(\frac{1}{m_{gr}-1}\right)^{2}-1\right](1-\frac{1}{m_{gr}})+1-\frac{1}{n_{r}} & =\frac{m_{gr}}{m_{gr}-1}(\frac{1}{m_{gr}-1}-1)\frac{m_{gr}-1}{m_{gr}}+1-\frac{1}{n_{r}}=\frac{1}{m_{gr}-1}-\frac{1}{n_{r}}.
\end{align*}
Therefore, (\ref{eq:sacerdote_wbar}) holds. Under homoskedasticity across groups and urns, 
\[
\frac{\sum^{R}_{r=1}\frac{1}{n_{r}}\tr(\Omega_{r})}{\sum^{R}_{r=1}\left[\sum^{G_{r}}_{g=1}\left(\frac{1}{m_{gr}-1}-\frac{1}{n_{r}}\right)\tr(\Omega_{gr})\right]}=\frac{\sum^{R}_{r=1}1}{\sum^{R}_{r=1}\sum^{G_{r}}_{g=1}\left(\frac{1}{m_{gr}-1}-\frac{1}{n_{r}}\right)m_{gr}}=\frac{R}{\sum^{R}_{r=1}\sum^{G_{r}}_{g=1}\left(\frac{m_{gr}}{m_{gr}-1}\right)-R}.
\]
Thus, $\hat{\lambda}^{ols}+\frac{1}{\frac{1}{R}\sum^{R}_{r=1}\sum^{G_{r}}_{g=1}\frac{m_{gr}}{m_{gr}-1}-1}\xrightarrow{p}0.$

Finally, suppose $m_{gr}=m$ for all $g$, $r$. Since $G=\sum^{R}_{r=1}G_{r}$, the total sample size is $n=mG$, and the average urn size is $\bar{n}=n/R=mG/R$. Therefore,
\[
\frac{R}{\sum^{R}_{r=1}\sum^{G_{r}}_{g=1}\left(\frac{m_{gr}}{m_{gr}-1}\right)-R}=\frac{1}{\frac{1}{R}G\frac{m}{m-1}-1}=\frac{m-1}{Gm/R-(m-1)}=\frac{m-1}{\bar{n}-(m-1)}.
\]
Hence $\hat{\lambda}^{ols}+\frac{m-1}{\bar{n}-(m-1)}\xrightarrow{p}0$.
\end{proof}

\subsection{\citet{guryan_peer_2009}}

\citet{guryan_peer_2009} note the bias in the test used by \citet{sacerdote_peer_2001} and recommend correcting it by adding the urn-level leave-out mean of the outcome as a control. Subsequent work shows that the test of $\lambda^{GKN}=0$ can have low power when urn-size variation is limited \citep{stevenson_tests_2015,jochmans_testing_2023}. The proposition below establishes that the test has asymptotically correct size under the null and provides a sufficient condition under which the power of the usual two-sided test converges to one against fixed alternatives.

\citet{guryan_peer_2009} test random assignment within urns using the $t$ test of $H_{0}:\lambda^{GKN}=0$ in the regression
\begin{equation}
y_{igr}=\alpha_{r}+\lambda^{GKN}\bar{y}_{(-i)gr}+\phi\bar{y}_{(-i)r}+u_{igr},\label{eq:guryan}
\end{equation}
where $\bar{y}_{(-i)r}$ is the average of $y$ in urn $r$, excluding individual $i$. The matrix form of (\ref{eq:guryan}) is
\begin{align*}
Y_{r} & =\alpha_{r}\mathbf{1}_{r}+\lambda^{GKN}W_{r}Y_{r}+\phi\frac{J_{r}-I_{r}}{n_{r}-1}Y_{r}+u_{r},
\end{align*}
where all variables are defined as in Section \ref{sec:Motivating-Example}. By the Frisch--Waugh--Lovell theorem, the same OLS estimator of $\lambda^{GKN}$ is obtained after partialling out the urn fixed effects:
\begin{align}
I^{\ast}_{r}Y_{r} & =\lambda^{GKN}I^{\ast}_{r}W_{r}Y_{r}-\frac{\phi}{n_{r}-1}I^{\ast}_{r}Y_{r}+I^{\ast}_{r}u_{r}.\label{eq:gkn_demeaned}
\end{align}
Equation (\ref{eq:gkn_demeaned}) shows why urn-size variation is essential. If $n_{r}=\bar{n}$ for every $r$, the second regressor equals $-I^{\ast}_{r}Y_{r}/(\bar{n}-1)$ and is therefore proportional to the dependent variable. If the regressor matrix has full column rank, the OLS estimates are $\hat{\lambda}^{GKN}=0$ and $\hat{\phi}=-(\bar{n}-1)$, with standard errors equal to zero. Thus the associated $t$ statistic is not defined. Variation in $n_{r}$ is therefore necessary to avoid this perfect-fit degeneracy. The following proposition formalizes this result and gives the size and power conclusions when urn sizes vary.
\begin{prop}
\label{prop:GKN} Suppose Assumptions \ref{assu:lambda}--\ref{assu:epsilon} hold and $n_{r}$ is uniformly bounded. Let $t^{GKN}_{\lambda}=\hat{\lambda}^{GKN}/\hat{se}(\hat{\lambda}^{GKN})$, where $\hat{\lambda}^{GKN}$ is the OLS estimator of $\lambda^{GKN}$ in (\ref{eq:gkn_demeaned}) and $\hat{se}(\hat{\lambda}^{GKN})$ is its urn-clustered standard error. Then:
\begin{enumerate}
\item If $n_{r}=\bar{n}$ for every $r$ and the regressor matrix in (\ref{eq:gkn_demeaned}) has full column rank, $\hat{\lambda}^{GKN}=\hat{se}(\hat{\lambda}^{GKN})=0$, and $t^{GKN}_{\lambda}$ is not defined.
\item Suppose $n_{r}$ varies across $r$. Let $\tilde{X}^{\ast}_{r}=I^{\ast}_{r}[W_{r}Y_{r},-\frac{1}{n_{r}-1}Y_{r}]$, and suppose $\varkappa_{X}=\lim_{R\rightarrow\infty}\frac{1}{R}\sum^{R}_{r=1}\E\left(\tilde{X}^{\ast\prime}_{r}\tilde{X}^{\ast}_{r}\right)$ exists, is finite and nonsingular. Suppose the variance limit $V_{\lambda}$ defined in the proof exists and satisfies $0<V_{\lambda}<\infty$, and suppose $R\left[\hat{se}(\hat{\lambda}^{GKN})\right]^{2}\xrightarrow{p}V_{\lambda}$. Then:
\begin{enumerate}
\item Under $H_{0}:\lambda_{0}=0$, $t^{GKN}_{\lambda}\xrightarrow{d}N(0,1)$ as $R\rightarrow\infty$.
\item Under $H_{1}:\lambda_{0}\neq0$, let $a_{r}=Y^{\prime}_{r}I^{\ast}_{r}Y_{r}$, $b_{r}=Y^{\prime}_{r}W^{\prime}_{r}I^{\ast}_{r}Y_{r}$ and define
\begin{equation}
\Gamma=\frac{1}{R^{2}}\sum^{R}_{r=1}\sum^{R}_{s=1}\left[\left(\frac{1}{n_{r}-1}-\frac{1}{n_{s}-1}\right)\frac{\E a_{r}}{(n_{r}-1)}\E b_{s}\right].\label{eq:power_gkn}
\end{equation}
 If $\liminf_{R\rightarrow\infty}|\Gamma|>0$, then $|t^{GKN}_{\lambda}|\xrightarrow{p}\infty$, and the power of the usual two-sided test converges to one. If $\Gamma\rightarrow0$, then $\hat{\lambda}^{GKN}\xrightarrow{p}0$ even when $\lambda_{0}\neq0$.
\end{enumerate}
\end{enumerate}
\end{prop}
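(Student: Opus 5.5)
Part 1 is purely algebraic and follows from the structure of (\ref{eq:gkn_demeaned}). When $n_{r}=\bar{n}$ for all $r$, the second regressor stacked across urns is $-I^{\ast}Y/(\bar{n}-1)$, exactly proportional to the dependent variable $I^{\ast}Y$. Setting $(\lambda^{GKN},\phi)=(0,-(\bar{n}-1))$ therefore fits the regression perfectly, and full column rank makes these coefficients the unique OLS solution. All residuals are then zero, so the urn-clustered sandwich $(\tilde X^{\ast\prime}\tilde X^{\ast})^{-1}\sum_r\tilde X^{\ast\prime}_r\hat u_r\hat u_r'\tilde X^{\ast}_r(\tilde X^{\ast\prime}\tilde X^{\ast})^{-1}$ vanishes, and $t^{GKN}_{\lambda}=0/0$ is undefined.

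For part 2, I would write the estimator urn by urn. Let $\tilde X^{\ast}=[I^{\ast}WY,\,c\odot I^{\ast}Y]$ with $c$ carrying $-1/(n_r-1)$. Then $(\hat\lambda^{GKN},\hat\phi)'=(R^{-1}\sum_r\tilde X^{\ast\prime}_r\tilde X^{\ast}_r)^{-1}R^{-1}\sum_r\tilde X^{\ast\prime}_rI^{\ast}_rY_r$. Since urn sizes are bounded and the $\epsilon_r$ are independent across urns with $4+c_\epsilon$ moments, each summand is an independent, uniformly $L^{1+\delta}$-bounded block of quadratic forms in $\epsilon_r$. A weak LLN for independent non-identically distributed arrays then gives $R^{-1}\sum_r\tilde X^{\ast\prime}_r\tilde X^{\ast}_r\xrightarrow{p}\varkappa_X$, and similarly for the cross term.

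Under $H_0$ the target is $(\lambda,\phi)=(0,0)$, since then $I^{\ast}_rY_r=I^{\ast}_r\epsilon_r$. The required orthogonality is $\E[\tilde X^{\ast\prime}_rI^{\ast}_r\epsilon_r]=0$ in the first component. This needs $\E\,\epsilon_r'W_r'I^{\ast}_r\epsilon_r=\tr(I^{\ast}_rW_r\Omega_r)=-\tr(\Omega_r)/n_r$, which is not zero, so the population regression coefficient must be computed jointly with $\phi$. Because $\phi$ absorbs exactly terms of this type, I would verify via Frisch--Waugh that the population coefficient on the first regressor is $0$. Concretely, let $e_r=\E\epsilon_r'I^{\ast}_r\epsilon_r$, $f_r=\tr(I^{\ast}_rW_r\Omega_r)=-\tr(\Omega_r)/n_r$ and $s_r=\tr(W_r'I^{\ast}_rW_r\Omega_r)$, and solve the $2\times2$ population normal equations. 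The identity $f_r=-e_r/(n_r-1)$ follows from the diagonal computations already in the proof of Proposition~\ref{prop:sacerdote}, since $e_r=\tr(\Omega_r)(n_r-1)/n_r$. This makes $(0,0)$ solve the equations. The $\sqrt{R}$ CLT for $R^{-1/2}\sum_r\tilde X^{\ast\prime}_r\hat u_r$ follows from Lyapunov across independent urns with bounded $2+\delta$ moments. Defining $V_\lambda$ as the $(1,1)$ entry of the resulting sandwich, the assumed $R\,\hat{se}^2\xrightarrow{p}V_\lambda$ and Slutsky give (a).

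For (b), the same LLN gives $\hat\lambda^{GKN}\xrightarrow{p}$ the first entry of $\varkappa_X^{-1}\lim R^{-1}\sum_r\E\tilde X^{\ast\prime}_rI^{\ast}_rY_r$. Its numerator by Cramer's rule is $R^{-2}\sum_{r,s}[\E b_r\,\E a_s/(n_s-1)^2-\E a_r/(n_r-1)\cdot\E b_s/(n_s-1)]$. Symmetrizing the indices reproduces $\Gamma$ in (\ref{eq:power_gkn}) up to sign, while the determinant is bounded away from $0$ by nonsingularity of $\varkappa_X$. Hence $\liminf|\Gamma|>0$ gives a nonzero probability limit, and $\hat{se}=O_p(R^{-1/2})$ yields $|t|\xrightarrow{p}\infty$; if $\Gamma\to0$ the limit is $0$.

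The main obstacle is this bookkeeping. It means verifying the null orthogonality through the exact trace identity, and obtaining $\Gamma$ precisely, not just a sufficient condition, from Cramer's rule. The LLN and CLT steps are routine given bounded urn sizes.
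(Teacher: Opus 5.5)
The gap is in your null-case argument for 2(a). The identity $f_r=-e_r/(n_r-1)$ is the right one; it is the paper's $\E a_r/(n_r-1)=-\E b_r$. But it does not make $(0,0)$ solve the population normal equations, as you claim.

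Under $H_0$ the right-hand side of those equations is $\sum_r\E\tilde{X}^{\ast\prime}_rY^{\ast}_r=\sum_r(\E b_r,-\E a_r/(n_r-1))'=\sum_r(f_r,f_r)'$. Since $f_r=-\tr(\Omega_r)/n_r\leqslant-c_\sigma$, this vector is nonzero and of exact order $R$. The second regressor is a rescaled copy of the dependent variable, so OLS does not target the structural value $\phi=0$. Your identity instead gives the pseudo-true value $\bar\psi=(0,\bar\phi)'$, with $\bar\phi=-\sum_rf_r/\sum_r\{f_r/(n_r-1)\}$. Under homoskedasticity this is minus the harmonic mean of the $n_r-1$, and it equals $-(\bar n-1)$ for equal urn sizes, consistent with part 1.

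This is not cosmetic. Centered at $(0,0)$, the summands $\tilde X^{\ast\prime}_rY^\ast_r$ have means $(f_r,f_r)'$, so $R^{-1/2}\sum_r\tilde X^{\ast\prime}_rY^\ast_r$ diverges like $\sqrt R$ and your Lyapunov/sandwich step fails. Your $\hat u_r$ notation does not rescue it: with OLS residuals the sum is identically zero by the normal equations. The CLT must use population residuals at the correct centering.

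The repair is what the paper does. Work with $\xi_r=\tilde X^{\ast\prime}_rY^\ast_r-\tilde X^{\ast\prime}_r\tilde X^\ast_r\bar\psi$. These satisfy $\sum_r\E\xi_r=0$, but their individual means $f_r(1+\bar\phi/(n_r-1))(1,1)'$ are nonzero when urn sizes vary. So apply Lyapunov to $\xi_r-\E\xi_r$, using that $\bar\psi$ is uniformly bounded, and conclude via $\sqrt R\hat\lambda=\sqrt R(\hat\lambda-\bar\lambda)$ because $\bar\lambda=0$.

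The cleanest proof of $\bar\lambda=0$ reuses your 2(b) computation. Your Cramer numerator equals $\Gamma$ exactly, not just up to sign (swap $r,s$ in its first term). Substituting $\E a_r/(n_r-1)=-\E b_r$ gives $\Gamma=-R^{-2}\sum_{r,s}(\frac{1}{n_r-1}-\frac{1}{n_s-1})\E b_r\E b_s=0$ by antisymmetry.

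Your 2(b) argument is otherwise the paper's, with one caveat. No limit of $R^{-1}\sum_r\E\tilde X^{\ast\prime}_rY^\ast_r$ is assumed, so you cannot speak of a ``nonzero probability limit''. Argue instead with the $R$-dependent $\bar\psi$, $\hat\psi-\bar\psi\xrightarrow{p}0$, and $|\bar\lambda|$ bounded away from zero.
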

Part (i) shows that variation in urn size is necessary for the GKN test to be well defined. Part (ii)(a) establishes asymptotically correct size under the null. Part (ii)(b) gives $\liminf_{R\rightarrow\infty}|\Gamma|>0$ as a sufficient condition for consistency against fixed alternatives. Variation in $n_{r}$ is necessary for this condition, because $\Gamma=0$ when $n_{r}$ is constant, but it is not sufficient.\footnote{To interpret the condition, let $d_{r}=1/(n_{r}-1)$. Antisymmetrizing the double sum gives 
\[
\Gamma=\frac{1}{2R^{2}}\sum^{R}_{r=1}\sum^{R}_{s=1}(d_{r}-d_{s})\left[d_{r}\E a_{r}\E b_{s}-d_{s}\E a_{s}\E b_{r}\right].
\]
Thus, the sufficient condition for consistency in part (ii)(b) requires this weighted antisymmetric sum to remain bounded away from zero. Under $H_{0}:\lambda_{0}=0$, $d_{r}\E a_{r}=-\E b_{r}$ for every $r$, so each bracket vanishes and $\Gamma=0$. Under fixed alternatives, variation in urn size, group sizes, and $\Omega_{r}$ may make the sum nonzero, but variation in these quantities alone does not guarantee that $\Gamma$ is bounded away from zero.}
\begin{proof}
Part (i) follows from the discussion preceding the proposition. For notational simplicity, omit the superscript $GKN$ throughout the proof. Let $\tilde{X}^{\ast}_{r}=I^{\ast}_{r}[W_{r}Y_{r},-\frac{1}{n_{r}-1}Y_{r}]$ and recall that $Y^{\ast}_{r}=I^{\ast}_{r}Y_{r}$. The OLS estimator of (\ref{eq:gkn_demeaned}) is
\[
\hat{\psi}=\left(\begin{array}{c}
\hat{\lambda}\\
\hat{\phi}
\end{array}\right)=\left(\sum^{R}_{r=1}\tilde{X}^{\ast\prime}_{r}\tilde{X}^{\ast}_{r}\right)^{-1}\sum^{R}_{r=1}\left(\tilde{X}^{\ast\prime}_{r}Y^{\ast}_{r}\right).
\]
Define the pseudo-true value
\[
\bar{\psi}=\left(\begin{array}{c}
\bar{\lambda}\\
\bar{\phi}
\end{array}\right)=\left[\sum^{R}_{r=1}\E\left(\tilde{X}^{\ast\prime}_{r}\tilde{X}^{\ast}_{r}\right)\right]^{-1}\left[\sum^{R}_{r=1}\E\left(\tilde{X}^{\ast\prime}_{r}Y^{\ast}_{r}\right)\right].
\]
Its dependence on $R$ is suppressed in the notation.

We first show that $\hat{\psi}-\bar{\psi}\xrightarrow{p}0$. Define $a_{r}=Y^{\prime}_{r}I^{\ast}_{r}Y_{r}$, $b_{r}=Y^{\prime}_{r}W^{\prime}_{r}I^{\ast}_{r}Y_{r}$, and $c_{r}=Y^{\prime}_{r}W^{\prime}_{r}I^{\ast}_{r}W_{r}Y_{r}$. Then
\begin{align*}
\tilde{X}^{\ast\prime}_{r}\tilde{X}^{\ast}_{r} & =\left(\begin{array}{cc}
Y^{\prime}_{r}W^{\prime}_{r}I^{\ast}_{r}W_{r}Y_{r} & -\frac{1}{n_{r}-1}Y^{\prime}_{r}W^{\prime}_{r}I^{\ast}_{r}Y_{r}\\
-\frac{1}{n_{r}-1}Y^{\prime}_{r}W^{\prime}_{r}I^{\ast}_{r}Y_{r} & \frac{1}{(n_{r}-1)^{2}}Y^{\prime}_{r}I^{\ast}_{r}Y_{r}
\end{array}\right)=\left(\begin{array}{cc}
c_{r} & -\frac{1}{n_{r}-1}b_{r}\\
-\frac{1}{n_{r}-1}b_{r} & \frac{1}{(n_{r}-1)^{2}}a_{r}
\end{array}\right),\\
\tilde{X}^{\ast\prime}_{r}Y^{\ast}_{r} & =\left(\begin{array}{c}
Y^{\prime}_{r}W^{\prime}_{r}I^{\ast}_{r}Y_{r}\\
-\frac{1}{n_{r}-1}Y^{\prime}_{r}I^{\ast}_{r}Y_{r}
\end{array}\right)=\left(\begin{array}{c}
b_{r}\\
-\frac{1}{n_{r}-1}a_{r}
\end{array}\right).
\end{align*}
From (\ref{eq:SAR_urn_star}), $I^{\ast}_{r}Y_{r}=I^{\ast}_{r}(I_{r}-\lambda_{0}W_{r})^{-1}\epsilon_{r}$, and hence $a_{r}=\epsilon^{\prime}_{r}(I_{r}-\lambda_{0}W^{\prime}_{r})^{-1}I^{\ast}_{r}(I_{r}-\lambda_{0}W_{r})^{-1}\epsilon_{r}$. Analogous quadratic-form representations hold for $b_{r}$ and $c_{r}$. By Remark \ref{rem:rub1}(ii) and Lemma \ref{lem:rub}, the coefficient matrices of these three quadratic forms are UBRC. Since $n_{r}$ is uniformly bounded, $n/R$ is uniformly bounded. Lemma \ref{lem:convergence_ip} therefore gives $R^{-1}\sum_{r}\left(a_{r}-\E a_{r}\right)\xrightarrow{p}0$, $R^{-1}\sum_{r}\left(b_{r}-\E b_{r}\right)\xrightarrow{p}0$, and $R^{-1}\sum_{r}\left(c_{r}-\E c_{r}\right)\xrightarrow{p}0$. It follows that $\frac{1}{R}\sum^{R}_{r=1}\tilde{X}^{\ast\prime}_{r}\tilde{X}^{\ast}_{r}-\frac{1}{R}\sum^{R}_{r=1}\E\left(\tilde{X}^{\ast\prime}_{r}\tilde{X}^{\ast}_{r}\right)\xrightarrow{p}0$, and $\frac{1}{R}\sum^{R}_{r=1}\tilde{X}^{\ast\prime}_{r}Y^{\ast}_{r}-\frac{1}{R}\sum^{R}_{r=1}\E\left(\tilde{X}^{\ast\prime}_{r}Y^{\ast}_{r}\right)\xrightarrow{p}0$. The corresponding $R^{-1}$ normalized expectations are uniformly bounded. Since $\varkappa_{X}$ is nonsingular, $\frac{1}{R}\sum^{R}_{r=1}\E\left(\tilde{X}^{\ast\prime}_{r}\tilde{X}^{\ast}_{r}\right)$ is invertible with a uniformly bounded inverse for all sufficiently large $R$. Hence $\hat{\psi}-\bar{\psi}\xrightarrow{p}0$ and $\hat{\lambda}-\bar{\lambda}\xrightarrow{p}0$.

Next, we derive $\bar{\lambda}$ and show that $\bar{\lambda}=0$ when $\lambda_{0}=0$. Using the inverse formula for a $2\times2$ matrix, 
\begin{align*}
\bar{\psi} & =\left\{ \det\left[\sum^{R}_{r=1}\E\left(\tilde{X}^{\ast\prime}_{r}\tilde{X}^{\ast}_{r}\right)\right]\right\} ^{-1}\left(\begin{array}{cc}
\sum^{R}_{r=1}\frac{1}{(n_{r}-1)^{2}}\E a_{r} & \sum^{R}_{r=1}\frac{1}{n_{r}-1}\E b_{r}\\
\sum^{R}_{r=1}\frac{1}{n_{r}-1}\E b_{r} & \sum^{R}_{r=1}\E c_{r}
\end{array}\right)\left(\begin{array}{c}
\sum^{R}_{r=1}\E b_{r}\\
-\sum^{R}_{r=1}\frac{1}{n_{r}-1}\E a_{r}
\end{array}\right),\\
\bar{\lambda} & =(1,0)\bar{\psi}=\left\{ \det\left[\frac{1}{R}\sum^{R}_{r=1}\E\left(\tilde{X}^{\ast\prime}_{r}\tilde{X}^{\ast}_{r}\right)\right]\right\} ^{-1}\Gamma,
\end{align*}
where $\Gamma$ is defined in (\ref{eq:power_gkn}). Because $\varkappa_{X}$ is finite and nonsingular, there are constants $c$ and $C$ such that $0<c\leqslant\det\left[\frac{1}{R}\sum^{R}_{r=1}\E\left(\tilde{X}^{\ast\prime}_{r}\tilde{X}^{\ast}_{r}\right)\right]\leqslant C<\infty$ for sufficiently large $R$. If $\liminf_{R\rightarrow\infty}|\Gamma|>0$, then $\liminf_{R\rightarrow\infty}|\bar{\lambda}|>0$. Since $\hat{\lambda}-\bar{\lambda}\xrightarrow{p}0$ and the assumed standard-error condition implies $\hat{se}(\hat{\lambda})=O_{p}(R^{-1/2})$, it follows that $|t^{GKN}_{\lambda}|\xrightarrow{p}\infty$. If $\Gamma\rightarrow0$, the determinant bound implies that $\bar{\lambda}\rightarrow0$, and hence $\hat{\lambda}\xrightarrow{p}0$. This proves part (ii)(b).

To evaluate $\bar{\lambda}$ under the null, observe that by Lemma \ref{lem:calculation},
\begin{align*}
\E a_{r} & =\tr\left[(I_{r}-\lambda_{0}W_{r})^{-1}I^{\ast}_{r}(I_{r}-\lambda_{0}W_{r})^{-1}\Omega_{r}\right]\\
 & =\sum_{g}\left[(\frac{m_{gr}-1}{m_{gr}-1+\lambda_{0}})^{2}(1-\frac{1}{m_{gr}})+\frac{1}{(1-\lambda_{0})^{2}}(\frac{1}{m_{gr}}-\frac{1}{n_{r}})\right]\tr(\Omega_{gr}),
\end{align*}
\begin{align*}
\E b_{r} & =\tr\left[(I_{r}-\lambda_{0}W_{r})^{-1}I^{\ast}_{r}(I_{r}-\lambda_{0}W_{r})^{-1}W_{r}\Omega_{r}\right]\\
 & =\sum_{g}\left[-\frac{1}{m_{gr}}(\frac{m_{gr}-1}{m_{gr}-1+\lambda_{0}})^{2}+\frac{1}{(1-\lambda_{0})^{2}}(\frac{1}{m_{gr}}-\frac{1}{n_{r}})\right]\tr(\Omega_{gr}).
\end{align*}
When $\lambda_{0}=0$, $\E a_{r}=(1-\frac{1}{n_{r}})\tr(\Omega_{r})$ and $\E b_{r}=-\frac{1}{n_{r}}\tr(\Omega_{r})$, so that $\frac{\E a_{r}}{n_{r}-1}=-\E b_{r}$. Substitution into (\ref{eq:power_gkn}) gives 
\[
\Gamma=-\frac{1}{R^{2}}\sum^{R}_{r=1}\sum^{R}_{s=1}\left[\left(\frac{1}{n_{r}-1}-\frac{1}{n_{s}-1}\right)\E b_{r}\E b_{s}\right].
\]
This expression is zero by antisymmetry because interchanging $r$ and $s$ changes the sign of the first factor and leaves $\E b_{r}\E b_{s}$ unchanged. Therefore, $\Gamma=0$ and $\bar{\lambda}=0$ under $H_{0}$.

It remains to establish the limiting distribution under $H_{0}$. Define $\xi^{\ast}_{r}=\tilde{X}^{\ast\prime}_{r}Y^{\ast}_{r}-\tilde{X}^{\ast\prime}_{r}\tilde{X}^{\ast}_{r}\bar{\psi}$. By construction, $\sum^{R}_{r=1}\E\xi^{\ast}_{r}=0$. Moreover, 
\[
\sqrt{R}(\hat{\psi}-\bar{\psi})=\left(\frac{1}{R}\sum^{R}_{r=1}\tilde{X}^{\ast\prime}_{r}\tilde{X}^{\ast}_{r}\right)^{-1}\frac{1}{\sqrt{R}}\sum^{R}_{r=1}\xi^{\ast}_{r}.
\]
Let $e_{1}=(1,0)^{\prime}$ and define 
\[
V_{\lambda}=\lim_{R\rightarrow\infty}e^{\prime}_{1}\varkappa^{-1}_{X}\left[\frac{1}{R}\sum^{R}_{r=1}\Var(\xi^{\ast}_{r})\right]\varkappa^{-1}_{X}e_{1}.
\]
Under $H_{0}$, $\bar{\lambda}=0$, $Y^{\ast}_{r}=I^{\ast}_{r}\epsilon_{r}$, and $\tilde{X}^{\ast}_{r}=\left[I^{\ast}_{r}W_{r}\epsilon_{r},-\frac{1}{n_{r}-1}I^{\ast}_{r}\epsilon_{r}\right]$. The preceding bounds imply that $\bar{\psi}$ is uniformly bounded for all sufficiently large $R$. Hence $\xi^{\ast}_{r}$ is a vector of quadratic forms in $\epsilon_{r}$ with uniformly bounded coefficient matrices. Uniform boundedness of $n_{r}$ and the uniform $4+c_{\epsilon}$ moment condition imply
\[
\max_{1\leq r\leq R}\E\left|e^{\prime}_{1}\varkappa^{-1}_{X}\left[\xi^{\ast}_{r}-\E\left(\xi^{\ast}_{r}\right)\right]\right|^{2+\delta}\leq C<\infty
\]
for some $0<\delta\leqslant c_{\epsilon}/2$ and $C<\infty$, for all sufficiently large $R$. Since $\xi^{\ast}_{r}$ is independent across $r$, this uniform moment bound and $0<V_{\lambda}<\infty$ verify the Lyapunov condition for $e^{\prime}_{1}\varkappa^{-1}_{X}\left[\xi^{\ast}_{r}-\E\left(\xi^{\ast}_{r}\right)\right]$. Hence, since $\sum^{R}_{r=1}\E\xi^{\ast}_{r}=0$, the Lyapunov central limit theorem gives 
\[
e^{\prime}_{1}\varkappa^{-1}_{X}R^{-1/2}\sum^{R}_{r=1}\xi^{\ast}_{r}\xrightarrow{d}N(0,V_{\lambda}).
\]
The same quadratic-form bounds and independence across urns imply $R^{-1/2}\sum^{R}_{r=1}\xi^{\ast}_{r}=O_{p}(1)$. Since $\frac{1}{R}\sum^{R}_{r=1}\tilde{X}^{\ast\prime}_{r}\tilde{X}^{\ast}_{r}\xrightarrow{p}\varkappa_{X},$ Slutsky's theorem gives $\sqrt{R}\hat{\lambda}\xrightarrow{d}N(0,V_{\lambda}).$ The assumed consistency of the urn-clustered standard error and Slutsky's theorem therefore give $t^{GKN}_{\lambda}=\sqrt{R}\hat{\lambda}/(\sqrt{R}\hat{se}(\hat{\lambda}))\xrightarrow{d}N(0,1).$
\end{proof}

\subsection{\citet{stevenson_tests_2015}}\label{subsec:review_stevenson}

\citet{stevenson_tests_2015} uses simulations to study potential limitations of the tests proposed by \citet{sacerdote_peer_2001}, \citet{guryan_peer_2009}, and \citet{wang_peer_2010}. She proposes a testing method that first splits each peer group into two subsamples, with one supplying the dependent variable and the other supplying the peer characteristic. The original analysis evaluates power against alternatives with peer group random effects but does not provide a formal asymptotic justification for the test.

The test proposed by \citet{stevenson_tests_2015} proceeds as follows. We randomly split each peer group $g$ of urn $r$ into two pools (subgroups), with $k^{a}_{gr}$ members in the analysis pool and $k^{b}_{gr}$ members in the remaining peer pool, where $k^{a}_{gr}\geqslant1$, $k^{b}_{gr}\geqslant1$, and $k^{a}_{gr}+k^{b}_{gr}=m_{gr}$.\footnote{\citet{stevenson_tests_2015} sets $k^{a}_{gr}=1$ for all $g,r$.} Let $\bar{Y}^{a}_{gr}$ and $\bar{Y}^{b}_{gr}$ denote the average outcomes in the analysis and peer pools respectively. \citet{stevenson_tests_2015} proposes estimating the regression 
\begin{equation}
\bar{Y}^{b}_{gr}=\lambda^{split}\bar{Y}^{a}_{gr}+\phi_{r}+u_{gr},\label{eq:ybarb_ybara}
\end{equation}
where $\phi_{r}$ is an urn fixed effect, and then testing $H_{0}:\lambda^{split}=0$ for random assignment within urns. To reduce variation induced by sample splitting, \citet{stevenson_tests_2015} suggests repeating the procedure and averaging the estimates. The proposition below establishes asymptotically correct size and consistency against fixed alternatives $\lambda_{0}\neq0$ in (\ref{eq:scalar}) for a single random split, rather than for the averaged estimator obtained from repeated splits.
\begin{prop}
\label{prop:stevenson} Let $\hat{\lambda}^{split}$ be the OLS estimator of (\ref{eq:ybarb_ybara}) and let $\widehat{\Var}(\hat{\lambda}^{split})$ be the urn-clustered variance estimator. Suppose the true model is (\ref{eq:scalar}) and Assumptions \ref{assu:lambda} to \ref{assu:epsilon} hold, and $m_{gr}$ and $G_{r}$ are uniformly bounded with $G_{r}\geqslant2$. Suppose the random splits are drawn mutually independently across peer groups and independently of all innovations. Suppose the limits $V_{ab}$ and $V_{aa}$ defined in the proof exist, with $0<V_{aa}<\infty$. Under $H_{0}$, suppose the limit $\Sigma_{ab}$ defined in the proof exists, with $0<\Sigma_{ab}<\infty$, and suppose $R\widehat{\Var}(\hat{\lambda}^{split})\xrightarrow{p}V^{-2}_{aa}\Sigma_{ab}$.

(i) Under $H_{0}:\lambda_{0}=0$, as $R\rightarrow\infty$,
\[
\frac{\hat{\lambda}^{split}}{\sqrt{\widehat{\Var}(\hat{\lambda}^{split})}}\xrightarrow{d}N(0,1).
\]
(ii) Under $H_{1}:\lambda_{0}\neq0$, as $R\rightarrow\infty$, $\hat{\lambda}^{split}\xrightarrow{p}\bar{\lambda}^{split}=V_{ab}/V_{aa}\neq0$. Moreover, $\left|\hat{\lambda}^{split}/\sqrt{\widehat{\Var}(\hat{\lambda}^{split})}\right|\xrightarrow{p}\infty$, and the power of the usual two-sided test converges to one.
\end{prop}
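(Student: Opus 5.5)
The plan is to write $\hat{\lambda}^{split}$ as a ratio of two urn-level sums and to handle numerator and denominator with the independence-across-urns arguments used for Proposition \ref{prop:GKN}.

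\emph{Step 1: representation.} Condition on the splits. They are independent of $\epsilon$, so all limits can be taken conditionally and then unconditioned by dominated convergence. Let $a_{gr}$ and $b_{gr}$ be the $m_{gr}\times1$ averaging vectors picking out the analysis and peer pools, so that $\bar{Y}^{a}_{gr}=a'_{gr}Y_{gr}$ and $\bar{Y}^{b}_{gr}=b'_{gr}Y_{gr}$. Stacking gives $G_{r}\times n_{r}$ selection matrices $S^{a}_{r}$ and $S^{b}_{r}$. Let $I^{G\ast}_{r}=I_{G_{r}}-J_{G_{r}}/G_{r}$ demean over groups within urn $r$. By Frisch--Waugh--Lovell,
\[
\hat{\lambda}^{split}=\frac{\sum_{r}Y'_{r}S^{a\prime}_{r}I^{G\ast}_{r}S^{b}_{r}Y_{r}}{\sum_{r}Y'_{r}S^{a\prime}_{r}I^{G\ast}_{r}S^{a}_{r}Y_{r}}.
\]
Now substitute $Y_{r}=(I_{r}-\lambda_{0}W_{r})^{-1}(\alpha_{r}\mathbf{1}_{r}+\epsilon_{r})$. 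Because $(I-\lambda_{0}W)^{-1}\mathbf{1}_{r}=\mathbf{1}_{r}/(1-\lambda_{0})$ and $S^{a}_{r}\mathbf{1}_{r}=\mathbf{1}_{G_{r}}$, which $I^{G\ast}_{r}$ annihilates, the urn effects drop out. Numerator and denominator therefore become sums over $r$ of quadratic forms $\epsilon'_{r}M^{ab}_{r}\epsilon_{r}$ and $\epsilon'_{r}M^{aa}_{r}\epsilon_{r}$. Their coefficient matrices are uniformly bounded because $m_{gr}$ and $G_{r}$ are bounded and $\Lambda\subset(-1,1)$ is compact.

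\emph{Step 2: consistency.} Since $n_{r}$ is bounded, $n/R$ is bounded. Lemma \ref{lem:convergence_ip}, applied with normalization $R$, gives $R^{-1}\sum_{r}(\epsilon'_{r}M_{r}\epsilon_{r}-\tr(M_{r}\Omega_{r}))\xrightarrow{p}0$. Define $V_{ab}=\lim R^{-1}\sum_{r}\E\tr(M^{ab}_{r}\Omega_{r})$ and $V_{aa}$ analogously, with the expectation taken over the splits. Then $\hat{\lambda}^{split}\xrightarrow{p}V_{ab}/V_{aa}$, which is valid because $V_{aa}>0$.

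Under $H_{0}$ we have $M^{ab}_{r}=S^{a\prime}_{r}I^{G\ast}_{r}S^{b}_{r}$. Its diagonal vanishes because the analysis and peer pools are disjoint, so for each individual $i$ one of the rows $(S^{a}_{r})_{\cdot i}$ and $(S^{b}_{r})_{\cdot i}$ is zero. Hence $\tr(M^{ab}_{r}\Omega_{r})=0$, $V_{ab}=0$, and the estimator is centred exactly. Under $H_{1}$, compute $\tr(M^{ab}_{r}\Omega_{r})$ explicitly using the $p_{gr}I^{\ast}_{gr}+q_{gr}J^{\ast}_{gr}$ form of $(I-\lambda_{0}W)^{-1}$ in (\ref{eq:I_lW_inv}). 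The contribution of group $g$ should be a combination of $\frac{m-1}{m-1+\lambda_{0}}-\frac{1}{1-\lambda_{0}}$ terms, weighted by $(1-1/G_{r})$ and by pool variances. This factor equals $-\lambda_{0}m/[(m-1+\lambda_{0})(1-\lambda_{0})]$, a common sign times $\lambda_{0}$. Together with $\sigma^{2}\geq c_{\sigma}$ and $G_{r}\geq2$, this bounds $|V_{ab}|$ below by $c|\lambda_{0}|$.

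\emph{Main obstacle.} The hardest step is this sign and lower bound on $V_{ab}$. The cross-group demeaning terms of order $1/G_{r}$ also involve $(I-\lambda_{0}W)^{-1}$ entries from other groups, and I need them to carry the same sign rather than cancel. I would try first to show that every entry of $S^{a}_{r}(I-\lambda_{0}W)^{-1}\Omega(I-\lambda_{0}W)^{-1}S^{b\prime}_{r}$ off group $g$ is zero, since different groups are independent. Then only the within-group factor survives, multiplied by $(1-1/G_{r})\geq1/2$.

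\emph{Step 3: distribution and power.} Under $H_{0}$, write $R^{1/2}\hat{\lambda}^{split}=(R^{-1}\sum_{r}\epsilon'_{r}M^{aa}_{r}\epsilon_{r})^{-1}R^{-1/2}\sum_{r}\xi_{r}$, where $\xi_{r}=\epsilon'_{r}M^{ab}_{r}\epsilon_{r}$ is mean zero and independent across $r$. The $\xi_{r}$ have uniformly bounded $2+\delta$ moments by the $4+c_{\epsilon}$ moment condition, so the Lyapunov CLT with limit variance $\Sigma_{ab}$ applies, as in the proof of Proposition \ref{prop:GKN}. Slutsky's theorem, together with the assumed consistency $R\widehat{\Var}\xrightarrow{p}V^{-2}_{aa}\Sigma_{ab}$, yields (i).

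For (ii), $\hat{\lambda}^{split}\to\bar{\lambda}^{split}\neq0$. Provided the clustered variance is $O_{p}(R^{-1})$, which follows from the same bounded-moment arguments applied to the clustered score, the $t$ statistic diverges and power tends to one.
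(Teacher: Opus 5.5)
Your outline follows the paper's proof, written in matrix notation. The FWL ratio, in which $I^{G\ast}_{r}$ annihilates the urn effects, is the paper's $\sum_{r}\tau^{ab}_{r}/\sum_{r}\tau^{aa}_{r}$. Your zero-diagonal argument under $H_{0}$ is the paper's observation that $\bar{\epsilon}^{a}_{gr}$ and $\bar{\epsilon}^{b}_{gr}$ use disjoint innovations. Steps 2--3 (LLN across independent urns, Lyapunov, Slutsky, $O_{p}(R^{-1})$ clustered variance) are the same as the paper's.

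The obstacle you single out is not one. $K_{r}=(I_{r}-\lambda_{0}W_{r})^{-1}$ and $\Omega_{r}$ are block diagonal, and $S^{a}_{r},S^{b}_{r}$ act within groups. Hence $S^{b}_{r}K_{r}\Omega_{r}K_{r}S^{a\prime}_{r}$ is a diagonal $G_{r}\times G_{r}$ matrix, and
\[
\tr(M^{ab}_{r}\Omega_{r})=(1-1/G_{r})\sum_{g}a'_{gr}K_{gr}\Omega_{gr}K_{gr}b_{gr}
\]
exactly. This is the paper's $\E\tau^{ab}_{r}=(1-1/G_{r})\sum_{g}\E(\zeta^{a}_{gr}\zeta^{b}_{gr})$.

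The real gap is the step you wrote as ``should be'': the within-group term is not $(p-q)$ times pool variances. Set $p=\frac{m_{gr}-1}{m_{gr}-1+\lambda_{0}}$ and $q=\frac{1}{1-\lambda_{0}}$, so that $K_{gr}a_{gr}=pa_{gr}+\frac{q-p}{m_{gr}}\mathbf{1}_{gr}$. Let $\bar{\sigma}^{2}_{a}$ and $\bar{\sigma}^{2}_{b}$ be the average variances in the two pools. Disjointness kills $p^{2}a'_{gr}\Omega_{gr}b_{gr}$, and one gets
\[
a'_{gr}K_{gr}\Omega_{gr}K_{gr}b_{gr}=\frac{q-p}{m^{2}_{gr}}\left[(qk^{a}_{gr}+pk^{b}_{gr})\bar{\sigma}^{2}_{a}+(qk^{b}_{gr}+pk^{a}_{gr})\bar{\sigma}^{2}_{b}\right].
\]
The factor $q-p$ carries the sign of $\lambda_{0}$, but the bracket also depends on $\lambda_{0}$. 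You must show it is positive and bounded away from zero. Here that holds because $p,q>0$ are bounded away from zero on compact $\Lambda$ with bounded $m_{gr}$.

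This is exactly the paper's verification that $c_{gr}+d_{gr}k^{a}_{gr}>0$ and $c_{gr}+d_{gr}k^{b}_{gr}>0$. In the $(c,d)$ form, the product $d_{gr}(c_{gr}+d_{gr}k^{a}_{gr})$ is not obviously of one sign when $\lambda_{0}<0$, since then $d_{gr}<0$. Until this is checked, $|V_{ab}|\geqslant c|\lambda_{0}|$ remains unproved.

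A smaller point: $\tr(M^{ab}_{r}\Omega_{r})$ depends on the split. Conditioning plus dominated convergence therefore only centers the numerator at $R^{-1}\sum_{r}\tr(M^{ab}_{r}\Omega_{r})$. You also need a simple LLN over these bounded, independent split-dependent terms to reach $V_{ab}$. The paper avoids this by taking expectations over innovations and splits jointly.
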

\begin{proof}
From (\ref{eq:SAR_group}), $Y_{gr}=(I_{gr}-\lambda_{0}W_{gr})^{-1}(\alpha_{r}\mathbf{1}_{gr}+\epsilon_{gr})$. Without loss of generality, let $Y_{gr}=(Y^{a\prime}_{gr},Y^{b\prime}_{gr})^{\prime}$ and $\epsilon_{gr}=(\epsilon^{a\prime}_{gr},\epsilon^{b\prime}_{gr})^{\prime}$. Observe that $(I_{gr}-\lambda_{0}W_{gr})^{-1}\mathbf{1}_{gr}=\mathbf{1}_{gr}/(1-\lambda_{0})$. Direct calculation gives $(I_{gr}-\lambda_{0}W_{gr})^{-1}=c_{gr}I_{gr}+d_{gr}J_{gr}$, where $c_{gr}=\frac{m_{gr}-1}{m_{gr}-1+\lambda_{0}}$, $d_{gr}=\frac{\lambda_{0}}{\left(1-\lambda_{0}\right)\left(m_{gr}-1+\lambda_{0}\right)}$, $J_{gr}=\mathbf{1}_{gr}\mathbf{1}^{\prime}_{gr}$. Consequently,
\begin{align*}
\left(\begin{array}{c}
Y^{a}_{gr}\\
Y^{b}_{gr}
\end{array}\right) & =\frac{\alpha_{r}}{1-\lambda_{0}}\mathbf{1}_{gr}+c_{gr}\left(\begin{array}{c}
\epsilon^{a}_{gr}\\
\epsilon^{b}_{gr}
\end{array}\right)+d_{gr}\mathbf{1}_{gr}(k^{a}_{gr}\bar{\epsilon}^{a}_{gr}+k^{b}_{gr}\bar{\epsilon}^{b}_{gr}),
\end{align*}
where $\bar{\epsilon}^{a}_{gr}$ and $\bar{\epsilon}^{b}_{gr}$ are the means of $\epsilon_{igr}$ in the analysis and peer pools respectively. As a result, $\bar{Y}^{a}_{gr}=\frac{\alpha_{r}}{1-\lambda_{0}}+\zeta^{a}_{gr}$ and $\bar{Y}^{b}_{gr}=\frac{\alpha_{r}}{1-\lambda_{0}}+\zeta^{b}_{gr}$, where
\begin{align*}
\zeta^{a}_{gr} & =c_{gr}\bar{\epsilon}^{a}_{gr}+d_{gr}\left(k^{a}_{gr}\bar{\epsilon}^{a}_{gr}+k^{b}_{gr}\bar{\epsilon}^{b}_{gr}\right),\\
\zeta^{b}_{gr} & =c_{gr}\bar{\epsilon}^{b}_{gr}+d_{gr}\left(k^{a}_{gr}\bar{\epsilon}^{a}_{gr}+k^{b}_{gr}\bar{\epsilon}^{b}_{gr}\right).
\end{align*}
Let $\bar{\zeta}^{a}_{r}=G^{-1}_{r}\sum^{G_{r}}_{g=1}\zeta^{a}_{gr}$ and $\bar{\zeta}^{b}_{r}=G^{-1}_{r}\sum^{G_{r}}_{g=1}\zeta^{b}_{gr}$. The OLS estimator in (\ref{eq:ybarb_ybara}) is
\begin{align*}
\hat{\lambda}^{split} & =\frac{\sum^{R}_{r=1}\sum^{G_{r}}_{g=1}\left(\zeta^{a}_{gr}-\bar{\zeta}^{a}_{r}\right)\left(\zeta^{b}_{gr}-\bar{\zeta}^{b}_{r}\right)}{\sum^{R}_{r=1}\sum^{G_{r}}_{g=1}\left(\zeta^{a}_{gr}-\bar{\zeta}^{a}_{r}\right)^{2}}=\frac{\sum^{R}_{r=1}\tau^{ab}_{r}}{\sum^{R}_{r=1}\tau^{aa}_{r}},
\end{align*}
where $\tau^{ab}_{r}=\sum^{G_{r}}_{g=1}\left(\zeta^{a}_{gr}-\bar{\zeta}^{a}_{r}\right)\left(\zeta^{b}_{gr}-\bar{\zeta}^{b}_{r}\right)$, and $\tau^{aa}_{r}=\sum^{G_{r}}_{g=1}\left(\zeta^{a}_{gr}-\bar{\zeta}^{a}_{r}\right)^{2}$. All expectations below are taken over both the innovations and the random split.

We first prove part (ii). Under Assumptions \ref{assu:lambda}--\ref{assu:epsilon}, $\E\zeta^{a}_{gr}=\E\zeta^{b}_{gr}=0$. The pairs $\{\zeta^{a}_{gr},\zeta^{b}_{gr}\}$ are independent across $g$ and $r$. Because $m_{gr}$ and $G_{r}$ are uniformly bounded, and the innovations have uniformly bounded fourth moments, there exists some $0<c_{\tau}\leqslant c_{\epsilon}/2$ such that $\E|\tau^{ab}_{r}|^{2+c_{\tau}}$ and $\E|\tau^{aa}_{r}|^{2+c_{\tau}}$ are uniformly bounded. The pairs $\{\tau^{ab}_{r},\tau^{aa}_{r}\}$ are therefore independent across $r$ with uniformly bounded second moments. Hence, 
\begin{align*}
\frac{1}{R}\sum_{r}\tau^{ab}_{r} & \xrightarrow{p}V_{ab}=\lim_{R\rightarrow\infty}\frac{1}{R}\sum_{r}\E\tau^{ab}_{r},\\
\frac{1}{R}\sum_{r}\tau^{aa}_{r} & \xrightarrow{p}V_{aa}=\lim_{R\rightarrow\infty}\frac{1}{R}\sum_{r}\E\tau^{aa}_{r}.
\end{align*}
By assumption $V_{aa}>0$. Thus $\hat{\lambda}^{split}\xrightarrow{p}\bar{\lambda}^{split}=V_{ab}/V_{aa}$. It remains to show that $V_{ab}\neq0$ under the fixed alternative $\lambda_{0}\neq0$. Independence across peer groups gives
\begin{align*}
\E\tau^{ab}_{r} & =\E\sum^{G_{r}}_{g=1}\left(\zeta^{a}_{gr}-\bar{\zeta}^{a}_{r}\right)\left(\zeta^{b}_{gr}-\bar{\zeta}^{b}_{r}\right)=(1-\frac{1}{G_{r}})\sum^{G_{r}}_{g=1}\E\left(\zeta^{a}_{gr}\zeta^{b}_{gr}\right).
\end{align*}
Conditional on the realized split, $\bar{\epsilon}^{a}_{gr}$ and $\bar{\epsilon}^{b}_{gr}$ use disjoint sets of innovations, so $\E\left(\bar{\epsilon}^{a}_{gr}\bar{\epsilon}^{b}_{gr}\mid\text{split}\right)=0$ and hence $\E\left(\bar{\epsilon}^{a}_{gr}\bar{\epsilon}^{b}_{gr}\right)=0$. With $d_{gr}=\frac{\lambda_{0}}{\left(1-\lambda_{0}\right)\left(m_{gr}-1+\lambda_{0}\right)}$, 
\begin{align*}
\E\zeta^{a}_{gr}\zeta^{b}_{gr} & =\left(c_{gr}+d_{gr}k^{a}_{gr}\right)k^{a}_{gr}d_{gr}\E\left(\bar{\epsilon}^{a}_{gr}\right)^{2}+\left(c_{gr}+d_{gr}k^{b}_{gr}\right)k^{b}_{gr}d_{gr}\E\left(\bar{\epsilon}^{b}_{gr}\right)^{2}=\lambda_{0}f_{\zeta,gr},
\end{align*}
where 
\[
f_{\zeta,gr}=\frac{\left(c_{gr}+d_{gr}k^{a}_{gr}\right)k^{a}_{gr}\E\left(\bar{\epsilon}^{a}_{gr}\right)^{2}+\left(c_{gr}+d_{gr}k^{b}_{gr}\right)k^{b}_{gr}\E\left(\bar{\epsilon}^{b}_{gr}\right)^{2}}{\left(1-\lambda_{0}\right)\left(m_{gr}-1+\lambda_{0}\right)}.
\]
Under Assumptions \ref{assu:lambda}--\ref{assu:epsilon}, $1-\lambda_{0}>0$, $m_{gr}-1+\lambda_{0}>0$, $c_{gr}=\frac{m_{gr}-1}{m_{gr}-1+\lambda_{0}}>0$, $k^{a}_{gr}\geqslant1$, and $k^{b}_{gr}\geqslant1$. Also $c_{gr}+d_{gr}k^{a}_{gr}=\frac{\left(m_{gr}-1\right)\left(1-\lambda_{0}\right)+k^{a}_{gr}\lambda_{0}}{\left(1-\lambda_{0}\right)\left(m_{gr}-1+\lambda_{0}\right)}=\frac{\left(m_{gr}-1\right)-\left(m_{gr}-1-k^{a}_{gr}\right)\lambda_{0}}{\left(1-\lambda_{0}\right)\left(m_{gr}-1+\lambda_{0}\right)}>0$ because $0\leqslant m_{gr}-1-k^{a}_{gr}<m_{gr}-1$ and $|\lambda_{0}|<1$. Similarly, $c_{gr}+d_{gr}k^{b}_{gr}>0$. Moreover, by Assumption \ref{assu:epsilon}, $\E\left(\bar{\epsilon}^{a}_{gr}\right)^{2}\geqslant c_{\sigma}/k^{a}_{gr}$ and $\E\left(\bar{\epsilon}^{b}_{gr}\right)^{2}\geqslant c_{\sigma}/k^{b}_{gr}$. Since $\Lambda$ is a compact subset of $(-1,1)$ and $m_{gr}$ is uniformly bounded, the positive coefficients in the numerator of $f_{\zeta,gr}$ are uniformly bounded away from zero and its denominator is uniformly bounded above. Hence, $f_{\zeta,gr}\geqslant c_{f}>0$ for some constant $c_{f}$. As a result, $\E\tau^{ab}_{r}=(1-\frac{1}{G_{r}})\lambda_{0}\sum^{G_{r}}_{g=1}f_{\zeta,gr}$ and 
\[
V_{ab}=\lim_{R\rightarrow\infty}\frac{1}{R}\sum^{R}_{r=1}\E\tau^{ab}_{r}=\lambda_{0}\left(\lim_{R\rightarrow\infty}\frac{1}{R}\sum_{r}(1-\frac{1}{G_{r}})\sum^{G_{r}}_{g=1}f_{\zeta,gr}\right).
\]
Thus, if $\lambda_{0}\neq0$, then $V_{ab}\neq0$ and $\bar{\lambda}^{split}=V_{ab}/V_{aa}\neq0$. Under the fixed alternative, the uniformly bounded second moments established above, $\hat{\lambda}^{split}=O_{p}(1)$, and $R^{-1}\sum_{r}\tau^{aa}_{r}\xrightarrow{p}V_{aa}>0$ imply that the usual urn-clustered variance estimator is $O_{p}(R^{-1})$. Hence $\sqrt{\widehat{\Var}(\hat{\lambda}^{split})}=O_{p}(R^{-1/2})$. Together with $\hat{\lambda}^{split}\xrightarrow{p}\bar{\lambda}^{split}\neq0$, this yields $\left|\hat{\lambda}^{split}/\sqrt{\widehat{\Var}(\hat{\lambda}^{split})}\right|\xrightarrow{p}\infty$, so the power of the usual two-sided test converges to one.

For part (i), under $H_{0}:\lambda_{0}=0$ we have $d_{gr}=0$ and hence $\zeta^{a}_{gr}=\bar{\epsilon}^{a}_{gr}$ and $\zeta^{b}_{gr}=\bar{\epsilon}^{b}_{gr}$. Conditional on the realized split, $\zeta^{a}_{gr}$ and $\zeta^{b}_{gr}$ are independent and have zero means. Thus $\E\tau^{ab}_{r}=0$. The uniformly bounded $(2+c_{\tau})$ moments of $\tau^{ab}_{r}$ established above, together with independence across urns, imply the Lyapunov condition, so $R^{-1/2}\sum^{R}_{r=1}\tau^{ab}_{r}\xrightarrow{d}N(0,\Sigma_{ab})$, where $\Sigma_{ab}=\lim_{R\rightarrow\infty}\frac{1}{R}\sum^{R}_{r=1}\Var\left(\tau^{ab}_{r}\right)$. Consequently $\sqrt{R}\hat{\lambda}^{split}\xrightarrow{d}N(0,V_{\lambda})$, where $V_{\lambda}=V^{-2}_{aa}\Sigma_{ab}$. Since under $H_{0}$ $R\widehat{\Var}(\hat{\lambda}^{split})\xrightarrow{p}V_{\lambda}$ by assumption, part (i) follows from Slutsky's theorem.
\end{proof}

\subsection{\citet{jochmans_testing_2023}}\label{subsec:review_Jochmans}

\citet{jochmans_testing_2023} reviews tests of random assignment to peer groups, quantifies the bias in \citet{sacerdote_peer_2001}, and proposes a bias-corrected LM-type test. The homoskedastic correction motivates our choice of the instrument matrix $A$. The procedure also permits unknown heteroskedasticity, and its asymptotic validity is established as $R\rightarrow\infty$ with uniformly bounded urn sizes.

Below we relate the homoskedastic version of the test to the quadratic moment underlying our scalar GMM estimator. \citet{jochmans_testing_2023} standardizes this moment using an urn-clustered standard error. Our analytic variance estimator corrects the finite-urn bias induced by within-urn demeaning and permits other asymptotic sequences, including a bounded number of urns and a growing number of peer groups within urns. We use his homoskedastic notation for this comparison.\footnote{Our quadratic-moment condition remains valid under heteroskedasticity, although \citet{jochmans_testing_2023} equation (3.8) uses a different correction in the heteroskedastic case.}

\citet{jochmans_testing_2023} uses $g$ to index urns and $x$ for the variable of interest. To match our notation, we replace $x$ with $y$ and $g$ with $r$. Let $y_{r,i}$ denote the predetermined variable of individual $i$ in urn $r$, and let $Y_{r}$, $W_{r}$, and $I^{\ast}_{r}$ be as defined in Section \ref{subsec:Model-without-Covariates} of the main text. Thus, $Y_{r}$ collects the $y_{r,i}$, $W_{r}Y_{r}$ collects the peer averages $\bar{y}_{r,[i]}$, and $I^{\ast}_{r}Y_{r}$ and $I^{\ast}_{r}W_{r}Y_{r}$ collect their urn-demeaned counterparts, whose entries are denoted by $\tilde{y}_{r,i}$ and $\widetilde{\bar{y}}_{r,[i]}$, respectively. Let
\begin{align*}
q^{HO} & =\sum^{R}_{r=1}\sum^{n_{r}}_{i=1}\tilde{y}_{r,i}(\bar{y}_{r,[i]}+\frac{y_{r,i}}{n_{r}-1})=\sum^{R}_{r=1}\left(I^{\ast}_{r}Y_{r}\right)^{\prime}\left(W_{r}Y_{r}+\frac{1}{n_{r}-1}Y_{r}\right)\\
 & =\sum^{R}_{r=1}Y^{\prime}_{r}I^{\ast}_{r}\left(W_{r}+\frac{1}{n_{r}-1}I_{r}\right)Y_{r}
\end{align*}
Note that $A_{r}=W_{r}+\frac{1}{n_{r}-1}I_{r}$ and $I^{\ast}_{r}A_{r}I^{\ast}_{r}=I^{\ast}_{r}A_{r}$ by Remark \ref{rem:calculation}, so $q^{HO}=\sum^{R}_{r=1}Y^{\prime}_{r}A^{\ast}_{r}Y_{r}$. Also, let
\[
s^{HO}=\sqrt{\sum^{R}_{r=1}\left(\sum^{n_{r}}_{i=1}\tilde{y}_{r,i}(\bar{y}_{r,[i]}+\frac{y_{r,i}}{n_{r}-1})\right)^{2}}=\sqrt{\sum^{R}_{r=1}\left(Y^{\prime}_{r}A^{\ast}_{r}Y_{r}\right)^{2}}.
\]

Under $H_{0}:\lambda_{0}=0$, $Y_{r}=\alpha_{r}\mathbf{1}_{r}+\epsilon_{r}$. Since $A^{\ast}_{r}\mathbf{1}_{r}=0$, $Y^{\prime}_{r}A^{\ast}_{r}Y_{r}=\epsilon^{\prime}_{r}A^{\ast}_{r}\epsilon_{r}$. Moreover, $\diag(A^{\ast}_{r})=0$ and $\Omega_{r}$ is diagonal under Assumption \ref{assu:epsilon}, so $\E(Y^{\prime}_{r}A^{\ast}_{r}Y_{r})=\tr(A^{\ast}_{r}\Omega_{r})=0$. The urn-level quadratic forms $Y^{\prime}_{r}A^{\ast}_{r}Y_{r}$ are independent across $r$. Under the additional conditions imposed in Theorem 1 of \citet{jochmans_testing_2023}, the corresponding self-normalized statistic satisfies
\[
t^{HO}=\frac{q^{HO}}{s^{HO}}=\frac{\sum^{R}_{r=1}Y^{\prime}_{r}A^{\ast}_{r}Y_{r}}{\sqrt{\sum^{R}_{r=1}\left(Y^{\prime}_{r}A^{\ast}_{r}Y_{r}\right)^{2}}}\xrightarrow{d}N(0,1)
\]
as $R$ goes to infinity with uniformly bounded urn sizes.

Recall that in the model without covariates, the quadratic moment function for urn $r$ is $h^{q}_{r}(\lambda)=\epsilon^{\ast}_{r}(\lambda)^{\prime}A_{r}\epsilon^{\ast}_{r}(\lambda)$, where $\epsilon^{\ast}_{r}(\lambda)=I^{\ast}_{r}(I_{r}-\lambda W_{r})Y_{r}$. Evaluating this moment at the null value gives $h^{q}_{r}(0)=Y^{\prime}_{r}I^{\ast}_{r}A_{r}I^{\ast}_{r}Y_{r}$. Therefore, for the homoskedastic statistic of \citet{jochmans_testing_2023}, $t^{HO}=\sum^{R}_{r=1}h^{q}_{r}(0)/\sqrt{\sum^{R}_{r=1}h^{q}_{r}(0)^{2}}$. Since $h^{q}_{n}(0)=n^{-1}\sum^{R}_{r=1}h^{q}_{r}(0)$, its numerator coincides, up to the normalization by $n$, with the sample quadratic moment underlying our scalar GMM estimator. This yields an LM-type interpretation of the test, with an urn-clustered standardization.

\subsection{\citet{caeyers_exclusion_2024}}\label{subsec:review_CF2024}

While prior work has focused on testing $\lambda_{0}=0$, \citet{caeyers_exclusion_2024} propose a minimum-distance estimator of $\lambda$ under independent and homoskedastic innovations. The estimation method exploits the variance-covariance structure of the error terms. Specifically, if Assumptions \ref{assu:lambda}--\ref{assu:epsilon} hold, and additionally $\Var(\epsilon_{igr})=\sigma^{2}_{0}$ for all $i,g,r$, then (\ref{eq:SAR_urn_star}) implies $\E(I^{\ast}_{r}Y_{r}Y^{\prime}_{r}I^{\ast}_{r})=\sigma^{2}_{0}I^{\ast}_{r}(I_{r}-\lambda_{0}W_{r})^{-2}I^{\ast}_{r}$. With covariates omitted for simplicity, we consider the following reweighted adaptation of their estimator,\footnote{\citet{caeyers_exclusion_2024} do not include the factor $1/n_{r}$. Under uniformly bounded urn sizes, this factor can be omitted without affecting the consistency argument. We include it so that the consistency argument also permits unbounded urn sizes.}
\[
(\hat{\lambda}^{CF},\hat{\sigma}^{2})=\argmin_{\lambda,\sigma^{2}}Q^{CF}(\lambda,\sigma^{2}),
\]
where 
\[
Q^{CF}(\lambda,\sigma^{2})=\frac{1}{R}\sum^{R}_{r=1}\frac{1}{n_{r}}\left\Vert I^{\ast}_{r}Y_{r}Y^{\prime}_{r}I^{\ast}_{r}-\sigma^{2}I^{\ast}_{r}(I_{r}-\lambda W_{r})^{-2}I^{\ast}_{r}\right\Vert ^{2}_{F},
\]
and for any $n_{r}\times n_{r}$ matrix $M_{r}$, $\left\Vert M_{r}\right\Vert _{F}=\left(\sum^{n_{r}}_{i=1}\sum^{n_{r}}_{j=1}M^{2}_{r,ij}\right)^{1/2}=\left[\tr(M^{\prime}_{r}M_{r})\right]^{1/2}$ denotes the Frobenius norm. \citet{caeyers_exclusion_2024} motivate this estimator through a method-of-moments argument. Proposition \ref{prop:cf} provides a formal consistency proof.
\begin{prop}
\label{prop:cf} Let (\ref{eq:scalar}) be the true model and suppose Assumptions \ref{assu:lambda} to \ref{assu:epsilon} hold. In addition, assume that $\Var(\epsilon_{igr})=\sigma^{2}_{0}$ for all $i,g,r$. Suppose the parameter space for $(\lambda,\sigma^{2})$ is $\Lambda\times[c_{\sigma},C_{\sigma}]$, where $\Lambda$ is defined in \ref{assu:lambda} and $0<c_{\sigma}<C_{\sigma}<\infty$. Suppose further that $\sup_{n\geqslant1}\sup_{g,r}m_{gr}\leqslant\bar{m}<\infty$. Then as $R\rightarrow\infty$, $\hat{\lambda}^{CF}-\lambda_{0}\xrightarrow{p}0$ and $\hat{\sigma}^{2}-\sigma^{2}_{0}\xrightarrow{p}0$.
\end{prop}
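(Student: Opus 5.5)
The argument is a standard extremum-estimator consistency proof (Theorem 4.1.1 of \citet{amemiya_advanced_1985}), applied to $Q^{CF}$ on the compact set $\Lambda\times[c_{\sigma},C_{\sigma}]$. It has three steps: write $Q^{CF}$ as a random part plus a deterministic part, show the random part is uniformly negligible, and show the deterministic part identifies $(\lambda_0,\sigma_0^2)$.

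\textbf{Step 1: decomposition.} Write $S_r=I^{\ast}_rY_rY_r'I^{\ast}_r$ and $M_r(\lambda)=I^{\ast}_r(I_r-\lambda W_r)^{-2}I^{\ast}_r$. By (\ref{eq:SAR_urn_star}) and the commutation in Remark \ref{rem:calculation}, $S_r=\Pi_r\epsilon_r\epsilon_r'\Pi_r$ with $\Pi_r=I^{\ast}_r(I_r-\lambda_0W_r)^{-1}$, so $\E S_r=\sigma_0^2M_r(\lambda_0)$. Expanding the Frobenius norm gives
\[
Q^{CF}=\frac1R\sum_r\frac1{n_r}\|S_r-\E S_r\|_F^2+\frac1R\sum_r\frac1{n_r}\|\E S_r-\sigma^2M_r(\lambda)\|_F^2+\frac2R\sum_r\frac1{n_r}\tr\big[(S_r-\E S_r)(\E S_r-\sigma^2M_r(\lambda))\big].
\]
The first term does not depend on the parameters, so it can be dropped from the minimization. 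It is $O_p(1)$: $\|S_r\|_F^2=(\epsilon_r'\Pi_r'\Pi_r\epsilon_r)^2$, which has expectation $O(n_r^2)$ under bounded fourth moments and UBRC $\Pi_r$.

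\textbf{Step 2: the cross term vanishes uniformly.} It equals $\frac{1}{R}\sum_r\frac1{n_r}\big[\epsilon_r'K_r(\theta)\epsilon_r-\E(\cdot)\big]$ with $K_r(\theta)=\Pi_r'(\E S_r-\sigma^2M_r(\lambda))\Pi_r$.
\begin{itemize}
\item $\E S_r$ and $M_r(\lambda)$ are UBRC uniformly in $\lambda$ (Remark \ref{rem:rub1}), so each $K_r$ is UBRC.
\item By Remark \ref{rem:vars}(iii) each summand has variance $O(n_r)/n_r^2\le O(1)$, and summands are independent across $r$. Hence the term has variance $O(1/R)\to0$ pointwise. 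This is the point where $R\to\infty$ is needed: with fixed $R$ and bounded $m_{gr}$, the term would not average out.
\item Uniformity: $K_r$ is affine in $\sigma^2$ and smooth in $\lambda$, with derivative $-2\sigma^2I^{\ast}_rW_r(I_r-\lambda W_r)^{-3}I^{\ast}_r$, which is also UBRC uniformly. This gives a stochastic Lipschitz bound, so pointwise convergence plus compactness gives uniform convergence.
\end{itemize}

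\textbf{Step 3: identification (the main obstacle).} I need the deterministic part
\[
\bar Q_R(\lambda,\sigma^2)=\frac1R\sum_r\frac1{n_r}\|\sigma_0^2M_r(\lambda_0)-\sigma^2M_r(\lambda)\|_F^2
\]
to be bounded below by $c\,[(\lambda-\lambda_0)^2+(\sigma^2-\sigma_0^2)^2]$. I would use the block structure. By (\ref{eq:I_lW_inv}) and Lemma \ref{lem:calculation}(ii), $M_r(\lambda)=\diag_g\{(a_g(\lambda)-b(\lambda))I^{\ast}_{gr}\}+b(\lambda)I^{\ast}_r$, where
\[
a_g(\lambda)=\Big(\frac{m_{gr}-1}{m_{gr}-1+\lambda}\Big)^2,\qquad b(\lambda)=(1-\lambda)^{-2}.
\]
The matrices $\diag_g\{I^{\ast}_{gr}\}$ and $I^{\ast}_r-\diag_g\{I^{\ast}_{gr}\}=\diag_g\{J^{\ast}_{gr}\}-J^{\ast}_r$ are orthogonal projections onto orthogonal subspaces, of ranks $n_r-G_r$ and $G_r-1$. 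Therefore
\[
\frac1{n_r}\|\cdot\|_F^2\ \ge\ \frac{n_r-G_r}{n_r}\min_g\big(\sigma_0^2a_g(\lambda_0)-\sigma^2a_g(\lambda)\big)^2+\frac{G_r-1}{n_r}\big(\sigma_0^2b(\lambda_0)-\sigma^2b(\lambda)\big)^2 .
\]
Because $2\le m_{gr}\le\bar m$ and $G_r\ge2$, both weights are bounded below by positive constants.

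The remaining task is the deterministic claim that the pair of equations
\[
\sigma^2a_g(\lambda)=\sigma_0^2a_g(\lambda_0),\qquad \sigma^2b(\lambda)=\sigma_0^2b(\lambda_0)
\]
forces $(\lambda,\sigma^2)=(\lambda_0,\sigma_0^2)$. Dividing the two equations shows it suffices that $a_g(\lambda)/b(\lambda)=\big((m-1)(1-\lambda)/(m-1+\lambda)\big)^2$ is strictly monotone in $\lambda\in(-1,1)$, which it is (it is strictly decreasing).

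To turn this into the quantitative lower bound, I would use that $m_{gr}$ takes only finitely many values in $\{2,\dots,\bar m\}$ and argue by continuity and compactness. For each fixed $m$ the function $(\lambda,\sigma^2)\mapsto(\sigma_0^2a_m(\lambda_0)-\sigma^2a_m(\lambda))^2+(\sigma_0^2b(\lambda_0)-\sigma^2b(\lambda))^2$ is continuous and has a unique zero, so it is bounded away from zero outside any neighbourhood of the truth. Minimizing over the finitely many $m$ gives a bound that is uniform in $R$, and this is enough for identifiable uniqueness.

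Combining Steps 1–3 with Amemiya's theorem yields $(\hat\lambda^{CF},\hat\sigma^2)\xrightarrow{p}(\lambda_0,\sigma_0^2)$.
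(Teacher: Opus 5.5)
Your proposal is correct and follows essentially the same route as the paper: expand the Frobenius criterion, discard the parameter-free quartic term, obtain uniform convergence of the remaining quadratic form in $\epsilon_r$ from an $O(1/R)$ variance bound plus a Lipschitz bound in $\lambda$, and identify $(\lambda_0,\sigma_0^2)$ through the orthogonal within-group/between-group projection decomposition, the monotonicity of $a_m(\lambda)/b(\lambda)$, and compactness over the finitely many admissible group sizes. The only slip is your aside that the dropped term is $O_p(1)$, which fails when urn sizes are unbounded (its expectation is of order $n/R$); since that term does not depend on $(\lambda,\sigma^2)$, it has no bearing on the argument.
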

The minimum-distance criterion $Q^{CF}(\lambda,\sigma^{2})$ is an average of independent urn-level contributions, providing a natural starting point for an asymptotic-normality argument. Establishing such a result would nevertheless require additional regularity conditions and a separate derivation, which we do not pursue.
\begin{proof}
Define $K_{0r}=I^{\ast}_{r}(I_{r}-\lambda_{0}W_{r})^{-1}$, and $K_{r}(\lambda)=I^{\ast}_{r}(I_{r}-\lambda W_{r})^{-1}$. By (\ref{eq:I_lW_inv}), Lemma \ref{lem:calculation}(ii), and (\ref{eq:prod_s}), $K_{0r}$ and $K_{r}(\lambda)$ are symmetric and commute. Since $I^{\ast}_{r}$ and $W_{r}$ commute, (\ref{eq:SAR_urn_star}) gives $I^{\ast}_{r}Y_{r}=K_{0r}\epsilon_{r}$. Therefore,
\begin{align}
Q^{CF}(\lambda,\sigma^{2}) & =\frac{1}{R}\sum^{R}_{r=1}\frac{1}{n_{r}}\left\Vert K_{0r}\epsilon_{r}\epsilon^{\prime}_{r}K_{0r}-\sigma^{2}K^{2}_{r}(\lambda)\right\Vert ^{2}_{F}\nonumber \\
 & =\frac{1}{R}\sum^{R}_{r=1}\frac{1}{n_{r}}\tr\left[K_{0r}\epsilon_{r}\epsilon^{\prime}_{r}K^{2}_{0r}\epsilon_{r}\epsilon^{\prime}_{r}K_{0r}-2\sigma^{2}K^{2}_{r}(\lambda)K_{0r}\epsilon_{r}\epsilon^{\prime}_{r}K_{0r}+\sigma^{4}K^{4}_{r}(\lambda)\right]\nonumber \\
 & =\underbrace{\frac{1}{R}\sum^{R}_{r=1}\frac{1}{n_{r}}\left(\epsilon^{\prime}_{r}K^{2}_{0r}\epsilon_{r}\right)^{2}}_{term\,1}-2\underbrace{\sigma^{2}\frac{1}{R}\sum^{R}_{r=1}\frac{1}{n_{r}}\epsilon^{\prime}_{r}K^{2}_{0r}K^{2}_{r}(\lambda)\epsilon_{r}}_{term\,2}+\underbrace{\frac{1}{R}\sum^{R}_{r=1}\frac{\sigma^{4}}{n_{r}}\tr\left(K^{4}_{r}(\lambda)\right)}_{term\,3}.\label{eq:Q_CF}
\end{align}

Since term 1 in (\ref{eq:Q_CF}) does not depend on $(\lambda,\sigma^{2})$, it does not affect the minimizer. Let $\widetilde{Q}^{CF}(\lambda,\sigma^{2})$ denote $Q^{CF}(\lambda,\sigma^{2})$ with term 1 removed. We first show that $\sup_{(\lambda,\sigma^{2})}\left|\widetilde{Q}^{CF}(\lambda,\sigma^{2})-\E\widetilde{Q}^{CF}(\lambda,\sigma^{2})\right|\xrightarrow{p}0$. Term 3 is nonstochastic and therefore contributes nothing to $\widetilde{Q}^{CF}-\E\widetilde{Q}^{CF}$. For term 2, let $\xi_{r}(\lambda)=\frac{1}{n_{r}}\epsilon^{\prime}_{r}K^{2}_{0r}K^{2}_{r}(\lambda)\epsilon_{r}$. Since $K^{2}_{0r}K^{2}_{r}(\lambda)$ is UBRC uniformly over $\lambda\in\Lambda$, Remark \ref{rem:vars}(iii) implies that $\Var\left(\xi_{r}(\lambda)\right)$ is $O(1/n_{r})$ uniformly. Hence for each fixed $\lambda$, $\Var\left(\frac{1}{R}\sum^{R}_{r=1}\xi_{r}(\lambda)\right)=\frac{1}{R^{2}}\sum_{r}O(1/n_{r})=O(\frac{1}{R})\rightarrow0$ and $\frac{1}{R}\sum^{R}_{r=1}\left(\xi_{r}(\lambda)-\E\xi_{r}(\lambda)\right)\xrightarrow{p}0$. We now verify uniformity in $\lambda$. By (\ref{eq:I_lW_inv}) and Lemma \ref{lem:calculation}(ii),
\[
K^{2}_{r}(\lambda)=\diag^{G_{r}}_{g=1}\{\left[\left(\frac{m_{gr}-1}{m_{gr}-1+\lambda}\right)^{2}-\left(\frac{1}{1-\lambda}\right)^{2}\right]I^{\ast}_{gr}\}+\left(\frac{1}{1-\lambda}\right)^{2}I^{\ast}_{r}.
\]
By the mean value theorem, 
\[
\left|\frac{1}{R}\sum^{R}_{r=1}\xi_{r}(\lambda^{\prime})-\frac{1}{R}\sum^{R}_{r=1}\xi_{r}(\lambda)\right|\leqslant|\lambda^{\prime}-\lambda|\sup_{\lambda\in\Lambda}\left|\frac{1}{R}\sum^{R}_{r=1}\frac{1}{n_{r}}\epsilon^{\prime}_{r}K^{2}_{0r}\frac{dK^{2}_{r}(\lambda)}{d\lambda}\epsilon_{r}\right|.
\]
Since $m_{gr}\geqslant2$ by Assumption \ref{assu:size} and $\Lambda$ is a compact subset of $(-1,1)$ by Assumption \ref{assu:lambda}, $dK^{2}_{r}(\lambda)/d\lambda$ is UBRC uniformly over $\lambda\in\Lambda$. Hence, by Lemma \ref{lem:rub}, $K^{2}_{0r}dK^{2}_{r}(\lambda)/d\lambda$ is also UBRC uniformly over $\lambda\in\Lambda$. Consequently $\sup_{\lambda,r}\left\Vert K^{2}_{0r}\frac{dK^{2}_{r}(\lambda)}{d\lambda}\right\Vert _{2}\leqslant C$ for some constant $C<\infty$ that does not depend on $n$, $r$, or $\lambda$. Therefore, $\sup_{\lambda\in\Lambda}\left|\frac{1}{n_{r}}\epsilon^{\prime}_{r}K^{2}_{0r}\frac{dK^{2}_{r}(\lambda)}{d\lambda}\epsilon_{r}\right|\leqslant\frac{C}{n_{r}}\epsilon^{\prime}_{r}\epsilon_{r}$ and hence
\[
\sup_{\lambda\in\Lambda}\left|\frac{1}{R}\sum^{R}_{r=1}\frac{1}{n_{r}}\epsilon^{\prime}_{r}K^{2}_{0r}\frac{dK^{2}_{r}(\lambda)}{d\lambda}\epsilon_{r}\right|\leqslant\frac{C}{R}\sum^{R}_{r=1}\frac{\epsilon^{\prime}_{r}\epsilon_{r}}{n_{r}}=O_{p}(1),
\]
where the last equality follows because Assumption \ref{assu:epsilon} implies $\E\epsilon^{\prime}_{r}\epsilon_{r}/n_{r}\leqslant C_{\sigma}$ for some constant $C_{\sigma}<\infty$. This establishes stochastic equicontinuity of $\frac{1}{R}\sum^{R}_{r=1}\xi_{r}(\lambda)$. In addition, $\frac{1}{R}\sum^{R}_{r=1}\E(\xi_{r}(\lambda))=\sigma^{2}_{0}\frac{1}{R}\sum^{R}_{r=1}\frac{1}{n_{r}}\tr\left[K^{2}_{0r}K^{2}_{r}(\lambda)\right]$ is uniformly equicontinuous because $K^{2}_{0r}\frac{dK^{2}_{r}(\lambda)}{d\lambda}$ is UBRC uniformly over $\lambda\in\Lambda$ as shown above and hence $\sup_{\lambda\in\Lambda}\left|\frac{d}{d\lambda}\left[\frac{1}{R}\sum^{R}_{r=1}\E(\xi_{r}(\lambda))\right]\right|\leqslant C$. Together with compactness of $\Lambda$, the preceding arguments verify the conditions of Corollary 2.2 of \citet{newey_uniform_1991}. Thus $\sup_{\lambda\in\Lambda}\left|\frac{1}{R}\sum^{R}_{r=1}\left[\xi_{r}(\lambda)-\E\xi_{r}(\lambda)\right]\right|\xrightarrow{p}0.$ Since $\sigma^{2}\in[c_{\sigma},C_{\sigma}]$, uniformity over $\sigma^{2}$ for term 2 follows immediately. Combining the preceding arguments gives $\sup_{(\lambda,\sigma^{2})}\left|\widetilde{Q}^{CF}(\lambda,\sigma^{2})-\E\widetilde{Q}^{CF}(\lambda,\sigma^{2})\right|\xrightarrow{p}0$.

It remains to show identification. Minimizing $\E\tilde{Q}^{CF}(\lambda,\sigma^{2})$ is equivalent to minimizing 
\begin{align*}
\frac{1}{R}\sum^{R}_{r=1}\frac{1}{n_{r}}\left[-2\sigma^{2}\E\left(\epsilon^{\prime}_{r}K^{2}_{0r}K^{2}_{r}\epsilon_{r}\right)+\tr\left(\sigma^{4}K^{4}_{r}\right)\right]= & \frac{1}{R}\sum^{R}_{r=1}\frac{1}{n_{r}}\tr\left(-2\sigma^{2}_{0}\sigma^{2}K^{2}_{0r}K^{2}_{r}+\sigma^{4}K^{4}_{r}\right).
\end{align*}
Since $\sigma^{4}_{0}\tr(K^{4}_{0r})$ is free of $(\lambda,\sigma^{2})$, subtracting the population criterion at $(\lambda_{0},\sigma^{2}_{0})$ yields 
\begin{equation}
\Delta(\lambda,\sigma^{2})=\E\widetilde{Q}^{CF}(\lambda,\sigma^{2})-\E\widetilde{Q}^{CF}(\lambda_{0},\sigma^{2}_{0})=\frac{1}{R}\sum^{R}_{r=1}\frac{1}{n_{r}}\tr\left[\left(\sigma^{2}K^{2}_{r}(\lambda)-\sigma^{2}_{0}K^{2}_{0r}\right)^{2}\right].\label{eq:Eq_CE}
\end{equation}
It remains to show that $\Delta(\lambda,\sigma^{2})$ is uniformly bounded away from zero outside every neighborhood of $(\lambda_{0},\sigma^{2}_{0})$, uniformly in $R$. Observe that 
\begin{align*}
\sigma^{2}_{0}K^{2}_{0r}-\sigma^{2}K^{2}_{r} & (\lambda)=I^{\ast}_{r}\diag\left\{ p_{K,gr}(\lambda,\sigma^{2})I^{\ast}_{gr}+q_{K}(\lambda,\sigma^{2})J^{\ast}_{gr}\right\} ,
\end{align*}
where $p_{K,gr}(\lambda,\sigma^{2})=\left(\frac{m_{gr}-1}{m_{gr}-1+\lambda_{0}}\right)^{2}\sigma^{2}_{0}-\left(\frac{m_{gr}-1}{m_{gr}-1+\lambda}\right)^{2}\sigma^{2}$, and $q_{K}(\lambda,\sigma^{2})=\left(\frac{1}{1-\lambda_{0}}\right)^{2}\sigma^{2}_{0}-\left(\frac{1}{1-\lambda}\right)^{2}\sigma^{2}$. Hence
\[
\left(\sigma^{2}K^{2}_{r}(\lambda)-\sigma^{2}_{0}K^{2}_{0r}\right)^{2}=I^{\ast}_{r}\diag\left\{ p^{2}_{K,gr}I^{\ast}_{gr}+q^{2}_{K}J^{\ast}_{gr}\right\} =\diag\{\left(p^{2}_{K,gr}-q^{2}_{K}\right)I^{\ast}_{gr}\}+q^{2}_{K}I^{\ast}_{r},
\]
\begin{align*}
\frac{1}{n_{r}}\tr\left[\left(\sigma^{2}K^{2}_{r}(\lambda)-\sigma^{2}_{0}K^{2}_{0r}\right)^{2}\right] & =\frac{1}{n_{r}}\left[\sum_{g}\left(p^{2}_{K,gr}-q^{2}_{K}\right)(m_{gr}-1)+q^{2}_{K}(n_{r}-1)\right]\\
 & =\frac{1}{n_{r}}\left[\sum_{g}p^{2}_{K,gr}(m_{gr}-1)+q^{2}_{K}(G_{r}-1)\right].
\end{align*}
The right-hand side is nonnegative. Since $2\leqslant m_{gr}\leqslant\bar{m}$, $\sum_{g}(m_{gr}-1)/n_{r}\geqslant1/2\geqslant1/\bar{m}$. Moreover, $n_{r}\leqslant\bar{m}G_{r}$ and $G_{r}\geqslant2$, so $(G_{r}-1)/n_{r}\geqslant1/(2\bar{m})$. Therefore, 
\[
\frac{1}{n_{r}}\tr\left[\left(\sigma^{2}K^{2}_{r}(\lambda)-\sigma^{2}_{0}K^{2}_{0r}\right)^{2}\right]\geqslant\frac{1}{\bar{m}}\min_{m}p^{2}_{K,m}(\lambda,\sigma^{2})+\frac{1}{2\bar{m}}q^{2}_{K}(\lambda,\sigma^{2}),
\]
where $p_{K,m}(\lambda,\sigma^{2})$ denotes $p_{K,gr}(\lambda,\sigma^{2})$ evaluated at an admissible group size $m$. For every admissible $m$, $p_{K,m}(\lambda,\sigma^{2})=q_{K}(\lambda,\sigma^{2})=0$ implies $\frac{\sigma^{2}_{0}}{\sigma^{2}}=\left(\frac{1-\lambda_{0}}{1-\lambda}\right)^{2}=\left(\frac{m-1+\lambda_{0}}{m-1+\lambda}\right)^{2}$. Since $\Lambda\subset(-1,1)$, all quantities in the ratios are positive, so $\frac{m-1+\lambda_{0}}{1-\lambda_{0}}=\frac{m-1+\lambda}{1-\lambda}$. Equivalently, $\frac{m}{1-\lambda_{0}}-1=\frac{m}{1-\lambda}-1$, which implies $\lambda=\lambda_{0}$. Then $q_{K}(\lambda,\sigma^{2})=0$ implies $\sigma^{2}=\sigma^{2}_{0}$. Hence the preceding lower bound is zero if and only if $(\lambda,\sigma^{2})=(\lambda_{0},\sigma^{2}_{0})$. Because the admissible group sizes form a finite set, the lower bound is continuous in $(\lambda,\sigma^{2})$. Therefore, by compactness, for every neighborhood $\mathcal{N}$ of $(\lambda_{0},\sigma^{2}_{0})$ there exists $\eta_{\mathcal{N}}>0$ such that $\inf_{(\lambda,\sigma^{2})\notin\mathcal{N}}\Delta(\lambda,\sigma^{2})\geqslant\eta_{\mathcal{N}}$ for every $R$. Thus $\E\widetilde{Q}^{CF}(\lambda,\sigma^{2})$ is uniformly separated from its minimum outside every neighborhood of $(\lambda_{0},\sigma^{2}_{0})$ and is uniquely minimized at $(\lambda_{0},\sigma^{2}_{0})$.

The uniform convergence result, the uniform identification result, and compactness of the parameter space establish consistency.
\end{proof}

\subsection{\citet{wang_peer_2010}}

\citet{wang_peer_2010} uses an $F$ test for the joint significance of peer group dummies in
\[
y_{igr}=\alpha_{r}+\phi_{gr}+\epsilon_{igr},
\]
where $\phi_{gr}$ denotes the peer-group fixed effect, and $\alpha_{r}$ denotes the urn fixed effect. \citet{stevenson_tests_2015} provides simulation evidence that the test fails to reject under negative correlation, while \citet{jochmans_testing_2023} questions its size control. The results below show that, under i.i.d. normal innovations, the test has exact size and is consistent against $\lambda_{0}>0$, but its rejection probability converges to zero under fixed alternatives $\lambda_{0}<0$.
\begin{prop}
\label{prop:wang} Let (\ref{eq:scalar}) be the true model and suppose Assumptions \ref{assu:lambda} to \ref{assu:epsilon} hold. In addition, suppose $\epsilon_{igr}\sim N(0,\sigma^{2}_{0})$ independently across $i,g,r$. Consider the usual upper-tail $F$ test for the joint significance of peer-group dummies, comparing the restricted regression with urn dummies only to the unrestricted regression with peer-group dummies. Under $H_{0}:\lambda_{0}=0$, the test has exact finite-sample size. Under any fixed alternative $\lambda_{0}>0$, the test is consistent as $n\rightarrow\infty$. Under any fixed alternative $\lambda_{0}<0$, its rejection probability converges to zero as $n\rightarrow\infty$.
\end{prop}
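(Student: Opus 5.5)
The plan is to use the exact orthogonal structure of the two least-squares fits. Under the stated normality, every quadratic form that appears is a sum of independent scaled $\chi^2$ variables. Write $P_U=\diag_{r,g}\{J^{\ast}_{gr}\}$ for the projection onto the peer-group dummies and $P_R=\diag_r\{J^{\ast}_r\}$ for the projection onto the urn dummies. Then
\[
RSS_U=Y'(I-P_U)Y=\sum_{r,g}\|I^{\ast}_{gr}Y_{gr}\|^2,\qquad RSS_R-RSS_U=Y'(P_U-P_R)Y,
\]
and the statistic is
\[
F=\frac{(RSS_R-RSS_U)/(G-R)}{RSS_U/(n-G)}.
\]

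First I will use the reduced form $Y_{gr}=(I_{gr}-\lambda_0W_{gr})^{-1}(\alpha_r\mathbf{1}_{gr}+\epsilon_{gr})$ together with (\ref{eq:I_lW_inv}). This gives
\[
I^{\ast}_{gr}Y_{gr}=c_{gr}I^{\ast}_{gr}\epsilon_{gr},\qquad c_{gr}=\frac{m_{gr}-1}{m_{gr}-1+\lambda_0},
\]
because the fixed effect is annihilated. The between-group, within-urn part is
\[
(P_U-P_R)Y=\frac{1}{1-\lambda_0}(P_U-P_R)\epsilon,
\]
since $J^{\ast}_{gr}(I-\lambda_0W)^{-1}=J^{\ast}_{gr}/(1-\lambda_0)$ and $P_R\mathbf{1}_r=\mathbf{1}_r$. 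Thus
\[
RSS_U=\sum_{r,g}c_{gr}^2\,\epsilon_{gr}'I^{\ast}_{gr}\epsilon_{gr},\qquad RSS_R-RSS_U=(1-\lambda_0)^{-2}\,\epsilon'(P_U-P_R)\epsilon .
\]
The projections $I^{\ast}_{gr}$ (for different $g,r$) and $P_U-P_R$ are mutually orthogonal. By Cochran's theorem for $N(0,\sigma_0^2I)$, the pieces $\epsilon_{gr}'I^{\ast}_{gr}\epsilon_{gr}/\sigma_0^2\sim\chi^2_{m_{gr}-1}$ and $\epsilon'(P_U-P_R)\epsilon/\sigma_0^2\sim\chi^2_{G-R}$ are all independent. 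Under $H_0$ we have $c_{gr}=1$ and $\lambda_0=0$, so $F\sim F(G-R,n-G)$ exactly, which gives exact size.

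For the alternatives, I will note that $G-R\geqslant G/2\to\infty$ and $n-G=\sum(m_{gr}-1)\geqslant G\to\infty$. Chebyshev then gives three facts. The numerator satisfies $(RSS_R-RSS_U)/(G-R)\xrightarrow{p}\sigma_0^2(1-\lambda_0)^{-2}$. The denominator satisfies
\[
RSS_U/(n-G)-\sigma_0^2\bar c_n^2\xrightarrow{p}0,\qquad \bar c_n^2=\sum_{r,g}\frac{m_{gr}-1}{n-G}\,c_{gr}^2 .
\]
Its variance is at most $2\sigma_0^4\max c_{gr}^4/(n-G)$, and $c_{gr}^2$ is uniformly bounded because $\Lambda$ is compact in $(-1,1)$ and $m_{gr}\geqslant2$. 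Finally, the upper-$\alpha$ critical value of $F(G-R,n-G)$ tends to $1$, since both $\chi^2$ pieces divided by their degrees of freedom tend to $1$.

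The key step, and the only delicate one, is to bound $\bar c_n^2$ uniformly even though the group sizes need not be individually bounded. I will use pointwise monotonicity in $m$ rather than any limit of $\bar c_n$. If $\lambda_0>0$, then $c_{gr}<1$ for every group, so $\bar c_n^2\leqslant1$ and the ratio of the numerator and denominator limits is at least $(1-\lambda_0)^{-2}>1$. Hence $P(F>\text{crit})\to1$. If $\lambda_0<0$, then $c_{gr}>1$ for every group, so $\bar c_n^2\geqslant1$ and the ratio is at most $(1-\lambda_0)^{-2}<1$. Hence $P(F>\text{crit})\to0$. The argument needs only these bounds and $O_p$ statements along the sequence; no convergence of $\bar c_n$ itself is required.
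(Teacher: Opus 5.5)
Your proposal is correct and follows essentially the same route as the paper. The paper likewise reduces $RSS_U$ to $\sum_{r,g}c_{gr}^2\,\epsilon_{gr}'I^{\ast}_{gr}\epsilon_{gr}$ and $RSS_R-RSS_U$ to $(1-\lambda_0)^{-2}\epsilon_r'[I^{\ast}_r-\diag\{I^{\ast}_{gr}\}]\epsilon_r$, obtains the exact $F_{G-R,n-G}$ law from the mutually orthogonal projections under normality, and treats fixed alternatives with laws of large numbers and the same group-by-group monotonicity bound, phrased via $a_{gr}(\lambda_0)=(1-\lambda_0)^2c_{gr}^2$ compared with $(1-\lambda_0)^2$; the only cosmetic difference is that you use direct Chebyshev bounds from the exact $\chi^2$ variances, whereas the paper invokes its general quadratic-form law of large numbers.
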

\begin{proof}
Because each urn dummy is a linear combination of the peer-group dummies within that urn, the unrestricted regression has the same column space as a regression on all peer-group dummies. Its residuals are therefore obtained by demeaning within peer groups. Hence the unrestricted residual sum of squares for urn $r$ is $SSR_{U,r}=Y^{\prime}_{r}\diag\{I^{\ast}_{gr}\}Y_{r}$. The restricted model includes only urn dummies, so its residual sum of squares for urn $r$ is $SSR_{R,r}=Y^{\prime}_{r}I^{\ast}_{r}Y_{r}$.

From (\ref{eq:SAR_urn}) we have $Y_{r}=(I_{r}-\lambda_{0}W_{r})^{-1}(\alpha_{r}\mathbf{1}_{r}+\epsilon_{r})=\frac{\alpha_{r}}{1-\lambda_{0}}\mathbf{1}_{r}+(I_{r}-\lambda_{0}W_{r})^{-1}\epsilon_{r}$, and $\diag\{I^{\ast}_{gr}\}\mathbf{1}_{r}=I^{\ast}_{r}\mathbf{1}_{r}=0$. Using (\ref{eq:I_lW_inv}) and (\ref{eq:prod_s}), we have 
\[
(I_{r}-\lambda_{0}W_{r})^{-2}=\diag\left\{ \left(\frac{m_{gr}-1}{m_{gr}-1+\lambda_{0}}\right)^{2}I^{\ast}_{gr}+\left(\frac{1}{1-\lambda_{0}}\right)^{2}J^{\ast}_{gr}\right\} .
\]
Therefore 
\begin{align*}
SSR_{U,r} & =Y^{\prime}_{r}\diag\{I^{\ast}_{gr}\}Y_{r}=\epsilon^{\prime}_{r}(I_{r}-\lambda_{0}W_{r})^{-1}\diag\{I^{\ast}_{gr}\}(I_{r}-\lambda_{0}W_{r})^{-1}\epsilon_{r}\\
 & =\epsilon^{\prime}_{r}\diag\left\{ \left(\frac{m_{gr}-1}{m_{gr}-1+\lambda_{0}}\right)^{2}I^{\ast}_{gr}\right\} \epsilon_{r}.
\end{align*}
Similarly, by Lemma \ref{lem:calculation}(ii), 
\begin{align*}
SSR_{R,r} & =Y^{\prime}_{r}I^{\ast}_{r}Y_{r}=\epsilon^{\prime}_{r}(I_{r}-\lambda_{0}W_{r})^{-1}I^{\ast}_{r}(I_{r}-\lambda_{0}W_{r})^{-1}\epsilon_{r}\\
 & =\epsilon^{\prime}_{r}\left[\diag\left\{ \left[\left(\frac{m_{gr}-1}{m_{gr}-1+\lambda_{0}}\right)^{2}-\left(\frac{1}{1-\lambda_{0}}\right)^{2}\right]I^{\ast}_{gr}\right\} +\left(\frac{1}{1-\lambda_{0}}\right)^{2}I^{\ast}_{r}\right]\epsilon_{r}.
\end{align*}
It follows that
\[
SSR_{R,r}-SSR_{U,r}=\left(\frac{1}{1-\lambda_{0}}\right)^{2}\epsilon^{\prime}_{r}\left[I^{\ast}_{r}-\diag\{I^{\ast}_{gr}\}\right]\epsilon_{r}.
\]

Because the unrestricted model contains $G$ peer-group dummies and the restricted model contains $R$ urn dummies, the usual $F$ statistic is
\begin{align}
F_{\lambda} & =\frac{\sum^{R}_{r=1}\left(\frac{1}{1-\lambda_{0}}\right)^{2}\epsilon^{\prime}_{r}\left[I^{\ast}_{r}-\diag\{I^{\ast}_{gr}\}\right]\epsilon_{r}/(G-R)}{\sum^{R}_{r=1}\epsilon^{\prime}_{r}\diag\left\{ \left(\frac{m_{gr}-1}{m_{gr}-1+\lambda_{0}}\right)^{2}I^{\ast}_{gr}\right\} \epsilon_{r}/(n-G)}\nonumber \\
 & =\frac{\sum^{R}_{r=1}\epsilon^{\prime}_{r}\left[I^{\ast}_{r}-\diag\{I^{\ast}_{gr}\}\right]\epsilon_{r}/(G-R)}{\sum^{R}_{r=1}\epsilon^{\prime}_{r}\diag\left\{ \left[\frac{(1-\lambda_{0})(m_{gr}-1)}{m_{gr}-1+\lambda_{0}}\right]^{2}I^{\ast}_{gr}\right\} \epsilon_{r}/(n-G)}.\label{eq:F}
\end{align}
Under the null, $\left[\frac{(1-\lambda_{0})(m_{gr}-1)}{m_{gr}-1+\lambda_{0}}\right]^{2}=1$, so
\[
F_{\lambda}=\frac{\sum^{R}_{r=1}\epsilon_{r}'\left[I^{*}_{r}-\diag_{g}\{I^{*}_{gr}\}\right]\epsilon_{r}/(G-R)}{\sum^{R}_{r=1}\epsilon_{r}'\diag_{g}\{I^{*}_{gr}\}\epsilon_{r}/(n-G)}.
\]
The matrices $I^{*}_{r}-\diag_{g}\{I^{*}_{gr}\}=\diag_{g}\{J^{*}_{gr}\}-J^{*}_{r}$ and $\diag_{g}\{I^{*}_{gr}\}$ are orthogonal projection matrices with ranks $G_{r}-1$ and $n_{r}-G_{r}$, respectively. Moreover, $\left[I^{*}_{r}-\diag_{g}\{I^{*}_{gr}\}\right]\diag_{g}\{I^{*}_{gr}\}=0$, so the two projection matrices are mutually orthogonal. Their block-diagonal counterparts across urns therefore have ranks $G-R$ and $n-G$, respectively. Since $\epsilon_{igr}\sim N(0,\sigma^{2}_{0})$ independently across $i,g,r$, the numerator and denominator quadratic forms are independent chi-square variables scaled by $\sigma^{2}_{0}$.\footnote{The standard non-normal large-sample justification for the OLS F-test applies when the number of tested restrictions is fixed. Here, the number of restrictions is $G-R$, which increases with the sample size. Moreover, because the framework permits peer-group sizes to remain bounded, the corresponding projection matrix can have non-negligible diagonal elements. As a result, without Gaussianity, the null distribution of the numerator quadratic form may depend on fourth moments of the innovations, and the conventional $F$-critical values need not yield correct asymptotic size.} Consequently, $F_{\lambda}\sim F_{G-R,n-G}$ under $\lambda_{0}=0$, and the usual $F$-test has exact size.

Now consider a fixed alternative $\lambda_{0}\neq0$. In the denominator of (\ref{eq:F}), $a_{gr}(\lambda_{0})=\left(\frac{(1-\lambda_{0})(m_{gr}-1)}{m_{gr}-1+\lambda_{0}}\right)^{2}=\left(1-\frac{\lambda_{0}m_{gr}}{m_{gr}-1+\lambda_{0}}\right)^{2}$. For fixed $\lambda_{0}\in(-1,1)$, $a_{gr}(\lambda_{0})$ is uniformly bounded over $g,r,n$. Moreover, the matrices $\diag^{R}_{r=1}\left\{ I^{\ast}_{r}-\diag_{g}\{I^{\ast}_{gr}\}\right\} $ and $\diag^{R}_{r=1}\left\{ \diag_{g}\{a_{gr}(\lambda_{0})I^{\ast}_{gr}\}\right\} $ are UBRC. Since $G_{r}\geqslant2$ and $m_{gr}\geqslant2$, $G-R\geqslant G/2$ and $n-G\geqslant G$. Assumption \ref{assu:size} therefore implies that $n/(G-R)$ and $n/(n-G)$ are uniformly bounded and that both $G-R$ and $n-G$ diverge. Applying Lemma \ref{lem:convergence_ip}(i) and using $\tr\left[\diag^{R}_{r=1}\left\{ I^{\ast}_{r}-\diag_{g}\{I^{\ast}_{gr}\}\right\} \right]=G-R$ gives 
\[
\frac{\sum^{R}_{r=1}\epsilon_{r}'\left[I^{*}_{r}-\diag_{g}\{I^{*}_{gr}\}\right]\epsilon_{r}}{G-R}\xrightarrow{p}\sigma^{2}_{0}.
\]
Similarly, since $\tr\left[\diag^{R}_{r=1}\left\{ \diag_{g}\{a_{gr}(\lambda_{0})I^{\ast}_{gr}\}\right\} \right]=\sum^{R}_{r=1}\sum^{G_{r}}_{g=1}a_{gr}(\lambda_{0})(m_{gr}-1),$ Lemma \ref{lem:convergence_ip}(i) gives 
\[
\frac{\sum^{R}_{r=1}\epsilon_{r}'\diag_{g}\{a_{gr}(\lambda_{0})I^{\ast}_{gr}\}\epsilon_{r}}{n-G}-\sigma^{2}_{0}\bar{a}_{n}(\lambda_{0})\xrightarrow{p}0,
\]
where
\[
\bar{a}_{n}(\lambda_{0})=\frac{\sum^{R}_{r=1}\sum^{G_{r}}_{g=1}a_{gr}(\lambda_{0})(m_{gr}-1)}{n-G}.
\]
If $\lambda_{0}>0$, then $0<\frac{\lambda_{0}m_{gr}}{m_{gr}-1+\lambda_{0}}<1$ and $0<a_{gr}(\lambda_{0})<(1-\lambda_{0})^{2}<1.$ Hence $\bar{a}_{n}(\lambda_{0})\leqslant(1-\lambda_{0})^{2}<1$ uniformly in $n$. The preceding convergences therefore imply that there exists $\delta>0$ such that $\Pr(F_{\lambda}>1+\delta)\rightarrow1.$ The usual upper-tail critical value converges to one as both degrees of freedom diverge. Therefore the rejection probability converges to one, so the test is consistent against fixed alternatives $\lambda_{0}>0$.

By contrast, if $\lambda_{0}<0$, then $a_{gr}(\lambda_{0})>(1-\lambda_{0})^{2}>1$. Thus $\bar{a}_{n}(\lambda_{0})\geqslant(1-\lambda_{0})^{2}>1$ uniformly in $n$. The preceding convergences therefore imply that there exists $\delta>0$ such that $\Pr(F_{\lambda}<1-\delta)\rightarrow1$. Since the usual upper-tail critical value converges to one, the rejection probability converges to zero under any fixed $\lambda_{0}<0$.
\end{proof}

\section{Additional Monte Carlo Results}

The additional simulations consider several departures from the baseline design. We report results for $\lambda$ and omit those for $\beta_{1}$ and $\beta_{2}$. As in the main text, the left panel of each table fixes $G_{r}=2$ and increases $R$, whereas the right panel fixes $R=2$ and increases $G_{r}$. The changes to the baseline specification are described below.

Table \ref{tab:mc-nox} considers the model without covariates, so that $\lambda$ is identified by the quadratic moment alone. The finite-sample performance is similar to that in the baseline design: bias is small and the rejection frequencies approach the nominal level as $G$ increases. Table \ref{tab:mc-chisq} instead retains the baseline heteroskedasticity but replaces the normal innovations with
\[
\epsilon_{igr}=\sqrt{m_{gr}+1}\frac{\xi_{igr}-3}{\sqrt{6}},\qquad\xi_{igr}\stackrel{\mathrm{i.i.d.}}{\sim}\chi^{2}_{3}.
\]
The non-normal design produces more pronounced over-rejection at small $G$, particularly when $G_{r}=2$ and the number of urns increases. This distortion declines steadily with $G$, and the rejection frequencies are close to the nominal level in the larger designs.

Tables \ref{tab:mc-m3} and \ref{tab:mc-m46} examine the role of peer-group size. Table \ref{tab:mc-m3} sets $m_{gr}=3$ for all groups and uses homoskedastic $N(0,1)$ innovations.\footnote{In unreported simulations, we retain $m_{gr}\in\{2,4\}$ but replace the baseline heteroskedastic innovations with homoskedastic $N(0,1)$ innovations. The resulting changes are generally smaller than those obtained when group size is additionally fixed at $m_{gr}=3$, particularly for Monte Carlo dispersion and average estimated standard errors. Because the two designs differ along more than one dimension, this comparison is only suggestive about the contribution of group-size variation.} The estimator continues to exhibit decreasing bias and approximately correct inference as $G$ increases, providing a finite-sample illustration of the theoretical result that identification does not require variation in peer-group size. Finite-sample bias and dispersion are nevertheless somewhat larger than in the baseline design when $G$ is small. Table \ref{tab:mc-m46}, which instead increases group sizes to $m_{gr}\in\{4,6\}$, yields a similar pattern: finite-sample bias and dispersion are larger in the smaller designs, but the differences diminish as the number of peer groups increases.

Overall, the additional simulations reinforce the main Monte Carlo findings. Performance improves along both asymptotic sequences, including when identification relies only on the quadratic moment and when all peer groups have the same size. The main finite-sample qualification is that Wald inference can over-reject when the number of peer groups is small, with the distortion most pronounced under non-normal innovations and when the number of groups per urn remains small.

% Generated by simulations/export_mc_tables.R.
% Do not edit this file by hand.

\begin{table}[!htbp]
\centering
\begin{threeparttable}
\caption{Monte Carlo results without covariates}
\label{tab:mc-nox}
\begin{tabular}{llccc@{\hspace{1.2em}}c@{\hspace{1.2em}}ccc}
\hline\hline
 & & \multicolumn{3}{c}{$G_r=2$, growing $R$} & & \multicolumn{3}{c}{$R=2$, growing $G_r$} \\
\cline{3-5}\cline{7-9}
$G$ & Statistic & $\lambda_0=-0.4$ & $\lambda_0=0$ & $\lambda_0=0.4$ & & $\lambda_0=-0.4$ & $\lambda_0=0$ & $\lambda_0=0.4$ \\
\hline
20 & Bias & -0.014 & -0.026 & -0.025 &  & -0.010 & -0.012 & -0.014 \\
 & MC SD & 0.154 & 0.164 & 0.129 &  & 0.134 & 0.132 & 0.102 \\
 & Avg. SE & 0.134 & 0.140 & 0.110 &  & 0.124 & 0.125 & 0.094 \\
 & Rej. freq. & 0.105 & 0.112 & 0.106 &  & 0.079 & 0.073 & 0.076 \\
\noalign{\smallskip}
40 & Bias & -0.008 & -0.011 & -0.012 &  & -0.001 & -0.006 & -0.007 \\
 & MC SD & 0.104 & 0.111 & 0.087 &  & 0.093 & 0.092 & 0.069 \\
 & Avg. SE & 0.098 & 0.103 & 0.081 &  & 0.089 & 0.088 & 0.067 \\
 & Rej. freq. & 0.081 & 0.083 & 0.072 &  & 0.063 & 0.063 & 0.055 \\
\noalign{\smallskip}
100 & Bias & -0.003 & -0.004 & -0.005 &  & -0.001 & -0.002 & -0.003 \\
 & MC SD & 0.065 & 0.069 & 0.054 &  & 0.056 & 0.058 & 0.043 \\
 & Avg. SE & 0.064 & 0.067 & 0.052 &  & 0.056 & 0.056 & 0.042 \\
 & Rej. freq. & 0.060 & 0.062 & 0.062 &  & 0.045 & 0.059 & 0.057 \\
\noalign{\smallskip}
200 & Bias & -0.001 & -0.002 & -0.002 &  & -0.001 & -0.002 & -0.002 \\
 & MC SD & 0.046 & 0.049 & 0.038 &  & 0.040 & 0.040 & 0.030 \\
 & Avg. SE & 0.045 & 0.048 & 0.037 &  & 0.040 & 0.040 & 0.030 \\
 & Rej. freq. & 0.057 & 0.057 & 0.055 &  & 0.053 & 0.054 & 0.047 \\
\noalign{\smallskip}
400 & Bias & -0.001 & -0.001 & -0.001 &  & -0.001 & -0.001 & -0.001 \\
 & MC SD & 0.033 & 0.034 & 0.027 &  & 0.028 & 0.028 & 0.022 \\
 & Avg. SE & 0.032 & 0.034 & 0.027 &  & 0.028 & 0.028 & 0.021 \\
 & Rej. freq. & 0.060 & 0.051 & 0.053 &  & 0.049 & 0.054 & 0.058 \\
\noalign{\smallskip}
1,000 & Bias & -0.000 & -0.000 & -0.001 &  & 0.000 & 0.000 & -0.000 \\
 & MC SD & 0.021 & 0.022 & 0.017 &  & 0.018 & 0.018 & 0.013 \\
 & Avg. SE & 0.021 & 0.022 & 0.017 &  & 0.018 & 0.018 & 0.013 \\
 & Rej. freq. & 0.052 & 0.057 & 0.051 &  & 0.057 & 0.051 & 0.049 \\
\noalign{\hrule height 1pt}
\end{tabular}
\begin{tablenotes}[flushleft]
\footnotesize
\item[] \textit{Notes:} The first column reports the total number of peer groups, $G=\sum_r G_r$. For each value of $G$, the four rows report bias, Monte Carlo standard deviation, average estimated standard error, and the rejection frequency of the nominal 5 percent two-sided Wald test of the corresponding true parameter value. All other features follow the baseline design, except that covariates are omitted from both the data-generating process and estimation. Only $\lambda$ is estimated. Each design cell has 5,000 replications.
\end{tablenotes}
\end{threeparttable}
\end{table}

% Generated by simulations/export_mc_tables.R.
% Do not edit this file by hand.

\begin{table}[!htbp]
\centering
\begin{threeparttable}
\caption{Monte Carlo results with non-normal innovations}
\label{tab:mc-chisq}
\begin{tabular}{llccc@{\hspace{1.2em}}c@{\hspace{1.2em}}ccc}
\hline\hline
 & & \multicolumn{3}{c}{$G_r=2$, growing $R$} & & \multicolumn{3}{c}{$R=2$, growing $G_r$} \\
\cline{3-5}\cline{7-9}
$G$ & Statistic & $\lambda_0=-0.4$ & $\lambda_0=0$ & $\lambda_0=0.4$ & & $\lambda_0=-0.4$ & $\lambda_0=0$ & $\lambda_0=0.4$ \\
\hline
20 & Bias & -0.043 & -0.041 & -0.032 &  & -0.026 & -0.026 & -0.016 \\
 & MC SD & 0.157 & 0.166 & 0.127 &  & 0.135 & 0.133 & 0.099 \\
 & Avg. SE & 0.124 & 0.130 & 0.102 &  & 0.116 & 0.117 & 0.087 \\
 & Rej. freq. & 0.174 & 0.168 & 0.145 &  & 0.127 & 0.107 & 0.101 \\
\noalign{\smallskip}
40 & Bias & -0.024 & -0.023 & -0.015 &  & -0.013 & -0.012 & -0.009 \\
 & MC SD & 0.105 & 0.111 & 0.085 &  & 0.092 & 0.089 & 0.067 \\
 & Avg. SE & 0.092 & 0.096 & 0.075 &  & 0.085 & 0.084 & 0.063 \\
 & Rej. freq. & 0.124 & 0.124 & 0.107 &  & 0.091 & 0.082 & 0.079 \\
\noalign{\smallskip}
100 & Bias & -0.009 & -0.007 & -0.005 &  & -0.005 & -0.005 & -0.004 \\
 & MC SD & 0.066 & 0.067 & 0.052 &  & 0.056 & 0.055 & 0.042 \\
 & Avg. SE & 0.061 & 0.064 & 0.049 &  & 0.055 & 0.054 & 0.040 \\
 & Rej. freq. & 0.087 & 0.074 & 0.075 &  & 0.065 & 0.062 & 0.062 \\
\noalign{\smallskip}
200 & Bias & -0.004 & -0.003 & -0.003 &  & -0.002 & -0.004 & -0.002 \\
 & MC SD & 0.046 & 0.047 & 0.037 &  & 0.040 & 0.039 & 0.030 \\
 & Avg. SE & 0.044 & 0.046 & 0.036 &  & 0.039 & 0.039 & 0.029 \\
 & Rej. freq. & 0.068 & 0.066 & 0.066 &  & 0.067 & 0.060 & 0.059 \\
\noalign{\smallskip}
400 & Bias & -0.002 & -0.002 & -0.001 &  & -0.002 & -0.001 & -0.001 \\
 & MC SD & 0.032 & 0.034 & 0.026 &  & 0.028 & 0.028 & 0.020 \\
 & Avg. SE & 0.031 & 0.033 & 0.025 &  & 0.028 & 0.028 & 0.020 \\
 & Rej. freq. & 0.057 & 0.061 & 0.064 &  & 0.055 & 0.053 & 0.047 \\
\noalign{\smallskip}
1,000 & Bias & -0.001 & -0.001 & -0.000 &  & -0.001 & -0.000 & -0.001 \\
 & MC SD & 0.020 & 0.021 & 0.016 &  & 0.018 & 0.017 & 0.013 \\
 & Avg. SE & 0.020 & 0.021 & 0.016 &  & 0.018 & 0.017 & 0.013 \\
 & Rej. freq. & 0.057 & 0.058 & 0.049 &  & 0.050 & 0.051 & 0.052 \\
\noalign{\hrule height 1pt}
\end{tabular}
\begin{tablenotes}[flushleft]
\footnotesize
\item[] \textit{Notes:} The first column reports the total number of peer groups, $G=\sum_r G_r$. For each value of $G$, the four rows report bias, Monte Carlo standard deviation, average estimated standard error, and the rejection frequency of the nominal 5 percent two-sided Wald test of the corresponding true parameter value. All other features follow the baseline design, except that $\epsilon_{igr}=\sqrt{m_{gr}+1}(\xi_{igr}-3)/\sqrt{6}$, where $\xi_{igr}\sim\chi^2_3$. Only results for $\lambda$ are reported. Each design cell has 5,000 replications.
\end{tablenotes}
\end{threeparttable}
\end{table}

% Generated by simulations/export_mc_tables.R.
% Do not edit this file by hand.

\begin{table}[!htbp]
\centering
\begin{threeparttable}
\caption{Monte Carlo results with $m_{gr}=3$ and homoskedastic innovations}
\label{tab:mc-m3}
\begin{tabular}{llccc@{\hspace{1.2em}}c@{\hspace{1.2em}}ccc}
\hline\hline
 & & \multicolumn{3}{c}{$G_r=2$, growing $R$} & & \multicolumn{3}{c}{$R=2$, growing $G_r$} \\
\cline{3-5}\cline{7-9}
$G$ & Statistic & $\lambda_0=-0.4$ & $\lambda_0=0$ & $\lambda_0=0.4$ & & $\lambda_0=-0.4$ & $\lambda_0=0$ & $\lambda_0=0.4$ \\
\hline
20 & Bias & -0.063 & -0.069 & -0.053 &  & -0.028 & -0.029 & -0.024 \\
 & MC SD & 0.203 & 0.190 & 0.144 &  & 0.156 & 0.142 & 0.106 \\
 & Avg. SE & 0.178 & 0.163 & 0.121 &  & 0.143 & 0.130 & 0.095 \\
 & Rej. freq. & 0.116 & 0.108 & 0.092 &  & 0.088 & 0.083 & 0.082 \\
\noalign{\smallskip}
40 & Bias & -0.028 & -0.030 & -0.026 &  & -0.008 & -0.013 & -0.011 \\
 & MC SD & 0.136 & 0.124 & 0.091 &  & 0.107 & 0.096 & 0.070 \\
 & Avg. SE & 0.128 & 0.116 & 0.085 &  & 0.102 & 0.091 & 0.066 \\
 & Rej. freq. & 0.075 & 0.074 & 0.064 &  & 0.067 & 0.070 & 0.065 \\
\noalign{\smallskip}
100 & Bias & -0.013 & -0.011 & -0.009 &  & -0.004 & -0.004 & -0.003 \\
 & MC SD & 0.085 & 0.076 & 0.055 &  & 0.065 & 0.060 & 0.042 \\
 & Avg. SE & 0.083 & 0.074 & 0.054 &  & 0.064 & 0.058 & 0.042 \\
 & Rej. freq. & 0.066 & 0.060 & 0.055 &  & 0.057 & 0.060 & 0.052 \\
\noalign{\smallskip}
200 & Bias & -0.004 & -0.006 & -0.005 &  & -0.003 & -0.003 & -0.002 \\
 & MC SD & 0.059 & 0.053 & 0.038 &  & 0.046 & 0.042 & 0.029 \\
 & Avg. SE & 0.059 & 0.053 & 0.038 &  & 0.046 & 0.041 & 0.029 \\
 & Rej. freq. & 0.061 & 0.054 & 0.052 &  & 0.052 & 0.058 & 0.057 \\
\noalign{\smallskip}
400 & Bias & -0.003 & -0.003 & -0.003 &  & -0.001 & -0.001 & -0.001 \\
 & MC SD & 0.043 & 0.037 & 0.027 &  & 0.032 & 0.030 & 0.021 \\
 & Avg. SE & 0.042 & 0.037 & 0.027 &  & 0.032 & 0.029 & 0.021 \\
 & Rej. freq. & 0.056 & 0.049 & 0.052 &  & 0.050 & 0.054 & 0.052 \\
\noalign{\smallskip}
1,000 & Bias & -0.001 & -0.001 & -0.001 &  & -0.001 & -0.000 & -0.000 \\
 & MC SD & 0.027 & 0.024 & 0.017 &  & 0.020 & 0.018 & 0.013 \\
 & Avg. SE & 0.026 & 0.024 & 0.017 &  & 0.020 & 0.018 & 0.013 \\
 & Rej. freq. & 0.054 & 0.049 & 0.049 &  & 0.046 & 0.048 & 0.051 \\
\noalign{\hrule height 1pt}
\end{tabular}
\begin{tablenotes}[flushleft]
\footnotesize
\item[] \textit{Notes:} The first column reports the total number of peer groups, $G=\sum_r G_r$. For each value of $G$, the four rows report bias, Monte Carlo standard deviation, average estimated standard error, and the rejection frequency of the nominal 5 percent two-sided Wald test of the corresponding true parameter value. All other features follow the baseline design, except that $m_{gr}=3$ and $\epsilon_{igr}\sim N(0,1)$. Only results for $\lambda$ are reported. Each design cell has 5,000 replications.
\end{tablenotes}
\end{threeparttable}
\end{table}

% Generated by simulations/export_mc_tables.R.
% Do not edit this file by hand.

\begin{table}[!htbp]
\centering
\begin{threeparttable}
\caption{Monte Carlo results with $m_{gr} \in \{4,6\}$}
\label{tab:mc-m46}
\begin{tabular}{llccc@{\hspace{1.2em}}c@{\hspace{1.2em}}ccc}
\hline\hline
 & & \multicolumn{3}{c}{$G_r=2$, growing $R$} & & \multicolumn{3}{c}{$R=2$, growing $G_r$} \\
\cline{3-5}\cline{7-9}
$G$ & Statistic & $\lambda_0=-0.4$ & $\lambda_0=0$ & $\lambda_0=0.4$ & & $\lambda_0=-0.4$ & $\lambda_0=0$ & $\lambda_0=0.4$ \\
\hline
20 & Bias & -0.113 & -0.099 & -0.068 &  & -0.064 & -0.052 & -0.032 \\
 & MC SD & 0.250 & 0.229 & 0.157 &  & 0.198 & 0.164 & 0.107 \\
 & Avg. SE & 0.237 & 0.194 & 0.132 &  & 0.187 & 0.151 & 0.100 \\
 & Rej. freq. & 0.098 & 0.090 & 0.076 &  & 0.076 & 0.072 & 0.058 \\
\noalign{\smallskip}
40 & Bias & -0.056 & -0.049 & -0.031 &  & -0.025 & -0.023 & -0.016 \\
 & MC SD & 0.179 & 0.146 & 0.098 &  & 0.132 & 0.107 & 0.072 \\
 & Avg. SE & 0.166 & 0.135 & 0.091 &  & 0.129 & 0.103 & 0.069 \\
 & Rej. freq. & 0.074 & 0.068 & 0.060 &  & 0.061 & 0.057 & 0.055 \\
\noalign{\smallskip}
100 & Bias & -0.023 & -0.019 & -0.012 &  & -0.010 & -0.007 & -0.006 \\
 & MC SD & 0.107 & 0.086 & 0.057 &  & 0.082 & 0.065 & 0.043 \\
 & Avg. SE & 0.105 & 0.084 & 0.056 &  & 0.080 & 0.064 & 0.043 \\
 & Rej. freq. & 0.059 & 0.053 & 0.051 &  & 0.056 & 0.052 & 0.055 \\
\noalign{\smallskip}
200 & Bias & -0.012 & -0.008 & -0.006 &  & -0.006 & -0.004 & -0.003 \\
 & MC SD & 0.074 & 0.061 & 0.040 &  & 0.057 & 0.045 & 0.031 \\
 & Avg. SE & 0.074 & 0.059 & 0.040 &  & 0.057 & 0.045 & 0.030 \\
 & Rej. freq. & 0.051 & 0.057 & 0.056 &  & 0.055 & 0.049 & 0.053 \\
\noalign{\smallskip}
400 & Bias & -0.005 & -0.004 & -0.003 &  & -0.003 & -0.001 & -0.001 \\
 & MC SD & 0.052 & 0.042 & 0.028 &  & 0.041 & 0.033 & 0.021 \\
 & Avg. SE & 0.052 & 0.042 & 0.028 &  & 0.040 & 0.032 & 0.021 \\
 & Rej. freq. & 0.049 & 0.048 & 0.054 &  & 0.055 & 0.057 & 0.049 \\
\noalign{\smallskip}
1,000 & Bias & -0.002 & -0.002 & -0.001 &  & -0.001 & -0.001 & -0.000 \\
 & MC SD & 0.033 & 0.027 & 0.018 &  & 0.025 & 0.020 & 0.013 \\
 & Avg. SE & 0.033 & 0.027 & 0.018 &  & 0.025 & 0.020 & 0.013 \\
 & Rej. freq. & 0.050 & 0.054 & 0.055 &  & 0.045 & 0.042 & 0.051 \\
\noalign{\hrule height 1pt}
\end{tabular}
\begin{tablenotes}[flushleft]
\footnotesize
\item[] \textit{Notes:} The first column reports the total number of peer groups, $G=\sum_r G_r$. For each value of $G$, the four rows report bias, Monte Carlo standard deviation, average estimated standard error, and the rejection frequency of the nominal 5 percent two-sided Wald test of the corresponding true parameter value. All other features follow the baseline design, except that $m_{gr} \in \{4,6\}$. Only results for $\lambda$ are reported. Each design cell has 5,000 replications.
\end{tablenotes}
\end{threeparttable}
\end{table}

\section{Additional Empirical Application}

We also apply the method to the data in \citet{guryan_peer_2009} to study peer effects among professional golf players. Golfers are randomly assigned to groups of three within tournament-category cells. We use first-round observations rather than pooling multiple rounds, avoiding repeated observations of the same player within a tournament. The data contain approximately 280 urns, defined by tournament-category combinations, and approximately 3,000 peer groups. However, repeated appearances of the same golfer across tournaments are not covered by Assumption \ref{assu:epsilon}. We therefore treat this application primarily as an illustration of the estimator rather than as substantive evidence on peer effects.

We first test the random-assignment implication using
\[
X_{igr}=\alpha_{r}+\lambda\bar{X}_{(-i)gr}+\epsilon_{igr},
\]
where $X$ is a predetermined characteristic of player $i$ in peer group $g$ of urn $r$. Following \citet{guryan_peer_2009}, we use ``handicap'' as the preferred measure of ability and consider driving distance, putts per round, and greens hit in regulation as alternatives. The remaining characteristics are first year on tour, which measures experience, and name length. Table \ref{tab:gkn-skill} reports the estimates. These results are comparable to those in Panel A of Table 3 in \citet{guryan_peer_2009}, although the samples differ and that paper estimates (\ref{eq:guryan}) by OLS. % Generated by application/peer\_cra\_GKN2009.do.
\begin{table}[!htbp]
\centering
\begin{threeparttable}
\setlength{\tabcolsep}{3pt}
\def\sym#1{\ifmmode^{#1}\else\(^{#1}\)\fi}
\caption{Peer effects in pre-determined skill and player characteristics}
\label{tab:gkn-skill}
\begin{tabular}{@{}l*{6}{c}@{}}
\hline\hline
 & (1) & (2) & (3) & (4) & (5) & (6) \\
 & Handicap & \shortstack{Driving\\ distance} & \shortstack{Greens in\\ regulation} & Putts & \shortstack{First\\ year} & \shortstack{Name\\ length} \\
\hline
$\hat{\lambda}$&     -0.0169         &     -0.0106         &     -0.0129         &     -0.0138         &      0.0016         &     -0.0006         \\
            &    (0.0109)         &    (0.0110)         &    (0.0116)         &    (0.0116)         &    (0.0107)         &    (0.0101)         \\
\hline
Observations&       8,254         &       8,113         &       8,113         &       8,113         &       8,113         &       8,113         \\
Tournament-category urns&         288         &         281         &         281         &         281         &         281         &         281         \\
Groups      &       3,010         &       2,964         &       2,964         &       2,964         &       2,964         &       2,964         \\
\noalign{\hrule height 1pt}
\end{tabular}
\begin{tablenotes}[flushleft]
\footnotesize
\item[] \textit{Notes:} Each column reports the GMM estimate of the endogenous peer effect using the indicated pre-determined skill or player characteristic as the outcome. Handicap is estimated on first-round observations; the remaining columns use the skill-analysis sample. Standard errors are in parentheses. \sym{*} \(p<0.10\), \sym{**} \(p<0.05\), \sym{***} \(p<0.01\).
\end{tablenotes}
\end{threeparttable}
\end{table}

For handicap, the preferred measure of ability, $\hat{\lambda}=-0.0169$ with a standard error of $0.0109$. Across all six characteristics, the estimates range from $-0.0169$ to $0.0016$, and none is statistically significant at the 10 percent level. We therefore do not reject the corresponding conditional random-assignment implication $H_{0}:\lambda_{0}=0$ for any of the six characteristics, consistent with \citet{guryan_peer_2009}.

To separate endogenous and contextual peer effects, we estimate
\[
Y_{igr}=\alpha_{r}+\lambda\bar{Y}_{(-i)gr}+X^{\prime}_{igr}\beta_{1}+\bar{X}^{\prime}_{(-i)gr}\beta_{2}+\epsilon_{igr},
\]
where $Y_{igr}$ is the score of player $i$ in peer group $g$ of urn $r$, $X_{igr}$ is a pre-assignment ability measure, and $\bar{X}_{(-i)gr}$ is the corresponding peer average. Column (1) omits $X_{igr}$ and $\bar{X}_{(-i)gr}$. Column (2) uses the preferred ability measure ``handicap'' as $X$. Columns (3)--(5) use driving distance, putts per round, and greens hit in regulation, respectively, as $X$. Column (6) includes all three alternative ability measures in $X$. Table \ref{tab:gkn-score} reports the results. Columns (2)--(6) can be compared with Columns (1)--(5) of Table 5 in \citet{guryan_peer_2009}, where $\lambda\bar{Y}_{(-i)grt}$ is omitted and the corresponding reduced forms are estimated by OLS.\footnote{Table 5 in \citet{guryan_peer_2009} uses data from rounds 1 and 2 and includes round fixed effects. To avoid dependence in scores across rounds, the reported specifications use first-round data only. Results using data from rounds 1 and 2 while controlling for round fixed effects are similar.}

Column (1) of Table 7 in \citet{guryan_peer_2009} reports an OLS estimate of $\lambda=0.055$, controlling for individual ability. This estimate is subject to the reflection problem discussed above. Under the maintained model, our estimates of $\lambda$ range from $0.0168$ to $0.0187$ across the six specifications. Five estimates are statistically significant at the 10 percent level, but none is significant at the 5 percent level. In the full specification in Column (6), the 95 percent confidence interval for $\lambda$ is $[-0.0024,0.0382]$, ruling out endogenous peer effects of $0.04$ or greater at the 5 percent level. The contextual peer-effect estimates are also small in magnitude and statistically insignificant throughout. Taken together, the results provide only limited evidence of peer effects on first-round golf scores, consistent with the conclusion in \citet{guryan_peer_2009} that peers' ability has no detectable effect on the performance of golfers.

% Generated by application/peer\_cra\_GKN2009.do.
\begin{table}[!htbp]
\centering
\begin{threeparttable}
\setlength{\tabcolsep}{3pt}
\def\sym#1{\ifmmode^{#1}\else\(^{#1}\)\fi}
\caption{Peer effects in golf scores}
\label{tab:gkn-score}
\begin{tabular}{@{}l*{6}{c}@{}}
\hline\hline
 & (1) & (2) & (3) & (4) & (5) & (6) \\
\hline
$\hat{\lambda}$&      0.0177\sym{*}  &      0.0187\sym{*}  &      0.0185\sym{*}  &      0.0178\sym{*}  &      0.0168         &      0.0179\sym{*}  \\
            &    (0.0104)         &    (0.0104)         &    (0.0104)         &    (0.0104)         &    (0.0104)         &    (0.0104)         \\
Handicap    &                     &      0.5091\sym{***}&                     &                     &                     &                     \\
            &                     &    (0.0452)         &                     &                     &                     &                     \\
Peer handicap&                     &     -0.0256         &                     &                     &                     &                     \\
            &                     &    (0.0508)         &                     &                     &                     &                     \\
Driving distance&                     &                     &     -0.0260\sym{***}&                     &                     &     -0.0180\sym{***}\\
            &                     &                     &    (0.0044)         &                     &                     &    (0.0044)         \\
Peer driving distance&                     &                     &      0.0051         &                     &                     &      0.0057         \\
            &                     &                     &    (0.0054)         &                     &                     &    (0.0055)         \\
Putts       &                     &                     &                     &      0.0946\sym{**} &                     &      0.1143\sym{***}\\
            &                     &                     &                     &    (0.0377)         &                     &    (0.0377)         \\
Peer putts  &                     &                     &                     &     -0.0377         &                     &     -0.0382         \\
            &                     &                     &                     &    (0.0477)         &                     &    (0.0479)         \\
Greens in regulation&                     &                     &                     &                     &     -0.5534\sym{***}&     -0.5260\sym{***}\\
            &                     &                     &                     &                     &    (0.0584)         &    (0.0592)         \\
Peer greens in regulation&                     &                     &                     &                     &      0.0015         &     -0.0049         \\
            &                     &                     &                     &                     &    (0.0728)         &    (0.0742)         \\
\hline
Observations&       8,254         &       8,254         &       8,113         &       8,113         &       8,113         &       8,113         \\
Urns        &         288         &         288         &         281         &         281         &         281         &         281         \\
Groups      &       3,010         &       3,010         &       2,964         &       2,964         &       2,964         &       2,964         \\
\noalign{\hrule height 1pt}
\end{tabular}
\begin{tablenotes}[flushleft]
\footnotesize
\item[] \textit{Notes:} All columns use first-round scores. Column (1) includes no covariates. Column (2) adds handicap. Columns (3)--(6) add the indicated own and peer controls. Standard errors are in parentheses. \sym{*} \(p<0.10\), \sym{**} \(p<0.05\), \sym{***} \(p<0.01\).
\end{tablenotes}
\end{threeparttable}
\end{table}

\end{document}